\title{Positional \texorpdfstring{$\omega$}{omega}-regular languages}
\declaretheorem[style=standard,numbered=no,name={Global hypothesis\kern-.09em}]{globalHyp*}
\newcommand\IfRestateTF{%
  \ifx\label\thmt@gobble@label 
    \expandafter\@firstoftwo
  \else
    \expandafter\@secondoftwo
  \fi
}
\newcommand{\RestateRemark}{\IfRestateTF{{\normalfont\bfseries (Restated) }}{}}
\definecolor{ThCSdarkblue}{RGB}{25,34,64}
\tikzset{
	>=stealth',
	-={stealth',ultra thick,scale=3} 
	node distance=1cm, 
	every state/.style={thick}, 
	initial text=$ $, 
}
\newcommand{\bfDescript}[1]{\textbf{#1}}
\newcommand{\mylabelOne}[2]{#2\def\@currentlabel{(3')}\label{#1}}
\newcommand{\mylabelTwo}[2]{#2\def\@currentlabel{(3'')}\label{#1}}
\newrobustcmd{\SOneS}{\ensuremath{\mathrm{S1S}}}
\newrobustcmd{\disjUnion}{\mathrel{\kl[\disjUnion]{\sqcup}}}
\knowledge{\disjUnion}{notion}
\newrobustcmd{\restSubsets}[2]{\kl[\restSubsets]{\restr{#1}{#2}}}
\knowledge{\restSubsets}{notion}
\knowledge{\partialF}{notion}
\newrobustcmd{\complSet}[1]{\kl[\complSet]{\overline{#1}}}
\knowledge{\complSet}{notion}
\newrobustcmd{\emptyword}{\kl[\emptyword]{\varepsilon}}
\knowledge{\emptyword}{notion}
\newrobustcmd{\Paths}{\kl[\Paths]{\mathsf{Paths}}}
\knowledge\Paths{notion}
\newrobustcmd{\first}{\kl[\first]{\mathsf{first}}}
\knowledge\first{notion}
\newrobustcmd{\last}{\kl[\last]{\mathsf{last}}}
\knowledge\last{notion}
\newcommand{\ent}{\text{enter}_{q}}
\newcommand{\sma}{\text{small}}
\newcommand{\neut}{\text{neutral}}
\newrobustcmd{\oddparity}{{\mathsf{oddParity}}}
\newrobustcmd{\prio}{{\mathsf{prio}}}
\newrobustcmd{\gtype}{{\mathsf{goodType}}}
\newrobustcmd{\type}{{\mathsf{type}}}
\newrobustcmd{\infOften}{\kl[\infOften]{\mathtt{Inf}}}
\knowledge{\infOften}{notion}
\newrobustcmd{\finOften}{\kl[\finOften]{\mathtt{Fin}}}
\knowledge{\finOften}{notion}
\newrobustcmd{\noOcc}{\kl[\noOcc]{\mathtt{No}}}
\knowledge{\noOcc}{notion}
\newcommand{\re}[1]{\xrightarrow{#1}}
\newcounter{sarrow}
\newcommand\lrp[1]{%
	\stepcounter{sarrow}%
	\mathrel{\begin{tikzpicture}[baseline= {( $ (current bounding box.south) + (0,-0.5ex) $ )}]
			\node[inner sep=.5ex] (\thesarrow) {$\scriptstyle #1$};
			\path[draw,<-,decorate,
			decoration={zigzag,amplitude=0.7pt,segment length=1.2mm,pre=lineto,pre length=4pt}] 
			(\thesarrow.south east) -- (\thesarrow.south west);
	\end{tikzpicture}}%
}
\newcommand{\rp}[1]{\lrp{#1}}
\newrobustcmd\lrpResolver[2]{\kl[\lrpResolver]{
		\stepcounter{sarrow}%
		\mathrel{\hspace{1mm}\begin{tikzpicture}[baseline= {( $ (current bounding box.south) + (0,2.2mm) $ )}]
				\node[inner sep=.5ex] (\thesarrow) {$\scriptstyle #2$};
				\draw[<-, decorate,
				decoration={zigzag,amplitude=0.7pt,segment length=1.2mm,pre=lineto,pre length=4pt}] 
				(\thesarrow.south east) -- (\thesarrow.south west) node[below, pos=0.1, inner sep=1mm] {$\scriptstyle #1$};
			\end{tikzpicture}\hspace{0.7mm}}%
}}
\knowledge\lrpResolver{notion}
\newrobustcmd\lrpResolverMem[2]{\kl[\lrpResolverMem]{\lrpResolver{#1}{#2}}}
\knowledge\lrpResolverMem{notion}
\newrobustcmd\lrpAllResolver[2]{\kl[\lrpAllResolver]{
	\stepcounter{sarrow}%
	\mathrel{\hspace{1mm}\begin{tikzpicture}[baseline= {( $ (current bounding box.south) + (0,2.7mm) $ )}]
			\node[inner sep=.5ex] (\thesarrow) {$\scriptstyle #2$};
			\draw[<-, decorate,
			decoration={zigzag,amplitude=0.7pt,segment length=1.2mm,pre=lineto,pre length=4pt}] 
			(\thesarrow.south east) -- (\thesarrow.south west) node[below, pos=0.1, inner sep=1mm] {$\mathsmaller{\mathsmaller{\forall},\, #1}$};
	\end{tikzpicture}\hspace{0.7mm}}%
}}
\knowledge\lrpAllResolver{notion}
\newrobustcmd\lrpExistsResolver[2]{\kl[\lrpExistsResolver]{
		\stepcounter{sarrow}%
		\mathrel{\hspace{1mm}\begin{tikzpicture}[baseline= {( $ (current bounding box.south) + (0,2.7mm) $ )}]
				\node[inner sep=.5ex] (\thesarrow) {$\scriptstyle #2$};
				\draw[<-, decorate,
				decoration={zigzag,amplitude=0.7pt,segment length=1.2mm,pre=lineto,pre length=4pt}] 
				(\thesarrow.south east) -- (\thesarrow.south west) node[below, pos=0.1, inner sep=1mm] {$\mathsmaller{\mathsmaller{\exists},\, #1}$};
			\end{tikzpicture}\hspace{0.7mm}}%
}}
\knowledge\lrpExistsResolver{notion}
\newrobustcmd\lrpResolverWord[3]{\kl[\lrpResolverWord]{
		\stepcounter{sarrow}%
		\mathrel{\hspace{1mm}\begin{tikzpicture}[baseline= {( $ (current bounding box.south) + (0,2.7mm) $ )}]
				\node[inner sep=.5ex] (\thesarrow) {$\scriptstyle #3$};
				\draw[<-, decorate,
				decoration={zigzag,amplitude=0.7pt,segment length=1.2mm,pre=lineto,pre length=4pt}] 
				(\thesarrow.south east) -- (\thesarrow.south west) node[below, pos=0.1, inner sep=1mm] {$\scriptstyle #2,#1$};
			\end{tikzpicture}\hspace{0.7mm}}%
}}
\knowledge\lrpResolverWord{notion}
\newcommand\lrpE{\lrp{\phantom{w.}}}
\newrobustcmd{\colAut}{\kl[\colAut]{\mathsf{p}}}
\knowledge\colAut{notion}
\newrobustcmd{\size}[1]{\kl[\size]{|#1|}}
\knowledge\size{notion}
\newrobustcmd{\sizeAut}[1]{|#1|}
\newcommand{\init}{{\mathsf{init}}}
\newrobustcmd\Lang[1]{\kl[\Lang]{\mathcal{L}(#1)}}
\knowledge\Lang{notion}
\newrobustcmd{\initialAut}[2]{\kl[\initialAut]{#1_{#2}}}
\knowledge\initialAut{notion}
\newrobustcmd{\parity}{\kl[\parity]{\mathsf{parity}}}
\knowledge{\parity}{notion}
\newrobustcmd{\Buchi}[1]{\kl[\Buchi]{\mathsf{Buchi}(#1)}}
\newrobustcmd{\BuchiC}[2]{\kl[\BuchiC]{\mathsf{Buchi}_{#2}(#1)}}
\knowledge{\Buchi}[\BuchiC]{notion}
\newrobustcmd\resolv{\mathsf{r}}
\newcommand{\Eve}{\mathrm{Eve}}
\newcommand{\VAdam}{V_{\mathrm{Adam}}}
\newcommand{\VEve}{V_{\mathrm{Eve}}}
\newrobustcmd{\strat}{\mathsf{strat}}
\newrobustcmd{\winRegion}[2]{\kl[\winRegion]{\mathpzc{Win}_{#1}(#2)}}
\knowledge\winRegion{notion}
\newrobustcmd{\nextmoveResolver}{\kl[\nextmoveResolver]{\sigma}}
\knowledge\nextmoveResolver{notion}
\newrobustcmd{\transMem}{\kl[\transMem]{\mu}}
\knowledge\transMem{notion}
\newrobustcmd{\Safe}[1]{\kl[\Safe]{\mathsf{Safety}(#1)}}
\knowledge{\Safe}{notion}
\newrobustcmd{\Reach}[1]{\kl[\Reach]{\mathsf{Reach}(#1)}}
\knowledge{\Reach}{notion}
\newrobustcmd{\SigmaTwo}{\kl[\SigmaTwo]{\Sigma_2^0}}
\newrobustcmd{\SigmaThree}{\kl[\SigmaThree]{\Sigma_3^0}}
\knowledge{\SigmaTwo}[\SigmaThree]{notion}
\newrobustcmd{\PiTwo}{\kl[\PiTwo]{\Pi_2^0}}
\knowledge{\PiTwo}{notion}
\newrobustcmd{\BCSigma}{\kl[\BCSigma]{\mathcal{BC}(\SigmaTwo)}}
\knowledge{\BCSigma}{notion}
\newrobustcmd{\Res}[1]{\kl[\Res]{\mathsf{Res}(#1)}}
\knowledge{\Res}[\ResW]{notion}
\newrobustcmd{\ResW}{\kl[\ResW]{\mathsf{Res}(W)}}
\newrobustcmd{\lquot}[2]{\kl[\lquot]{\inv{#1}#2}}
\knowledge{\lquot}[\lquotW]{notion}
\newrobustcmd\lquotW[1]{\kl[\lquotW]{#1}^{-1}W}
\newrobustcmd{\resClass}[1]{\kl[\resClass]{[#1]}}
\knowledge{\resClass}{notion}
\newrobustcmd{\resClassState}[1]{\kl[\resClassState]{[#1]}}
\knowledge{\resClassState}{notion}
\newrobustcmd\eqRes[1]{\mathrel{\kl[\eqRes]\sim_{#1}}}
\knowledge{\eqRes}{notion}
\newrobustcmd\eqResState[1]{\mathrel{\kl[\eqResState]\sim_{#1}}}
\knowledge{\eqResState}{notion}
\newrobustcmd\leqRes{\mathrel{\kl[\leqRes]\leq}}
\newrobustcmd\lRes{\mathrel{\kl[\lRes]<}}
\knowledge{\leqRes}[\lRes]{notion}
\newrobustcmd\leqResState{\mathrel{\kl[\leqResState]\sqsubseteq}}
\newrobustcmd\lResState{\mathrel{\kl[\lResState]\sqsubset}}
\knowledge{\leqResState}[\lResState]{notion}
\newrobustcmd\autRes[1]{\kl[\autRes]\R_{#1}}
\knowledge{\autRes}{notion}
\newrobustcmd\safeCoB[1]{\kl[\safeCoB]{\mathsf{Safe}}_{<2}(#1)}
\knowledge{\safeCoB}{notion}
\newrobustcmd\safeSig[2]{\kl[\safeSig]{\mathsf{Safe}}_{<#1}(#2)}
\newrobustcmd\safeSigAut[3]{\kl[\safeSig]{\mathsf{Safe}}^{#3}_{<#1}(#2)}
\knowledge{\safeSig}[\safeSigAut]{notion}
\newrobustcmd\leqCoB{\mathrel{\kl[\leqCoB]\sqsubseteq_{2}}}
\newrobustcmd\nleqCoB{\mathrel{\kl[\nleqCoB]{\not\sqsubseteq_{2}}}}
\newrobustcmd\nlCoB{\mathrel{\kl[\nlCoB]\nless_{2}}}
\newrobustcmd\lCoB{\mathrel{\kl[\lCoB]\sqsubset_{2}}}
\knowledge{\lCoB}[\leqCoB|\nleqCoB|\nlCoB]{notion}
\newrobustcmd\safeComp[1]{\kl[\safeComp]S^{{<}#1}}
\knowledge{\safeComp}{notion}
\newrobustcmd\BLetters[2]{\kl[\BLetters]B_{[#2]_{#1}}}
\knowledge{\BLetters}{notion}
\newrobustcmd\NLetters[2]{\kl[\NLetters]N_{[#2]_{#1}}}
\knowledge{\NLetters}{notion}
\newrobustcmd\BLettersBuchi[1]{\kl[\BLettersBuchi]B_{#1}}
\knowledge{\BLettersBuchi}{notion}
\newrobustcmd\NLettersBuchi[1]{\kl[\NLettersBuchi]N_{#1}}
\knowledge{\NLettersBuchi}{notion}
\newrobustcmd\locAlphBuchi[1]{\kl[\locAlphBuchi]\SS_{[#1]}}
\knowledge{\locAlphBuchi}{notion}
\newrobustcmd\locLangBuchi[2]{\kl[\locLangBuchi]{#1_{[#2]}}}
\knowledge{\locLangBuchi}{notion}
\newrobustcmd\localAutResBuchi[1]{\kl[\localAutResBuchi]\A_{[#1]}}
\knowledge{\localAutResBuchi}{notion}
\newrobustcmd\locAlph[2]{\kl[\locAlph]\SS_{[#2]_{#1}}}
\knowledge{\locAlph}{notion}
\newrobustcmd\locLang[2]{\kl[\locLang]{W_{[#2]_{#1}}}}
\knowledge{\locLang}{notion}
\newrobustcmd\localAutSig[2]{\kl[\localAutSig]\A_{[#2]_#1}}
\knowledge{\localAutSig}{notion}
\newrobustcmd\flooreven[1]{\kl[\flooreven]\lfloor #1 \rfloor^{\text{even}}}
\knowledge{\flooreven}{notion}
\newrobustcmd\geqSig[1]{\mathrel{\kl[\geqSig]\sqsupseteq_{#1}}}
\newrobustcmd\gSig[1]{\mathrel{\kl[\gSig]\sqsupset_{#1}}}
\newrobustcmd\leqSig[1]{\mathrel{\kl[\leqSig]\sqsubseteq_{#1}}}
\newrobustcmd\nleqSig[1]{\mathrel{\kl[\nleqSig]{\not\sqsubseteq_{#1}}}}
\newrobustcmd\lSig[1]{\mathrel{\kl[\lSig]\sqsubset_{#1}}}
\newrobustcmd\eqSig[1]{\mathrel{\kl[\eqSig]\sim_{#1}}}
\newrobustcmd\neqSig[1]{\mathrel{\kl[\eqSig]\nsim_{#1}}}
\knowledge{\eqSig}[\neqSig|\classSig|\leqSig|\lSig|\gSig|\geqSig|\nleqSig]{notion}
\newrobustcmd\classSig[2]{\kl[\classSig]{[#2]_{#1}}}
\newrobustcmd\hierMon[1]{\kl[\hierMon]{\mathsf{HM}(#1)}}
\knowledge{\hierMon}{notion}
\newrobustcmd{\Verts}[1]{\kl[\Verts]{V(#1)}}
\knowledge{\Verts}{notion}
\newrobustcmd{\Edges}[1]{\kl[\Edges]{E(#1)}}
\knowledge{\Edges}{notion}
\newrobustcmd{\addNeutral}[1]{\kl[\addNeutral]{#1^\ee}}
\knowledge{\addNeutral}{notion}
\newrobustcmd\autLeq[3]{\kl[\autLeq]{\quotientTuned{#1}{\sim_{#2}}{{{\leq}#3}}}}
\knowledge{\autLeq}{notion}
\newrobustcmd\autGeq[2]{\kl[\autGeq]{\restr{#1}{{\geq}#2}}}
\knowledge{\autGeq}[\autGeqState]{notion}
\newrobustcmd\autGeqState[2]{\kl[\autGeqState]{\restr{\A_{#2}}{{\geq}#1}}}
\newrobustcmd\quotAut[2]{\kl[\quotAut]{\quotient{#1}{\sim_{#2}}}}
\knowledge{\quotAut}{notion}
\newrobustcmd\mnext{{\mathsf{next}}}
\newrobustcmd\inext{\kl[\inext]i_{\mathsf{next}}}
\knowledge{\inext}{notion}
\newrobustcmd\pick{\kl[\pick]{f}}
\knowledge{\pick}{notion}
\newrobustcmd\rankOrd{\kl[\rankOrd]{\mathsf{rank}}}
\knowledge{\rankOrd}{notion}
\newrobustcmd\pickP{\kl[\pickP]{f}}
\knowledge{\pickP}{notion}
\newrobustcmd\pickMax{\kl[\pickMax]{f}}
\knowledge{\pickMax}{notion}
\newrobustcmd\leqHM[1]{\mathrel{\kl[\leqHM]\leq_{#1}}}
\newrobustcmd\lHM[1]{\mathrel{\kl[\leqHM]<{#1}}}
\newrobustcmd\geqHM[1]{\mathrel{\kl[\leqHM]\geq_{#1}}}
\newrobustcmd\gHM[1]{\mathrel{\kl[\leqHM]>_{#1}}}
\knowledge{\leqHM}[\geqHM]{notion}
\newrobustcmd\classHM[2]{\kl[\classHM]{[#2]_{#1}}}
\knowledge{\classHM}{notion}
\newrobustcmd\UPar{\kl[\UPar]U_{\hspace{-0.85mm}\mathsf{parity}}}
\knowledge{\UPar}{notion}
\newrobustcmd\UAut{\kl[\UAut]U_{\! \A}}
\knowledge{\UAut}{notion}
\newrobustcmd\TPar{\kl[\TPar]{T_{\mathsf{parity}}}}
\knowledge{\TPar}{notion}
\newrobustcmd\phiPar{\kl[\phiPar]{\phi_{\mathsf{parity}}}}
\knowledge{\phiPar}{notion}
\newrobustcmd\cleq{\mathrel{\kl[\cleq]{\preccurlyeq}}}
\knowledge{\cleq}{notion}
\newrobustcmd\ext{\kl[\ext]{{\mathsf{ext}}}}
\knowledge{\ext}{notion}
\newrobustcmd\UTop[1]{\kl[\UTop]{#1^{\top}}}
\knowledge{\UTop}{notion}
\newrobustcmd\countLetters[2]{\kl[\countLetters]{|#1|_{#2}}}
\knowledge{\countLetters}{notion}
\newrobustcmd\WInfMuller{\kl[\WInfMuller]{W_{\mathsf{fin}}}}
\knowledge{\WInfMuller}{notion}
\newrobustcmd\WOccParity{\kl[\WOccParity]{W_{\mathsf{OccParity}}}}
\knowledge{\WOccParity}{notion}
\mathchardef\hyphen=45 
\newcommand\restr[2]{{
		\left.\kern-\nulldelimiterspace 
		#1 
		\littletaller 
		\right|_{#2} 
}}
\newcommand{\littletaller}{\mathchoice{\vphantom{\big|}}{}{}{}}
\newcommand{\ts}{\textsuperscript}
\newrobustcmd\inv[1]{#1^{-1}}
\newcommand{\quotient}[2]{{\raisebox{0.1em}{$#1$\hspace{-1mm}}\left/\hspace{-0.5mm}\raisebox{-.1em}{{\scriptsize$#2$}}\right.}}
\newcommand{\quotientTuned}[3]{\quotient{#1}{\begin{smallmatrix*}[l]
		\phantom{.}\\[-0.9mm]
		\mathlarger{#2}\\[-0.2mm]
		\hspace{-1.6mm}\mathsmaller{\mathsmaller{\mathsmaller{#3}}}
\end{smallmatrix*}}}
\newcommand{\tand}{\text{ and }}
\newcommand{\tor}{\text{ or }}
\newcommand{\tin}{\text{ in }}
\newcommand{\tif}{\text{ if }}
\newcommand{\tow}{\text{ otherwise}}
\newcommand{\tst}{\text{ such that }}
\DeclareMathAlphabet{\mathpzc}{OT1}{pzc}{m}{it}
\newrobustcmd{\NN}{\mathbb{N}}
\newrobustcmd{\ZZ}{\mathbb{Z}}
\newrobustcmd{\QQ}{\mathbb{Q}}
\newrobustcmd{\RR}{\mathbb{R}}
\newrobustcmd{\CC}{\mathbb{C}}
\newrobustcmd{\WW}{\mathbb{W}}
\newrobustcmd{\I}{\mathcal{I}}
\newrobustcmd{\F}{\mathcal{F}}
\newrobustcmd{\D}{\mathcal{D}}
\newrobustcmd{\N}{\mathcal{N}}
\newrobustcmd{\G}{\mathcal{G}}
\renewcommand{\L}{\mathcal{L}}
\newrobustcmd{\M}{\mathcal{M}}
\newrobustcmd{\Q}{\mathcal{Q}}
\newrobustcmd{\C}{\mathcal{C}}
\newrobustcmd{\A}{\mathcal{A}}
\newrobustcmd{\B}{\mathcal{B}}
\newrobustcmd{\Z}{\mathcal{Z}}
\newrobustcmd{\R}{\mathcal{R}}
\newrobustcmd{\T}{\mathcal{T}}
\newrobustcmd{\U}{\mathcal{U}}
\newrobustcmd{\W}{\mathcal{W}}
\renewcommand{\O}{\mathcal{O}}
\renewcommand{\S}{\mathcal{S}}
\newrobustcmd{\kk}{\kappa}
\newrobustcmd{\uu}{\upsilon}
\newrobustcmd{\dd}{\delta}
\renewcommand{\ss}{\sigma}
\newrobustcmd{\rr}{\rho}
\renewcommand{\aa}{\alpha}
\newrobustcmd{\bb}{\beta}
\newrobustcmd{\oo}{\omega}
\newrobustcmd{\pp}{\varphi}
\newrobustcmd{\ee}{\varepsilon}
\renewcommand{\SS}{\Sigma}
\newrobustcmd{\GG}{\Gamma}
\newrobustcmd{\DD}{\Delta}
\knowledgerenewmathcommand\nu{\cmdkl{\LaTeXnu}}
\knowledgenewmathcommand\nuAcd{\cmdkl{\LaTeXnu}}
\knowledgerenewmathcommand\eta{\cmdkl{\LaTeXeta}}
\begin{document}

\maketitle              
\begin{abstract}
 In the context of two-player games over graphs, a language $L$ is called positional if, in all games using $L$ as winning objective, the protagonist can play optimally using positional strategies, that is, strategies that do not depend on the history of the play.
 In this work, we describe the class of parity automata recognising positional languages, providing a complete characterisation of positionality for $\omega$-regular languages. 
 As corollaries, we establish decidability of positionality in polynomial time,   finite-to-infinite and 1-to-2-players lifts, and show the closure under union of prefix-independent positional objectives, answering a conjecture by Kopczyński in the $\omega$-regular case.
\end{abstract}
 
%
%


\section{Introduction}\label{sec:intro}

\subsection{Context: Strategy complexity in infinite duration games}

We study games in which two antagonistic players, that we call Eve and Adam, take turns in moving a token along the edges of a given (potentially infinite) edge-coloured directed graph. 
Vertices of the graph are partitioned between Eve and Adam; when the token reaches a vertex, its owner
chooses where to move next.
This interaction goes on in a non-terminating mode, producing an infinite path in the graph called a "play". The winner of such a play is determined according to a language of infinite sequences of colours~$W$, called the "objective" of the game; plays producing a sequence of colours in~$W$ are winning for Eve, and plays that do not satisfy the "objective"~$W$ are winning for the opponent Adam.

One of the central applications of "games" on graphs is the problem of reactive synthesis: given a system interacting with its environment and a formal specification, we aim to design a controller that
guarantees the specification is met.
The interaction between the system and the environment can be modelled as a game where a winning strategy corresponds to a correct implementation of a controller~\cite{BL69Strategies,Thomas95SynthesisGames, BloemCJ2018Handbook}.

In this context, a crucial parameter is the complexity of "strategies" required by the players to "play optimally". Games admitting simple strategies are both easier to solve algorithmically, and the controllers obtained for them can be represented succinctly~\cite{BL69Strategies}.

\subparagraph{Positional strategies.} The simplest "strategies" are "positional@@strat" ones, those that depend only on the current vertex, and not on the history of the "play".
In this work, we are interested in the following question: 
Given a fixed "objective" $W$, can players always "play optimally using positional strategies" in all "games" with "winning objective" $W$? 
If the answer is affirmative for a single player ("Eve") we say that $W$ is "positional"\footnote{Sometimes in the literature the term ``half-positional'' or ``Eve-positional'' is used to stress the asymmetric nature of this notion.}; if it is affirmative for both players, we say that $W$ is "bipositional". 
Also, it might be relevant to consider the question for subclasses of games, in particular, for finite games, or for 1-player games.

\subparagraph{Bipositionality.} The class of "bipositional" "objectives", both over finite and infinite games, is already well understood.
A characterisation of "bipositionality" over finite games was obtained by Gimbert and Zielonka~\cite{GimbertZielonka2005Memory}, using two properties called \emph{monotonicity} and \emph{selectivity}. An important and useful corollary of their result is what is commonly known as a \emph{1-to-2-player lift}: an objective $W$ is "bipositional" over finite games if and only if both players can "play optimally using positional strategies" in finite 1-player games.

Over infinite games, a very simple and elegant characterisation of "bipositionality" was given by Colcombet and Niwiński for "prefix-independent" objectives~\cite{CN06}: a "prefix-independent" objective $W$ is "bipositional" if and only if it is the "parity objective".
In particular, these objectives are necessarily "$\oo$-regular".
No such characterisation is known for non-"prefix-independent" objectives (although a generalisation of this result for finite memory without the "prefix-independent" assumption is studied in~\cite{BRV22OmegaRegMemory}).

\subparagraph{Positionality.} Although "positionality" is arguably more relevant than "bipositionality" in the context of reactive synthesis (the controller is built based on "Eve's" "strategies"), much less is known for this class.
During the 90s, "positionality" of some central objectives was proved, notably of "parity@@obj"~\cite{EmersonJutla91Determinacy} and Rabin languages~\cite{Klarlund94Determinacy},
but the first thorough study of "positionality" was conducted by Kopczyński in his PhD thesis~\cite{Kop08Thesis}.
There, he provides some sufficient conditions for "positionality" (which were generalised in~\cite{BFMM11HalfPos}) and introduces an important set of conjectures that have greatly influenced research in the area in recent years (see~\cite[Sect.~6]{CFH22TenYears},~\cite[pg.~2 and ex.~8]{BFRV23Regular},~\cite[pg.~55]{Ohlmann21PhD},~\cite{BCRV22HalfPosBuchi,Casares2021Chromatic,Kozachinskiy22Chromatic,Kozachinskiy24EnergyGroups,Kozachinskiy24InfSeparation,OS24Sigma2} for works discussing some of his conjectures).
However, no general characterisation was found for "positionality".

Recently, Ohlmann made a step forward by characterising  "positionality" via "monotone@@graph" "universal graphs"~\cite{Ohlmann21PhD, Ohlmann23Univ}.
While this characterisation is a valuable tool for proving "positionality", it is not constructive and does not directly yield decidability results.
Also, Ohlmann's result comes with a caveat: necessity of the existence of "universal graphs" for "positional" objectives is only guaranteed for those containing a "neutral letter" (a letter that does not change membership to $W$ after its removal).
He conjectures that this restriction is not essential, as the addition of a "neutral letter" to any objective should not break "positionality".

\subparagraph{\texorpdfstring{$\oo$}{omega}-regular languages.} A central class of languages over infinite words is the class of "$\oo$-regular languages", which admits several alternative definitions: these are the languages "recognised" by "deterministic" "parity automata", by "non-deterministic" "B\"uchi automata", definable using $\oo$-expressions, or using monadic second order logic~\cite{Buchi1962decision, McNaughton1966Testing, Mostowski1984RegularEF}.

One of the main contributions of Kopczyński was to show decidability of "positionality" over finite games for "prefix-independent" "$\oo$-regular objectives"~\cite[Theorem~2]{Kop07OmegaReg}.
His procedure works by enumerating all possible games where positionality might fail (up to a sufficiently large size); it runs in $\O(n^{\O(n^2)})$ time, where $n$ is the size of a deterministic parity automaton recognising the objective, and does not reveal much about the structure of automata "recognising" "positional" languages. 

Regarding "positionality" over arbitrary games and for non-"prefix-independent" "objectives", characterisations have been found for some subclasses of "$\oo$-regular" "objectives".
For "closed objectives" (objectives recognised by safety automata), "positionality" was characterised by Colcombet, Fijalkow and Horn in 2014~\cite{ColcombetFH14PlayingSafe}. 

Recently, a characterisation of "positionality" for languages "recognised" by "deterministic" "B\"uchi automata" was provided by Bouyer, Casares, Randour and Vandenhove~\cite{BCRV22HalfPosBuchi} (see also Proposition~\ref{prop-warm:char-Buchi-all}).
As a corollary, they establish polynomial-time decidability of "positionality" for  "deterministic" "B\"uchi automata".
However, the conditions they provide are not necessary for "positionality" in general, for instance, for languages "recognised" by "coB\"uchi automata".

\subparagraph{Finite-to-infinite and 1-to-2-player lifts.} As mentioned above, a consequence of Gimbert and Zielonka's result~\cite{GimbertZielonka2005Memory} is that, in order to check "bipositionality" over finite games, it suffices to check whether players can "play optimally" in 1-player games.
Recently, generalisations of 1-to-2-player lifts have been studied in the setting of finite memory by Kozachinskiy~\cite{Kozachinskiy22Mildly} and Vandenhove~\cite{Vandenhove23Thesis,BRORV20FiniteMemory, BRV22OmegaRegMemory}.
Vandenhove conjectures that if $W$ is "positional" over "Eve-games" (resp. over finite games), then $W$ is "positional" over all games~\cite[Conjecture~9.1.1]{Vandenhove23Thesis}.
This conjecture has been shown to hold in the case of languages "recognised" by "deterministic" "B\"uchi automata"~\cite{BCRV22HalfPosBuchi}.

\subparagraph{Closure under union.} One of the recurring themes in Kopczyński's PhD thesis~\cite{Kop08Thesis} is the following question.

\AP \begin{conjecture}[{""Kopczyński's conjecture""~\cite[Conjecture~7.1]{Kop08Thesis}}]\label{conj:Kopcz-Union}
	Let $W_1,W_2\subseteq \SS^\oo$ be two "prefix-independent" "positional" objectives. Then $W_1\cup W_2$ is "positional".
\end{conjecture}

Very recently, Kozachinskiy~\cite{Kozachinskiy24EnergyGroups} disproved this conjecture, but only for "positionality" over \emph{finite games}. Also, the counter-example he gives is not "$\oo$-regular".
On the positive side, "Kopczyński's conjecture" is known to hold in some subclasses of "$\oo$-regular" "objectives": Muller objectives~\cite{Zielonka1998infinite}, "concave" objectives~\cite{Kopczynski2006Half} and objectives "recognised" by "deterministic" "B\"uchi automata"~\cite{BCRV22HalfPosBuchi}, as well as for the family of $\SigmaTwo$ objectives (objectives "recognised" by infinite "coB\"uchi automata")~\cite{OS24Sigma2}. "Kopczyński's conjecture" and this latter result have been generalised to the setting of finite memory~\cite[Section~6.3]{CO25LMCS}. 
Solving "Kopczyński's conjecture" over infinite games is one of the driving open questions for the field.

\subsection{Contributions and organisation}

Our main contribution is a characterisation of "positionality" for "$\oo$-regular languages", stated in Theorem~\ref{th-reslt:MainCharacterisation-allItems}. 
We propose a syntactic description of a family of "deterministic" "parity automata", so that any automaton in this class "recognises" a "positional" language, and any "positional" language can be "recognised" by such an automaton. 
In fact, we describe two slightly different such families, called, respectively, "fully progress consistent" "signature automata" and "$\ee$-completable" automata. These families offer distinct advantages and complement our intuitions on "positionality".

From this characterisation, we derive multiple corollaries that address the majority of open questions related to "positionality" in the case of "$\oo$-regular languages":

\begin{enumerate}
	\item \bfDescript{Decidability in polynomial time.} Given a "deterministic" "parity automaton" $\A$, we can decide in polynomial time whether $\Lang{\A}$ is "positional" or not (Theorem~\ref{th-reslt:decid-poly}).
	
	\item \bfDescript{Finite-to-infinite and 1-to-2-players lift.} An "$\oo$-regular" objective $W$ is "positional" over arbitrary games if and only if it is "positional" over finite, "$\ee$-free" "Eve-games" (Theorem~\ref{th-half:lifts}). This answers a question raised by Vandenhove~\cite[Conjecture~9.1.1]{Vandenhove23Thesis}. 
	
	\item \bfDescript{Closure under union.} The union of two "$\oo$-regular" "positional" objectives is "positional", provided that one of them is "prefix-independent" (Theorem~\ref{th-reslt:union-PI}). This solves a stronger variant of "Kopczyński's conjecture" in the case of "$\oo$-regular languages".
	
	\item \bfDescript{Closure under addition of a neutral letter.} If $W$ is "$\oo$-regular" and "positional", the objective obtained by adding a "neutral letter" to $W$ is "positional" too (Theorem~\ref{th-reslt:neutral-letter}). This solves "Ohlmann's conjecture" in the case of "$\oo$-regular languages".
\end{enumerate}

We obtain some additional results for classes of objectives that are not necessarily "$\oo$-regular". We relax the "$\oo$-regularity" hypothesis in two orthogonal ways.

\begin{enumerate}\setcounter{enumi}{4}
	\item \bfDescript{Characterisation of bipositionality of all objectives.} We extend the characterisation of "bipositionality" of Colcombet and Niwiński~\cite{CN06} to all objectives, getting rid of the "prefix-independence" assumption (Theorem~\ref{th-bi:bipositional}).

	\item \bfDescript{Characterisation of positionality of closed and open objectives.} 
	We characterise "positionality" for "closed@@obj" and "open@@obj" objectives. We also obtain as corollaries 1-to-2 players lifts and closure under addition of a "neutral letter" for these classes of objectives.
\end{enumerate}

\subsubsection*{Technical tools} We would like to highlight some technical tools that take primary importance in our proofs.

\textbf{Universal graphs.} In general, showing that a given objective is "positional" can be challenging, as we need to show that \emph{for every game} "Eve" can "play optimally using positional strategies".
Ohlmann's characterisation using "monotone@@graph" "universal graphs" provides a clear path to prove "positionality" (see Proposition~\ref{prop-prelim:univ-graphs}).
We rely on this result to show that "parity automata" satisfying the syntactic conditions imposed in Theorem~\ref{th-reslt:MainCharacterisation-allItems} do indeed "recognise" "positional" languages.
	
\textbf{History-deterministic automata.} "History-deterministic" automata are a model in between "deterministic" and "non-deterministic" ones; we refer to~\cite{BL23SurveyHD,Kupferman22UsingPast} for detailed expositions on them.
	Although the statements of our results do not mention "history-determinism", they appear naturally in two different parts of our proofs:
	\begin{itemize}
		\item Establishing necessity of the syntactic conditions from our main characterisation requires a very fine control of the structure of automata. We develop a technique for decomposing automata, for which we need to use and generalise the methods introduced by Abu Radi and Kupferman~\cite{AK22MinimizingGFG} for the minimisation of "HD" "coB\"uchi automata". 
		
		\item To prove that these conditions are sufficient, we construct a "monotone@@graph" "universal" graph from a "signature automaton". To facilitate this process, we first ``saturate'' automata, adding as many transitions as possible without modifying the languages they "recognise". This procedure generates non-determinism, but preserves "history-determinism", the key property that allows us to prove "universality" of the obtained graph.
	\end{itemize}
	
	We believe that this use of "history-determinism" showcases their usefulness and canonicity.
	

	\textbf{Normal form of parity automata.} In our central proof, we rely on a "normal form" of parity automata, as defined in~\cite[Section~6.2]{CCFL24FromMtoP}.
	Automata in "normal form" present a set of properties that simplify manipulating them and reasoning about their runs.
	We make consistent use of these properties in our combinatorial arguments.
	This "normal form" is commonly used in the literature and applied in areas such as the study of "history-deterministic" "coB\"uchi" "automata"~\cite{KS15DeterminisationGFG,AK22MinimizingGFG,EhlersSchewe22NaturalColors} or automata learning~\cite{BohnLoding23DetParityFromExamples}. 
	
  \textbf{Congruences for parity automata.} Since the beginning of the theory of finite automata, the notion of "congruence" has played a fundamental role~\cite{Arnold85Congruence,Saec90Saturating, MS97Syntactic}.
	Here, we propose a notion of "congruences" for parity automata that make it possible to build quotient automata that are compatible with the acceptance condition. This newly introduced vocabulary allows us to formalise the details of the proof of Theorem~\ref{th-reslt:MainCharacterisation-allItems} in a simpler way.
	We believe that it will be useful for the study of "parity automata" in other contexts.

\subsubsection*{Organisation of the paper}
After introducing some general definitions and terminology used throughout the paper, we begin Section~\ref{sec:char-half-pos} by stating the characterisation result (Theorem~\ref{th-reslt:MainCharacterisation-allItems}) and its main consequences, without providing formal details about the technical concepts appearing in its statement.  Section~\ref{sec:warm-up} is a warm-up for the definitions used in the main characterisation and for the techniques used in its proof. We gradually introduce conditions that are necessary for "positionality", obtaining partial results and providing numerous examples along the way.
Section~\ref{sec:proof} contains the most technical part of the paper. We introduce the notions of "signature automata", "full progress consistency" and "$\ee$-complete automata" appearing in the statement of Theorem~\ref{th-reslt:MainCharacterisation-allItems}, and we give a proof of it. Nevertheless, most details in the proof of necessity are relegated to Appendix~\ref{appendix:proofs-necessity}.
In Section~\ref{sec:decision} we provide two conceptually different polynomial-time decision methods for deciding "positionality". 
Sections~\ref{sec:bipositional} and~\ref{sec:open-closed} contain, respectively, the two last contributions of the paper: a characterisation of "bipositionality" for all objectives and a characterisation of "positionality" for "open" and "closed" objectives. 

\section{Preliminaries}\label{sec:prelim}

We introduce definitions and notations used throughout the paper. First we introduce in Section~\ref{subsec-prel:games} "games" and "positionality", as well as the more technical notion of "universal graphs" used in our proofs for "positionality". In Section~\ref{subsec-prel:automata} we introduce definitions about "parity automata" and notions about "congruences" for them.

The reader who does not plan to get into the more technical details of Sections~\ref{sec:proof} and~\ref{sec:decision} can skip Subsections~\ref{subsec-prel:universalG} and~\ref{subsec-prel:congruences} from this preliminaries. Also, the hyperlinks on words should help the reader to easily refer to the definitions.
	
\subsection{Games and positionality}\label{subsec-prel:games}	

\subsubsection{Games on graphs}
\subparagraph{\texorpdfstring{$\SS$}{Sigma}-graphs.} \AP A ""$\SS$-graph"" $G$ is given by a (potentially infinite) set of vertices $V$ together with a set of coloured directed edges $E\subseteq V\times \SS \times V$.
We write $v \re c v'$ to refer to an edge in $G$ with source $v$, target $v'$, and colour $c$.
This notation naturally extends to finite and infinite paths.
\AP The size of a graph $G$ is defined to be the cardinality of $V$.

\begin{globalHyp*}
	\AP We assume throughout the paper that "$\Sigma$-graphs" do not contain ""dead-ends"", that is, every vertex has at least one outgoing edge. (This assumption is useful when considering infinite paths.)
\end{globalHyp*}

\subparagraph{Games.} \AP A ""game"" is an edge-coloured "graph" together with a set of winning sequences of colours and a partition of the vertices into those controlled by a player named ""Eve"" and her opponent, named ""Adam"".
\AP Formally, it is represented by a tuple $\G = \left(V,E, \VEve, W\right)$, where $G=(V,E)$ is a "$\SS\cup\{\ee\}$-graph" (called the ""game graph""),  $\VEve$ is the set of vertices owned by "Eve" and $W\subseteq \SS^\oo$ is the ""winning objective"".
Letter $\ee \notin \SS$ is an additional letter used to represent ""uncoloured edges""; we impose that no infinite path in $G$ is composed exclusively of $\ee$-edges.
\AP Games not containing "uncoloured edges" are called ""$\ee$-free"".
\AP We let $\VAdam = V\setminus \VEve$ be the vertices controlled by "Adam".
\AP An ""Eve-game"" is a "game" $\G$ in which all the vertices are controlled by "Eve", that is, $V = \VEve$.
\AP A~"game" having $W$ as "winning objective" is called a ""$W$-game"".

Unless stated otherwise, we take the point of view of player "Eve"; expressions as ``winning'' will implicitly stand for ``winning for Eve'', and "strategies" will be defined for her.

\AP In this paper, the words ``language'' and ``""objective""'' are synonyms.

\subparagraph{Plays.} 
In a game, players move a pebble from one vertex to another for an infinite amount of time.
\AP The player who owns the vertex $v$ where the pebble is placed chooses an edge  $v\re{c}v'$ and the pebble travels through this edge to its target, producing colour $c$. In this way, they produce a path $\rr=v_0\re{c_0}v_1\re{c_1}v_2\re{c_2}\dots\in E^\oo$, that we call a ""play"". 
\AP Such a "play" is ""winning@@play"" (for "Eve") if the sequence $w\in \SS^\oo$ obtained by removing from $c_0c_1c_2\dots$ the occurrences of $\ee$ belongs to $W$. We say that it is ""losing@@play"" (or \emph{winning for Adam}) on the contrary.

\AP We let $\intro*\Paths(\G)$ be the set of finite paths in $\G$; these are either non-empty sequences in $E^+$ or a vertex $v\in V$ (encoding the empty path starting in that vertex).

\subparagraph{Strategies and winning regions.} 
\AP A ""strategy"" (for "Eve") is a function $\strat: \Paths(\G) \to E$,  that tells "Eve" which move to choose after any possible "finite play".
\AP We say that a "play" $\rr\in E^\oo$ is ""consistent with the strategy"" $\strat$ if after each finite prefix $\rr'$ of $\rr$ ending in a vertex controlled by "Eve", the next edge in $\rr$ is $\strat(\rr')$.
\AP  We say that the strategy $\strat$ is ""winning from@@strat"" a vertex $v\in V$ if all infinite "plays" starting in $v$ "consistent with" $\strat$ are "winning@@play". If such a "strategy" exists, we say that "Eve" ""wins $\G$ from $v$"".
Strategies for "Adam" are defined symmetrically.

\AP The ""winning region"" of a game $\G$, written $\intro*\winRegion{\Eve}{\G}$, is the set of vertices $v\in V$ such that "Eve" "wins" $\G$ from $v$.
\AP We say that a "strategy" is ""optimal (for Eve)"" if it is "winning@@strat" from all vertices in $\winRegion{\Eve}{\G}$. (Note that Eve always has an "optimal strategy".)


\subparagraph{Determinacy.} We say that a game $\G$ is ""determined"" if either "Eve" or "Adam" have a "winning strategy from $v$", for every vertex $v$.
In this work, all "games" will be "determined", as by Martin's theorem~\cite{Martin75BorelDet} games using Borel "objectives" are "determined", and all "objectives" that we will consider (for instance, all "$\omega$-regular objectives") are Borel.


\subparagraph{Graphical representation of games.}
We use circles to represent vertices controlled by "Eve" and squares to represent those controlled by "Adam". 
We will allow ourselves to consider games with edges labelled by finite words $w=w_1w_2\dots w_n\in \SS^*$. Formally, such transitions will stand for a sequence of $n$ transitions, with $n-1$ intermediate vertices. 
We represent this kind of transitions by a wiggly arrow.
We will also use this notation for infinite words: for $w\in \SS^\oo$ we write $v\lrp{w\phantom{.}}$ for an infinite sequence of edges labelled with the letters of $w$ starting from $v$. In this case, the resulting "game graph" is necessarily infinite.

\subsubsection{Positionality}

\subparagraph*{Positional strategies.}	
We say that a "strategy" $\strat: \Paths(\G) \to E$ is ""positional@@strategy"" if there exists a mapping $\ss\colon \VEve\to E$ such that for every finite "play" $\rr=v_0\re{c_0}\dots \re{c_{n-1}} v_n$  ending in a vertex $v_n$ controlled by "Eve" we have:
\[ \strat(\rr) = \ss(v_n). \] 
That is, a "strategy" is "positional@@strat" if the choice of the next transition only depends on the current position, and not on the history of the path.

\AP We say that "Eve" (resp. Adam) can ""win positionally"" from a subset $A\subseteq V$ if there is a "positional strategy" that is "winning" from any vertex in $A$.
\AP We say that "Eve" (resp. Adam) can ""play optimally in $\G$ using a positional strategy"" if she can "win positionally" from her "winning region".

\subparagraph*{Positional objectives.}	
An "objective" $W\subseteq \SS^\oo$ is ""positional"" if for every "$W$-game", "Eve" can "play optimally using positional strategies".\footnotemark{}  We say that $W$ is ""bipositional@@objective"" if both $W$ and $\SS^\oo\setminus W$ are "positional", or, equivalently, if both "Eve" and "Adam" can "play optimally using positional strategies" in "$W$-games".
\AP If $\mathcal{X}$ is a subclass of $W$-games (notably, finite, "$\ee$-free" and "Eve-games"), we say that $W$ is  ""positional over $\mathcal{X}$ games"" if for every "$W$-game" in $\mathcal{X}$, "Eve" can "play optimally using positional strategies". 
The same terminology is used for "bipositionality".

\begin{remark}
	\AP Our notion of "positionality" uses what sometimes are called ""uniform strategies"", that is, we require that a single "positional strategy" suffices to win independently of the initial vertex. 
	This notion is strictly stronger than the non-uniform version in which we allow to use different "strategies" depending on the initial vertex.
	Said differently, the fact that Eve always has "optimal strategies" does not hold if we restrict to "positional" strategies, see Figure~\ref{fig-p0-prelim:game-non-uniform} for an example.
	Nevertheless, we note that if $\varepsilon$-edges are allowed, or for "prefix-independent" objectives, both notions of "positionality" coincide, as we can always add a vertex controlled by Adam from which he picks the starting position. 
\end{remark}

\footnotetext{As in other definitions, the notion of "positionality" depends not only on the set $W$, but also on the set of colours $\SS$. As the set of colours will always be clear from the context, we omit $\SS$ in the notations.}

\begin{figure}
	\centering
	\includegraphics[scale=1.5]{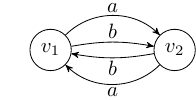}
	\caption{Consider the game above, where "Eve" controls both vertices $v_1$ and $v_2$. Let $W = ab(a+b)^\oo$ be the "winning condition" of the game, that is, "Eve" "wins" if the play starts by $ab$. She has two "positional strategies" $\strat_1$ and $\strat_2$ "winning" from $v_1$ and $v_2$, respectively. However, no "positional strategy" is winning from the entire "winning region" $\{v_1,v_2\}$.}
	\label{fig-p0-prelim:game-non-uniform}
\end{figure}

\subsubsection{Universal graphs}\label{subsec-prel:universalG}
We now introduce "universal graphs", which will serve as our main tool for deriving "positionality" results.

\subparagraph{Morphisms of \texorpdfstring{$\SS$}{Sigma}-graphs.} \AP Given two "$\SS$-graphs" $G=(V,E)$ and $G'=(V',E')$, a ""morphism of $\SS$-graphs@@univ"" $\phi$ from $G$ to $G'$ is a map $\phi:V\to V'$ such that for each edge $v \re c v' \tin G$, it holds that $\phi(v) \re c \phi(v')$ defines an edge in $G'$.
We write $\phi:G \to G'$ to denote that $\phi$ is a "morphism@@univ".

\subparagraph{Universality.} \AP Given a "$\SS$-graph" $G$, a vertex $v$ of $G$ and an "objective" $W \subseteq \SS^\omega$, we say that ""$v$ satisfies@@univ"" $W$ in $G$ if for any infinite path $v \lrp{w\phantom{.}} \tin G$, it holds that $w \in W$.
\AP Given a cardinal $\kappa$, a graph $U$ is ""$(\kappa,W)$-universal"" if all graphs $G$ of size $<\kappa$ admit a "morphism@@univ" $\phi:G \to U$ such that any vertex $v$ that "satisfies@@univ" $W$ in $G$ is mapped to a vertex $\phi(v)$ that "satisfies@@univ" $W$ in $U$.

\subparagraph{Monotonicity.}
\AP A ""totally ordered graph"" (resp. ""well-ordered graph"") is a graph $G$ together with a total order (resp. well-order) $\leq$ on its vertex set $V$.
\AP Such a graph is called ""monotone@@univ"" if
\[
u\leq v, \, v'\leq u' \tand u \re c u' \tin G \; \implies \; v \re c v' \tin G.
\]
We often note the conditions on the left by $v \geq u \re c u' \geq v'$.

We now state our main tool for proving "positionality".

\begin{proposition}[{\cite[Theorem 3.1]{Ohlmann23Univ}}]\label{prop-prelim:univ-graphs}
	Let $W \subseteq \SS^\oo$ be an "objective". If for
	all cardinals~$\kappa$ there exists a "$(\kappa, W)$-universal" "well-ordered@@univ" "monotone@@univ"  graph, then $W$ is
	"positional" over all "games".
\end{proposition}

\subparagraph{Universality for trees.}
\AP A ""$\SS$-tree"" is a "$\SS$-graph" $T$ with a distinguished vertex $t_0$, called the ""root@@univ"", and such that every vertex $t$ of $T$ admits a unique path from the root.
Since graphs (and in particular trees) are assumed without "dead-ends", trees are always infinite.
\AP We say that a tree $T$ ""satisfies@@treeUniv"" $W$ if its root $t_0$ "satisfies@@univ" $W$ in $T$.

\AP We say that a graph $U$ is ""$(\kappa,W)$-universal for trees"" if all "trees@@univ" $T$ of size $<\kappa$ which "satisfy@@treeUniv" $W$ admit a "morphism@@univ" $\phi\colon T \to U$ mapping the "root@@univ" $t_0$ to a vertex $\phi(t_0)$ that "satisfies@@univ" $W$ in $U$. 

\AP Given an "ordered@@univ" "$\SS$-graph" $U$, we let $\intro*\UTop{U}$ be the "$\SS$-graph" obtained by adding a fresh vertex $\top$, maximal for the order of the graph, with transitions $\top \re a v$ for every $a\in \SS$ and every vertex $v$ of the graph.
The following useful result follows directly from the proof of~\cite[Theorem~3.1]{Ohlmann23Univ} (see also~\cite[Theorem~3.1]{CO25LMCS}).

\begin{lemma}\label{lemma-prelim:univ-for-trees}
	Let $W \subseteq \SS^\omega$ be an "objective" and $\kappa$ a cardinal.
	If $U$ is a "well-ordered@@graph" "monotone graph@@univ" that is "$(\kappa,W)$-universal for trees", then $\UTop{U}$ is "well-ordered@@graph" "monotone@@univ" "$(\kappa,W)$-universal"  (for graphs).
\end{lemma}

Therefore, thanks to Proposition~\ref{prop-prelim:univ-graphs}, building graphs that are "universal for trees" suffices to prove positionality.

\subparagraph{Universal graph for the parity objective.}
As an important example, we give a "universal graph" for the "parity objective"; it is implicit in the works of Emerson and Jutla~\cite{EmersonJutla91Determinacy} and Walukiewicz~\cite{Walukiewicz96}. In the latter, the term \emph{signatures}  was used to name tuples of ordinals ordered lexicographically (term first used in~\cite{StreettEmerson1989}).
Such a tuple is meant to count, for each odd priority, how many times it is seen before a stronger (even or odd) priority.

\AP
\begin{example}[Universal graph for the parity objective]\label{ex-warm:graph-parity}
	Consider the ""parity objective"" over $[0,d]$, (we assume $d$ even, and use min-parity):
	\[
	\intro*\parity_{} = \{w \in \{0,\dots,d\}^\oo \mid \liminf w \text{ is even}\}.
	\]
	Fix a cardinal $\kappa$. We define a graph $\intro*\UPar$ having as set of vertices tuples $(\lambda_1,\lambda_3,\dots,\lambda_{d-1}) \in \kappa^{d/2}$ that we consider ordered lexicographically. This is indeed a "well-order". We let its edges be:
	\[
	(\lambda_1,\dots,\lambda_{d-1}) \re x (\lambda'_1,\dots,\lambda'_{d-1}) \iff \begin{cases}
		(\lambda'_1,\dots,\lambda'_{x-1}) \leq (\lambda_1,\dots,\lambda_{x-1})  &\tif x \text{ is even}, \\
		(\lambda'_1,\dots,\lambda'_{x}) < (\lambda_1,\dots,\lambda_{x}) &\tow. 
	\end{cases}
	\]
	Where the order between truncated tuples as on the right is also the lexicographic one.
	A representation of the graph $\UPar$ appears in Figure~\ref{fig-prel:universal-parity}.
	
	Clearly, $\UPar$ is "monotone@@univ". 
	We show in Lemma~\ref{lemma-prelim:infinite-paths-in-UPar} below that all vertices in $\UPar$ "satisfy@@univ"~$\parity$. Lemma~\ref{lemma-prelim:UPar-Universal} states that $\UPar$ is  "$(\kappa,\parity)$-universal for trees", so, by Lemma~\ref{lemma-prelim:univ-for-trees}, $\UTop{\UPar}$ is a "well-ordered@@graph" "monotone@@univ" "$(\kappa,\parity)$-universal" graph.
\end{example}

\begin{figure}
	\centering
	\includegraphics[scale=1.4]{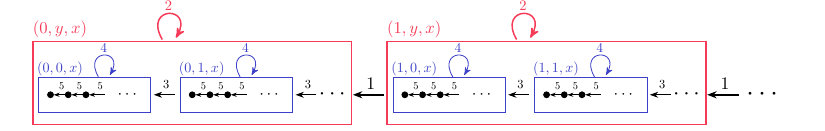}
	\caption{"Universal graph" $\UPar$ for the "parity objective" over "priorities" $[0,5]$.
		Vertices are ordered from left to right.
		Edges between two boxes $B_1 \re{x} B_2$ represent that there are edges $v_1\re{x} v_2$ for all $v_1\in B_1$ and all $v_2\in B_2$. 
		Edges obtained by monotonicity are not all represented: if $v\re{x}v'$ and $v''\leq v'$, then $v\re{x}v''$ too; for example, by reading colour $5$ from a vertex $v$ one can go to any vertex strictly on the left of $v$.
		Edges coloured $0$ are not depicted in the figure: they appear between every pair of vertices.
		The label of a box represents the forms of the names of vertices inside it.}
	\label{fig-prel:universal-parity}
\end{figure}

\begin{lemma}\label{lemma-prelim:infinite-paths-in-UPar}
	Every infinite path in $\UPar$ "satisfies@@graph" the "objective" $\parity$.
\end{lemma}
\begin{proof}
	Consider an infinite path $\rr = (\lambda_1^1,\dots,\lambda_{d+1}^1) \re{w_1} (\lambda_1^2,\dots,\lambda_{d+1}^2) \re{w_2} \dots$ in $\UPar$. Let $x$ be the minimal priority appearing infinitely often in $\rr$, which we assume odd for contradiction.
	Then, from some position, no priority $<x$ is read, thus the sequence of prefixes $(\lambda_1^i,\dots,\lambda_{x}^i)$ is decreasing and moreover strictly decreases in infinitely many places.
	This contradicts well-foundedness of the lexicographical order over tuples of ordinals.
\end{proof}

\begin{lemma}\label{lemma-prelim:UPar-Universal}
	The graph $\UPar$ is "$(\kappa,\parity)$-universal for trees".
\end{lemma}
\begin{proof}
	Take a "tree@@univ" $T$ of size $<\kappa$ which "satisfies@@treeUniv" $\parity$; note that by "prefix-independence", all vertices in $T$ "satisfy@@univ" $\parity$.
	We aim to construct a "morphism@@univ" $\phi:T \to \UPar$.
	
	Fix a vertex $t \in T$ and an odd priority $y \in \{1,3,\dots,d-1\}$.
	Then, in any path $t \lrp{w\phantom{.}}$ in $T$ there are only finitely many occurrences of $y$ before a smaller priority appears.
	\AP We may define an ordinal $\intro*\rankOrd_y(t)$ capturing the number of such occurrences: $\rankOrd_y(t)$ satisfies that, if $t'$ is an $x$-sucessor of $t$ (that is, $t\re{x}t'$), then:
	\begin{itemize}
		\item $\rankOrd_y(t')\leq \rankOrd_y(t)$, if $y<x$, and
		\item $\rankOrd_y(t')< \rankOrd_y(t)$ if $x=y$.
	\end{itemize}  
	It can be verified that $\phi:v \mapsto (\rankOrd_1(v),\rankOrd_3(v),\dots,\rankOrd_{d-1}(v))$ defines a "morphism@@univ" from $T$ to $\UPar$.
\end{proof}

\subsection{Automata over infinite words}\label{subsec-prel:automata}

\subsubsection{Parity automata}

\AP A ""(non-deterministic) parity automaton"" over the alphabet $\SS$ is represented by a tuple $\A = (Q,\Sigma, q_\init, \DD, \intro*\colAut)$, where $Q$ is a finite set of states, $\SS$ is a set of letters called the input alphabet (possibly infinite), $q_\init$ is the ""initial state"", $\DD\subseteq Q\times \SS \times Q$ is a set of transitions, and $\colAut\colon \DD \to [d_{\min},d_{\max}]$ 
\AP where $[d_{\min},d_{\max}]\subseteq \NN$ is a finite subset of numbers that we refer to as ""priorities"".
We write $q\re{a:x}q'$ to indicate that there is a transition $e=(q,a,q')\in\DD$ with $\colAut(e)=x$.
\AP We refer to transitions of a "parity automaton" labelled with "priority" $x\in \NN$ as ""$x$-transitions@@out"". Similarly, we refer to transitions having input letter $a\in \SS$ as ""$a$-transitions@@in"". The difference between the two uses of the term should be clear from the context.
\AP An "automaton" $\A' = (Q',\Sigma', q'_\init, \DD', \colAut')$ is a ""subautomaton"" of $\A$ if $Q'\subseteq Q$, $\DD'\subseteq \DD$ and $\colAut'$ is the restriction of $\colAut$ to $\DD'$.

\AP For a state $q\in Q$, we write $\intro*\initialAut{\A}{q}$ for the automaton obtained by setting $q$ as "initial state".

\AP We assume in all the paper that all automata are ""complete"", that is if for every $q\in Q$ and $a\in \Sigma$, there is at least one transition $q\re{a:s}$. 

\AP A ""strongly connected component"" (shortened SCC) of an automaton $\A$ is a maximal set of states $S\subseteq Q$ such that any pair of states in $S$ are interreachable. We say that a "SCC" is ""trivial@@SCC"" if it is a singleton.
\AP A state $q$ is ""recurrent"" if it belongs to some "non-trivial" "SCC", and ""transient"" otherwise.

\AP An ""automaton structure"" $\S$ is an "automaton@@parity" without a colouring function $\colAut$, and $\A$ is a ""parity automaton on top of"" $\S$ if it has been obtained by defining a colouring $\colAut$ on $\S$.

\subparagraph*{Runs and recognisability.}
\AP A  ""run over"" an infinite word $w=a_0a_1a_2\dots\in \SS^\oo$ in $\A$ is a path
\[\rr = q_\init \re{a_0:x_0}q_1\re{a_1:x_1}q_2\re{a_2:x_2}\dots\in \DD^\oo.\] 
\AP It is ""accepting@@run"" if 
\[ \min \{x\in \NN \mid x = x_i \text{ for infinitely many } i\}	\; \text{ is even},   \]
 and ""rejecting@@run"" if the minimal "priority" produced infinitely often is odd (note that we use the min-parity condition).
\AP A word $w\in \SS^\oo$ is  ""accepted@@word"" by $\A$ if there exists an "accepting@@runAut" "run over $w$".
\AP The ""language recognised"" by an automaton $\A$ is the set 
\[ \intro*\Lang{\A}= \{ w\in \Sigma^\oo \mid w \text{ is "accepted@@word" by } \A \}.\]
\AP Two "automata" "recognising" the same language are said to be ""equivalent@@aut"".
\AP A language is called ""$\oo$-regular"" if it can be "recognised" by a "parity" automaton.

\AP For an even "priority" $x$, we say that a word $w\in \SS^\oo$ can be ""accepted with priority $x$"" in $\A$ if there exists a "run over $w$" such that the minimal "priority" produced infinitely often is $x$.
\AP For an odd priority $x$, we say that $w$ ""is rejected with priority $x$"" if in every "run over $w$"
 the minimal "priority" produced infinitely often is $x$.
Note that in a non-deterministic automaton, not all rejected words need to have a well-defined rejecting priority.

\begin{remark}[Transition-based acceptance]
	We emphasise that in our definition, the acceptance condition is put over the \emph{transitions} of the automaton. This will be a crucial element in our characterisation. We refer to~\cite[Chapter~VI]{Casares23Thesis} for further discussions on the comparison between transition-based and state-based automata.
\end{remark}

\subparagraph*{\texorpdfstring{B\"uchi}{Büchi} and \texorpdfstring{coB\"uchi}{coBüchi} automata.}
\AP A ""B\"uchi automaton"" is a "parity automaton" using $[0,1]$ as its set of "priorities". 
\AP "Parity automata" using $[1,2]$ as set of "priorities" are called ""coB\"uchi@@aut"". 
\AP We say that a language $W$ is ""B\"uchi recognisable"" (resp. ""coB\"uchi recognisable"") if it can be "recognised" by a "deterministic" "B\"uchi automaton"  (resp. "deterministic" "coB\"uchi automaton"). We note that these classes are incomparable and strict subclasses of the "$\oo$-regular languages".

\AP For $u\in\SS^*$, we write $\intro*\infOften(u)$ (resp. $\intro*\finOften(u)$) to denote the language of infinite words containing infinitely often (resp. finitely often) the factor $u$. We note that these languages are "B\"uchi@@rec" and "coB\"uchi recognisable", respectively.
We also write $\intro*\noOcc(u)$ for the language of infinite words avoiding any occurrence of the factor $u$.

\subparagraph{Determinism and homogeneity.} 
\AP We say that an "automaton" $\A$ is ""deterministic"" if for every $q\in Q$ and $a\in \Sigma$, there is only one $a$-transition $q\re{a:x}$ outgoing from $q$.
\AP Let $\DD'\subseteq \DD$ be a subset of transitions of an "automaton"~$\A$. We say that $\A$ is ""deterministic over~$\DD'$"" if the restriction of $\A$ to $\DD'$ is "deterministic", that is, for each letter $a$, there is a most one outgoing "$a$-transition@@out" in $\DD'$ from each state. 
\AP Any "parity" automaton admits an "equivalent" "deterministic" one~\cite{Mostowski1984RegularEF}.

\AP We say that a "parity automaton" $\A$ is ""homogeneous"" if for every state $q\in Q$ and letter $a\in \SS$, if $q\re{a:x}p$ is a transition in $\A$, then any other "$a$-transition@@in" from $q$ produces "priority"~$x$.

\begin{remark}
	Let $\A$ be a "homogeneous" "parity automaton" that is "deterministic over" transitions producing priority $x$. If $q\re{a:x}p$ is a transition in $\A$, then there is no other outgoing "$a$-transition@@out" from $q$.
\end{remark}

\subparagraph{Notations for paths.} 
For two states $q, p$ of a "parity automaton" $\A$ and a finite word $w\in \SS^*$, we write $q\lrp{w:x}p$ if there exists a path from $q$ to $p$ labelled $w$ such that the minimal priority appearing on it is $x\in \NN$.
We write $q\lrp{w:\geq x}p$ (resp. $q\lrp{w:> x}p$) to denote that there exists such a path producing no "priority" $< x$ (resp. $\leq x$). This is possibly the empty path $q\re{\ee}q$, producing no priority.
We use similar notations for $\leq x$ and $< x$.
We generalise these notations for infinite paths: for an infinite word $w\in \SS^\oo$ we write $q\lrp{w:x}$ if there exists an infinite path from $q$ labelled $w$ such that the minimal priority seen on it is $x$.

We may apply this notations to "non-deterministic" automata -- hence the use of an existential quantification -- though in most cases we will work with "deterministic" ones.

\subparagraph*{History-deterministic automata.}
Let $\A= (Q,\Sigma, q_\init, \DD,\colAut)$ be a (non-deterministic) "parity automaton".
\AP A ""resolver@@aut"" for $\A$ is a function $\resolv\colon \DD^* \times \SS \to \DD$ such that, for all words $w = a_0a_1\dots \in \SS^\oo$, the sequence $e_0e_1\dots \in \DD^\oo$, called the ""run induced by@@aut"" $\resolv$ over $w$ and defined by $e_i = \resolv(e_0\dots e_{i-1}, a_i)$, is actually a "run over" $w$ in $\A$.
\AP We write $q_\init \intro*\lrpResolver{\resolv}{w:x} q$ to denote that the "run induced by@@aut" $\resolv$ over $w$ produces $x$ as minimal priority and lands in $q$.

\AP We say that the resolver is ""sound@@aut"" if it satisfies that, for every $w\in \Lang{\A}$, the "run induced by@@aut" $\resolv$ over $w$ is an "accepting run@@aut". 
In other words, $\resolv$ should be able to construct an "accepting run@@aut" in $\A$ letter-by-letter with only the knowledge of the word so far, for all words in $\Lang \A$.

\AP An "automaton" $\A$ is called ""history-deterministic"" (shortened HD) if there is a "sound resolver@@aut" for it. "History-deterministic" automata are sometimes called good-for-games in the literature, we refer to~\cite{BL23SurveyHD} for a discussion on the relation between these notions and a survey on them.

\subparagraph*{Normal form of parity automata}
Let $\A = (Q,\Sigma, q_\init, \DD, \colAut)$ be a "parity automaton". We say that a labelling $\colAut'\colon \DD \to [d_{\min}',d_{\max}']$ is ""equivalent to@@labelling"" $\colAut$ over $\A$ if for every cycle $\ell\subseteq \DD$, $\min \colAut(\ell)$ is even if and only if $\min \colAut'(\ell)$ is even.
We say that a "SCC" of a "parity automaton" is ""positive@@SCC"" if the minimal "priority" appearing on it is even, and ""negative@@SCC"" otherwise.

\begin{definition}[{Normal form~\cite{CCFL24FromMtoP}}]\label{def-prelim:normality-recall}
	\AP A "parity automaton" $\A$ is in ""normal form"" if it holds that for every pair of states $q, p$ in a same "positive SCC" (resp. "negative SCC"), whenever there is a path $q\lrp{w:x} p$ producing $x$ as minimal "priority", then, for every $0\leq y\leq x$ (resp. $1\leq y\leq x$), there is a returning path $p\lrp{w':y}q$  producing $y$ as minimal priority.\footnote{This notion can be refined in the natural way to fit automata not using priority $0$ at all (for instance "coB\"uchi automata"). We refer to~\cite{CCFL24FromMtoP} for formal details.}
\end{definition}

That is, if $\A$ is in "normal form", the restriction of $\A$ to priorities $\geq x$ consists in a disjoint union of "strongly connected components". Moreover, if priority $y>x$ appears in one of these "SCCs", then all priorities between $x$ and $y$ appear on it.

Every "parity automaton" admits an "equivalent labelling" so that the obtained automaton is in "normal form", and this labelling is unique (except for the labelling of edges changing of "SCC", for which no condition is imposed).
\AP This labelling is the one assigning to each transition the smallest possible priority~\cite[Theorem~6.27]{CCFL24FromMtoP}.
Moreover, it can be computed in polynomial time.
We refer to this process as the ""normalisation"".

\begin{proposition}[{\cite{CartonMaceiras99RabinIndex}}]\label{prop-prelim:normalF-polytime}
	Given a "parity automaton" $\A$, we can compute in polynomial time an "equivalent labelling" defining an automaton in "normal form".
\end{proposition}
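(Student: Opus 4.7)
The plan is to give a recursive polynomial-time algorithm that processes "strongly connected components" hierarchically, peeling off the minimum priority at each nested level. First, I would compute the "SCC" decomposition of $\A$ using Tarjan's algorithm in linear time. Transitions not contained in any "non-trivial" "SCC" never lie on any cycle and are thus irrelevant for acceptance of any "run"; we set these aside and assign them the maximum output priority at the very end.

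The heart of the procedure is a recursive routine $\texttt{Normalize}(S, p)$, which takes a "non-trivial" "SCC" $S$ and a target priority $p$ whose parity matches that of the minimum priority currently appearing in~$S$. The routine sets $x = \min\{\colAut(e) : e \in S\}$, relabels every "$x$-transition@@out" inside $S$ with the new priority $p$, removes these transitions from $S$, and recomputes the "SCC" decomposition of the remainder. For every "non-trivial" sub-"SCC" $S_i$ thereby produced, with new minimum priority $y_i$, we recursively call $\texttt{Normalize}(S_i, p')$, where $p' = p+2$ when $y_i$ has the same parity as~$x$ and $p' = p+1$ otherwise. The top-level call is made on each maximal "SCC" of $\A$ with starting priority $0$ or $1$ according to the parity of its minimum priority, and at the end we renormalise the range of produced priorities to $[d_{\min}', d_{\max}']$.

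Correctness is twofold. Equivalence with the original labelling follows inductively: any cycle of $\A$ lies in a unique "SCC" at every level, and either it contains one of the freshly-relabelled priority-$x$ transitions (in which case the new minimum priority $p$ has the same parity as $x$), or it stays entirely within some sub-"SCC" $S_i$, in which case the induction hypothesis applies. The "normal form" property holds by construction: each "SCC" $S$ produced along the recursion has minimum priority $p$ (realised on cycles through the freshly-relabelled transitions), and the nested sub-"SCCs" provide cycles of every successive produced priority in $S$ of matching parity; strong connectivity of~$S$ then lets us splice such a cycle with an internal detour to yield, between any two states of~$S$, a returning path of any desired minimum priority from the range used in~$S$. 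For complexity, each recursive call performs one "SCC" decomposition (linear time) and each transition is relabelled in exactly one call, so the total running time is polynomial in $|\A|$. The main subtlety, and likely the main obstacle to a fully formal write-up, is the careful parity bookkeeping when choosing the shift ($+1$ versus $+2$) in the recursion, which is precisely what guarantees both equivalence of labellings and that no spurious priorities are introduced that would break the "normal form" invariant.
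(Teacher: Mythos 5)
Your overall strategy --- recursively peeling the minimal priority off each strongly connected component --- is the standard route here (the paper does not prove this proposition but cites the Carton--Maceiras algorithm, which works exactly along these lines), but your parity bookkeeping contains a genuine error that breaks the normal-form property. When a sub-SCC $S_i$ obtained after deleting the priority-$x$ transitions has minimal priority $y_i$ of the \emph{same} parity as $x$, you recurse with target $p+2$; the correct target is $p$ itself. Indeed, every cycle through a priority-$y_i$ transition of $S_i$ either stays inside $S_i$ (minimal original priority exactly $y_i$, same parity as $x$) or leaves $S_i$ and therefore must cross one of the deleted priority-$x$ transitions (minimal original priority $x$); in both cases the parity is that of $p$, so these transitions may legitimately receive priority $p$, and the minimal labelling does assign it. The target should increase only when the parity flips, and then only by $1$. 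With your $+2$ rule the output can skip priorities: take a single SCC on states $q_1,q_2$ with transitions $q_1 \re{a:0} q_2$, $q_1 \re{b:0} q_1$, $q_2 \re{a:2} q_2$ and $q_2 \re{b:0} q_1$. Your algorithm leaves the self-loop at priority $2$ while producing no priority-$1$ transition anywhere, so there is a path $q_2 \lrp{a:2} q_2$ inside a positive SCC with no returning path of minimal priority $1$; the labelling is equivalent to the original one but violates Definition~\ref{def-p4-prelim:normality-recall}. (The correct normal form assigns priority $0$ to every transition of this automaton, since every cycle has even minimal priority.) Your correctness argument conceals exactly this point: the claim that the nested sub-SCCs provide cycles of every successive produced priority of matching parity fails as soon as a priority has been skipped.

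A second, smaller defect of the same nature: inside the recursion, a transition of $S$ that survives the deletion of the priority-$x$ transitions but does not fall into any non-trivial sub-SCC still lies on cycles of $S$ (all of which cross a priority-$x$ transition), so it must receive the current base priority $p$; deferring it to the global maximum priority again yields an equivalent but non-normalised labelling, since such a transition is itself a path of large minimal priority between two states of $S$. Both problems are repaired by a single change: recurse into $S_i$ with target $p$ when the parity of $y_i$ agrees with that of $x$ and with target $p+1$ when it does not, and assign priority $p$ to the leftover transitions of the current level. With that correction, the rest of your argument --- equivalence by induction on the SCC nesting, the splicing argument producing returning paths of every intermediate minimal priority, and the linear number of SCC decompositions --- goes through and recovers the cited result.
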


\subparagraph{Automata with \texorpdfstring{$\ee$}{epsilon}-transitions.}
\AP An ""automaton with $\ee$-transitions"" is defined just as an "automaton" over the alphabet $\SS \cup \{\ee\}$, where $\ee \notin \SS$ is a distinguished letter.
The language of an automaton $\A$ with $\ee$-transitions is the set of words $w \in \SS^\omega$ such that there exists $w' \in (\SS \cup \{\ee\})^\omega$ which is "accepted by" $\A$ and such that $w$ is obtained from $w'$ by removing all occurrences of the letter $\ee$.

\subparagraph*{Alphabets of words.}
As an important element of our main proof, we will need to consider automata whose transitions are labelled from an alphabet $A \subseteq \SS^+$ of finite words.
Such an automaton defines a language $L \subseteq A^\omega$ which we would like to see as a language $L \subseteq \SS^\omega$; however, this may pose a problem if a word $w \in \SS^\omega$ admits several decompositions in $A^\omega$.

\AP We say that a set $A\subseteq \SS^+$ is a ""uniquely decodable alphabet"" if any word $w\in \SS^\omega$ admits a unique decomposition as elements of $A$: for any infinite sequences $a_1,a_2,\dots$ and $a'_1,a'_2,\dots$ of elements of $A$, if $a_1 a_2 \dots = a'_1 a'_2 \dots$ then $a_i=a'_i$ for all $i$.

\AP We will only consider alphabets of words $A \subseteq \SS^+$ which are ""prefix codes"": if $a \in A$, then no proper prefix of $a$ belongs to~$A$.
It is an easy check that these are "uniquely decodable", and therefore one may indeed see a language $L \subseteq A^\omega$ as $L \subseteq \SS^\omega$.

\subsubsection{Congruences and monotone preorders over automata}\label{subsec-prel:congruences}
	
	\subparagraph{Equivalence relations and preorders.}
	\AP We will use $\sim_X$ to denote different equivalence relations, and $[q]_X$ to denote the ""equivalence class"" of an element $q$ (which is usually a state in an automaton).
	
	\AP A ""preorder"" $\sqsubseteq_X$ is a binary relation that is reflexive and transitive. (We reserve the symbol $\sqsubseteq$ for preorders over states of automata.)
	\AP The ""equivalence relation induced from a preorder"" $\sqsubseteq_X$ is the relation defined as:
	\[ q \sim_X q'  \; \iff\; q \sqsubseteq_X q' \tand q' \sqsubseteq_X q.\]
	\AP Given a "preorder" $\sqsubseteq_X$, we always write $\sim_X$ for the induced equivalence relation, and simply write $\sqsubseteq$ for the induced order over equivalence classes, for instance we may write $[q]_X \sqsubseteq [q']_X$.
	\AP A "preorder" is ""total@@preorder"" if every pair of elements are comparable.

	\AP Let $\R_1$ and $\R_2$ be two binary relations over a set $A$ (usually "preorders" or equivalence relations). We say that $\R_1$ is a ""refinement"" of $\R_2$ if for all $q,p\in A$, $q\mathrel{\R_1} p$ implies $q\mathrel{\R_2} p$.
	We note that if $\sqsubseteq_1$ is a "preorder" "refining" $\sqsubseteq_2$, then the "induced equivalence relation" $\sim_1$ "refines"~$\sim_2$.

	\subparagraph{Congruences, uniformity and monotonicity.}
	Let $\A$ be a (possibly "non-deterministic") "automaton" over $\SS$ with states $Q$ and transitions $\DD$.
	Let $\sim$ be an equivalence relation over $Q$ and let $\DD'\subseteq \DD$ be a subset of transitions (usually $\DD'$ will be the set of transitions using a given "priority").
	\AP We say that transitions of $\DD'$ ""are uniform over $\sim$-classes"" if for all $q\sim q'$ and $a \in \SS$, if $q \re {a}p\in \DD'$ then all "$a$-transitions@@out" $q' \re{a}$ are in $\DD'$.
	\AP We say that $\sim$ is a ""congruence for"" $\DD'$ (or that transitions in $\DD'$ \emph{preserve $\sim$}) if for all $q\sim q'$ and $a \in \SS$, if $q \re {a}p\in \DD'$ then there exists $q' \re{a}p' \in \DD'$, and for all such transitions $p\sim p'$. If $\DD'=\DD$, we just say that $\sim$ is a ""congruence"".
	\AP We say that $\sim$ is a ""strong congruence for"" $\DD'$ if, moreover, we have the equality $p= p'$ for transitions as above.
	
	\begin{remark}
		If $\A$ is "deterministic" and $\sim$ is a "congruence for" $\DD'$, then these transitions "are uniform over $\sim$-classes".
	\end{remark}
	
	\begin{figure}
		\centering
		\includegraphics[scale=1.5]{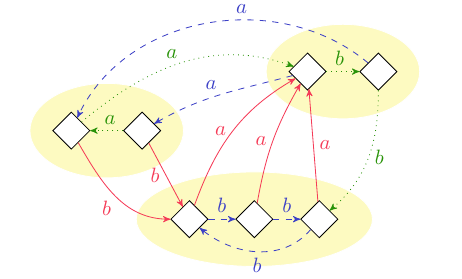}
		\caption{Representation of the notions of "uniformity", "congruence" and "strong congruence". We picture an automaton with three equivalence classes, each of them represented by a yellow bubble. Green-dotted transitions are "uniform" over the classes, but the relation is not a "congruence" for them. The relation is a "congruence for" blue-dashed transitions, and a "strong congruence for" red-solid transitions.}
	\end{figure}

	Let $\sqsubseteq$ be a "preorder" over $Q$.
	\AP We say that transitions in $\DD'$ ""are monotone for"" $\sqsubseteq$ if for all $q \sqsubseteq q'$ and $a \in \SS$, if $q \re a p\in\DD'$ then there exists $q' \re a p'\in \DD'$ and for all such transitions, $p\sqsubseteq p'$.
	\AP Transitions in $\DD'$ are said strictly monotone for $\sqsubseteq$ if, moreover, whenever $q < q'$,  $q \re a p\in\DD'$ and $q'\re a p'\in \DD'$, we have $p<p'$.
	If $\DD'=\DD$, we simply say that $(\A,\sqsubseteq)$ is ""(strictly) monotone@@aut"".

	All these properties can equivalently be stated with words $w \in \SS^*$ instead of letters $a \in \SS$.

	\begin{remark}
		If transitions in $\DD'$ are "monotone for a preorder" $\sqsubseteq$, then its "induced equivalence relation" is a "congruence for $\DD'$".
	\end{remark}

	\subparagraph{Quotient by a congruence.} Let $\A$ be an automaton and let $\sim$ be a "congruence" over its set of states $Q$. 
	\AP We define the ""quotient of $\A$ by $\sim$"" to be the "automaton structure" $\intro*\quotAut{\A}{}$ given by:
	\begin{itemize}
		\item The set of states are the $\sim$-classes.
		\item There is a transition $[q]\re{a}[p]$ if there are $q'\in [q]$, $p'\in [p]$ such that $q'\re{a}p'$ in $\A$.
		\item The "initial state" is $[q_\init]$.
	\end{itemize}
	
	We note that if $\sim$ comes from a "monotone@@cong" preorder, the obtained "automaton structure" $\quotAut{\A}{}$ with the induced order over the classes is "monotone@@aut".
	
	\begin{remark}
		The quotient $\quotAut{\A}{}$ is a "deterministic" "automaton structure".
	\end{remark}

	\AP A "run over a word" $w$ in $\A$, $\rr = q_0\re{w_0} q_1\re{w_1}\dots$ naturally induces a "run over" $w$ in $\quotAut{\A}{}$, $[q_0]\re{w_0} [q_1]\re{w_1}\dots$, that we call the ""projection of $\rr$@@quot"" in the quotient automaton. 
	
	\begin{lemma}\label{lemma-prelim:induced-run-quotient}
		Let $\sim$ be a "congruence" in $\A$.
		Any run in $\quotAut{\A}{}$ is the "projection@@quot" of some "run" in $\A$.
	\end{lemma}
	\begin{proof}
		Let $[q_0]\re{w_0} [q_1]\re{w_1}\dots$ be a "run" in $\quotAut{\A}{}$. We build the desired run in $\A$ recursively. For the base case, it suffices to take $p_0\in [q_0]$ to be the "initial state" of $\A$ (which belongs to $[q_0]$ by definition of the "initial state" of $\quotAut{\A}{}$).
		Suppose that $p_0\re{w_0} p_1\re{w_1}\dots p_k$ has already been built, with $p_i\in [q_i]$. By definition of the "quotient automaton", there are $q_k'\in [q_k]$ and $q_{k+1}'\in [q_{k+1}]$ with $q_k'\re{w_k} q_{k+1}'$. By the definition of a "congruence", there is a transition $p_k\re{w_k} p_{k+1}$ and $p_{k+1}\in [q_{k+1}]$.
	\end{proof}

\subparagraph{Notations on paths in automata with a congruence.}
Let $\sim$ be a "congruence" over the states of a "parity automaton" $\A$. We write $[q]\re{a:x} [p]$ if for all $q'\sim q$, every "$a$-transition" from $q'$ is of the form $q'\re{a:x}p'$ with $p'\sim p$.
We extend this notation to paths $[q]\lrp{w:x} [p]$ and for outputs ${\leq}x$, ${<}x$, ${\geq}x$ and ${>}x$ in the natural way.
\begin{remark}
	If $\sim$ is a "congruence for" transitions producing priority $x$, $q\lrp{w:x} p$ implies $[q]\lrp{w:x} [p]$.
\end{remark}

Let $\A$ be a "history-deterministic" automaton with initial state $q_\init$ and let $\resolv$ be a "resolver" for it. 
We recall that we write $q_\init\lrpResolver{\resolv}{u:x}q$ to denote the "run induced by@@res" $\resolv$ over $u$. We use the same conventions as above regarding outputs with the symbols ${\leq}x$, ${<}x$, ${\geq}x$ and ${>}x$.

\AP For $u_0\in \SS^*$, we write $q\intro*\lrpResolverWord{\resolv}{u_0}{w:x}p$ if:
\begin{itemize}
	\item $q_\init\lrpResolver{\resolv}{u_0}q$, and
	\item the "induced run of $\resolv$" over $u_0w$ ends in $p$ and produces $x$ as minimal "priority" in the part of the run corresponding to $w$.
\end{itemize}
\AP We write $q \intro*\lrpExistsResolver{\resolv}{w:x}p$ if $q\lrpResolverWord{\resolv}{u_0}{w:x}p$ for some $u_0\in \SS^*$.
\AP We write $q \intro*\lrpAllResolver{\resolv}{w:x}p$ if, for any word $u_0\in \SS^*$ such that $q_\init\lrpResolver{\resolv}{u_0}q$, we have $q\lrpResolverWord{\resolv}{u_0}{w:x}p$.

If $\sim$ is a "congruence" in $\A$, we write $[q] \lrpAllResolver{\resolv}{w:x}[p]$ if, for any word $u_0\in \SS^*$ such that $q_\init\lrpResolver{\resolv}{u_0}q'\in[q]$, we have $q'\lrpResolverWord{\resolv}{u_0}{w:x}p'\in [p]$.
We avoid using this notation for paths quantified existentially, as we consider that the corresponding semantics are not as intuitive.

\subsubsection{Residuals and semantic determinism}
	\subparagraph{Residuals of a language.}	
	Let $L\subseteq \SS^\oo$ be a language of infinite words and let $u\in \SS^*$. 
	\AP We define the ""residual of $L$ with respect to"" $u$ by
	\[
	\intro*\lquot{u}{L} = \{ w\in \SS^\oo \mid uw\in L\}.
	\] 
	\AP We denote $\intro*\Res{L}$ the set of "residuals" of $L$, which we will always order by inclusion.
	\AP This induces an "equivalence relation" $\intro*\eqRes{L}$ over $\SS^*$ given by the equality of "residuals".
	\AP The corresponding equivalence classes $\intro*\resClass{u}=\{u'\in \SS^*\mid \lquot{u}{L} = \lquot{u'}{L} \}$ are called ""residual classes"".
	We write $\resClass{u} \intro*\leqRes \resClass{u'}$ if $\lquot{u}{L} \subseteq \lquot{u'}{L}$ (and $\lRes$ if this inclusion is strict).
	
%

	\begin{remark}\label{rmk-prelim:left-quot-omega-regular}
		If $L$ is "$\oo$-regular", $\Res{L}$ is finite, and for all $u\in \SS^*$, $\lquot{u}{L}$ is also "$\oo$-regular". Contrary to the case of finite words, there are non "$\oo$-regular" languages with a finite set of "residuals".
	\end{remark}
	
	We now state a key monotonicity property for residuals; its proof is a direct check.
	
	\begin{lemma}\label{lemma-prelim:monotonicity-residuals}
		For any language $L \subseteq \SS^\oo$ and for any finite words $u,u',w \in \SS^*$, if $\resClass{u} \leq \resClass{u'}$ then $\resClass{uw} \leq \resClass{u'w}$. In particular, if $\resClass{u} = \resClass{u'}$ then $\resClass{uw} = \resClass{u'w}$.
	\end{lemma}

	\subparagraph{Prefix-independence.}	\AP We say that a language $L\subseteq \SS^\oo$ is ""prefix-independent""\footnote{In some parts of the litterature, prefix-independence is referred to as shift-invariance.} if for all $w\in \SS^\oo$ and $u\in \SS^*$,  $uw\in L$ if and only if $w\in L$.
	Equivalently, $L$ is "prefix-independent" if and only if $\Res{L}$ is a singleton.

	\subparagraph{Semantic determinism.} 
	\AP We say that an "automaton" $\A$ is ""semantically deterministic"" if for all state $q\in Q$, letter $a\in \SS$ and transitions $q\re{a}p_1$ and $q\re{a}p_2$, it is satisfied that $\Lang{\initialAut{\A}{p_1}} = \Lang{\initialAut{\A}{p_2}}$, where   $\initialAut{\A}{p}$ is the automaton obtained by setting $p$ as "initial state". 
	
	\begin{lemma}[\cite{KS15DeterminisationGFG}]
		Any "history-deterministic" "automaton" contains an "equivalent@@aut" "semantically deterministic" and "history-deterministic" "subautomaton". Moreover, for "parity automata", this "subautomaton" can be computed in polynomial time.
	\end{lemma}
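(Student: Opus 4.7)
The plan is to construct $\A'$ by pruning every ``sub-optimal'' transition of $\A$: concretely, I would keep a transition $q \re{a} p$ in $\A'$ if and only if the language $\Lang{\initialAut{\A}{p}}$ is inclusion-maximal among $\{\Lang{\initialAut{\A}{p'}} : q \re{a} p' \text{ in } \A\}$, the languages of the $a$-successors of $q$ in $\A$. Note that this pruning rule does not depend on the choice of initial state.

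The first step is to check that $\A'$ remains "equivalent" to $\A$ and "history-deterministic". Starting from any "sound" "resolver" $\resolv$ of $\A$, we modify it iteratively: whenever $\resolv$ picks a non-maximal target $p$ after a history $u$, we redirect to some maximal $p'$ with $\Lang{\initialAut{\A}{p'}} \supsetneq \Lang{\initialAut{\A}{p}}$, which still accepts every remaining continuation, and we resume from $p'$ with any "sound resolver" of $\initialAut{\A}{p'}$. Such a resolver exists because "history-determinism" is inherited when changing the initial state of an HD automaton to any reachable one. Applying the same argument at every substate, we obtain that $\Lang{\initialAut{\A'}{p}} = \Lang{\initialAut{\A}{p}}$ for all reachable $p \in \A'$, which gives both $\Lang{\A'} = \Lang{\A}$ and "history-determinism" of $\A'$.

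The main technical obstacle is to establish "semantic determinism" of $\A'$, which reduces to the following claim: whenever two $a$-successors $p_1, p_2$ of a reachable state $q$ are both kept in $\A'$, then $\Lang{\initialAut{\A}{p_1}} = \Lang{\initialAut{\A}{p_2}}$. Suppose otherwise; by maximality the two languages are incomparable, so we may pick words $w_i \in \Lang{\initialAut{\A}{p_i}} \setminus \Lang{\initialAut{\A}{p_{3-i}}}$ for $i \in \{1,2\}$. Fix a "sound resolver" $\resolv$ and a history $u$ leading via $\resolv$ to $q$; on reading $a$, $\resolv$ commits to some specific successor $p^* \in \{p_1, p_2\}$. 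Extending with the $w_i$ such that $p_i \neq p^*$ produces a word $uaw_i$ which lies in $\Lang{\A}$ (witnessed by a run through $p_i$), yet whose $\resolv$-induced run continues from $p^*$ on $w_i \notin \Lang{\initialAut{\A}{p^*}}$ and is therefore "rejecting", contradicting soundness of $\resolv$. Combined with the previous step, this yields "semantic determinism" of $\A'$.

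For the polynomial-time bound in the "parity" case, naively comparing residuals is too costly (language inclusion between parity automata is not polynomial in general). Instead, following the approach of~\cite{KS15DeterminisationGFG}, one identifies the transitions to keep by solving a polynomial-sized ``letter game'' or two-copy construction over $\A$ in which Eve simulates a resolver; the subautomaton induced by the transitions used by a winning strategy for Eve is then the desired "semantically deterministic" and "history-deterministic" "subautomaton", and can be extracted in polynomial time.
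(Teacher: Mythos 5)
The paper does not actually prove this lemma: it imports it from Kuperberg and Skrzypczak, so I am comparing your argument with the standard one rather than with a proof in the text.

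There is a genuine gap: pruning to \emph{all} inclusion-maximal $a$-successors does not produce a semantically deterministic automaton. The failure occurs at states that are reachable in $\A$ but not reachable by any sound resolver; there, two maximal $a$-successors may have incomparable languages, and your rule keeps both. Concretely, over $\SS=\{a,b\}$ take the one-state deterministic B\"uchi automaton for $\infOften(a)$ with state $q_0$, and add a state $r$ with transitions $q_0 \re{b:1} r$, $r \re{b:1} q_0$, $r \re{a:1} s$ and $r \re{a:1} t$, where $s$ and $t$ head deterministic B\"uchi automata for $\infOften(aa)$ and $\infOften(ab)$ respectively. Since $\infOften(aa)\cup\infOften(ab)=\infOften(a)$, one checks that $\Lang{\initialAut{\A}{q_0}}=\Lang{\initialAut{\A}{r}}=\infOften(a)$, that the whole automaton recognises $\infOften(a)$, and that it is history-deterministic (a sound resolver never leaves $q_0$). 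Yet $s$ and $t$ are both inclusion-maximal $a$-successors of $r$ with incomparable languages, so your $\A'$ retains both transitions and is not semantically deterministic. Your key claim breaks down exactly here: the argument picks ``a history $u$ leading via $\resolv$ to $q$'', which presupposes that $q$ is resolver-reachable ($r$ is not: no sound resolver can afford to enter $r$, precisely because no $a$-successor of $r$ realises the full residual $\infOften(a)$), and it further assumes $p^*\in\{p_1,p_2\}$, which also need not hold. For resolver-reachable $q$ this second point is repairable --- soundness gives $\Lang{\initialAut{\A}{p^*}}\supseteq\Lang{\initialAut{\A}{p_1}}\cup\Lang{\initialAut{\A}{p_2}}$, contradicting maximality of $p_1$ --- but nothing rescues the non-resolver-reachable states. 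The same example refutes the statement, invoked in your second step, that $\initialAut{\A}{p}$ is history-deterministic for every \emph{reachable} $p$: this is only true for resolver-reachable states.

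The correct construction, which is the one of Kuperberg--Skrzypczak, keeps only the transitions actually \emph{used} by one fixed sound resolver (equivalently, by a winning strategy of Eve in the letter game), together with the states reached through them. Every surviving state $q$ then satisfies $\Lang{\initialAut{\A}{q}}=\lquot{u}{\Lang{\A}}$ for any resolver-history $u$ reaching it; soundness forces every surviving transition $q\re{a}p$ to satisfy $\Lang{\initialAut{\A}{p}}=\lquot{a}{\Lang{\initialAut{\A}{q}}}$; and equivalence, history-determinism, completeness and semantic determinism all follow at once. So the letter game is not merely a device for the polynomial-time bound, as your last paragraph suggests: it is what defines the subautomaton in the first place.
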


	\begin{globalHyp*}
		We assume in the whole paper that "history-deterministic" automata are "semantically deterministic".
	\end{globalHyp*}
	
	We refer to \cite{AK23SemDet} for more details on "semantically deterministic" automata.
	
	\subparagraph{Residual associated to a state.} 
	\AP We write $q\intro*\eqResState{\A} q'$ if $\Lang{\initialAut{\A}{q}}= \Lang{\initialAut{\A}{q'}}$ (and drop the subscript if $\A$ is clear from the context). The ""class of $q$"", written $\intro*\resClassState{q}$, is the set of states equivalent to $q$.
	
	\AP If $\A$ is "semantically deterministic" and $q$ is reachable, $\Lang{\initialAut{\A}{q}}$ coincides with 
	$\lquot{u}{L}$ for any word $u\in \SS^*$ leading to $q$ from the "initial state" of $\A$. In that case, we say that the "class of states" $\resClassState{q}$ is ""associated to@@residual"" the "residual class" $[u]$.
	\AP The inclusion of these languages induces a "preorder" $\intro*\leqResState_\A$ on the states of $\A$ (with its corresponding equivalence relation). By Lemma~\ref{lemma-prelim:monotonicity-residuals}, if $\A$ is "semantically deterministic", the relation $\eqResState{\A}$ is a "congruence" and the "preorder" $\leqResState_{\A}$ makes $\A$ a "monotone automaton".

\begin{remark}\label{rmk-prelim:sem-det-iff-residuals-is-congruence}
	An "automaton" $\A$ without unreachable states is "semantically deterministic" if and only if $\eqResState{\A}$ is a "congruence".
\end{remark}

	\subparagraph{Automaton of residuals.} Let $L\subseteq \SS^\oo$ be a language of infinite words.
	\AP The ""automaton of residuals"" of $L$ is a "deterministic" "automaton structure" $\intro*\autRes{L}$ over $\SS$  defined as follows:
	\begin{itemize}
		\item The set of states is the set of "residual classes" of $\Res{L}$:  $Q = \{\resClass{u} \mid u\in \SS^*\}$.
		\item The "initial state" is $\resClass{\ee}$.
		\item For each state $\resClass{u}$ and letter $a\in \SS$, it contains the transition $\resClass{u}\re{a}\resClass{ua}$.
	\end{itemize}

	We will be interested in the question of whether we can define a "parity@@aut" or "B\"uchi automaton" "on top of" the $\autRes{L}$ so that the obtained "automaton" "recognises"~$L$.

	The states of $\autRes{L}$ are ordered by the inclusion of "residuals". By Lemma~\ref{lemma-prelim:monotonicity-residuals},  transitions of $\autRes{L}$ are "monotone@@cong" for this order.
	
	\begin{remark}\label{rmk-prelim:residual-aut-as-quotient}
		For any "semantically deterministic" automaton $\A$ "recognising" $L$, the "automaton of residuals" $\autRes{L}$ coincides with the "quotient of $\A$ by" the "congruence"~$\eqResState{\A}$.
	\end{remark}

%

\section{Positionality of \texorpdfstring{$\omega$}{omega}-regular objectives: Statement of the results}\label{sec:char-half-pos}

In this section, we state the central result of the paper and its consequences: a characterisation of "deterministic" "parity automata" "recognising" "positional" "$\oo$-regular languages" (Theorem~\ref{th-reslt:MainCharacterisation-allItems}). 
The statement of the theorem uses terminology that will be formally introduced in Section~\ref{sec:proof}, here we just provide some intuitive explanations. 

\subsection{Characterisation of positionality for \texorpdfstring{$\omega$}{omega}-regular objectives}\label{subsec-reslt:struct-char}
		
		We state our main characterisation theorem. 
		Items  are ordered following the sequence of logical implications in its proof (with the exception of~\ref{item-th:ee-complete-forall}).

		\begin{theorem}\label{th-reslt:MainCharacterisation-allItems}
			Let $W\subseteq \SS^\oo$ be an "$\oo$-regular" "objective". The following are equivalent:
			\begin{enumerate}[ref=(\theenumi)]
				\item\label{item-th:halfPosFiniteEve} $W$ is "positional" over finite "$\ee$-free" "Eve-games".

				
				\item\label{item-th:signature} There is a "deterministic" "fully progress consistent" "signature automaton" "recognising"~$W$.
				
				\item\label{item-th:ee-complete} There is a "deterministic"  "$\ee$-completable" "parity automaton" "recognising"~$W$.
				 
				\item[\mylabelOne{item-th:ee-complete-HD}{3'.}] 
				There is a "history-deterministic" "$\ee$-complete" parity automaton "recognising"~$W$.
	
				\item[\mylabelTwo{item-th:ee-complete-forall}{3''.}]
				Any ("non-deterministic") parity automaton "recognising" $W$ is "$\ee$-completable".
		
				\item\label{item-th:universalGraph} For all cardinals $\kappa$, there is a "well-ordered@@graph" "monotone@@univ" "$(\kappa,W)$-universal" graph.
				
				\item\label{item-th:halfPosArbitrary} $W$ is "positional" over all "games" (potentially infinite and containing "$\ee$-edges").
			\end{enumerate}
		\end{theorem}

		This is an automata-oriented characterisation of positionality: we identify two classes of "deterministic" "parity automata" ("fully progress consistent" "signature" and "$\ee$-completable"), such that any "positional" language can be "recognised" by automata in these classes. Each of them presents some formal advantages that make
		them suitable for different kinds of proofs. 
		"Signature automata" are "parity automata" with a very restricted syntactic structure: for all "priorities" $x$ there is a "total preorder" $\leqSig{x}$ over the states, such that these refine one another and satisfy some local "monotonicity@@cong" properties (see Section~\ref{subsec:def-signature} for the precise definition, which is quite involved).
		Our main technical contribution is to show that any "positional" "$\oo$-regular" "objective" $W$ can be "recognised" by a "signature automaton" (implication from \ref{item-th:halfPosFiniteEve} to~\ref{item-th:signature}). 
		This is achieved by applying a number of transformations to a given "parity automaton", until obtaining an automaton with all the desired structural properties.
		The final automaton satisfies a further more global property -- necessary for "positionality" -- that we call "full progress consistency": words making a strict progress in the automaton with respect to some of the preorders must be "accepted@@aut" if repeated infinitely often.
		
		Next, we prove that "deterministic" "fully progress consistent" "signature" automata are in fact "$\ee$-completable": one may add $\ee$-transitions along a tree structure without augmenting their language, as illustrated in Figure~\ref{fig:epsilon-completable-automaton} (see Section~\ref{subsec:from-signature-to-HP} for a formal definition).
		This corresponds to the implication from~\ref{item-th:signature} to~\ref{item-th:ee-complete}, Item~\ref{item-th:ee-complete-HD} follows immediatly.
		We then show how to obtain "well-ordered" "monotone@@univ" "universal" graphs from "history-deterministic" "$\ee$-complete" automata (implication from~\ref{item-th:ee-complete-HD} to~\ref{item-th:universalGraph}), which is fairly straightforward, and obtain the implication~\ref{item-th:universalGraph}~$\Rightarrow$~\ref{item-th:halfPosArbitrary} thanks to Proposition~\ref{prop-prelim:univ-graphs}.

		\begin{figure}
			\begin{center}
				\includegraphics[width=\linewidth]{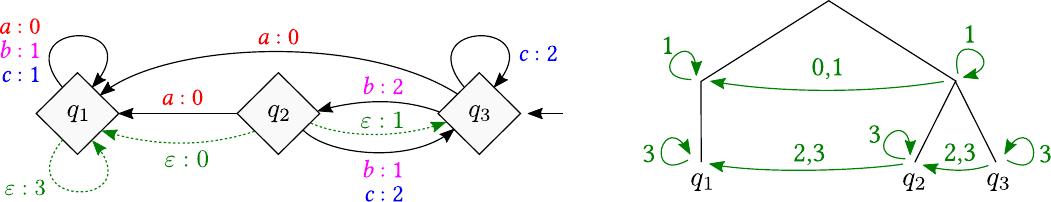}
				\caption{On the left, a deterministic automaton recognising the "positional" language $\infOften(a) \vee (\noOcc(a) \wedge \finOften(bb))$. On the right, a representation of an "$\ee$-completion" of the automaton: we can add "$\ee$-transitions" indicated by the tree, that is: $q_2, q_3 \re{\ee:0,1} q_1$, $q_2 \re{\ee:1} q_3$, $q_3 \re{\ee:1} q_2$, $q_3\re{\ee:2,3} q_2 \re{\ee:2,3} q_1$, and $q\re{\ee:x}q$ for all $q$ and odd $x$.
				Some of these are represented on the left as dotted arrows.}
				\label{fig:epsilon-completable-automaton}
			\end{center}
		\end{figure}
		
		This shows the equivalence of all the statements of Theorem~\ref{th-reslt:MainCharacterisation-allItems}, except for Item~\ref{item-th:ee-complete-forall}:
		\emph{any} "parity automaton" "recognising" a "positional" language is "$\ee$-completable" (including "non-deterministic" ones).
		The proof of this result, included in Corollary~\ref{cor-dec:ee-closure-nd}, relies on the equivalence between Items~\ref{item-th:halfPosFiniteEve} and~\ref{item-th:halfPosArbitrary} and its consequences (mainly Theorem~\ref{th-reslt:union-PI} about closure under union).
		We do not know whether the existence of a "non-deterministic" "$\ee$-complete" automaton suffices to prove "positionality".
		
\subsection{Main consequences on positionality}\label{subsec-reslt:consequences}
		
		We now discuss consequences of Theorem~\ref{th-reslt:MainCharacterisation-allItems}.
		
		\paragraph*{Decidability of positionality in polynomial time}
		
		\begin{theorem}\label{th-reslt:decid-poly}
			Given a "deterministic" "parity automaton" $\A$, we can decide in polynomial time whether $\Lang{\A}$ is "positional".
		\end{theorem}
		
		We will give two proofs for Theorem~\ref{th-reslt:decid-poly}, both of which are detailed in Section~\ref{sec:decision}.
		The first proof applies the procedure turning a "deterministic" parity automaton into a "signature" "automaton"; if some step of the procedure fails, then the "objective" is not "positional".
		The second proof is more direct and builds up on another consequence of Theorem~\ref{th-reslt:MainCharacterisation-allItems}, namely the closure under union (see discussion below).

		\paragraph*{Finite-to-infinite and 1-to-2 player lifts}

		The following result simply restates the implication \ref{item-th:halfPosFiniteEve} $\implies$ \ref{item-th:halfPosArbitrary} from Theorem~\ref{th-reslt:MainCharacterisation-allItems}.
		
		\begin{theorem}\label{th-half:lifts}
			If an "$\oo$-regular" objective is "positional over" finite, "$\ee$-free" "Eve-games", then it is "positional" over all games (potentially infinite and containing $\ee$-edges).
		\end{theorem}%

		\paragraph*{Closure under union of prefix-independent positional languages}
		
		We now show that "Kopczyński's conjecture" holds for "$\oo$-regular languages": "prefix-independent" "positional" languages are closed under union. In fact, we show a stronger result: it suffices to suppose that only one of the objectives is "prefix-independent".
		
		\begin{theorem}\label{th-reslt:union-PI}
			Let $W_1,W_2\subseteq \SS^\oo$ be two "positional" "$\oo$-regular" objectives, and suppose that $W_1$ is "prefix-independent". Then, $W_1\cup W_2$ is "positional".
		\end{theorem}
		
		In order to obtain this theorem, we use the 1-to-2-players lift stated in Theorem~\ref{th-half:lifts}. The result from Theorem~\ref{th-reslt:union-PI} can be easily obtained for "Eve-games" (one only needs to be careful with the definition of a "uniform strategy"), so it suffices then to apply the lift to get the result for all types of games.
		
		\begin{lemma}\label{lemma-half:union-PI-EveGames}
			Let $W_1,W_2\subseteq \SS^\oo$ be two objectives that are "positional over@@eve" "Eve-games", and suppose that $W_1$ is "prefix-independent". Then, $W_1\cup W_2$ is "positional over@@eve" "Eve-games".
		\end{lemma}

		\begin{proof}
			Let $\G$ be an "Eve-game" using $W_1\cup W_2$ as "winning condition".
			We show that "Eve" has a "positional strategy" that "wins from" any vertex of her "winning region". 
			We let $\G_1$ be the "game" with the same "game graph" than $\G$ and $W_1$ as "winning condition". Consider "Eve's" "winning region" in this game, $\winRegion{\Eve}{\G_1}$. By "positionality" of $W_1$, she has a "positional strategy" $\strat_1$ ensuring to produce paths labelled with $W_1$ from states in $\winRegion{\Eve}{\G_1}$.		
			Moreover, by "prefix-independence" of $W_1$, there is no path leading to $\winRegion{\Eve}{\G_1}$ from a vertex that is not in this "winning region".
			
			We let $\G_2$ be the "game" with $\G\setminus \winRegion{\Eve}{\G_1}$ as "game graph", and using $W_2$ as "winning condition". By "positionality" of $W_2$, "Eve" has a "positional strategy" $\strat_2$ for this game that is winning from $\winRegion{\Eve}{\G_2}$.
			
			We consider the "positional strategy" $\strat$ in $\G$ that coincides with $\strat_1$ over $\winRegion{\Eve}{\G_1}$ and coincides with $\strat_2$ over $\G_2$. It is clear that this "strategy" is "winning@@strat" from vertices in $\winRegion{\Eve}{\G_1}\cup \winRegion{\Eve}{\G_2}$. We show that these are all vertices from which "Eve" can win $\G$, so $\strat$ is an "optimal strategy".
			
			\begin{claim}
				$\winRegion{\Eve}{\G} = \winRegion{\Eve}{\G_1} \cup \winRegion{\Eve}{\G_2}$.
			\end{claim}
			\begin{subproof}
				Let $v$ be a vertex in $\G$ such that "Eve" "wins from" it. A "strategy" in an "Eve-game" is just an infinite path in the "game graph". Let therefore $\rr_v$ be an infinite path from $v$ labelled with a word $w\in W_1\cup W_2$. 
				Suppose that $v\notin \winRegion{\Eve}{\G_1}$. In particular $w\notin W_1$, so $w\in W_2$. As there is no path leading to $\winRegion{\Eve}{\G_1}$ from a vertex that is not in this region, the path $\rr_v$ is contained in $\G \setminus\winRegion{\Eve}{\G_1}$, which is the "game graph" of  $\G_2$. Therefore, $v\in \winRegion{\Eve}{\G_2}$.
			\end{subproof}
			This finishes the proof.\qedhere
		\end{proof}
		
		"Kopczyński's conjecture" and its stronger version in which only one of the objectives is supposed to by "prefix-independent" remain open for arbitrary (non "$\oo$-regular") objectives.

		\paragraph*{Closure of positionality under addition of neutral letters}
		As mentioned in the introduction,  Ohlmann recently characterised "positional" objectives by means of the existence of "universal graphs"~\cite{Ohlmann23Univ}. One direction (stated in Proposition~\ref{prop-prelim:univ-graphs}) holds for any objective: if $W$ admits "well-ordered" "monotone@@graph" "universal graphs", then it is "positional". To obtain the converse, the proof proposed by Ohlmann requires a further hypothesis: $W$ has to contain a "neutral letter", that is, a letter that can be removed from any word without modifying the membership in $W$. 
		In his work, he left open the problem of whether adding a "neutral letter" preserves "positionality". This is a central question in the theory of "positionality", as it would imply that "universal graphs" completely characterise "positionality" without any further hypothesis on the objectives.
		This question is almost\footnote{The only difference is that in "$\ee$-free" games we assume that there are no infinite paths composed exclusively of $\ee$-edges, whereas these may appear in "$\addNeutral{W}$-games".} equivalent to the one raised by Kopczy{\'n}ski in his PhD thesis~\cite[Section~2.5]{Kop08Thesis}: if $W$ is "positional" over "$\ee$-free" games, is it "positional" over all games? We introduce these notions formally for completeness.
		
		\AP Let $W\subseteq \SS^\oo$ be an objective. A letter $c\in \SS$ is ""neutral for $W$"" if, for all $w_1,w_2,\dots \in \SS^+$ and $n_1,n_2,\dots \in \NN$:
		\begin{itemize}
			\item $c^{n_1}w_1c^{n_2}w_2\dots \in W \: \iff \: w_1w_2\dots \in W$, and
			\item $w_1c^\oo \in W \iff \lquotW{w_1}\neq \emptyset$.
		\end{itemize}
		
		\AP Given an objective $W$, we let $\intro*\addNeutral{W}$ denote the unique objective obtained by adding a fresh "neutral letter" $\ee$ to $W$.

		\begin{proposition}[{\cite{Ohlmann23Univ}}]\label{prop-reslt:ohlmann-neutral-letters}
			Let $W\subseteq \SS^\oo$. The objective $\addNeutral{W}$ is "positional" if and only if for all cardinals $\kappa$ there is a "well-ordered" "monotone@@univ" "$(\kappa,W)$-universal" graph.
		\end{proposition}

		\begin{conjecture}[{""Neutral letter conjecture""~\cite{Ohlmann23Univ}}]\label{conj-half:neutral-letter}
			For every "positional" objective $W$, the objective $\addNeutral{W}$ is "positional".
		\end{conjecture}
		
		Our characterisation (Item~\ref{item-th:universalGraph} in Theorem~\ref{th-reslt:MainCharacterisation-allItems}), together with Proposition~\ref{prop-reslt:ohlmann-neutral-letters}, answers this question in the case of "$\oo$-regular" objectives.
		
		\begin{theorem}\label{th-reslt:neutral-letter}
			Let $W\subseteq \SS^\oo$ be an "$\oo$-regular objective". If $W$ is "positional", then $\addNeutral{W}$ is "positional".
			Also, $W$ is "positional" over "$\ee$-free" games if and only if $W$ is "positional" over all games.
		\end{theorem}


\section{Warm-up: Illustrating ideas on restricted classes of languages}\label{sec:warm-up}

The goal of this section is to give a gentle introduction to the techniques and ideas which are used in the proof of our main result (implication from \ref{item-th:halfPosFiniteEve} to \ref{item-th:signature} in Theorem~\ref{th-reslt:MainCharacterisation-allItems}).
Readers who prefer to go directly to the statement and proofs of the general case may skip this section.

We single out four crucial properties that a "parity automaton" "recognising" a "positional" objective should satisfy; the general characterisation will consist in a generalisation of these.
\begin{itemize}
\item "Positional" objectives have "residuals" totally ordered by inclusion (inducing a total order on the corresponding "residual classes" of the states of automata).
\item This order should satisfy a semantic property called "progress consistency".
\item Transitions with "priority" $0$ "preserve the congruence" induced by the residuals (if $q\eqResState{} p$ and $q\re{a:0}$, then $p\re{a:0}$).

\item In each "congruence class", states which are interreachable using paths avoiding priorities $\leq 1$ have comparable "$({\leq}1)$-safe languages" (defined below).
\end{itemize}

We propose to study four restricted classes of "$\omega$-regular objectives" that allow us to isolate these different points, namely, "closed", "open", "Büchi recognisable" and "coBüchi recognisable" "objectives".
Considering objectives in these four classes allows us to illustrate the necessity of the four properties above, and the techniques we use to derive them.
In each case, we state a characterisation of "positionality" and give a full proof of necessity, which is the more difficult direction.
These characterisation and proof techniques are generalised to all "$\omega$-regular languages" in our main inductive proof of necessity (Section~\ref{subsec:from-HP-to-signature}).

We moreover incorporate in this section many examples illustrating our results and the ideas in our proofs.

\subsection{Closed objectives and total order on the residuals}\label{subsec-warm:closed}

We now discuss the first property announced above: "residuals" of "positional" objectives are totally ordered by inclusion. The necessity of this condition holds even for non "$\oo$-regular" objectives.

\paragraph*{Residuals of positional objectives are totally ordered}

\begin{lemma}\label{lemma-warm:total-order-residuals-nec}
	If an objective $W\subseteq \SS^\oo$ is "positional", then $\Res{W}$ is totally ordered by inclusion.
\end{lemma}

\begin{proof}
   We show the contrapositive.
   Suppose that $W$ has two incomparable "residuals", $\lquotW{u_1}$ and $\lquotW{u_2}$.
   Take $w_1\in \lquot{u_1}{W}\setminus\lquot{u_2}{W}$ and $w_2\in \lquot{u_2}{W}\setminus\lquot{u_1}{W}$.
   Stated differently, we have
   
   \begin{tabular}{l l}
	   \centering
	   $u_1w_1\in W$, & $u_1w_2\notin W$,\\ 
	   $u_2w_1\notin W$, & $u_2w_2\in W$. 
   \end{tabular} 
   
   \vspace{2mm}
   Consider the (infinite) "Eve-game" $\G$ represented in Figure~\ref{fig-warm:game-residuals}.
   "Eve" "wins" $\G$ from $v_1$ and $v_2$: if a "play" starts in $v_i$, for $i=1,2$, she just has to take the path labelled $w_i$ from $v_{\mathsf{choice}}$. However, she cannot win from both $v_1$ and $v_2$ using a "positional strategy". Indeed, such a "positional strategy" would choose one transition $v_{\mathsf{choice}} \lrp{w_i}$, and the "play" induced when starting from $v_{j}$, $j\neq i$, would be "losing@@play".
\end{proof}
\begin{figure}
	\centering
	\includegraphics[scale=1.5]{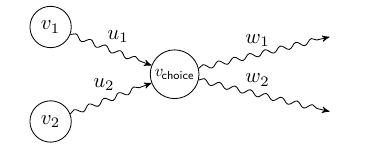}
	\caption{A game $\G$ in which "Eve" cannot "play optimally using positional strategies" if $\Res{W}$ is not totally ordered, as in the proof of Lemma~\ref{lemma-warm:total-order-residuals-nec}. \qedhere}
	\label{fig-warm:game-residuals}
\end{figure}

\paragraph*{Closed objectives}

Let $\SS$ be a set of letters and $L \subseteq \SS^*$ be a language of finite words. 
\AP The safety objective associated to $L$ is defined by
\[
\intro*\Safe{L}= \{ w\in \SS^\oo \mid w \text{ does not contain any prefix in } L\}.
\]

\AP An "objective" $W$ is ""topologically closed"" if $W=\Safe{L}$ for some $L\subseteq \SS^*$. This terminology is justified since "objectives" of the form $\Safe{L}$ are exactly the closed subsets of $\SS^\oo$ for the Cantor topology (see for example~\cite{Thomas1991AutomataOI}).

\begin{remark}
	An objective $W=\Safe{L}$ is "$\oo$-regular" if and only if $L$ is a regular language of finite words if and only if $\Res{W}$ is finite.
	We refer to this class as ""$\oo$-regular closed objectives"".
\end{remark}


It turns out that for "$\omega$-regular closed" objectives the converse of Lemma~\ref{lemma-warm:total-order-residuals-nec} holds.
This was first established in~\cite{ColcombetFH14PlayingSafe}.

\begin{proposition}[Positionality of closed objectives~\cite{ColcombetFH14PlayingSafe}]\label{prop-warm:char-closed}
	Let $W\subseteq \SS^\oo$ be an "$\oo$-regular closed objective".
	Then, $W$ is "positional" if and only if $\Res{W}$ is totally ordered by inclusion.
\end{proposition}

Thus, "residuals" encode the information needed to decide whether an "$\omega$-regular closed" objective is "positional".
We do not include a proof of sufficiency in this warm-up; a proof for all (non-necessarily "$\oo$-regular") "closed objectives" is given in Theorem~\ref{th-open:pos-closed}.
However, a much subtler understanding is needed for non-"closed objectives", as witnessed by the example below.

\begin{example}[Non-positional open objective]\label{ex-warm:not-pos-open}
	Consider the non-"closed objective" 
	\[ W = \{w\in \SS^\oo \mid w \text{ contains the factor } aa \}. \]
	Its three "residuals" are totally ordered by inclusion:
	\[ \lquotW{\ee} \subseteq \lquotW{a} \subseteq \lquotW{(aa)}. \]
	However, it is not "positional", as witnessed by the "game" in Figure~\ref{fig-warm:game-reach-aa}.
\end{example}
\begin{figure}
	\centering
	\includegraphics[scale=1.5]{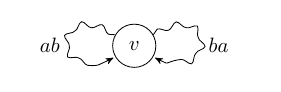}
	\caption{A game $\G$ in which "Eve" cannot produce the factor $aa$ "positionally@@strat".}
	\label{fig-warm:game-reach-aa}
\end{figure}

\subsection{Open objectives and progress consistency}\label{subsec-warm:open}

We now introduce "progress consistency", a semantic property of the order of "residuals" which is necessary for "positionality". For "$\oo$-regular open" objectives, this property, together with the total order of "residuals" is also sufficient.

\paragraph*{Progress consistency}

\begin{definition}[Progress consistency]
	\AP An objective $W\subseteq \SS^\oo$ is ""progress consistent"" if for all $u,w\in \SS^*$:
	\[ \resClass{u} \lRes \resClass{uw} \implies uw^\oo \in W. \] 
\end{definition}
Intuitively, a "progress consistent" objective satisfies that whenever we read a word that makes some strict progress with respect to the order of the "residuals", by repeating this word we produce a sequence in $W$.

We remark that the objective $\Reach{\SS^*aa}$ from Example~\ref{ex-warm:not-pos-open} is not "progress consistent", as the word $ba$ makes progress from residual $\lquotW{\ee}$, but $(ba)^\oo\notin W$.

Let us establish necessity of progress consistency for "positional" objectives. 

\begin{lemma}[Necessity of progress consistency]\label{lemma-warm:prog-cons-nec}
	Any "positional" objective is "progress consistent".
\end{lemma}
\begin{proof}
	We show the contrapositive of the statement. Let $W$ be an objective that is not "progress consistent", that is, there are $u,w\in \SS^*$ such that $\resClass{u} \lRes \resClass{uw}$ and $uw^\oo \notin W$.
	Let $w'\in \lquotW{(uw)}  \setminus \lquotW{u}$.
	Consider the game $\G$ depicted in Figure~\ref{fig-warm:game-prog-cons}.	
	\begin{figure}
		\centering
		\includegraphics[scale=1.5]{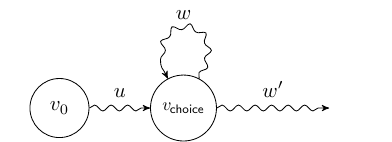}
		\caption{A game $\G$ in which "Eve" cannot "play optimally using positional strategies" if $W$ is not "progress consistent", as in the proof of Lemma~\ref{lemma-warm:prog-cons-nec}.}
		\label{fig-warm:game-prog-cons}
	\end{figure}

	"Eve" "wins" game $\G$ from vertex $v_0$ by producing the "play"
	\[ v_0\lrp{u\phantom{.}}v_\mathsf{choice} \lrp{w\phantom{.}}v_\mathsf{choice} \lrp{w'\phantom{.}}.\]	
	However, she cannot "win positionally" from $v_0$ since "positional strategies" produce either $uw^\omega$ or $uw'$, and both of these words are "losing".
\end{proof}

\begin{remark}
	The previous lemma applies, in particular, to "$\oo$-regular closed" objectives. We did not need to add "progress consistency" as a hypothesis in Proposition~\ref{prop-warm:char-closed}, as this property is granted for "closed@@obj" objective by Lemma~\ref{lemma-prelim:monotonicity-residuals}.
\end{remark}

We are now ready to move on to the characterisation of "$\oo$-regular open" objectives.

\paragraph*{Open objectives}

We now study the dual of "closed objectives", namely, "open@@obj" ones.
Let $L\subseteq \SS^*$. 
\AP The reachability objective associated to $L$ is defined by
\[
\intro*\Reach{L}= \{ w\in \SS^\oo \mid w \text{ contains a prefix in } L\}.
\]

\AP An objective $W$ is ""topologically open"" if $W=\Reach{L}$ for some $L\subseteq \SS^*$. (These are the open subsets of $\SS^\oo$ for the Cantor topology.)
\AP Similarly to the previous subsection, we define the class of ""$\oo$-regular open objectives"" as those that are both "open@@obj" and "$\oo$-regular".

\begin{remark}
	An open objective $W=\Reach{L}$ is "$\oo$-regular" if and only if $L$ is a regular language of finite words if and only if $\Res{W}$ is finite. 
\end{remark}


Let us state a characterisation of "positionality" for "$\omega$-regular open" objectives.
Characterisations for the full classes of "open" and "closed" objectives (without "$\oo$-regularity" assumptions) will be obtained in Section~\ref{sec:open-closed}.

\begin{proposition}[Positionality for open objectives]\label{prop-warm:char-open}
	An "$\oo$-regular open" objective $W$ is "positional" if and only if it is "progress consistent" and its set of "residuals" $\Res W$ is totally ordered.
\end{proposition}

Necessity follows from combining Lemmas~\ref{lemma-warm:total-order-residuals-nec} and~\ref{lemma-warm:prog-cons-nec}; we omit a proof of sufficiency in this warm-up.
In particular, we obtain the following corollary of Propositions~\ref{prop-warm:char-closed} and~\ref{prop-warm:char-open}.

\begin{corollary}\label{cor-warm:open-bipositional}
	Any "positional" "$\oo$-regular open" objective is "bipositional".
\end{corollary}

 We now give an example of an objective that satisfies the requirement from the previous proposition.

\begin{example}[Positional open objective]
	Consider the "$\omega$-regular open" objective
	\[
		W_n = \Reach {(a\SS^*)^n}.
	\]
	\AP It was introduced (for $n=2$) in~\cite[Lemma~13]{BFMM11HalfPos} as an example of a "bipositional" "objective" which is not ""concave"" (that is, no shuffle of two words outside $W$ belongs to $W$; see~\cite[Def~4.2]{Kop08Thesis} for details).
	Its "residuals" are given by
	\[
	 \lquotW{\ee} \subsetneq \lquotW{a} \subsetneq \lquotW{(aa)} \subsetneq \dots \subsetneq \lquotW{(a^n)}=\SS^\oo,
	\]
	which are totally ordered.
	Moreover, for any "residual class" $[a^i]$ with $i<n$, we have $[a^i] <[a^iu]$ if and only if $u$ contains the letter $a$, in which case $a^iu^\omega \in W$.
	Therefore, $W$ is "progress consistent", so we conclude that it is "bipositional".
\end{example}


Many natural examples of objectives are in fact "prefix-independent"; for those, the two conditions about the "residuals" above are trivially satisfied.
Yet, this does not suffice to guarantee their positionality.
We continue our introductory exploration with objectives recognised by "deterministic" "B\"uchi automata".

\subsection{\texorpdfstring{B\"uchi}{Büchi} recognisable objectives: Uniformity of \texorpdfstring{$0$}{0}-transitions}\label{subsec-warm:Buchi}

Our goal in this section is to present another property of "positional" "$\omega$-regular objectives", namely, that they can be  "recognised" by a "deterministic" "parity automaton"~$\A$ in which "$0$-transitions" are "uniform over" each "residual class":
\begin{center}
For any $q \eqRes{\A} q'$ and $a \in \SS$, if $q \re{a:0}$ then $q' \re{a:0}$.
\end{center}
In our main induction (Section~\ref{subsec:from-HP-to-signature}), we will derive a similar property for all even "priorities".
To illustrate the technique, we now only focus on the case of "B\"uchi recognisable" languages, which helps alleviate some of the technicalities while preserving the important ideas behind the proof.
On the way, we characterise "positionality" for these objectives, reobtaining the main result of~\cite{BCRV22HalfPosBuchi}. 

The proof is split into two parts: first, we focus on the "prefix-independent" case, and then reduce to it.

\paragraph*{Prefix-independent \texorpdfstring{B\"uchi}{Büchi} recognisable objectives}

\AP It is well-known that ""Büchi languages"" are "positional"~\cite{EmersonJutla91Determinacy}, that is, those of the form
 \[ \AP \intro*\BuchiC{B}{\SS} =\{ w\in \SS^\oo \mid \text{ letters of } B \text{ appear infinitely often in } w\}. \] 
We prove now that these are the only "positional" "prefix-independent" "B\"uchi recognisable" objectives.
Our proof considerably simplifies that of~\cite{BCRV22HalfPosBuchi}.

\begin{proposition}[{\cite[Proposition~11]{BCRV22HalfPosBuchi}}]\label{prop-warm:char-Buchi-PI}
	A "prefix-independent" "B\"uchi recognisable" $W$ is "positional" if and only if it is a "B\"uchi language".
\end{proposition} 

In particular, Proposition~\ref{prop-warm:char-Buchi-PI} tells us that there is a "B\"uchi automaton" with just one state "recognising" $W$. In this automaton,  "$0$-transitions" are trivially "uniform@@trans".


\subparagraph{Super words and super letters.} 
\AP We say that $u\in \SS^+$ is a ""super word@@buchi"" (for $W$) if, for every $w\in \SS^\oo$, if $w$ contains $u$ infinitely often as a factor, then $w\in W$. 
If $u$ is a letter, we say that it is a ""super letter@@buchi"".
Let $\intro*\BLettersBuchi{W}\subseteq \SS$ be the set of "super letters@@buchi" for $W$. It is clear that $\BuchiC{\BLettersBuchi{W}}{\SS}\subseteq W$. We will show that, if $W$ is "positional", this is in fact an equality.


\begin{lemma}[Existence of super letters]\label{lemma-warm:existence-super-letter}
A non-empty "prefix-independent" "positional" objective $W$ "recognised" by a "deterministic" "B\"uchi automaton" admits a "super letter@@buchi".
\end{lemma}

One may easily deduce Proposition~\ref{prop-warm:char-Buchi-PI} from Lemma~\ref{lemma-warm:existence-super-letter}: the restriction $W'$ of $W$ to "non-super letters@@buchi" is a "prefix-independent" "B\"uchi recognisable" "positional" objective which contains no "super letter@@buchi". Thus, Lemma~\ref{lemma-warm:existence-super-letter} tells us that $W' = \emptyset$ and therefore $W=\BuchiC{\BLettersBuchi{W}}{\SS}$.


\begin{proof}[Proof of Lemma~\ref{lemma-warm:existence-super-letter}]
Fix a non-empty "prefix-independent" "positional" objective $W$ "recognised" by a "deterministic" "B\"uchi automaton" $\A$, and assume without loss of generality that $\A$ in "normal form" and strongly connected, which can be assumed by "prefix-independence" (thus, every state can be chosen initial).
Since $W$ is non-empty, note that $\A$ must contain a transition with "priority" $0$.
We will use the following observation.

\begin{claim}[Super words in B\"uchi automata]\label{claim-warm:char-super-words}
	A word $w\in \SS^+$ is a "super word@@buchi" if and only if for all states $q$ of $\A$, "priority" $0$ appears on the path $q\lrp{w:0}$.
	This is in particular the case if $w$ is a letter.
\end{claim}
\begin{subproof}
	By "normality" of $\A$, if there is $q$ such that $q\lrp{w:1}q'$ then there is a word $w'\in \SS^*$ labelling a returning path $q'\lrp{w':1}q$. Therefore, $(ww')^\oo\notin W$, so $w$ is not a "super word@@buchi".
	The converse implication is clear, since each time word $w$ is read, the automaton produces "priority"~$0$.
\end{subproof}

We now prove existence of super words.
	
\begin{claim}[Existence of super words]\label{cl:existence-super-words}
	There is a "super word@@buchi" for $\Lang{\A}$.
\end{claim}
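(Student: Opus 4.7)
The plan is to produce a letter $a \in \SS$ with $a^\oo \in W$, and then pump it into a super word. For the first step, I would consider the trivial Eve-game $\G_0$ with a single Eve-vertex $v_0$ carrying one self-loop $v_0 \re{a} v_0$ per letter $a \in \SS$, and winning objective $W$. Since $W$ is non-empty, Eve wins from $v_0$ by playing any word of $W$, so by positionality of $W$ she has a positional winning strategy from $v_0$. As $v_0$ is the unique vertex of $\G_0$, this strategy picks a single self-loop, i.e.\ a single letter $a$, and repeats it forever; hence $a^\oo \in W$.

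For the second step, by strong connectedness of $\A$ and prefix-independence of $W$, the word $a^\oo$ is accepted by $\A$ from every state $q \in Q$: pick any $u_q \in \SS^*$ with $q_\init \lrp{u_q} q$, so that $u_q a^\oo \in W$ by prefix-independence, and observe that the unique run of the deterministic automaton $\A$ on $u_q a^\oo$ factors as $q_\init \lrp{u_q} q \lrp{a^\oo}$ and must therefore be accepting. Consequently, for each $q$ the path $q \lrp{a^\oo}$ visits priority $0$ after finitely many steps; letting $n_q$ be the first such step and setting $N = \max_{q \in Q} n_q$ (finite since $Q$ is finite), the word $a^N$ satisfies that $q \lrp{a^N : 0}$ contains priority $0$ for every $q \in Q$. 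By Claim~\ref{claim-p4-warm:char-super-words}, $a^N$ is a super word.

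The crux of the argument is the first step: positionality applied to the one-vertex game $\G_0$ already forces the existence of a single letter whose infinite iteration lies in $W$. The main conceptual leap is to notice that a super word can actually be taken as a sufficiently long power of a single letter; once this is realized, everything else is a short combination of positionality, strong connectedness, and prefix-independence.
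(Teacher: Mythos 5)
Your proof is correct, but it takes a genuinely different route from the paper's. The paper's argument for this claim does not invoke positionality at all: it enumerates the states $q_1,\dots,q_k$ and greedily concatenates finite words $w_1,\dots,w_k$ so that reading $w_1\cdots w_i$ from $q_i$ produces priority $0$, using only strong connectedness, the presence of some $0$-transition, and the characterisation of Claim~\ref{claim-p4-warm:char-super-words}; the resulting super word $w_1\cdots w_k$ has no particular shape. You instead spend the positionality hypothesis immediately, on the one-vertex game with one self-loop per letter, to extract a single letter $a$ with $a^\oo \in W$, and then use determinism, prefix-independence and finiteness of $Q$ to bound uniformly the position of the first $0$-transition along each run $q \lrp{a^\oo}$, concluding that $a^N$ is a super word. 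Both arguments are sound (note that the winning region of your auxiliary game is the single vertex $v_0$, so the uniformity requirement built into the paper's notion of positional strategy is vacuous there). The paper's version keeps the use of positionality confined to the chopping step (Claim~\ref{cl:chopping-super-words}) and is the argument that generalises to the existence of uniform words in the main induction (Lemma~\ref{lemma-p4-nec:existence-uniform-words}). Your version buys a super word of the special form $a^N$, which slightly streamlines the remainder of the proof of Lemma~\ref{lemma-p4-warm:existence-super-letter}: chopping $a^N$ only ever produces powers of $a$, so the descent terminates at the super letter $a$ itself.
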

\begin{subproof}
	We note that, as $\A$ is strongly connected and contains some "priority" $0$, for each state $q$ there is a finite word that produces "priority" $0$ when read from $q$. 
	We let $\{q_1,q_2,\dots,q_k\}$ be an enumeration of the states of $\A$ and recursively define $k$ finite words $w_1,w_2,\dots, w_k\in \SS^*$ satisfying:
	\[ q_i \lrp{w_1w_2\dots w_{i-1}} q' \lrp{w_i:0} q''. \]
	In words, for $i\in{1,\dots,k}$, reading $w_1w_2\dots w_i$ from $q_i$, produces priority $0$.
	This implies that $w_1w_2\dots w_i$ produces "priority" $0$ when read from any $q_j$ for $j\leq i$.
	Therefore, $w=w_1w_2\dots w_k$ produces priority $0$ when read from any state of $\A$, so by Claim~\ref{claim-warm:char-super-words}, $w$ is a "super word@@buchi".
\end{subproof}

We now prove that, by "positionality" of $W$, "super words@@buchi" can be chopped into smaller super words.
This implies Lemma~\ref{lemma-warm:existence-super-letter} by repeatedly chopping a "super word@@buchi" obtained from Claim~\ref{cl:existence-super-words} until obtaining a "super letter@@buchi".

\begin{claim}[Chopping super words]\label{cl:chopping-super-words}
	Let $w=w_1w_2 \in \SS^+$ be a "super word@@buchi".
	Then either $w_1$ or $w_2$ is a "super word@@buchi".
\end{claim}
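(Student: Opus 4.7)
The proof proceeds by contrapositive: assume $w = w_1 w_2$ is a "super word" while neither $w_1$ nor $w_2$ is, and derive that $W$ is not "positional".

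Since $w_1$ is not a "super word", Claim~\ref{claim-p4-warm:char-super-words} yields states $p_1, p_1'$ in $\A$ with $p_1 \lrp{w_1:1} p_1'$ (a path producing only "priority"~$1$). Applying the "normal form" of $\A$ (a strongly connected "positive SCC") to this path with $x = 1$, we obtain a word $u_1 \in \SS^+$ with a returning path $p_1' \lrp{u_1:1} p_1$, also of minimal priority~$1$. Symmetrically, since $w_2$ is not a "super word", we find states $p_2, p_2'$ and a word $u_2 \in \SS^+$ with $p_2 \lrp{w_2:1} p_2'$ and $p_2' \lrp{u_2:1} p_2$. The concatenation $p_1' \lrp{u_1 w_1:1} p_1'$ is therefore a priority-$1$ cycle at $p_1'$, so $(u_1 w_1)^\oo$ is "rejected" in $\initialAut{\A}{p_1'}$, and by "prefix-independence" of $W$ together with strong connectedness of $\A$, we conclude $(u_1 w_1)^\oo \notin W$. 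Similarly, $(w_2 u_2)^\oo \notin W$.

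We now construct the game $\G$ that will contradict the "positionality" of $W$: it consists of a single "Eve"-controlled vertex $v$ equipped with two self-loops, labelled $u_1 w_1$ and $w_2 u_2$ respectively. By alternating these two self-loops, "Eve" produces the sequence $(u_1 w_1 w_2 u_2)^\oo$, which contains $w_1 w_2$ as a factor at the position immediately following each occurrence of $u_1$, hence infinitely often; since $w_1 w_2$ is a "super word", this sequence belongs to $W$, so alternating is a "winning strategy" for "Eve" from $v$. However, her two "positional strategies" produce the sequences $(u_1 w_1)^\oo$ and $(w_2 u_2)^\oo$, both of which are outside $W$ by the previous paragraph. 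Hence "Eve" "wins from" $v$ but has no "winning" "positional strategy", contradicting the "positionality" of $W$.

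The main point of the argument is the choice of the self-loop labels $u_1 w_1$ and $w_2 u_2$: wrapping $w_1, w_2$ asymmetrically with the returning words ensures both that the "positional" plays are trapped in rejecting cycles of $\A$ (a property which would fail for the simpler labels $w_1, w_2$, as $w_1^\oo$ or $w_2^\oo$ may a priori lie in $W$) and that the $w_1 w_2$ factor survives at each concatenation boundary of the alternating play, which is what enables "Eve" to exploit the super-word property of $w_1 w_2$.
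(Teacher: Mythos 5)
Your proof is correct and follows essentially the same route as the paper's: contrapositive, extract priority-$1$ paths over $w_1$ and $w_2$ via Claim~\ref{claim-p4-warm:char-super-words}, close them into rejecting cycles using "normality", and play the two-self-loop "Eve"-game where alternation forms the "super word@@buchi" $w_1w_2$ infinitely often while both "positional strategies" lose. The only difference is that you spell out why $(u_1w_1)^\oo,(w_2u_2)^\oo\notin W$ (determinism, "prefix-independence", strong connectedness), which the paper leaves implicit.
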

\begin{subproof}
	Suppose by contradiction that neither $w_1$ nor $w_2$ are "super words@@buchi". Then, by Claim~\ref{claim-warm:char-super-words}, there are states $q_1$ and $q_2$ such that $q_1\lrp{w_1:1}q_1'$ and $q_2\lrp{w_2:1}q_2'$.
	By "normality", we obtain returning paths $q_1'\lrp{u_1:1}q_1$ and $q_2'\lrp{u_2:1}q_2$.
	Therefore, $(w_1u_1)^\oo\notin W$ and $(w_2u_2)^\oo\notin W$.
	We consider the "game" $\G$ depicted in Figure~\ref{fig-warm:game-chop-super-words}.
	"Eve" can  "win" this game, as alternating the two self loops she produces the word $(u_1w_1w_2u_2)^\oo$, which belongs to $\Lang{\A}$ since $w_1w_2$ is a "super word@@buchi".
	However, "positional strategies" in this game produce either $(w_1u_1)^\oo$ or $(w_2u_2)^\oo$, both losing.
	This contradicts the "positionality" of $\Lang{\A}$.		
\end{subproof}
\qedhere
\begin{figure}
	\centering
	\includegraphics[scale=1.5]{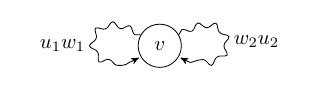}
	\caption{A game $\G$ in which "Eve" can win by forming the "super word@@buchi" $w_1w_2$ infinitely often, but in which she cannot "win" using a "positional strategy".\qedhere}
	\label{fig-warm:game-chop-super-words}
\end{figure}

\end{proof}

\paragraph*{Positionality for \texorpdfstring{B\"uchi}{Büchi} recognisable objectives}

We now state a characterisation of "positionality" for all "B\"uchi recognisable" objectives, without assuming "prefix-independence".

\begin{proposition}[{Positionality of Büchi recognisable objectives~\cite[Theorem~10]{BCRV22HalfPosBuchi}}]\label{prop-warm:char-Buchi-all}
	Let $W\subseteq \SS^\oo$ be a "Büchi recognisable" "objective". Then, $W$ is "positional" if and only if:
	\begin{itemize}
		\item $\ResW$ is totally ordered,
		\item $W$ is "progress consistent", and
		\item $W$ can be "recognised" by a "B\"uchi automaton" "on top of" the "automaton of residuals".
	\end{itemize}
\end{proposition}

\begin{example}[{Positional Büchi recognisable objective~\cite[Example~7]{BCRV22HalfPosBuchi}}]\label{ex-warm:Buchi-a-aa}
	Over the alphabet $\SS = \{a,b\}$, let
	\[ W = \infOften(a) \cup \Reach{aa}, \]
	that is, a word $w\in \SS^\oo$ belongs to $W$ if either it contains letter $a$ infinitely often, or it contains the factor $aa$ at some point. 
	This objective has three different "residuals", 
	\[
	[\ee] \lRes [a] \lRes [aa]=\SS^\oo.
	\]
	Figure~\ref{fig-warm:positional-Buchi-objective} depicts a deterministic "B\"uchi automaton" defined on top of the "residual automaton" of $W$.
	It is easy to verify that this objective is "progress consistent", so by Proposition~\ref{prop-warm:char-Buchi-all}, it is a positional objective.
%
\end{example}

\begin{figure}
	\centering
	\includegraphics[scale=1.5]{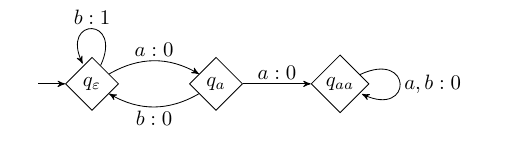}
	\caption{"B\"uchi automaton" "recognising" the objective $W = \infOften(a) \cup \Reach{aa}$.}
	\label{fig-warm:positional-Buchi-objective}
\end{figure}

As earlier, we focus on explaining the necessity of the conditions from Proposition~\ref{prop-warm:char-Buchi-all}, and we omit a proof of sufficiency in this warm-up. We already know that the two first conditions are necessary (Lemmas~\ref{lemma-warm:total-order-residuals-nec} and~\ref{lemma-warm:prog-cons-nec}). 
We now present the techniques used to obtain the necessity of the third condition.

\paragraph*{Uniformity of \texorpdfstring{$0$}{0}-transitions for positional objectives}

Our objective is to derive the following result.

\begin{lemma}[Uniform behaviour of $0$-transitions]\label{lemma-warm:congruence-0-letters}
	Let $W\subseteq \SS^\oo$ be a "positional" "Büchi recognisable" language.
	There is a "deterministic" "B\"uchi automaton" $\A$ "recognising" $W$ such that for every pair $q\eqResState{\A} q'$ of equivalent states and for every letter $a$, transition $q\re{a}$ produces "priority" $0$ if and only if  transition $q'\re{a}$ produces "priority" $0$.
\end{lemma} 

Necessity of the conditions from Proposition~\ref{prop-warm:char-Buchi-all} easily follows. In fact, what the previous lemma tells us is that we can take the "quotient automaton@@res" $\quotAut{\A}{\A}$ and assign priorities to its transitions consistently.


The techniques we now introduce for proving Lemma~\ref{lemma-warm:congruence-0-letters} will be extended in our main induction (Section~\ref{subsec:from-HP-to-signature}).
The idea is to reduce to the "prefix-independent" case, captured by Proposition~\ref{prop-warm:char-Buchi-PI}. For this, we associate a "prefix-independent" language, over an ad-hoc alphabet, to each "residual" of the objective under consideration.

For the rest of this part of the section, we fix a "positional" objective $W$ "recognised" by a "deterministic" "B\"uchi automaton" $\A$.

\subparagraph{Localising to a residual.}   
\AP For each "residual class" $\resClass{u}$ of $W$, we define the local alphabet at $\resClass{u}$ as:
\[
	\intro*\locAlphBuchi{u} = \{w\in \SS^+ \mid \resClass{uw} = \resClass{u} \text{ and for any proper prefix } w' \text{ of } w, \resClass{uw'} \neq \resClass{u} \}.
\]

Note that, if it is non-empty, $\locAlphBuchi{u}$ is a "prefix code", and therefore it is a "uniquely decodable alphabet".
Note that in general $\locAlphBuchi{u}$ may be infinite, however this is completely harmless in this context, and we will freely allow ourselves to talk about automata over infinite alphabets.\footnotemark{} Also, $\locAlphBuchi{u}$ is possibly empty; in the following definitions we assume that this is not the case.
\footnotetext{Note that in a finite automaton $\A$ over an infinite alphabet $\SS$, there are finitely many classes of letters such that two letters from the same class admit exactly the same transitions in $\A$. 
We can let $\SS_\mathrm{fin}$ be the set of equivalence classes of $\SS$ for this relation, and let $\A_\mathrm{fin}$ be the induced automaton over $\SS_\mathrm{fin}$.
It holds, that $w\in \Lang{\A}$ if and only if $w_\mathrm{fin}\in \Lang{\A_\mathrm{fin}}$ where $w_\mathrm{fin}$ is the projection of $w$ to $\SS_\mathrm{fin}$ .}

\AP Seeing words in $\locAlphBuchi{u}^\omega$ as words in $\SS^\omega$, define the ""localisation of $W$ to~$\resClass{u}$"" to be the objective
\[
	\intro*\locLangBuchi W u = \{ w \in \locAlphBuchi{u}^\omega \mid uw \in W\}.
\]
Observe that $\locLangBuchi W u$ is "prefix-independent". Moreover it is "positional": any "$\locLangBuchi{W}{u}$-game" in which "Eve" could not "play optimally using positional strategies" would provide a counterexample for the "positionality" of $W$.

For a state $q$ of a "B\"uchi automaton" $\A$ "recognising" $W$, define $\locAlphBuchi{q}$ and $\locLangBuchi{W}{q}$ in the natural way: $\locAlphBuchi{q} = \locAlphBuchi{u}$ and $\locLangBuchi{W}{q}=\locLangBuchi{W}{u}$ for $u$ a word that reaches $q$ from the "initial state".
Observe that a word $w\in \SS^*$ belongs to $\locAlphBuchi{q}^*$ if and only if it connects states in the class~$\resClassState{q}$. Elements in $\locAlphBuchi{q}$ are those that do not pass twice through this class.
We remark that $\locAlphBuchi{q} \neq \emptyset$ if and only if $q$ is a "recurrent" state (it belongs to some non-trivial "SCC").

\AP Let $q$ be a "recurrent" state. The ""local automaton of the residual $\resClassState{q}$"" is the "B\"uchi automaton" $\intro*\localAutResBuchi{q}$ defined as:
\begin{itemize}
	\item The set of states is $\resClassState{q}$.
	\item The initial state is arbitrary.
	\item For $w\in \locAlphBuchi{q}$, $q\re{w:x}q'$ if $q\lrp{w:x}q'$ in $\A$.
\end{itemize}
The language of $\localAutResBuchi{q}$ is $\locLangBuchi W u$, thus $\locLangBuchi W u$ is "Büchi recognisable".
Therefore, Proposition~\ref{prop-warm:char-Buchi-PI} yields that $\locLangBuchi{W}{u}$ is a "B\"uchi language": there exists a set $\BLettersBuchi{[u]} \subseteq \locAlphBuchi{u}$ such that $\locLangBuchi{W}{u} = \BuchiC{\BLettersBuchi{[u]}}{\locAlphBuchi{u}}$.
\AP We let $\intro*\NLettersBuchi{[u]} = \SS_{[u]} \setminus \BLettersBuchi{[u]}$ be the set of "non-super letters", and extend these notations to states of $\A$ by putting $\BLettersBuchi{[q]} = \BLettersBuchi{[u]}$ and $\NLettersBuchi{[q]}=\NLettersBuchi{[u]}$ where $u$ is any word leading from the initial state to $q$.

\subparagraph{Polished automata.} \AP For a "recurrent" state $q$, we say that a "residual class" $\resClassState{q}$ is ""polished in $\A$@@buchi"" if: 
\begin{enumerate}
	\item For all $q_1,q_2\in \resClassState{q}$, there is a word $u\in \NLettersBuchi{[q]}^*$ such that $q_1\lrp{u:1}q_2$.
	\item For every $q'\in \resClassState{q}$ and every word $u\in \NLettersBuchi{[q]}$, reading $u$ from $q'$ produces "priority" 1. 
\end{enumerate}
Stated differently, $\resClassState{q}$ is "polished@@classBuchi" if the restriction of $\localAutResBuchi{q}$ to transitions labelled with letters in $\NLettersBuchi{[q]}$ is strongly connected and does not contain any transition with "priority" $0$.

\AP We say that the automaton $\A$ is ""polished@@autBuchi"" if all its "residual classes" are "polished@@classBuchi".

\begin{remark}
	If $\A$ is "polished@@autBuchi" and $q$ is a "transient" state, then, it is the only state in its "residual class": $\resClassState{q} = \{q\}$.
	\AP We will apply the term ""recurrent@@class"" (resp. ""transient@@class"") to a class $\resClassState{q}$ if $q$ is "recurrent" (resp. "transient"). This is well defined by the previous comment. 
\end{remark}

\begin{lemma}[Obtaining a "polished@@autBuchi" automaton]\label{lemma-warm:polished-automaton}
	Any "positional" "Büchi recognisable" language $W$ can be "recognised" by a "polished@@autBuchi" "deterministic" "B\"uchi automaton".
\end{lemma}

\begin{proof}
	Let $\A$ be a "deterministic" "B\"uchi automaton" "recognising" $W$. 
	We will first "polish@@classBuchi" the "residual class" $\resClass{q}$ of a fixed state $q$.
	Consider the restriction $\localAutResBuchi{q}'$ of $\localAutResBuchi{q}$ to transitions labelled with $\NLettersBuchi{[q]}$, and take $S_{[q]}$ to be a final SCC of $\localAutResBuchi{q}'$; without loss of generality we assume that $q \in S_{[q]}$.

	\AP Now consider the automaton $\A'$ obtained from $\A$ by removing states in $[q] \setminus S_{[q]}$, and ""redirecting@@buchi"" transitions that go to $[q] \setminus S_{[q]}$ in $\A$ to transitions towards $q$ producing "priority"~$0$.
	\AP Note that this transformation ""preserves the residuals@@transform"": if $p_1\re{a}p_2$ in $\A$ and  $p_1\re{a}p_2'$ in $\A'$, then $p_2\eqRes{\A}p_2'$. Also, either $\sizeAut{\A'}<\sizeAut{\A}$, or $\A$ is left unchanged.
	We now prove that it preserves the language.

	\begin{claim}
		Automaton $\A'$ "recognises" the objective $W$.
	\end{claim}

	\begin{subproof}
		Let $w\in \SS^\oo$.		
		Suppose first that the "run over $w$" in $\A'$ eventually does not take "redirected@@buchi" transitions. Then, this run contains a suffix that is also a "run" in $\A$. As the transformation "preserves the residuals@@transform", $w$ is "accepted by@@aut" $\A'$ if and only if it is "accepted by@@aut" $\A'$.

		Suppose now that the "run over $w$" in $\A'$ takes infinitely many "redirected transitions@@buchi".
		Such a run in $\A'$ is of the form
		\[
			q_{\mathsf{init}} \lrp{w_0} q \lrp{w_1} p_1 \re{a_1:0} q \lrp{w_2} p_2 \re{a_2:0} q  \lrp{w_3} p_3 \re{a_3:0}\dots,
		\]
		where for all $i$ the transition $p_i \re{a_i:0} q$ is a "redirected@@buchi" one, meaning that in $\A$, reading $a_i$ from $p_i$ leads to $[q] \setminus S_{[q]}$.
		Note that $w$ is "accepted by@@aut" $\A'$, so we should prove that $w \in \L(\A)=W$.
		Observe that, for $i\geq 1$, $w_ia_i \in \locAlphBuchi{q}^*$ and reading $w_ia_i$ in $\A$ takes $q$ to $[q] \setminus S_{[q]}$, and thus by definition of $S_{[q]}$, it holds that $w_ia_i \notin \NLettersBuchi{[q]}^*$, so $w_i a_i \in \NLettersBuchi{[q]}^*\BLettersBuchi{[q]}^+\NLettersBuchi{[q]}^*$.
		We conclude that $w_1 a_1 w_2 a_2 \dots \in \BuchiC{\BLettersBuchi{[q]}}{\locAlphBuchi{q}} = q^{-1}W$ hence $w \in W$.
	\end{subproof}

	It follows that the residual class of $q$ in $\A'$ is $[q]_{\A'} = [q]_{\A} \cap Q'=S_{[q]}$.
	We now prove that it is "polished@@classBuchi".


	\begin{claim}
		The residual class $[q]$ is "polished@@classBuchi" in $\A'$.
	\end{claim}

	\begin{subproof}
		Let $q_1,q_2 \in S_{[q]}$.
		By definition of $S_{[q]}$ there is $w \in \NLettersBuchi{[q]}$ such that $q_1 \lrp {w} q_2$ in $\A$. As this path avoids $[q] \setminus S_{[q]}$ in $\A$, it also belongs to $\A'$. We show that it produces exclusively priorities $1$.
		By definition of $S_{[q]}$, there is a path $q_2 \lrp {w'} q_1$ with $w' \in \NLettersBuchi{[q]}$.
		Therefore $(ww')^\omega$ does not belongs to $\locLangBuchi{W}{q}$, so the path $q_1 \lrp {w} q_2$ cannot produce "priority"~$0$, which proves the first point in the definition of a "polished class@@buchi".
		
		Now take $q' \in S_{[q]}$ and $u \in \NLettersBuchi{[q]}^*$.
		Then by definition of $S_{[q]}$, reading $u$ from $q'$ in $\A$ leads back to $S_{[q]}$. By the previous argument, the minimal "priority" in this path is $1$.
		Again, this path avoids $[q] \setminus S_{[q]}$ in $\A$, so it also belongs to $\A'$.
	\end{subproof}

	Thus we have obtained an automaton $\A'$ for $W$ in which the class $\resClass{q}$ is "polished@@classBuchi".
	Since there are finitely many "residual classes", and the obtained automaton $\A'$ is strictly smaller than $\A$, we can repeat the process ("normalising" the automaton after each iteration, which does not increase the size) until obtaining an automaton in which all the classes are "polished@@classBuchi". (We remark that we do not claim that classes $\resClass{p}\neq \resClass{q}$ that were "polished@@classBuchi" in $\A$ will remain "polished@@classBuchi" in $\A'$. Nevertheless, the process reaches a fixpoint in which all classes are "polished@@classBuchi".)
\end{proof}

We will later on use the following property, which is our main reason for introducing "polished@@autBuchi" automata.

\begin{lemma}[Connection via losing words]\label{lemma-warm:connection-losing-words}
	Let $q\eqResState{\A}q'$ be two different "recurrent" equivalent states of a "polished automaton@@buchi" $\A$. Then there is a word $u\in \locAlphBuchi{q}^+$ such that $q\lrp{u}q'$ and $u^\oo \notin \Lang{\initialAut{\A}{q}}$.
\end{lemma}

\begin{proof}
	As the "automaton" $\A$ is "polished@@autBuchi", there is a word $u\in \NLettersBuchi{[q]}^+$ such that $q\lrp{u}q'$. This word satisfies the desired requirement.
\end{proof}

\subparagraph*{Uniform behaviour of \texorpdfstring{$0$}{0}-transitions. Proof of Lemma \texorpdfstring{\ref{lemma-warm:congruence-0-letters}}{4.19}.}
Let $\A$ be a "polished@@autBuchi" and "normalised" deterministic Büchi automaton "recognising" $W$, and suppose by contradiction that there are two states $q_1\eqResState{\A}q_2$ and a letter $a\in \SS$ such that $q_1\re{a:0}p_1$ and $q_2\re{a:1}p_2$.
By "normality", there is a word $w\in \SS^*$ such that $p_2\lrp{w:1}q_2$.
In particular, since $q_2 \rp{aw} q_2$, the "class@@res" $\resClassState{q_2}$ is "recurrent@class" in $\A$.
Note that $aw\in \locAlphBuchi{q_1}^*$ and $(aw)^\oo \notin q_1^{-1}W$. 
Let $q'$ be the state such that $p_1\lrp{w}q'$, note that $q'\in \resClassState{q_1}$.
Let $u_0$ be such that $q_\init \lrp {u_0} q_1$.
See Figure~\ref{fig-warm:non-congruent-situation} for an illustration of the situation.

Since $(aw)^\omega \notin q_1^{-1}W$, and $q_1 \lrp {aw:0} q'$, it cannot be that case that $q'=q_1$.
Hence, Lemma~\ref{lemma-warm:connection-losing-words} gives a word $u\in \locAlphBuchi{q}^+$ such that $q'\lrp{u\phantom{.}}q_1$ and $u^\oo\notin \Lang{\initialAut{\A}{q}}$.
Together, the facts that 
\begin{itemize}
	\item $(aw)^\oo\notin \Lang{\initialAut{\A}{q}}$,
	\item $u^\oo\notin \Lang{\initialAut{\A}{q}}$, and
	\item $(awu)^\oo\in \Lang{\initialAut{\A}{q}}$
\end{itemize}
prove that Eve wins the game on the right of Figure~\ref{fig-warm:non-congruent-situation}, but not "positionally@@win".

\begin{figure}
	\hspace{1mm}
	\begin{minipage}[c]{0.58\textwidth} 
		\includegraphics[scale=1.3    ]{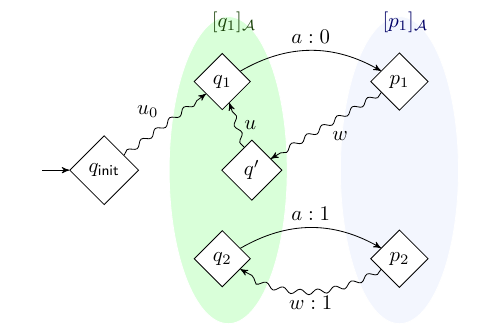}
	\end{minipage} 
	\hspace{0mm}
	\begin{minipage}[c]{0.38\textwidth} 
		\includegraphics[scale=1.4]{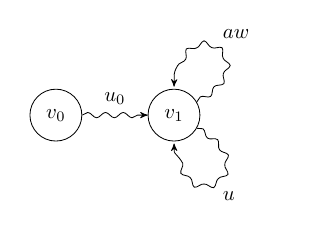}
	\end{minipage} 
	\caption{On the left, the situation in the proof of  Lemma~\ref{lemma-warm:congruence-0-letters}. We have the equivalences $q_1\eqRes{\A}q_2\eqRes{\A}q'$ and $p_1\eqRes{\A}p_2$. On the right, a game where "Eve" can produce a word in $\Lang{\A}$ but not "positionally@@win".}
	\label{fig-warm:non-congruent-situation}
\end{figure}

Thus we proved that if $W$ is a "Büchi recognisable" "positional" objective, it can be "recognised" by a "B\"uchi automaton" in which  "$0$-transitions" "behave uniformly".
By extending the technique from this section, we will show in Section~\ref{subsec:from-HP-to-signature} that this property (and its generalisation to all even priorities) also holds for automata using higher "priorities".

\subsection{Objectives recognised by \texorpdfstring{coB\"uchi}{coBüchi} automata: Total order given by safe languages}\label{subsec-warm:coBuchi}

We now consider "coB\"uchi recognisable" objectives.
Our analysis is based on "history-deterministic" automata and techniques from~\cite{AK22MinimizingGFG}.

Most of the section is devoted to the case of "prefix-independent" objectives, for which we propose a characterisation of "positionality" (Proposition~\ref{prop-warm:char-coBuchi-PI}). 
At the end of the section, we comment on how to extend this characterisation to any "coB\"uchi recognisable" objective (Proposition~\ref{prop-warm:char-coBuchi-all-lang}).
In the spirit of this warm-up, we omit proofs of sufficiency.

\paragraph*{Prefix-independent \texorpdfstring{coB\"uchi}{coBüchi} recognisable objectives}

Let us start by introducing some terminology relative to "coB\"uchi automata". 
Our analysis requires considering "automata" that are not necessarily "deterministic", however, we have a fine control of the non-determinism that will appear.
First, all automata in this subsection will be "history-deterministic". 
Moreover, we can suppose that they are "deterministic over" transitions producing "priority" $2$ by the following result of Kuperberg et Skrzypczak~\cite{KS15DeterminisationGFG} (this property is sometimes called safe determinism).

\begin{lemma}[\cite{KS15DeterminisationGFG}]\label{lemma-prelim:coBuchi-det-over-2}
	Every "history-deterministic" "coB\"uchi automaton" contains an "equivalent@@aut" "subautomaton" that is "deterministic over" $2$-transitions, which can be computed in polynomial time.
\end{lemma}

\subparagraph{Safe languages and safe components.} \AP Consider a (possibly "non-deterministic") "coB\"uchi automaton" $\A$.
We define the ""$({<}2)$-safe language@@coB"" of a state $q$ (or just \emph{safe language})  as the set of finite or infinite words such that, when read from $q$, "priority" $1$ can be avoided, that is:
\[ \intro*\safeCoB{q} = \{w\in \SS^*\cup \SS^\oo \mid \text{ there is a run } q\lrp{w:2} \; \}. \]

\begin{remark}
	Two "safe languages" coincide if and only if their restrictions to finite (resp. infinite) words coincide. Indeed, an infinite word $w\in \SS^\oo$ belongs to $\safeCoB{q}$ if and only all its finite prefixes do.
\end{remark}

\AP We write $ q \intro*\leqCoB q'$ if $\safeCoB{q}\subseteq \safeCoB{q'}$; this defines a partial preorder on states of~$\A$. 
\AP We will sometimes use the term ""safe path@@coB"" to refer to paths in $\A$ that do not produce "priority"~$1$.
The use of the notation ``$(<2)$-safe'' will be justified by the generalisation of this notion to any "parity automaton".
The next lemma follows directly from the definition of "safe language@@coB".

\begin{lemma}[Monotonicity with respect to safe languages]\label{lemma-warm:monotonicity-1-safe}
	Let $\A$ be a "coB\"uchi automaton" which is "deterministic over" $2$-transitions, and let $q,q'$ be two states such that $q\leqCoB q'$. Let $u$ be a finite word in $\safeCoB q$ and write $q\lrp{u:2}p$. There is a unique path $q'\lrp{u:2}p'$ and $p\leqCoB p'$.
\end{lemma}

\AP A ""$({<}2)$-safe component@@coB"" (or just \emph{safe component}) of $\A$ is a set of states forming a "strongly connected component" in the "subautomaton" obtained by removing from $\A$ all transitions labelled $1$.
By Lemma~\ref{lemma-prelim:coBuchi-det-over-2}, we can suppose that these subautomata are "deterministic".
Also, note that if $\A$ is in "normal form", transitions between different "safe components@@coB" produce "priority" $1$, that is, states connected by a "safe path@@coB" are in the same "safe component".


\subparagraph{Statement of the characterisation of positionality.}
We are now ready to state a characterisation of positionality of "prefix-independent" "coB\"uchi recognisable" objectives.

\begin{proposition}[Positionality for prefix-independent coB\"uchi recognisable objectives]\label{prop-warm:char-coBuchi-PI}
	A "prefix-independent" "coB\"uchi recognisable" objective is "positional" if and only if it can be "recognised" by a "deterministic" "coB\"uchi automaton" satisfying that within each "safe component@@coB", states are totally ordered by inclusion of "safe languages@@coB".	
\end{proposition}

Before going on with the proof, we discuss two examples.

\begin{example}\label{ex-warm:coBuchiNoConcave}
	In Figure~\ref{fig-warm:coBuchi-PI-not-concave} we represent a "coB\"uchi automaton" over $\SS = \{a,b,c\}$ "recognising" the following objective:
	\[ W = \text{ Words containing finitely many factors in } {c(a^*c b^*)^+c}. \]
	This is an example of an objective that is not "concave", as by shuffling words $a(ccaa)^\oo\notin W$ and $(bbcc)^\oo\notin W$ we can obtain $(abcc)^\oo\in W$. However, we show that it satisfies the hypothesis of Proposition~\ref{prop-warm:char-coBuchi-PI}, so it is "positional".	
	This automaton has a single "safe component@@coB". 
	The inclusions of the "safe languages@@coB" follows from the fact that the transitions are monotone: for every letter $\aa\in \SS$, if $q_i\re{\aa:2}q_{j}$ and $i\leq i'$, then $q_{i'}\re{\aa:2}q_{j'}$ with $j\leq j'$.
\end{example}
\begin{example}\label{ex-warm:coBuchi}
	Let $\SS = \{a,b,c\}$ and 
	\[ W = \finOften(ac) \vee \finOften(bb). \]
	We give a "coB\"uchi automaton" "recognising" $W$ and satisfying the hypothesis of Proposition~\ref{prop-warm:char-coBuchi-PI} in Figure~\ref{fig-warm:coBuchi-PI-ac-bb}. This automaton has two "safe components@@coB": $S_1 = \{q_1,q_2\}$ and $S_2 = \{p_1,p_2\}$. The states of each component are totally ordered by inclusion of "safe languages@@coB", as we have $q_1\lCoB q_2$ and $p_1\lCoB p_2$.
	Therefore, $W$ is "positional".
\end{example}
\begin{figure}
	\centering
	\includegraphics[scale=1.5]{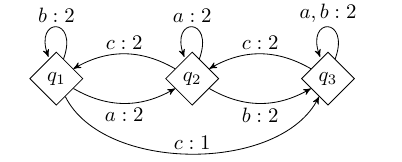}
	\caption{"Deterministic" "coB\"uchi automaton" "recognising" the objective $W$ from Example~\ref{ex-warm:coBuchiNoConcave}.}
	\label{fig-warm:coBuchi-PI-not-concave}
\end{figure}

\begin{cfigure}
	\centering
	\includegraphics[scale=1.5]{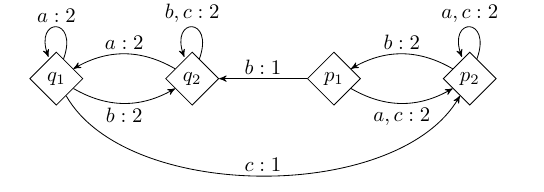}
	\caption{"Deterministic" "coB\"uchi automaton" "recognising" the objective $W$ from Example~\ref{ex-warm:coBuchi}. This automaton has two "safe components@@coB": $S_1 = \{q_1,q_2\}$ and $S_2 = \{p_1,p_2\}$, and the states of each of them are totally ordered by inclusion of "safe languages@@coB".}
	\label{fig-warm:coBuchi-PI-ac-bb}
\end{cfigure}

\begin{remark}
	In his PhD thesis~\cite[Section~6.2]{Kop08Thesis}, Kopczyński introduced a notion of \emph{monotonic automata} over finite words, and showed that if $L\subseteq \SS^*$ is recognised by such an automaton, then the ("prefix-independent") "objective" $\SS^\oo \setminus L^\oo$ is "positional"~\cite[Prop~6.6]{Kop08Thesis}.
	It turns out that these correspond exactly to the "objectives" characterised in Proposition~\ref{prop-warm:char-coBuchi-PI}.
\end{remark}

At the level of intuition, starting from a deterministic coB\"uchi automaton $\A$ "recognising" a "positional" objective $W$, our proof of the necessity in Proposition~\ref{prop-warm:char-coBuchi-PI} proceeds as follows:
\begin{itemize}
\item We turn $\A$ into a "history-deterministic" "safe centralised@@coB" automaton (see definition below) using the minimisation technique of Abu Radi and Kupferman~\cite{AK22MinimizingGFG}.
\item Using "positionality", we prove that $\leqCoB$ defines a total order on each "safe component".
\item Exploiting the total order, we are able to re-determinise $\A$.
\end{itemize}

\subparagraph{Safe centralisation and safe minimality.} 
Let $\A$ be a (possibly "non-deterministic") "coB\"uchi automaton" with only one "residual class@@state".
\AP We say that $\A$ is ""safe centralised@@coB"" if, for every pair of states $q_1,q_2$, 
if $q_1\leqCoB q_2$, then $q_1$ and $q_2$ are in the same "safe component@@coB".
\AP We say that $\A$ is ""safe minimal@@coB"" if there are no two different states with the same "safe language@@coB".

The next lemma is a consequence of~\cite{AK22MinimizingGFG}.
We present a self-contained proof that will be generalised in Lemma~\ref{lemma-nec:safe-centralisation}.

\begin{lemma}\label{lemma-warm:safe-centralisation}	Any "prefix-independent" "coB\"uchi recognisable" language can be "recognised" by a  "history-deterministic" "coB\"uchi automaton" that is "safe centralised@@coB" and "safe minimal".	
\end{lemma}


\begin{proof}[Proof of Lemma~\ref{lemma-warm:safe-centralisation}]
	Let $\A$ be a "normalised", "history-deterministic", "deterministic over" $2$-transitions automaton "recognising" $W$. 
	
	\AP First, we show that we can assume $\A$ to be ""$1$-saturated"", that is, for all pairs of states $q,q'$ and letters $a\in \Sigma$, the transition $q \re {a:1} q'$ appears in $\A$.

	\begin{claim}\label{cl:cl1-safesaturation}
		The "coB\"uchi automaton" obtained by adding all possible transitions of the form $q \re {a:1} q'$ to $\A$ recognises~$W$. Moreover, it is "history-deterministic" and "deterministic over" $2$-transitions.
	\end{claim}
	
	\begin{subproof}
		Let $\A'$ be the automaton obtained adding all $1$-transitions.
		To show that $\Lang{\A'}\subseteq W$, we remark that an "accepting@@run" "run over" a word $w$ in $\A'$ eventually only reads transitions with "priority" $2$, so it eventually coincides with a run in $\A$. We conclude by "prefix-independence".
	For the other inclusion, and the fact that $\A'$ is "history-deterministic", it suffices to use the same "resolver" as in $\A$.
	It is straightforward that $\A'$ is "deterministic over" $2$-transitions.
	\end{subproof}

	In the following, we assume that $\A$ is "$1$-saturated".
	\AP We say that a "safe component@@coB" $S$ is ""redundant@@coB"" if there is $q \in S$ and $q' \notin S$ such that $q \leqCoB q'$.

	\begin{claim}\label{cl:cl1-safe-centralisation}
		Let $S$ be "redundant@@coB" and consider the automaton $\A'$ obtained from $\A$ by deleting $S$.
		Then $\A'$ is "history-deterministic" and "recognises" $W$.
	\end{claim}
	
	\begin{subproof}
		Clearly $\L(\A') \subseteq W$. 
		We will describe a "sound resolver" proving that $\L(\A') = W$ and that $\A'$ is "history-deterministic".
		Let $q \in S$ and $q' \notin S$ such that $q \leqCoB q'$.
		For each $p \in S$, pick $u\in \SS^*$ such that $q \lrp{u:2} p$, and let $f(p)$ be such that $q' \lrp{u:2} f(p)$; this is well defined since $q \leqCoB q'$.
		Note that we have $p \leqCoB f(p)$ by Lemma~\ref{lemma-warm:monotonicity-1-safe}.
		By "normality" of $\A$, there is a returning path $f(p) \lrp{w:2} q'$ and thus $f(p)$ is in the same "safe component" as $q'$, so it does not belong to $S$.
		We extend $f$ to all of $Q$ by setting it to be the identity over $Q \setminus S$.

		Take a "sound resolver" $(q_0,\resolv)$ in $\A$, let $w\in \SS^\oo$, and write
		\[		\rho=q_0 \re{w_0} q_1 \re{w_1} \dots\]
		for the run in $\A$ "induced by@@resolver" $\resolv$ over $w$.
		We build a "resolver" $(q_0',\resolv')$ in $\A'$ satisfying the property that the "run induced over $w$", $\rho'=q'_0 \re{w_0} q'_1 \re{w_1} \dots$ is such that for each $i$, $q_i \leqCoB q'_i$.
		We let $q'_0=f(q_0)$, and assume $\rho'$ constructed up to $q'_i$, and $q_i\leqCoB q_i'$.		
		If there is a state $q'_{i+1} \notin S$ such that $q'_i \re{w_i:2} q'_{i+1}$ then we take this one,  which satisfies $q_{i+1} \leqCoB q_{i+1}'$ by Lemma~\ref{lemma-warm:monotonicity-1-safe}. Otherwise, take the transition $q'_{i} \re{w_i:1} f(q_{i+1})$ (which exists by "$1$-saturation"). 
		Therefore, if $\rr$ is "accepting@@run",  there is a suffix $w_iw_{i+1} \dots \in \safeCoB{q_i} \subseteq \safeCoB{q'_i}$, so the run $q'_i \re {w_i:2} q'_{i+1}\re {w_{i+1}:2}\dots$ in $\A'$ is "safe@@runCoB", and $\rr'$ is "accepting@@run" too.
	\end{subproof}
	
	Using Claim~\ref{cl:cl1-safe-centralisation}, we successively remove "redundant@@coB" "safe components@@coB" until obtaining a "safe centralised@@coB" automaton.
	
	Finally, to obtain a "safe minimal" automaton it suffices to merge states with the same "safe language". That is, we define a "$1$-saturated" automaton that has for states the classes $[q]_2$ of states of $\A$, and transitions $[q]_2 \re{a:2} [p]_2$ whenever for some (or equivalently, for all) state $q'\in [q]_2$ there is transition $q'\re{a:2}p'$ in $\A$, with $p'\in [p]_2$. It is not difficult to check that the obtained automaton "recognises" $W$, is "history-deterministic" and remains "safe centralised@@coB".
\end{proof}

\subparagraph{Total order in each safe component.} The intuitive idea on why having states of a same "safe component@@coB" totally ordered by $\leqCoB$ is necessary for positionality is the same than in the case of "closed objectives" (Lemma~\ref{lemma-warm:total-order-residuals-nec}): if $q$ and $q'$ are incomparable, there are two words $w,w'$ that produce "priority" $1$ from one state but not from the other.
In a game, if "Eve" has not kept track of where we are in the automaton, she will not know what is the best option between $w$ and $w'$.
However, an issue arises when turning this idea into an actual proof: one needs to build two full differentiating runs from $q$ and $q'$; producing "priority" $1$ just once does not suffice.
"Safe centrality@@coB" will come in handy for this purpose.

By definition, if $q\nleqCoB q'$, there is a word which produces "priority" $1$ when read from $q'$, and stays in the corresponding "safe component@@coB" when read from $q$.
The following lemma exploits "safe centrality@@coB" to extend those runs to synchronise them in the same state, while the run starting from $q$ remains safe.
For the purpose of the warm up, we only prove it assuming $\A$ is "deterministic"; extending it to the "history-deterministic" case requires some additional technicalities that will be dealt with in Section~\ref{subsec:from-HP-to-signature}.

\begin{lemma}[Synchronisation of separating runs]\label{lemma-warm:syncr-separating-runs}
	Let $\A$ be a "normalised", "safe centralised@@coB" and "safe minimal" "deterministic" "coB\"uchi automaton" with a single "residual class". Let $q$ and $q'$ be two states such that $q\nleqCoB q'$ and $p$ be any state in the "safe component@@coB" of~$q$.
	There is a word $w\in \SS^*$ such that $q\lrp{w:2} p$ and $q'\lrp{w:1}p$.
\end{lemma}
\begin{proof}
	Since $\safeCoB{q}\nsubseteq\safeCoB{q'}$, there is a word $w_1\in \SS^*$ such that $q\lrp{w_1:2}q_1$ and $q'\lrp{w_1:1}q_1'$.
	By "normality", note that $q_1$ is in the same "safe component@@coB" as $q$.
	If we have again that $q_1\nleqCoB q_1'$, we can find a word $w_2$ with the same properties.
	While the non-inclusion of "safe languages@@coB" is satisfied, repeating the argument yields two runs:
	\begin{alignat*}{3}
		&q\lrp{w_1:2} \,& q_1\lrp{w_2:2} \,& q_2\lrp{w_3:2} \dots \\
		&q'\lrp{w_1:1} \,& q_1'\lrp{w_2:1} \,& q_2'\lrp{w_3:1} \dots
	\end{alignat*}
	We claim that the process should stop after finite time, meaning that for some $i$, we have $q_i\leqCoB q_i'$.
	Otherwise, we would obtain two infinite runs over $w_1w_2w_3\dots\in \SS^\oo$, one of them "accepting@@run" and the other "rejecting@@run",
	contradicting the fact that $\A$ has a single "residual class@@state".
	
	Let $i$ be the step in which $q_i\leqCoB q_i'$. First, we note that both states are in the "safe component@@coB" of $q$: state $q_i$ is in there because there is a path $q\lrp{2}q_i$, and by "safe centrality@@coB", $q_i'$ must also be in the same "safe component@@coB".
	Let $q_{\max}$ be a state in this "safe component@@coB" maximal amongst states such that $q_i'\leqCoB q_{\max}$. Let $u\in \SS^*$ be a word labelling a path $q_i\lrp{u:2} q_{\max}$ (which exists by definition of "safe components@@coB").
	By Lemma~\ref{lemma-warm:monotonicity-1-safe}, maximality of $q_{\max}$, and "safe minimality", we also have $q_i'\lrp{u:2} q_{\max}$.
	Finally, if suffices to take a word $u'\in \SS^*$ labelling a path $q_{\max}\lrp{u':2} p $ and define $w = w_1w_2\dots w_iuu'$.
\end{proof}

We may derive the sought total orders.

\begin{lemma}[Total order in each safe component]\label{lemma-warm:total-order-coBuchi}
	Let $\A$ be a "deterministic" "coB\"uchi automaton" "recognising" a "prefix-independent" "positional" objective $W$. 
	Suppose that $\A$ is "safe centralised@@coB" and "safe minimal".
	Let $q$ and $q'$ be two different states in the same "safe component@@coB".
	Then, either $q\lCoB q'$ or $q'\lCoB q$.	
\end{lemma}
\begin{proof}
	By "safe minimality", $q\nlCoB q'$ implies $q\nleqCoB q'$.
	Suppose by contradiction that $q\nleqCoB q'$ and $q'\nleqCoB q$. Let $p$ be a state in this "safe component@@coB", and let $u,u'\in \SS^*$ be such that $p\lrp{u:2}q$ and $p\lrp{u':2}q'$.
	By Lemma~\ref{lemma-warm:syncr-separating-runs}, there are $w,w'\in \SS^\oo$ such that:
	
		\begin{tabular}{l l}
			\centering
			$q\lrp{w:2}p$, & $q\lrp{w':1}p$,\\ 
			$q'\lrp{w:1}p$, & $q'\lrp{w':2}p$. 
		\end{tabular} 
	
	\vspace{2mm}
The situation is depicted in Figure~\ref{fig-warm:totalOrder-coBuchi}. 
We obtain that:
\begin{itemize}
	\item $(w'u)^\oo\notin W$,
	\item $(wu')^\oo\notin W$, 
	\item $(wu'w'u)^\oo\in W$.
\end{itemize}

\begin{figure}
	\centering
	\includegraphics[scale=1.5]{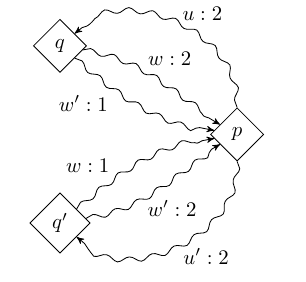}
	\caption{Situation occurring in the proof Lemma~\ref{lemma-warm:total-order-coBuchi}.}
	\label{fig-warm:totalOrder-coBuchi}
\end{figure}

Thus consider the "game" in which "Eve" controls a vertex with two self loops labelled $w'u$ and $wu'$; she can "win" by alternating both loops, but fails to "win positionally".
\end{proof}

\subparagraph{Determinisation.}\label{par-warm:determinisation} We have stated the last two lemmas of the previous paragraph for "deterministic" automata, in order to simplify the presentation.
However, "safe centralisation@@coB" only yields "history-deterministic" automata.
We now explain how to use the total order from Lemma~\ref{lemma-warm:total-order-coBuchi} to determinise "HD" "coB\"uchi automata" "recognising" "positional" languages.

Let $\A$ be a "normalised", "history-deterministic" and "deterministic over" $2$-transitions "coB\"uchi automaton" recognising a "prefix-independent" "positional" language. Order~$\leqCoB$ is total over each "safe component@@coB", by Lemma~\ref{lemma-warm:total-order-coBuchi}.
We show how to rearrange the $1$-transitions in order to define an "equivalent@@aut" "deterministic" "automaton" $\A'$ with the same structure of "safe components@@coB".


Let $\{S_1,S_2,\dots, S_k\}$ be the "safe components@@coB" of $\A$, enumerated in an arbitrary order.
If a word $w\in \SS^\oo$ is "accepted by@@aut" $\A$, there is a "run over" $w$ that eventually stays in one of the components $S_i$.
The main idea is that, when reading a word $w$, we can "resolve@@aut" the "non-determinism" by trying each "safe component@@coB" in a round-robin fashion.
If for a state $q$ and for a letter $a\in \SS$ there is a (unique) transition $q\re{a:2}$, we keep it as the only "$a$-transition@@in" from $q$.
If there is no transition $q\re{a:2}$, and $q$ belongs to $S_i$, we define a transition $q\re{a:1}q'$ towards some $q'$ in $S_{i+1}$. 
The total order in $S_{i+1}$ identifies a state in $S_{i+1}$ which is the best to go to: we define $q\re{a:1}q_{i+1}^{\max}$, where $q_{i+1}^{\max}$ is the unique maximal state of $S_{i+1}$ for the total order $\leqCoB$.
This defines a "deterministic" automaton $\A$'.

We prove that $\Lang{\A'} = \Lang{\A}$. Clearly $\Lang{\A'} \subseteq \Lang{\A}$, as $\A'$ is a "subautomaton" of the "$1$-saturation" of $\A$.
To show the other inclusion, let $w\in \Lang{\A}$. There is a "run over@@aut" $w$ in $\A$ that eventually remains in a "safe component@@coB", without loss of generality, we assume that it is~$S_1$.
Let $w'=w_1'w_2'\dots$ be a suffix of $w$ labelling a "run@@aut" in $S_1$: $q^1\re{w_1'}q^2\re{w_2'}\dots$. Consider a "run over" $w'$ in $\A'$.
If this "run" never visits $S_1$, it must be because it eventually remains in some "safe component@@coB", so $w'$ is "accepted by@@aut" $\A'$ in that case. If the run eventually arrives to~$S_1$ (at the i\ts{th} step), it will arrive to the maximal element $q_1^{\max}$.
At this point, the run in $\A$ is in $q^i\leqCoB q_1^{\max}$. Since  $w_i'w_{i+1}'\dots \in \safeCoB{q^i} \subseteq \safeCoB{q_1^{\max}}$, the run over this suffix is "safe@@runCoB" in $\A'$, and word $w$ is "accepted by" $\A'$.
 
\begin{example}
	The automaton from Figure~\ref{fig-warm:coBuchi-PI-ac-bb} has the shape we have described: transitions producing "priority" $1$ cycle between the two "safe components@@coB", and they go to the maximal state of the other component ($p_1\re{b:1}q_2$ and $q_1\re{c:1}p_2$).
\end{example} 

\paragraph*{Generalisation to non prefix-independent \texorpdfstring{coB\"uchi}{coBüchi} recognisable languages}

To remove the "prefix-independence" assumption, we work with the "localisation to residuals" of the objectives, as defined in Section~\ref{subsec-warm:Buchi}.
If $W$ is a "positional" objective, for each residual $\resClass{u}$ the objective $\locLangBuchi{W}{u}$ is  "positional".
In turns out that this property, together with the hypothesis over residuals that were already necessary for "open objectives", provides a characterisation.

\begin{proposition}\label{prop-warm:char-coBuchi-all-lang}
	Let $W\subseteq \SS^\oo$ be a "coB\"uchi recognisable" language. Then, $W$ is "positional" if and only if:
	\begin{itemize}
		\item $\ResW$ is totally ordered,
		\item $W$ is "progress consistent", and
		\item for all "residual class" $\resClass{u}$, objective $\locLangBuchi{W}{u}$ is "positional".
	\end{itemize}
\end{proposition}

This result is not fully satisfying (and hard to prove directly), as it relies on the "positionality" of languages $\locLangBuchi{W}{u}$.
As these objectives are "prefix-independent", we have a characterisation for their "positionality" (Proposition~\ref{prop-warm:char-coBuchi-PI}), and we can put them together to obtain a statement using exclusively structural properties of "parity automata".
The statement we obtain uses a hyerarchical decomposition of parity automata in three layers:
\begin{itemize}
	\item States are totally "preordered" by their "residual class" (layer 0).
	\item Within each "residual class", states are grouped into "safe components" (layer 1).
	\item Within each "safe component", states are totally ordered by inclusion of the "safe languages" (layer 2).
\end{itemize}

This hyerarchical decomposition foreshadows the definition of "structured" "signature automata" that we will use in Section~\ref{subsec:from-HP-to-signature} to derive a characterisation of "positionality" for all "$\oo$-regular" languages.

\begin{proposition}\label{prop-warm:char-coBuchi-all-aut}
	Let $W\subseteq \SS^\oo$ be a "coB\"uchi recognisable" language. Then, $W$ is "positional" if and only if:
	\begin{itemize}
		\item $\ResW$ is totally ordered,
		\item $W$ is "progress consistent", and
		\item $W$ can be "recognised" by a "deterministic" "coB\"uchi automaton" $\A$ such that, for all "residual class" $\resClassState{q}$, the "local automaton of the residual" $\localAutResBuchi{q}$ satisfies that its "safe components@@coB" are totally ordered by inclusion of "safe languages@@coB".
	\end{itemize}
\end{proposition}

We do not include a proof of this proposition; it is a special case of Theorem~\ref{th-reslt:MainCharacterisation-allItems}.

\begin{example}\label{ex-half:BFMM}
	We consider the following objective over the alphabet $\SS = \{a,b,c\}$:
	\[ W = \SS^*a^\oo \cup  \SS^*b^\oo \cup c\SS^*c\SS^\oo.\]
	This objective was studied in~\cite[Lemma~12]{BFMM11HalfPos} to show that there are "positional" objectives that are not "concave", nor "bipositional" (objectives from Examples~\ref{ex-warm:Buchi-a-aa} and~\ref{ex-warm:coBuchiNoConcave} also have this property); their proof of "positionality" is quite involved. 
	
	A "coB\"uchi automaton" "recognising" the objective $W$ is depicted in Figure~\ref{fig-half:example-BFMM}. Its "residuals" are totally ordered, and it is easy to check that it is "progress consistent".
	Moreover, all its "safe components@@coB" are trivial. Therefore, 
	Proposition~\ref{prop-warm:char-coBuchi-all-aut} implies that it is "positional".	
\end{example}
\begin{figure}
	\centering
	\includegraphics[scale=1.5]{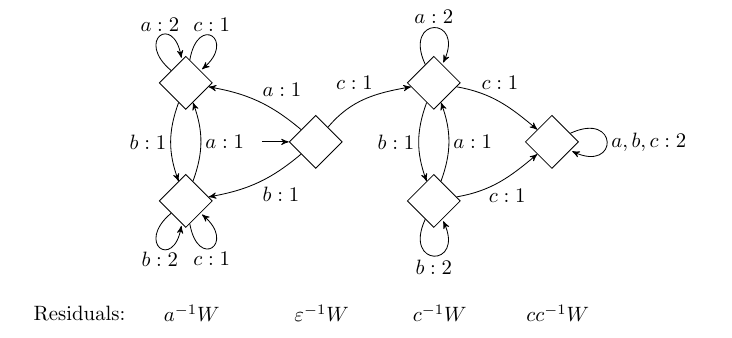}
	\caption{Automaton "recognising" the objective $W$ from Example~\ref{ex-half:BFMM}.}
	\label{fig-half:example-BFMM}
\end{figure}

\subsection{Towards objectives of higher complexity: An example}\label{subsec-warm:full-example}

We revisit the example from Figure~\ref{fig:epsilon-completable-automaton}, depicted here in Figure~\ref{fig-warm:3-priorities}. It "recognises" the "objective" of words that either contain `$a$' infinitely often, or contain no `$a$' at all and only finitely many occurrences of the factor `$bb$':

\[ W = \infOften(a) \vee (\noOcc(a) \wedge \finOften(bb)), \; \text{ over } \SS = \{a,b,c\}.\]

This "objective" is neither "B\"uchi@@rec" nor  "coB\"uchi recognisable", so none of the characterisations of this section applies to it. However, we can combine the techniques presented above to equip $\A$ with a ``nicely behaved'' total order. In the next section, we will see that this can be formalised as the fact that $\A$ is a "deterministic" "fully progress consistent" "signature automaton", so by Theorem~\ref{th-reslt:MainCharacterisation-allItems}, $W$ is "positional".

\begin{cfigure}
	\centering
	\includegraphics[scale=1.5]{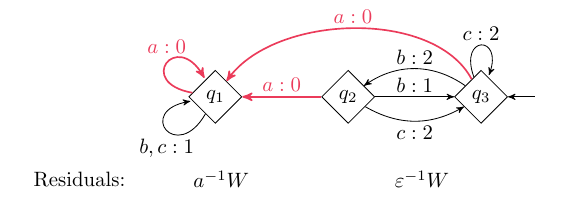}
	\caption{Automaton "recognising" the objective $W= \infOften(a) \vee (\finOften(bb) \wedge \noOcc(a))$.}
	\label{fig-warm:3-priorities}
\end{cfigure}

The objective $W$ has two "residuals": $\lquotW{\ee}$ and $\lquotW{x}$.
It is clear that $\lquotW{a} \subseteq \lquotW{\ee}$, and, since there is no transition $\resClass{a}\re{} \resClass{\ee}$, it is trivially "progress consistent".
The "states associated to@@res" $\resClass{a}$ and $\resClass{\ee}$ are, respectively, $\{q_1\}$ and $\{q_2, q_3\}$.
The "subautomaton" over $\{q_1\}$ "recognises" the "B\"uchi objective" $\Buchi{a}$, which is "positional".
Let us focus on the "subautomaton" induced by $\{q_2,q_3\}$, that coincides with $\localAutResBuchi{\ee}$.
We observe that it satisfies the hypothesis of Lemma~\ref{lemma-warm:congruence-0-letters}: transitions with "priority" $0$ "act uniformly" within $\localAutResBuchi{\ee}$; indeed, these transitions are those reading letter $a$ (in red).
Consider the restriction of this subautomaton to letters $\{b,c\}$. The obtained automaton $\localAutResBuchi{\ee}'$ is a "coB\"uchi automaton", satisfying the hypothesis of Lemma~\ref{lemma-warm:total-order-coBuchi}:  the states of the "safe component@@coB" of $\localAutResBuchi{\ee}'$ are totally ordered by inclusion of "safe languages@@coB", as $q_2 \lCoB q_3$.
In this way, we obtain a hyerarchical decomposition such that:
\begin{itemize}
	\item "Residuals" are totally ordered by inclusion, and are "progress consistent".
	\item $0$-transitions "act uniformly" within the "states of@@res" each "residual class".
	\item The "safe components@@coB" of the "coB\"uchi automaton" obtained as the restriction of each class to transitions with "priority" $\geq 1$ are totally ordered by  inclusion of "safe languages@@coB".
\end{itemize}

Generalising such hyerarchical decomposition to any "parity automaton" will be the central point of the next section.

We also note that the "$\ee$-completion" of this automaton presented in Figure~\ref{fig:epsilon-completable-automaton} follows the decomposition presented here: $\re{\ee:0}$-transitions follow the order of the residuals, and $\re{2}$-transitions the one given by the inclusion of safe languages. 
This procedure of "$\ee$-completion" will be generalised in Section~\ref{subsec:from-signature-to-HP}.

\section{Obtaining the structural characterisation of positionality}\label{sec:proof}
We now move on to the proof of Theorem~\ref{th-reslt:MainCharacterisation-allItems}.
The first step is to identify the common structural properties of "deterministic" "parity automata" "recognising" "positional" objectives.
In Section~\ref{subsec:def-signature}, we define a class of "parity automata", which we call "signature" automata, that underscores this structure.
We also introduce "full progress consistency", a further necessary condition for automata recognising "positional" languages.
To obtain the implication from \ref{item-th:halfPosFiniteEve} to \ref{item-th:signature} in Theorem~\ref{th-reslt:MainCharacterisation-allItems} we need then to show how to obtain a "fully progress consistent" "signature automaton" for a "positional" "$\oo$-regular" objective.
This is the most technical part of the proof, and it is the object of Section~\ref{subsec:from-HP-to-signature}.
We then proceed to defining "$\ee$-complete automata" and closing the cycle of implications in Section~\ref{subsec:from-signature-to-HP}.

\subsection{Signature automata and full progress consistency}\label{subsec:def-signature}


We first introduce more precise concepts about congruences.

\subsubsection{Priority-faithful congruences and quotient automata}

\subparagraph{Priority-faithful congruences.} We recall that a "congruence" in an automaton allows us to define a "quotient@@aut" $\quotAut{\A}{}$, which is a "deterministic" "automaton structure". However, in general, no "colouring with priorities" can be defined "on top of" $\quotAut{\A}{}$ in a sensible way, as the "congruence" does not have to be compatible with the "output priorities@@aut" of the automaton. We now strengthen the definition of "congruence" for parity automata so that it will be possible to define an approximation of a correct parity condition "on top of" the "quotient automaton".

\begin{definition}[{$[0,x]$-faithful congruence}]\label{def-sig:faithful-cong}
\AP	Let $\A$ be a "parity automaton" and let $\sim$ be a "congruence" over its set of states $Q$. We say that~$\sim$ is ""$[0,x]$-faithful"" if:
	\begin{itemize}
		\item for each $0\leq y\leq x$, "$y$-transitions@@out" "are uniform" over $\sim$-classes and relation $\sim$ is a "congruence for $y$-transitions", and
		\item relation $\sim$ is a "congruence for $({>}x)$-transitions".
	\end{itemize}
\end{definition}

Stated differently, a relation $\sim$ is a "$[0,x]$-faithful congruence" if, whenever there is a transition $q\re{a:y}p$ with "priority" $y\leq x$, then for every $q'\sim q$ a transition $q'\re{a}p'$ produces "priority" $y$ and goes to a state $p'\sim p$.
If there is a transition $q \re{a:y} p$ producing "priority" $y> x$, we only require that transitions $q'\re{a:>x}p'$ produce "priorities" $>x$ and go to $p'\sim p$ (but the exact "priority" produced may differ). (We remark that, although it is not explicitly imposed, "$({>}x)$-transitions@@out" "are uniform" over $\sim$-classes by the first property. Also, we recall that we assume that automata are "complete".)

\begin{remark}[{$[0,x]$-faithful congruences for deterministic automata}]
	For "deterministic" automata we can give a simpler definition. An equivalence relation $\sim$ on a "deterministic automaton" $\A$ is a "$[0,x]$-faithful congruence" if and only if, whenever $q\sim q'$, $q\re{a:y} p$ and $q'\re{a:y'} p'$, then $p\sim p'$ and $y=y'$ if $y\leq x$. 
\end{remark}

\begin{remark}
	A "$[0,x]$-faithful congruence" is "$[0,y]$-faithful" for any $y\leq x$.
\end{remark}




\subparagraph{\texorpdfstring{$({\leq}x)$}{leq x}-quotient automaton.} What the definition of "$[0,x]$-faithful congruence" tells us is that transitions producing priorities $y\leq x$ are well defined in the "quotient automaton" $\quotAut{\A}{}$, in the sense that we can associate a "priority" $y$ to these transitions reliably. For transitions producing priorities $>x$, on the other hand, we only obtain the information that the "priority" produced from any state of the class will be large, but we lose some precision.

\begin{definition}[{$({\leq}x)$-quotient automaton}]\label{def:quotient-faithful}
	\AP If $\sim$ is a "$[0,x]$-faithful congruence", we can define the ""$({\leq}x)$-quotient of $\A$ by $\sim$"" to be the "parity automaton" $\intro*\autLeq{\A}{}{x}$ given by:
	
	\begin{itemize}
		\item Its set of states are the $\sim$-classes of $\A$.
		\item The initial state is $[q_{\mathsf{init}}]$, where $q_{\mathsf{init}}$ is the initial state of $\A$.
		\item For $y\leq x$, there is a transition $[q] \re{a:y} [p]$ if  $\A$ has a transition $q'\re{a:y} p'$ with $q'\in [q]$, $p'\in [p]$.
		\item There is a transition $[q] \re{a:x'}[p]$ if $\A$ has a transition $q'\re{a:>x} p'$ with $q'\in [q]$, $p'\in [p]$; where $x'=x+1$ if $x$ even, and $x'=x$ if $x$ odd.
	\end{itemize}
\end{definition}


The automaton $\autLeq{\A}{}{x}$ is defined so transitions coming from those producing a "priority" $>x$ in $\A$ are assigned the smallest odd "priority" not smaller than $x$. This guarantees that the "projection@@quot" of "runs" that eventually only produce priorities $>x$ in $\A$ are "rejecting@@run" in $\autLeq{\A}{}{x}$. The next lemma refines this comment.

\begin{lemma}\label{lemma-sig:language-quotient-aut}
	Let $\A$ be a "parity automaton" and let $\sim$ be a "$[0,x]$-faithful congruence" over it. 
	The "$({\leq}x)$-quotient" of $\A$ by $\sim$ "recognises" the language:
	\[ \Lang{\autLeq{\A}{}{x}} = \{w\in\SS^\oo \mid w \text{ is "accepted with an even priority" } y\leq x \tin \A \}. \] 
	Moreover, if $\A$ is in "normal form", then so is $\autLeq{\A}{}{x}$.
\end{lemma}
\begin{proof}
	A "run" $\rr$ in $\A$ produces a "priority" $y\leq x$ infinitely often if and only its "projection@@quot" in $\autLeq{\A}{}{x}$ produces "priority" $y$ infinitely often, which gives the equality of the languages.
	It is a direct check that $\autLeq{\A}{}{x}$ inherits being in "normal form".
\end{proof}



\subsubsection{Signature automata}

We give the definition of "signature automata", at the core of our characterisation.

\AP We say that a sequence of "total preorders" $\sqsubseteq_0, \sqsubseteq_1,\sqsubseteq_2,\dots, \sqsubseteq_k$ over $Q$ is a ""collection of nested total preorders"" if $\sqsubseteq_i$ "refines" $\sqsubseteq_{i-1}$, for $i>0$.
We note that, in that case, the "induced equivalence relation@@preord" $\sim_i$ also "refines" $\sim_{i-1}$.

\begin{definition}[Signature automaton]\label{def-sig:signature-aut}
	\AP Let $d\in \NN$ be a "priority". \AP A ""$d$-signature automaton"" is a "semantically deterministic" "parity automaton" $\A$
	together with a "collection of nested total preorders" $\intro*\leqSig{0}, \leqSig{1}, \leqSig{2},\dots, \leqSig{d}$ over $Q$ such that:\footnotemark{}
	\begin{enumerate}[label=\Roman*),ref=(\Roman*)]
		\item\label{item-sig:refinement-residuals} \bfDescript{Refinements of residual inclusion.} Preorder $\leqSig{0}$ "refines" the preorder $\leqResState_\A$ given by the inclusion of "residuals".	
		
		\item\label{item-sig:faithful-even} \bfDescript{Faithful partitions at even layers.} For  $0\leq x\leq d$,  $x$ even, the equivalence relation $\eqSig{x}$ is a "$[0,x]$-faithful congruence". 	
		
		\item\label{item-sig:odd-levels} \bfDescript{""$({<}x)$-safe separation"" at odd layers.} For  $2\leq x\leq d$,  $x$ even, and $q\eqSig{x-2}q'$:
		\[  q\lSig{x-1} q' \;\; \implies \;\; \text{ there is no path } q\lrp{w:\geq x} q'. \]

		\item\label{item-sig:monotonicity-even} \bfDescript{""Local monotonicity"" of $({\geq}x)$-transitions.} For an even "priority" $x\leq d$, transitions using priorities ${\geq}x$ are "monotone for $\leqSig{x}$" over each $\eqSig{x-1}$ class. That is, for $q\eqSig{x-1}q'$, if $q\leqSig{x}q'$:
		\[  q\re{a:\geq x}p \;\; \implies \;\; q'\re{a:\geq x}p', \; p\eqSig{x-1}p' \;  \tand  p\leqSig{x}p',\; \text{ for all } a\text{-transitions from } q'. \]	
		
	\end{enumerate}	
	\AP We say that $\A$ is a ""signature automaton"" if it is a "$d$-signature automaton", for $d$ the maximal "priority" appearing in $\A$.
\end{definition}

\footnotetext{For notational convenience, we let $\sim_{-2}$ be the complete relation over $\A$ throughout this definition. That is, $q\sim_{-2}p$ for all pairs of states in $Q$.}

We note that even and odd preorders play a completely different role in the previous definition. In fact, the only purpose of odd preorders is to delimit the areas in which the "local monotonicity" property will apply. Item~\ref{item-sig:odd-levels} constrains $\eqSig{x-1}$-classes to be ``sufficiently large''.

\begin{remark}\label{p4-remark:x-minus-1-congruence}
	We note that, by Item~\ref{item-sig:monotonicity-even}, for $x$ even the equivalence relations $\eqSig{x-1}$ is a "congruence for" "${\geq}x$-transitions". However, the restrictions on these odd preorders are much weaker, as we do not impose them to be "faithful".
\end{remark}


\begin{example}
	Consider the automaton $\A$ from Figure~\ref{fig-warm:3-priorities} from the warm-up. This automaton has $3$ states, $q_1,q_2$, and $q_3$. It can be equipped with the structure of a "signature automaton" as follows:
	\begin{itemize}
		\item Preorder $\leqSig{0}$ is given by the inclusion of residuals: $q_1 \lSig{0} q_2,q_3$, and $q_2\eqSig{0}q_3$.
		\item Preorder $\leqSig{1}$ coincides with preorder $\leqSig{0}$.
		\item Preorder $\leqSig{2}$ is a total order: $q_1 \lSig{2} q_2\lSig{2}q_3$.
	\end{itemize}
\end{example}

Signature automata are not minimal in general, but we conjecture that by merging $\eqSig{d}$-equivalent states we should obtain a minimal automaton (see Section~\ref{subsec-concl:minimisation} for more discussions).

\subsubsection{Full progress consistency.} The existence of a "signature automaton" "recognising" an objective $W$ does not suffice to ensure "positionality" of $W$.
The problem is similar to the one we encountered when studying "open@@reg" objectives in Section~\ref{subsec-warm:open}: there are "open@@reg" objectives whose "residuals" are totally ordered but they are not "positional" (see Example~\ref{ex-warm:not-pos-open}). In that case, we needed to add the property of "progress consistency" to characterise "positionality@@half".
We generalise this notion to "signature automata" with multiple preorders.

\begin{definition}[Full progress consistency]
	\AP We say that a "signature automaton" $\A$ is ""fully progress consistent"" if, for each preorder $\leqSig{x}$, for $x$ even, and every finite word $w\in \SS^*$:
	\[ q\lSig{x} p \; \tand \; q\lrp{w:\geq x} p \;\; \implies \;\; w^\oo \in \Lang{\initialAut{\A}{q}}. \]
\end{definition}

\begin{remark}
	A "fully progress consistent" "signature automaton" is in particular "progress consistent", as the $\leqSig{0}$-preorder refines the preorder coming from the inclusion of "residuals".
\end{remark}

\subsubsection{Structured signature automata from semantic properties of languages}\label{subsubsec:struct-sig-from-semantic}

	To prove the implication \ref{item-th:halfPosFiniteEve} $\implies$ \ref{item-th:signature} from Theorem~\ref{th-reslt:MainCharacterisation-allItems}, we build a "signature automaton" from a "deterministic"  "parity automaton" $\A$ "recognising" $W$ recursively.
	In order to be able to carry out the recursion, we will in fact obtain a "signature automaton" with even stronger properties. This reinforcement of "signature automata" is done by ensuring that the preorders $\leqSig{x}$ come from semantic properties of the automaton, for which the notion of "${<}x$-safe languages" will play a major role. 
	The properties that are imposed are essentially a generalisation of the ones satisfied by the canonical "history-deterministic" coB\"uchi automaton defined by Abu Radi and Kupferman~\cite{AK22MinimizingGFG}.

	We introduce some further notation used in our semantic reinforcement of the definition of a signature automaton.

	\subparagraph{\texorpdfstring{${<}x$}{<x}-safe languages.} \AP Let $\A$ be a (possibly "non-deterministic") "parity automaton". We define the ""$({<}x)$-safe language"" of a state $q$ of $\A$ as:
	
	\[\intro*\safeSigAut{x}{q}{\A} = \{w\in \SS^*\cup \SS^\oo \mid \text{ there exists } q\lrp{w:\geq x}\}.\]
	
	We remark that $\safeSigAut{x}{q}{\A}$ is completely determined by its finite (resp. infinite) words.
	We drop the superscript $\A$ whenever the automaton is clear from the context.
	\AP A path producing no "priority" strictly smaller than $x$ is called ""$({<}x)$-safe@@path"".
	
	Next lemma simply follows from the definition.
	
	\begin{lemma}[Monotonicity of safe languages]\label{lemma-sig:monotonicity-safe-lang}
		Let $\A$ be a "parity automaton" that is "deterministic over" transitions using priorities $\geq x$. Let $q$ and $p$ be two states such that $\safeSig{x}{q}\subseteq \safeSig{x}{p}$, and let $q\lrp{u:\geq x} q'$ be a "${<}x$-safe run" over $u$ from $q$. Then, there is a unique "${<}x$-safe run" over $u$ from $p$, $p\lrp{u:\geq x} p'$, and it leads to a state $p'$ satisfying $\safeSig{x}{q'}\subseteq \safeSig{x}{p'}$. 
	\end{lemma}

	\subparagraph{\texorpdfstring{${<}x$}{<x}-safe components.}
	\AP A ""$({<}x)$-safe component"" of $\A$ is a "strongly connected component" of the "subautomaton" obtained by removing all transitions producing a priority $<x$ from $\A$. Note that if $\A$ is in "normal form" and $x>0$, transitions changing of "$({<}x)$-safe component" produce a "priority" $<x$. That is, $q\lrp{u:\geq x} p$ implies that $q$ and $p$ are in the same "$({<}x)$-safe component".

	\begin{remark}
		The partition of $\A$ into "${<}x$-safe components" is a "refinement" of its partition into "${<}y$-safe components", for $y\leq x$.
	\end{remark}

	For the following, we fix a "parity automaton" $\A$ in "normal form" using priorities in $[0,d_{\max}]$. For each $x\in [1,d_{\max}]$, we will totally order the "${<}x$-safe components" of $\A$ in such a way that these orders successively refine each other.
	\AP For $x\in [1,d_{\max}]$ we let $\intro*\safeComp{x}_1,\dots, \safeComp{x}_{k_x}$ be the "${<}x$-safe components" of $\A$.
	For $x=1$, we let $\safeComp{1}_1 <_1 S_2^{<1} <_1\dots <_1 S_{k_1}^{<1}$ be an arbitrary order over the "${<}1$-safe components".
	Assume that an order has already been fixed at level $x-2$. Then, we fix an arbitrary total order for the "${<}x$-safe components" contained in a same "${<}(x-1)$-safe components", which yields a total order for the set of all those safe components, that refines the previous layers.
	From now on, we assume that the enumerations $\safeComp{x}_1,\dots, \safeComp{x}_{k_x}$ correspond to these orders: $\safeComp{x}_i <_x \safeComp{x}_j$ if $i<j$.

	\subparagraph{Structured signature automata.} The "preorders" of the "signature automaton" we plan to build will correspond to the following semantic properties:

	\begin{enumerate}
		\item \bfDescript{Preorder $0$ given by inclusion of residuals.} Preorder  $\leqSig{0}$ corresponds to the inclusion of "residuals": 
		\[q\leqSig{0} p \;\iff \; \Lang{\initialAut{\A}{q}}\subseteq \Lang{\initialAut{\A}{p}}.\]\label{item-struct:preorder-0}
		\vspace{-6mm}
		
		\item  \bfDescript{Odd layers correspond to safe components.} For $x\geq 2$ even, we define $\leqSig{x-1}$ by: 
		\[
			q\leqSig{x-1}p \; \iff \;{q\lSig{x-2}p} \tor [{q\eqSig{x-2}p} \tand q \in \safeComp{x}_i \tand p\in\safeComp{x}_j \text{ with } i\leq j].
		\]
		In particular, $q\eqSig{x-1}p$ if and only if ${q\eqSig{x-2}p}$ and there is a path $q\lrp{w:\geq x} p$. \label{item-struct:odd-orders}
		
		\item \bfDescript{Even preorders given by inclusion of safe languages.} For $x\geq 2$, $x$ even, we define $\leqSig{x}$ by:
		\[
			 q\leqSig{x} p \; \iff  \; {q\lSig{x-1}p} \tor [{q\eqSig{x-1}p} \tand \safeSig{x}{q}\subseteq \safeSig{x}{p}].
		\]\label{item-struct:even-preorders}
		\vspace{-6mm}
	\end{enumerate}

	These preorders already ensure some of the properties required to be a "signature automaton"; mainly, the "local monotonicity" of transitions using large priorities (Item~\ref{item-sig:monotonicity-even}), as well as the "congruence for" ${\geq}x$-transitions at $\eqSig{x}$-classes.
	Note, however, that it is not clear (and will be an important part of our proof) that $\leqSig{x}$ is total for even $x$.

	\AP Given an (even or odd) "priority"  $d\in \NN$, we say that a "parity automaton" $\A$ in "normal form" together with "nested preorders" $\sqsubseteq_0,\sqsubseteq_1,\dots,\sqsubseteq_d$ as above is a ""$d$-structured signature automaton"" if these "preorders" are "total@@preord" and moreover:

	\begin{enumerate}\setcounter{enumi}{3}
		
		\item \label{item-struct:strong-congruence} \bfDescript{Strong congruence of $({\leq}x)$-priorities over even classes.} Let $0\leq x\leq d$, $x$ even. For every $y\leq x$:
		\[q\eqSig{x}q',\; q\re{a:y}p \;\tand \; q'\re{a:z}p' \; \implies \;  z=y \;\tand \; p = p'.\]
		
		\item\label{item-struct:even-classes-connected} \bfDescript{Classes at layer $x$ are $({>}x)$-connected.} For $0\leq x\leq d$ and $q\eqSig{x}q'$, we have:
		\[ q\neq q' \; \implies  \; \text{ there is a path } q\lrp{u:>x}q'.\footnotemark\]
		\vspace{-6mm}
		\footnotetext{We remark that, for $x$ odd, this property is already implied by Item~\ref{item-struct:odd-orders}.}
		\item \bfDescript{Safe centralisation.} Let $2\leq x\leq d$ be an even "priority", and let $q\eqSig{x-2}p$. Then:
		\[q\neqSig{x-1}p  \; \implies \; \safeSig{x}{q}\nsubseteq \safeSig{x}{p}.\]\label{item-struct:safe-centralisation}
		\vspace{-6mm}
	\end{enumerate}

	\AP We say that $\A$ is a ""structured signature automaton"" if it is a "$d$-structured signature automaton", for $d$ the maximal "priority" appearing in $\A$.


\begin{remark}
	We draw the reader's attention to the fact that in Item~\ref{item-struct:strong-congruence} we do not only require $p \eqSig{x} p'$, but impose $p = p'$. This will be necessary to guarantee that the relations $\eqSig{y}$ for even priorities $y>x$ are also "congruences for" "$x$-transitions".
\end{remark}

\begin{lemma}\label{lemma-sig:structured-is-signature}
	A "deterministic" "$d$-structured signature automaton" is a "$d$-signature automaton".
\end{lemma}

\begin{proof}
The fact that $\leqSig{0}$ refines the inclusion of residuals is ensured by Item~\ref{item-struct:preorder-0}.
Also, the "$({<}x)$-safe separation" at odd levels (Item~\ref{item-sig:odd-levels}) is directly implied by the fact that odd layers correspond to safe components (Item~\ref{item-struct:odd-orders}).

We now show by induction on $x$ that for each $x \leq d$, $x$ even, $\eqSig{x}$ is a "$[0,x]$-faithful congruence".
Consider two states $q \eqSig{x} q'$, which rewrites as $q \eqSig{x-1} q'$ and $\safeSig{x}{q} = \safeSig{x}{p}$, and pick a transition $q \re{a:y} p$.
There are two cases.
\begin{itemize}
\item If $y \leq x$, then by Item~\ref{item-struct:strong-congruence}, we have $q' \re{a:y} p$.
\item If $y > x$, then by monotonicity of safe languages (Lemma~\ref{lemma-sig:monotonicity-safe-lang}), we have $q' \re{a:\geq x} p'$ with $\safeSig{x}{p'} = \safeSig{x}{p}$, and by induction, $p' \sim_{x-2} p$.
From Item~\ref{item-struct:safe-centralisation}, it follows that $p' \sim_{x-1} p$ and thus $p' \sim_x p$, as required.
\end{itemize}
We conclude that $\eqSig{x}$ is a "$[0,x]$-faithful congruence".

Finally, the "local monotonicity of $({\geq} x)$-transitions" follows from the fact that even preorders correspond to the inclusion of safe languages (Item~\ref{item-struct:even-preorders}) and the monotonicity of safe languages (Lemma~\ref{lemma-sig:monotonicity-safe-lang}).
%
%
%
\end{proof}

\subsection{From positionality to signature automata}\label{subsec:from-HP-to-signature}

This section is devoted to the proof of the implication \ref{item-th:halfPosFiniteEve} $\implies$ \ref{item-th:signature} in Theorem~\ref{th-reslt:MainCharacterisation-allItems}. 
Many of the ideas in this proof have already appeared in the warm-up section. 
However further technical issues stem from the fact that we manipulate general "parity automata".
Details for a number of proofs are relegated to Appendix~\ref{appendix:proofs-necessity}.

\begin{globalHyp*}
	In the whole section, $W$ stands for an "objective" that is "positional over" finite, "$\ee$-free" "Eve-games". 
	These hypotheses will not necessarily be recalled in the statements of propositions.
\end{globalHyp*}

\subsubsection{Outline of the induction}

Given a "deterministic" "parity automaton" "recognising" a "positional" objective, we will recursively define the "preorders" and equivalence relations making $\A$ a "structured signature automaton". The base case consists in showing that the  preorder $\leqSig{0}$ given by the inclusion of residuals is total, and ensuring Item~\ref{item-struct:strong-congruence} of the definition for this preorder. 
For the recursion step, we suppose that we have a "deterministic" "$(x-2)$-structured signature automaton" $\A$ "recognising" $W$, for $x$ even, and we define preorders $\leqSig{x-1}$ and $\leqSig{x}$ over $\A$ as imposed by Items~\ref{item-struct:odd-orders} and~\ref{item-struct:even-preorders}. 
Then, we apply a sequence of operations, after which we obtain an "equivalent@@aut" "deterministic" "automaton", that is either "$x$-structured signature", or has strictly less states than $\A$.
In the first case, we continue to define preorders $\leqSig{x+1}$ and $\leqSig{x+2}$; in the second case, we restart the structuration procedure from the beginning, with a strictly smaller automaton. In both cases, we conclude by induction.


We conjecture that we can sequentially obtain all the preorders, without having to restart the construction at each step. However, we have not been able to overcome some technical difficulties preventing us to do so. We refer to the final subsection of Appendix~\ref{appendix:proofs-necessity} for more details.

We give a more detailed account on the specific operations we apply to obtain the different items of the definition of a "structured signature automaton" and their order:

\begin{enumerate}[label=\roman*)]
	\item \bfDescript{Relation $\eqSig{x-1}$ and safe centralisation.} We define $\eqSig{x-1}$, as determined by Item~\ref{item-struct:odd-orders}. Applying a generalisation of the procedure from~\cite{AK22MinimizingGFG}, we "$({<}x)$-safe centralise" $\A$, obtaining an equivalent automaton satisfying Item~\ref{item-struct:safe-centralisation}. The resulting automaton is no longer "deterministic", but it is "history-deterministic" and has a very restricted and controlled amount of "non-determinism".
	
	\item \bfDescript{Total order in safe components.} 
	 We prove that the states of each "$({<}x)$-safe component" are totally ordered by inclusion of "$({<}x)$-safe languages" (for which we rely on the "safe centralisation@@sig" hypothesis). 
	This shows that the preorder $\leqSig{x}$ given by the inclusion of "safe languages@@sig" (Item~\ref{item-struct:even-preorders}) is total. 
	
	
	\item \bfDescript{Re-determinisation.} We determinise automaton $\A$, while preserving previously obtained properties. For this, the fact that $\leqSig{x}$ is total will be key.
	
	\item \bfDescript{Uniformity of $x$-transitions.} Finally, we show that either $\A$ already satisfies Items~\ref{item-struct:strong-congruence} and~\ref{item-struct:even-classes-connected}, or we can trim the automaton to an "equivalent@@aut" strictly smaller one.
\end{enumerate}

Moreover, we show that all these transformations can be performed in polynomial time.

This establishes that an objective $W$ that is "positional over" finite, "$\ee$-free" "Eve-games" can be "recognised" by a "deterministic" "structured signature" automaton.
At the end of the section, we show that such an automaton must be "fully progress consistent" (Lemma~\ref{lemma-nec:full-prog-cons}).

\subsubsection{Constructing structured signature automata for positional languages}\label{subsubsec:constructing-signature-automaton}

Let $\A = (Q, \SS, q_{\mathsf{init}}, [0,d_{\max}], \DD, \parity)$ be a "deterministic" "parity automaton" "recognising" $W$, and suppose that $W$ is "positional over" finite, "$\ee$-free" "Eve-games". 
We assume that $\A$ is in "normal form".

	In this subsection, we will apply successive transformations to the automaton $\A$, ensuring an increasing list of properties. 
	At the beginning of each paragraph, we clearly state the properties that are assumed.
	We allow ourselves to omit these hypotheses in the statements of propositions inside the paragraphs.

\paragraph*{Base case: \texorpdfstring{Preorder $\leqSig{0}$}{Preorder 0}.}

We define $q\leqSig{0} p$ if $\Lang{\initialAut{\A}{q}}\subseteq \Lang{\initialAut{\A}{p}}$, as imposed by Item~\ref{item-struct:preorder-0}.
In Lemma~\ref{lemma-warm:total-order-residuals-nec}, we showed that "positionality" of $W$ implies that this order is total. However, in our proof we used infinite, and not necessarily "$\ee$-free" games. It is not difficult to modify the proof to adapt to this set of minimal hypotheses, using "$\oo$-regularity" of $W$. We give all details in Appendix~\ref{appendix:proofs-necessity} (Lemma~\ref{lemma-app:total-order-residuals}).
Items~\ref{item-struct:odd-orders},~\ref{item-struct:even-preorders}, and~\ref{item-struct:safe-centralisation} are trivially satisfied.
Therefore, it suffices to show that we can obtain an automaton such that $\eqSig{0}$ is a "strong congruence for transitions" producing "priority" $0$ (Item~\ref{item-struct:strong-congruence}), and that $\eqSig{0}$-equivalent states can be connected by paths producing priority ${>}x$ (Item~\ref{item-struct:even-classes-connected}). For this, we apply exactly the same method presented in Section~\ref{subsec-warm:Buchi}: we obtain a "polished automaton@@buchi" and show that it satisfies the desired properties. This proof will be covered in the recursive step; the case $x=0$ does not present any particularity.

\vskip 1pt

	Moving on to the inductive step, for the rest of the subsection, we let $x$ be an even priority such that $2<x\leq d_{\max}$ and assume that $\A$ is a "deterministic" "$(x-2)$-structured signature automaton".

\paragraph*{Safe centrality and relation \texorpdfstring{$\eqSig{x-1}$}{x-1}}
\AP We say that an automaton with a "preorder" $\leqSig{x-2}$  is ""$({<}x)$-safe centralised"" if $\eqSig{x-2}$-equivalent states that are comparable for the inclusion of "$({<}x)$-safe languages" are in the same "$({<}x)$-safe component".
%
%
\begin{remark}
	For automata in "normal form" "$({<}x)$-safe centrality"  can be stated as:
	if $q\eqSig{x-2} p$ and there is no "$({<}x)$-safe path" connecting $q$ and $p$, then $\safeSig{x}{q} \nsubseteq \safeSig{x}{p}$.
\end{remark}

\begin{restatable}[{$({<}x)$-safe centralisation}]{lemma}{lemnecSafeCentralisation}\label{lemma-nec:safe-centralisation}\RestateRemark
	There exists a "$(x-2)$-structured signature" automaton $\A'$ "equivalent to" $\A$ which is:
	\begin{itemize}
		\item "deterministic over" transitions with "priority" different from $x-1$,
		\item "homogeneous",
		\item "history-deterministic", and
		\item "$({<}x)$-safe centralised". 
	\end{itemize}  
	Moreover, $\A'$ can be obtained in polynomial time from $\A$ and $\sizeAut{\A'}\leq \sizeAut{\A}$.
\end{restatable}

The proof of this lemma is a refinement of the corresponding result for "coB\"uchi automata" presented in the warm-up (Lemma~\ref{lemma-warm:safe-centralisation}): we saturate $\eqSig{x-2}$-classes of the original automaton $\A$ with "$(x-1)$-transitions@@out", and then remove "redundant@@safeCentr" "$({<}x)$-safe components" recursively until obtaining a "$({<}x)$-safe centralised" automaton.  We include all details in Appendix~\ref{appendix:proofs-necessity} (page~\pageref{lemma-nec:safe-centralisation-pageApp}).

Lemma~\ref{lemma-nec:safe-centralisation} allows us to define $\eqSig{x-1}$ satisfying all required properties:
for $q\eqSig{x-2}p$, we define $q\leqSig{x-1} p$ if and only if $q\in \safeComp{x}_i$ and $p\in \safeComp{x}_j$ with $i\leq j$, where $\safeComp{x}_i$ are the "$({<}x)$-safe components" of $\A$ enumerated following the order described in Section~\ref{subsec:def-signature}.
By definition,  Item~\ref{item-struct:odd-orders} is satisfied, and by "$({<}x)$-safe centralisation" of $\A$, so is Item~\ref{item-struct:safe-centralisation}.

\paragraph*{Preorder \texorpdfstring{$\leqSig{x}$}{x}: Total order given by safe languages}

In all this paragraph we assume that $\A$ is an automaton as obtained in the previous paragraph, that is: it has "nested preorders" defined up to $\leqSig{x-1}$ making it a "$(x-2)$-structured signature automaton" and satisfying Items~\ref{item-struct:odd-orders} and~\ref{item-struct:safe-centralisation} for relation~$\eqSig{x-1}$. Moreover, it is  "history-deterministic", "homogeneous", and the only non-determinism of $\A$ appears in "$(x-1)$-transitions@@out". 


We define "preorder" $\leqSig{x}$ as imposed by Item~\ref{item-struct:even-preorders}:
\[  q\leqSig{x} p \; \iff q \lSig{x-1} p \tor [ q \eqSig{x-1} p \tand  \; \safeSig{x}{q}\subseteq \safeSig{x}{p} ],\]
and recall that it follows that $q \eqSig x p $ if and only if $q \eqSig{x-1} p$ and there is a "$(<x)$-safe path" from $q$ to $p$.

\begin{remark}
	Using Item~\ref{item-struct:strong-congruence} for priorities $y \leq x-2$ and Lemma~\ref{lemma-sig:monotonicity-safe-lang} for transitions with priority $\geq x$, we get that
	relation $\eqSig{x}$ is a "congruence for" transitions with a priority different from $x-1$.
	Moreover, over each $\eqSig{x-1}$-class, transitions with priority ${\geq}x$ are "monotone for" $\leqSig{x}$.
\end{remark}

Our objective is now to show that $\leqSig{x}$ is "total@@order" over each $\eqRes{x-1}$-class. The proof of this statement uses the same ideas as the corresponding result from the warm-up (Lemma~\ref{lemma-warm:total-order-coBuchi}). In particular, the main technical point resides in proving that, for two states $q_1 \nleqSig{x} q_2$, we can force to produce priority $x-1$ from $q_1$ while remaining "${<}x$-safe@@path" from $q_2$, and then resynchronise both paths in a same $\eqSig{x}$-class. This result, stated in Lemma~\ref{lemma-nec:syncr-separating-runs}, is the analogue to Lemma~\ref{lemma-warm:syncr-separating-runs} from the warm-up. For its proof we strongly rely on the "$({<}x)$-safe centralisation" of $\A$ and the fine control of its non-determinism. 


In the proofs of Lemmas~\ref{lemma-nec:syncr-separating-runs} and~\ref{lemma-nec:total-order-safe} we will reason at the level of $\eqSig{x}$-classes. As we only suppose that $\A$ is "$(x-2)$-structured", we do not have the "uniformity@@trans" of $x$-transitions over $\eqSig{x}$-classes yet. Lemma~\ref{lemma-nec:existence-uniform-words} below provides a weaker version of this uniformity that will suffice for the arguments in the upcoming lemmas. 


\AP We say that a word $w$ ""produces priority $y$ uniformly"" in a class $\classSig{x}{q}$ if for every $q'\in \classSig{x}{q}$ all runs from $q'$ are of the form $q'\lrp{w:y}$.
In that case, we write $\classSig{x}{q}\lrp{w:y}$.
\AP We say that such a word "produces priority $x$ uniformly in" $\classSig{x}{q}$ ""leading to@@class"" $\classSig{x}{p}$ if for every $q' \in \classSig{x}{q}$ we have $q' \lrp{w:y} p'$ with $p' \in \classSig{x}{p}$.
In that case, we write $\classSig{x}{q} \lrp{w:y} \classSig{x}{p}$.


We note that whenever $\A$ contains a path $q\lrp{w:\geq x} p$, a "run over" $w$ is unique, as $\A$ is "homogeneous" and its restriction to transitions coloured with priorities $\geq x$ is "deterministic". 

The proof of the next lemma combines "normality" of $\A$ with ideas appearing in the proof of Claim~\ref{claim-warm:char-super-words} from the warm-up; all the details can be found in Appendix~\ref{appendix:proofs-necessity} (page~\pageref{lemma-nec:uniformity-x-transitions-pageApp}).

\begin{restatable}[Existence of uniform words]{lemma}{lemNecExistenceUniformWords}\label{lemma-nec:existence-uniform-words}\RestateRemark
	Let $p$ and $q$ be two states from the same "$({<}x)$-safe component".
	There is a word $w \in \SS^*$ "producing priority $x$ uniformly" in $\classSig{x}{q}$ "leading to@@class" $\classSig{x}{p}$.
\end{restatable}

We next state the result that allows us to synchronise runs in a same $\eqSig{x}$-class.
Its proof is analogous to that of Lemma~\ref{lemma-warm:syncr-separating-runs} and can be found in Appendix~\ref{appendix:proofs-necessity}.

We let $\resolv$ be a "sound resolver" for $\A$, and assume that all states can be reached by a "run induced by@@resolv" this "resolver".
We recall that we write $q\lrpAllResolver{\resolv}{w:y}p$ if, for every word $u_0\in \SS^*$ such that the "induced run of $\resolv$ over $u_0$" arrives to $q$, the "induced run of $\resolv$ over" $u_0w$ ends in $p$ and produces $y$ as minimal "priority" in the part of the run corresponding to $w$.
Recall also that we write $\classSig{x}{q} \lrpAllResolver{\resolv}{w:y} \classSig{x}{p}$ if for any $q' \in \classSig{x}{q}$ we have $q' \lrpAllResolver{\resolv}{w:y} p'$ for some $p' \in \classSig{x}{p}$.

\begin{restatable}[Synchronisation of separating runs]{lemma}{lemNecSynchronisation}\label{lemma-nec:syncr-separating-runs}\RestateRemark
	Suppose that $q \eqSig{x-1} q'$ and $q \nleqSig{x} q'$ and let $p\in \classSig{x-1}{q}$.
	There is a word $w\in \SS^+$ such that $\classSig{x}{q} \lrpAllResolver{\resolv}{w:x-1} \classSig{x}{p}$ and $\classSig{x}{q'} \lrpAllResolver{\resolv}{w:x} \classSig{x}{p}$.
\end{restatable}

We can now deduce that $\leqSig{x}$ is "total@@order" over each $\eqRes{x-1}$-class. 


\begin{lemma}[Total order in $({<}x)$-safe components]\label{lemma-nec:total-order-safe}
	Let $q,q'\in Q$ be two states such that $q\eqRes{x-1}q'$.
	Then, either $q\leqSig{x} q'$ or $q'\leqSig{x} q$.
\end{lemma}
\begin{proof}
	Suppose by contradiction that $\safeSig{x}{q} \nsubseteq \safeSig{x}{q'}$ and $\safeSig{x}{q'} \nsubseteq \safeSig{x}{q}$. 
	Let $p$ be a state in $\classSig{x-1}{q}=\classSig{x-1}{q'}$, and apply Lemma~\ref{lemma-nec:existence-uniform-words} to obtain words $u,u'\in \SS^*$ such that $\classSig{x}{p}\lrp{u:x}\classSig{x}{q}$ and $\classSig{x}{p}\lrp{u':x}\classSig{x}{q'}$.
	
	By Lemma~\ref{lemma-nec:syncr-separating-runs}, there are words $w,w'\in \SS^\oo$ such that:
	
	\begin{tabular}{l l}
		\centering
		$\classSig{x}{q}\lrpAllResolver{\resolv}{w:x}\classSig{x}{p}$, & $\classSig{x}{q}\lrpAllResolver{\resolv}{w':x-1}\classSig{x}{p}$,\\ 
		$\classSig{x}{q'}\lrpAllResolver{\resolv}{w:x-1}\classSig{x}{p}$, & $\classSig{x}{q'}\lrpAllResolver{\resolv}{w':x}\classSig{x}{p}$. 
	\end{tabular} 
	
	\vspace{2mm}
	The situation is analogous to the one depicted in Figure~\ref{fig-warm:totalOrder-coBuchi} in the warm-up. 
	We obtain that:
	\begin{itemize}
		\item $(w'u)^\oo\notin \Lang{\initialAut{\A}{q}}$,
		\item $(wu')^\oo\notin \Lang{\initialAut{\A}{q}}$, 
		\item $(wu'w'u)^\oo\in \Lang{\initialAut{\A}{q}}$.
	\end{itemize}

	Let $u_0\in \SS^*$ be a word such that the "run induced by $\resolv$" over $u_0$ ends in $q$ (it exists, as we have supposed that all states are "reachable using $\resolv$").
	It suffices to consider the "game" where there is path labelled $u_0$ leading to  a vertex controlled by "Eve" with two self loops; one of them producing $w'u$ and the other $wu'$. By the previous remarks, she can "win" such game by alternating both loops, but she cannot "win positionally".
\end{proof}

\paragraph*{Re-obtaining determinism}
In this paragraph we assume that $\A$ is a "parity automaton" "recognising" $W$  equipped with "nested total preorders" defined up to $\leqSig{x}$ with all properties obtained until now:
\begin{itemize}
	\item it is a "$(x-2)$-structured signature automaton",
	\item preorder $\leqSig{x-1}$ satisfies properties from Items~\ref{item-struct:odd-orders} and~\ref{item-struct:safe-centralisation} from the definition of a "structured signature automaton", 
	\item preorder $\leqSig{x}$ satisfies the property from Item~\ref{item-struct:even-preorders} from the definition of a "structured signature automaton", 
	\item it is "deterministic over" transitions with priorities different from $x-1$,
	\item it is "homogeneous", and
	\item it is "history-deterministic".
\end{itemize}  

We claim that we can obtain a "deterministic" "equivalent@@aut" "automaton" preserving the entire structure of "total preorders". Moreover, in the obtained automaton we guarantee that relation $\eqSig{x}$ satisfies Item~\ref{item-struct:strong-congruence} from the definition of a "structured signature automaton" for priorities $y<x$.

	\begin{restatable}[Re-determinisation]{lemma}{lemNecDeterminisation}\label{lemma-nec:re-determinisation}\RestateRemark
		There is a "deterministic" "parity automaton" $\A'$ "equivalent to" $\A$ with "nested total preorders" defined up to $\leqSig{x}$ satisfying that:
		\begin{itemize}
			\item it is a "$(x-2)$-structured signature automaton",
			\item preorder $\leqSig{x-1}$ satisfies properties from Items~\ref{item-struct:odd-orders} and~\ref{item-struct:safe-centralisation} from the definition of a "structured signature automaton", and 
			\item preorder $\leqSig{x}$ is a "congruence" and satisfies the property from Item~\ref{item-struct:even-preorders} and, for priorities $y<x$, also that from Item~\ref{item-struct:strong-congruence} .
		\end{itemize}  
		Moreover, automaton $\A'$ can be computed in polynomial time from $\A$ and $\sizeAut{\A'}\leq \sizeAut{\A}$.
	\end{restatable}
	
	The idea of the proof is a direct generalisation of the one presented in the warm-up for "coB\"uchi automata" (page~\pageref{par-warm:determinisation}): we redefine the $(x-1)$-transitions of the automaton in such a way that we ensure that a run that changes of "${<}x$-safe component" infinitely often passes through all these components in a round-robin fashion. The total order $\leqSig{x}$ allows us to identify a maximal state in each component, so we can make a deterministic choice. 
	Formal details can be found in Appendix~\ref{appendix:proofs-necessity} (page~\pageref{lemma-nec:re-determinisation-pageApp}).
	
%

\paragraph*{Uniformity of \texorpdfstring{$x$}{x}-transitions over \texorpdfstring{$\eqSig{x}$}{x}-classes}

We assume that $\A$ is a "deterministic" "parity automaton" "recognising" $W$ with "nested total preorders" defined up to $\leqSig{x}$ satisfying all conditions stated in Lemma~\ref{lemma-nec:re-determinisation}.
The objective of this paragraph is to obtain the remaining properties of a "$x$-structured signature automaton" (Items~\ref{item-struct:strong-congruence}
 and~\ref{item-struct:even-classes-connected}).

\begin{restatable}[Uniformity of $x$-transitions over $\eqSig{x}$-classes]{lemma}{lemNecUniformity}\label{lemma-nec:uniformity-x-transitions}\RestateRemark
	There is a "deterministic" "parity automaton" $\A'$ "equivalent to"~$\A$ such that either:
	\begin{itemize}
		\item $\A'$ is an "$x$-structured signature automaton" with $\sizeAut{\A'}\leq \sizeAut{\A}$, or
		\item $\sizeAut{\A'}< \sizeAut{\A}$.
	\end{itemize}
	In both cases, such an automaton can be computed in polynomial time from $\A$.
\end{restatable}

The proof of this lemma generalises the techniques introduced in Section~\ref{subsec-warm:Buchi} of the warm-up. Details can be found in Appendix~\ref{appendix:proofs-necessity} (from page~\pageref{lemma-nec:uniformity-x-transitions-pageApp}).
We introduce the "local automaton of a $\eqSig{x}$-class" $\classSig{x}{q}$: the automaton originated by keeping the states of $\classSig{x}{q}$ and paths connecting them producing priorities $\geq x$.
Using "positionality" and ideas analogous to those from Lemma~\ref{lemma-warm:existence-super-letter}, we show that these "local automata" admit a well-defined set of "super letters", that is, there are letters that, if read infinitely often in such a "local automaton", must produce an "accepting word".
These letters are exactly the ones carrying priority $x$ when read from $\classSig{x}{q}$ in the final automaton $\A'$. 

To obtain the uniformity of $x$-transitions, we might need to simplify the automaton: we introduce "$x$-polished automata", the target form of automata that will allow us to obtain "uniformity@@cong" of $x$-transitions.
Using the existence of "super letters", we show that we can "polish@@sig" automaton $\A$ by removing "redundant@@sig" parts of it. 
This operation might break the "normal form" of automaton $\A$,\footnotemark{} but this is not a problem, since in any case it strictly decreases the number of states of the automaton, as desired.

\footnotetext{In fact, we believe that the "polishing operation" does preserve "normality", but we have not been able to prove it.}

This ends the induction step of the proof, establishing existence of a "deterministic" "structured signature" automaton "recognising" $W$.

\subsubsection{Full progress consistency}

We show that a "structured signature automaton" "recognising" a "positional" objective must be "fully progress consistent".
Since we showed how to obtain such a "structured signature automaton" in the previous section, this ends the proof of the implication \ref{item-th:halfPosFiniteEve} $\implies$ \ref{item-th:signature} from Theorem~\ref{th-reslt:MainCharacterisation-allItems}.

\begin{lemma}[Necessity of "full progress consistency"]\label{lemma-nec:full-prog-cons}
	Let $W\subseteq \SS^\oo$ be "positional" over finite, "$\ee$-free" "Eve-games". 
	Any "structured" "signature automaton" "recognising" $W$ is "fully progress consistent".
\end{lemma}

\begin{proof}
	Suppose by contradiction that $\A$ is a "structured signature automaton" for $W$ that is not "fully progress consistent". By definition, for some priority $x$ even, there are $q\lSig{x} p$ and a word $w\in \SS^*$ such that $q\lrp{w:\geq x} p$, but $w^\oo \notin \Lang{\initialAut{\A}{q}}$.
	As $\eqSig{x}$ is a "$[0,x]$-faithful congruence", we can work with $\eqSig{x}$-classes and write $\classSig{x}{q}\lrp{w:\geq x}\classSig{x}{p}$.
	We study first the case $x>0$.	
	By Lemma~\ref{lemma-nec:syncr-separating-runs}, there is a word $u\in \SS^+$ such that $\classSig{x}{q}\lrp{u:x-1}\classSig{x}{q}$ and $\classSig{x}{p}\lrp{u:x}\classSig{x}{q}$.
	 Let $u_0\in \SS^+$ be a word reaching $q$ from the "initial state" of $\A$.\footnote{If $q$ is "initial", we omit $u_0$ and the state $v_0$ of the game from Figure~\ref{fig-nec:game-full-prog-cons} to ensure the use of an "$\ee$-free" game.} We obtain:
	\begin{itemize}
		\item $u_0w^\oo\notin W$,
		\item $u_0u^\oo \notin W$, and
		\item $u_0(wu)^\oo\in W$.
	\end{itemize}
	We consider the game depicted in Figure~\ref{fig-nec:game-full-prog-cons}. "Eve" can "win from $v_0$" by alternating loops labelled $w$ and $u$ when the play arrives to $v_{\mathsf{choice}}$. However, she cannot "win positionally" from $v_0$.	
	\begin{figure}
		\centering
		\includegraphics[scale=1.5]{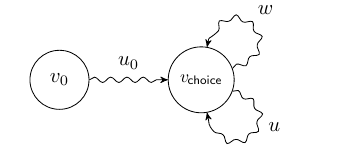}
		\caption{A game $\G$ in which "Eve" cannot "play optimally using positional strategies" if $\A$ is not "fully progress consistent", as in the proof of Lemma~\ref{lemma-nec:full-prog-cons}.}
		\label{fig-nec:game-full-prog-cons}
	\end{figure}
	
	For the case $x=0$, the proof is almost identical to that of Lemma~\ref{lemma-warm:prog-cons-nec}; we just need to ensure that the "game" in Figure~\ref{fig-warm:game-prog-cons}  (page~\pageref{fig-warm:game-prog-cons}) can be supposed finite and "$\ee$-free". Finiteness of the game can be obtained by using ultimately periodic words. To guarantee that we do not include $\ee$-transitions, if $u_0=\ee$, we remove vertex $v_0$ from the game.
\end{proof}

\subsection{From signature automata to positionality through \texorpdfstring{$\ee$}{epsilon}-complete automata}\label{subsec:from-signature-to-HP}

We now complete the equivalence of the statements from Theorem~\ref{th-reslt:MainCharacterisation-allItems}, excluding~\ref{item-th:ee-complete-forall}, by showing the implications \ref{item-th:signature} $\implies$ \ref{item-th:ee-complete} $\implies$ \ref{item-th:ee-complete-HD} $\implies$ \ref{item-th:universalGraph}.
The implication \ref{item-th:universalGraph} $\implies$ \ref{item-th:halfPosArbitrary} follows from Proposition~\ref{prop-prelim:univ-graphs} (taken from~\cite{Ohlmann23Univ})  and \ref{item-th:halfPosArbitrary} $\implies$ \ref{item-th:halfPosFiniteEve} is trivial.

\subsubsection{\texorpdfstring{$\ee$}{epsilon}-complete automata}\label{subsec:eps-complete-def}

We start with our crucial definition.

\begin{definition}
	\AP An ""$\ee$-complete"" automaton $\A$ is a "non-deterministic" "parity automaton" (with "$\ee$-transitions@@aut") with priorities ranging between 0 and $d+1$, where $d$ is even, such that
	\begin{itemize}
		\item the relations $\re{\ee:1}, \re{\ee:3},\dots,\re{\ee,d+1}$ all define total preorders, each refining the previous one;
		\item for each even $x \in \{0,2,\dots,d\}$, the relation $\re{\ee:x}$ is the strict variant of $\re{\ee:x+1}$: for any $q,q'$, it holds that $q \re{\ee:x} q'$ if and only if $q' \re{\ee:x+1} q$ does not hold.
	\end{itemize}
\end{definition}

We say that an automaton $\A$ (which will usually be taken deterministic) is ""$\ee$-completable"", if one may add $\ee$-transitions to $\A$ so as to make it "$\ee$-complete", without augmenting the language.
We say that the resulting (generally non-deterministic) automaton $\A'$ is an ""$\ee$-completion"" of $\A$; note that if $\A$ is deterministic, then $\A'$ is "history-deterministic" (it is even determinisable by pruning).
This provides the implication from~\ref{item-th:ee-complete} to~\ref{item-th:ee-complete-HD} in Theorem~\ref{th-reslt:MainCharacterisation-allItems}
We refer to Figure~\ref{fig:epsilon-completable-automaton} in Section~\ref{sec:char-half-pos} for an example.

\subsubsection{From signature automata to \texorpdfstring{$\ee$}{epsilon}-completable automata}

We now prove the implication \ref{item-th:signature} $\implies$ \ref{item-th:ee-complete} from Theorem~\ref{th-reslt:MainCharacterisation-allItems}, which can be stated as follows.

\begin{lemma}[{From \ref{item-th:signature} to \ref{item-th:ee-complete} in Theorem~\ref{th-reslt:MainCharacterisation-allItems}}]\label{lemma:signature-to-hm}
	Let $\A$ be a "fully progress consistent" "deterministic" "signature automaton".
	Then $\A$ is "$\ee$-completable".
\end{lemma}

We prove Lemma~\ref{lemma:signature-to-hm}. We refer to the discussion at the end of Section~\ref{subsec-warm:full-example} (Figure~\ref{fig-warm:3-priorities}) for an example on the ideas of this proof.
Let $\A$ be a "fully progress consistent" "deterministic" "signature" automaton with "nested preorders" $\leqSig{0},\leqSig{1},\dots, \leqSig{d}$ and let $W=\Lang{\A}$.
Consider the automaton $\A'$ obtained from $\A$ by adding, for all even priorities $x \in [0,d]$, transitions $q \re{\ee:x+1} q'$ whenever $q' \leqSig{x} q$ and $q \re{\ee:x} q'$ whenever $q' \lSig{x} q$.
Note that $\A'$ (potentially) has transitions with priorities up to $d+1$.

\begin{remark}\label{rmk-suf:eps-trans-descend}
	Note that, for $x$ even, $q \re{\ee: \geq x} p$ in $\A'$ entails $p \leqSig{x} q$. 
\end{remark}

Since by definition, $q' \lSig{x} q$ is the negation of $q \leqSig{x} q'$, it follows immediately that $\A'$ is "$\ee$-complete". Moreover, as $\A$ is a "subautomaton" of $\A'$, the inclusion $W\subseteq \Lang{\A'}$ is trivial. The difficulty lies in showing that $\Lang{\A'}\subseteq W$.

\begin{remark}\label{rmk-suf:cycles-in-HM-no-eps}
	If $q\lrp{w:x} q$ is a cycle in $\A'$ producing an even minimal priority, then $w$ is not composed exclusively of "$\ee$-letters".
\end{remark}

\AP For a priority $x$ (even or odd), we say that a transition $q \re{\ee} q'$ in $\A'$ is an ""$x$-jump"" if $q' \lSig{x} q$.
We remark that if $x'\leq x$, an "$x'$-jump" is an "$x$-jump".
We start with a useful technical lemma.

\begin{lemma}\label{lemma:paths_in_hm}
Fix a path $q' \lrp{w':\geq x} p'$ in $\A'$, with $x$ even, and consider a "run" $q \lrp{w\phantom{.}} p$ in~$\A$, where $w$ is obtained from $w'$ by removing "$\ee$-letters" (where $p=q$ if $w$ is empty).
\begin{enumerate}[label=\alph*),ref=\alph*]
	
\item\label{it-lem:no-x-jump} Assume that there is no "$x$-jump" on $q' \lrp{w':\geq x} p'$ and that $q \eqSig{x} q'$.
Then $p \eqSig{x} p'$ and $q \lrp{w:\geq x} p$ in $\A$.
Moreover, if $q' \lrp{w':x} p'$, then $q \lrp{w:x} p$ in $\A$.

\item\label{it-lem:no-x-1-jump} Assume that there is no "$(x-1)$-jump" on $q' \lrp{w':\geq x} p'$, that $q' \eqSig{x-1} q$ and $q' \leqSig{x} q$. Then $p' \eqSig{x-1} p$, $p' \leqSig{x} p$ and $q \lrp{w:\geq x} p$ in $\A$.

\item\label{it-lem:monotonicity-x-1} We have that $p'\leqSig{x-1}q'$.

\end{enumerate}
\end{lemma}

\begin{proof}
In the two first cases we deal with the case of a letter and conclude by induction.

\begin{enumerate}[label=\alph*)]
	\item There are two possibilities, depending on whether  the letter is $\ee$ or not.
	\begin{itemize}
		\item Transition $q' \re{a:\geq x} p'$ with $a \in \SS$.
		Then "$[0,x]$-faithfulness" of $\eqSig{x}$ gives $p \eqSig{x} p'$ and $q \re{a:\geq x} p$.
		Moreover if $q' \re{a:x} p'$, then by "$[0,x]$-faithfulness", $q \re {a:x} p$.
		
		\item Transition $q' \re{\ee:\geq x} p'$.
		Then $p' \leqSig{x} q'$ and since there is no "$x$-jump", $p' \eqSig{x} q'$. Thus $p=q \eqSig{x} q' \eqSig{x} p'$.
		
	\end{itemize}
	\item We distinguish the  two same cases.
	\begin{itemize}
		\item Transition $q' \re{a:\geq x} p'$ with $a \in \SS$.
		Then "local monotonicity of $({\geq}x)$-transitions" in $\A$ yields $q \re{a:\geq x} p$ in $\A$ with $p' \leqSig{x} p$. By Remark~\ref{p4-remark:x-minus-1-congruence}, $p' \eqSig{x-1} p$.
		
		\item Transition $q' \re{\ee:\geq x} p'$.
		This implies $p' \leqSig{x} q'$ and since there is no "$(x-1)$-jump", we have $p' \eqSig{x-1} q'$.
		Thus we conclude that $p' \leqSig{x} q' \leqSig{x} q=p$ and $p=q \eqSig{x-1} q' \eqSig{x-1} p'$.
	\end{itemize}
	
	\item Suppose by contradiction that $p'\gSig{x-1} q'$, and let $q_1'$ be the first state in the run such that $q'\lSig{x-1}q_1'$. We have:
	\[q' \lrp{w_1':\geq x} q_2' \re{a:\geq x} q_1' \lrp{w_2'} p',\] 
	with $q_2'\lSig{x-1}q_1'$.  As in particular $q_2'\lSig{x}q_1'$, $a\neq \ee$ (Remark~\ref{rmk-suf:eps-trans-descend}). However, this contradicts Item~\ref{item-sig:odd-levels} from the definition of "signature automaton".\qedhere
\end{enumerate}
\end{proof}

We now state the key result for deriving Lemma~\ref{lemma:signature-to-hm}.

\begin{lemma}\label{lemma-suf:even-cycles-in-hm}
Consider a cycle $q' \lrp{w':x} q'$ in $\A'$ with $x$ even, and let $w$ be obtained from~$w'$ by removing "$\ee$-letters".
Then, $w^\omega$ is accepted from $q'$ in $\A$.
\end{lemma}

\begin{proof}
	We note that by Remark~\ref{rmk-suf:cycles-in-HM-no-eps}, $w$ is not empty.
Let $y$ be minimal such that an "$y$-jump" appears on the path $q' \lrp{w':x} q'$ (and $y = d+1$ if no "$y$-jump" occurs).
\begin{itemize}
\item If $y \geq x$.
Then there is no $\re{\ee:x}$ transition on the path $q' \lrp{w':x} q'$ (otherwise it would produce an "$x$-jump").
Thus Lemma~\ref{lemma:paths_in_hm}.\ref{it-lem:no-x-jump} proves $q' \lrp{w:x} q_1$ in $\A$ with $q_1 \eqSig{x} q'$.
Then, since the $\eqSig{x}$-class is preserved, successive applications of Lemma~\ref{lemma:paths_in_hm}.\ref{it-lem:no-x-jump} give $q'\lrp{w:x} q_1 \lrp{w:x} q_2 \lrp{w:x} q_3 \lrp{w:x} \dots$ in $\A$, and thus $w^\oo$ is "accepted@@aut" from $q'$ in $\A$. 

\item If $y < x$ and $y$ odd. We show that this case cannot happen. Let $p'_1 \re {\ee} p'_2$ denote the first "$y$-jump" on the path $q' \lrp{w} q'$ in $\A'$, that is, we have
\[
q' \lrp{w'_1:> y} p'_1 \re{\ee:> y} p'_2 \lrp{w'_2:> y} q' \tin \A', \quad p_2'\lSig{y}p_1'.
\]

By Lemma~\ref{lemma:paths_in_hm}.\ref{it-lem:monotonicity-x-1}, we have that $p_1'\leqSig{y} q'$, so $p_2'\lSig{y}q'$. The existence of a path $p'_2 \lrp{w'_2:\geq y} q'$ contradicts Lemma~\ref{lemma:paths_in_hm}.\ref{it-lem:monotonicity-x-1}.

\item If $y < x$ and $y$ even.
Let $p'_1 \re {\ee} p'_2$ denote the last "$y$-jump" on the path $q' \lrp{w} q'$ in $\A'$, that is, we have
\[
	q' \lrp{w'_1:\geq y} p'_1 \re{\ee:\geq y} p'_2 \lrp{w'_2:\geq y} q' \tin \A', \quad p_2'\lSig{y}p_1',
\]
and there is no "$y$-jump" on $p'_2 \lrp{w_2} q'$. We let $w_1,w_2$ be obtained, respectively, from $w'_1$ and $w'_2$ by removing $\ee$'s.
By Lemma~\ref{lemma:paths_in_hm}.\ref{it-lem:no-x-jump}, we get that $p'_2 \lrp{w_2:\geq y} {q}$ in $\A$ for some $q \sim_{y} q'.$
As there is no "$(y-1)$-jump" in the path, by  Lemma~\ref{lemma:paths_in_hm}.\ref{it-lem:no-x-1-jump}, we get that $q \lrp{w_1:\geq y} p_1$ in $\A$ for $p_2' \lSig{y} p_1' \leqSig{y} p_1$. See Figure~\ref{fig-suf:eps-complete-lemma} for an illustration of the situation.

\begin{figure}
	\centering
	\includegraphics[scale=1.5]{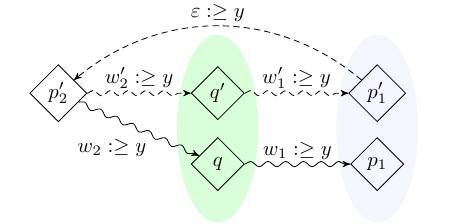}
	\caption{Situation in the third case of Lemma~\ref{lemma-suf:even-cycles-in-hm}. Dashed lines represent paths in $\A'$ and solid lines those in $\A$. States that are $\eqSig{y}$-equivalent are encircled together.}
	\label{fig-suf:eps-complete-lemma}
\end{figure}

All in all, we have obtained a path 
\[p_2' \lrp{w_2:\geq y} q \lrp{w_1: \geq y} p_1 \gSig{y} p'_2 \quad \text{ in } \A.\] 
Therefore, "full progress consistency" yields $(w_2w_1)^\omega \in \Lang{\initialAut{\A}{p_2'}}$.
As $\Lang{\initialAut{\A}{p_2'}} \subseteq \Lang{\initialAut{\A}{q'}}= \Lang{\initialAut{\A}{q}}$, we conclude that $w^\omega = (w_1w_2)^\oo \in \Lang{\initialAut{\A}{q}}$. \qedhere
\end{itemize}
\end{proof}

We are now ready conclude the proof of Lemma~\ref{lemma:signature-to-hm}.
\begin{proof}[Proof of Lemma~\ref{lemma:signature-to-hm}]
	As mentioned above, the inclusion $	W\subseteq \Lang{\A'}$ is trivial, as $\A$ is a subautomaton of $\A'$. This shows that, if the converse inclusion holds, $\A'$ is determinisable by pruning and therefore it is also "history-deterministic".
	
	We show $\Lang{\A'}\subseteq W$.
	Take an accepting run in $\A'$ over $w'\in \SS^\oo$ and decompose it as:
\[
	q_0 \lrp{w'_0} q' \lrp{w'_1:x} q' \lrp{w'_2:x} q' \lrp{w'_3:x} \cdots,
\]
where $x$ is even.
For each $i$, let $w_i$ be obtained from $w'_i$ by removing $\ee$'s (which is non-empty by Remark~\ref{rmk-suf:cycles-in-HM-no-eps}), and consider the corresponding run in $\A$:

\[
q_0 \lrp{w_0} q_1 \lrp{w_1} q_2 \lrp{w_2} q_3 \lrp{w_3} \cdots,
\]

It follows by induction that $q' \leqSig{0} q_i$, so, as order $\leqSig{0}$ "refines" the order of residuals (property~\ref{item-sig:refinement-residuals} of a "signature automaton"), words that are accepted from $q'$ in $\A$ are also "accepted@@aut" from  $q_i$. 
By Lemma~\ref{lemma-suf:even-cycles-in-hm}, it holds that for each pair of indices $j\leq j'$ we have $(w_jw_{j+1}\dots w_{j'})^\omega \in (q')^{-1}W$, so these words are also accepted from $q_i$, for all $i$.

Let $i_1,i_2,\dots$ be a sequence of indices such that $q_{i_j} = q_{i_{j+1}}$ for all $j$, and let $\tilde{q}=q_{i_1}$ be such recurring state.
Each word $w_{i_j}\dots w_{i_{j}-1}$ forms a cycle over $\tilde{q}$, that, by the previous remark, must be accepting, so the minimal priority produced on it is even. Therefore, we have found a decomposition of the run over $w$ in $\A$ of the form

\[
q_0 \lrp{w_0w_1\dots} \tilde{q} \lrp{w_{i_1}\dots w_{i_2 -1} : x_1} \tilde{q} \lrp{w_{i_2}\dots w_{i_3 -1} : x_2}  \cdots,
\]

with all $x_i$ even. We conclude that  $w_0w_1 \dots \in W$.
\end{proof}

\subsubsection{Universal graphs from \texorpdfstring{$\ee$}{epsilon}-complete automata}
As noted at the end of Subsection~\ref{subsec:eps-complete-def}, the implication~\ref{item-th:ee-complete} $\implies$ \ref{item-th:ee-complete-HD} in Theorem~\ref{th-reslt:MainCharacterisation-allItems} is immediate.
We now move on to implication \ref{item-th:ee-complete-HD} $\implies$ \ref{item-th:universalGraph}, stated as follows:

\begin{proposition}[{\ref{item-th:ee-complete-HD} $\implies$ \ref{item-th:universalGraph} in Theorem~\ref{th-reslt:MainCharacterisation-allItems}}]\label{prop:existence-univ-graph}
If there exists an "$\ee$-complete" "history-deterministic" automaton "recognising" $W$, then there exists a "well-ordered" "monotone@@univ" "$(\kappa,W)$-universal" graph for each cardinal $\kappa$.
\end{proposition}

For the rest of the section, we let $\A$ be an "$\ee$-complete" "history-deterministic" automaton "recognising" $W$, and we let $d$ be even such that $\A$ has priorities up to $d+1$, as in the above section.

\paragraph*{Closure of an \texorpdfstring{$\ee$}{epsilon}-complete automaton}

We define the order $\intro*\cleq$ over "priorities" in $[0,d+1]$ that sets $y\cleq x$ if $x$ is ``preferable'' to $y$, that is:  $1 \cleq 3 \cleq \dots \cleq d+1 \cleq d \cleq \dots \cleq 2 \cleq 0$. 
\begin{remark}\label{rmk-suf:order-priorities}
	For any pair of infinite words $w, w' \in [0,d+1]^\oo$ satisfying that for all $i$  $w_i \cleq w'_i$, it holds that:
	\[w\in \parity_{[0,d+1]} \implies  w'\in \parity_{[0,d+1]}.
	\]
\end{remark}

\AP We say that an automaton $\A$ is ""priority-closed"" if:
\begin{itemize}
	\item for any states $q,q'$, priorities $y' \cleq y$, and $a \in \SS \cup \{\ee\}$
	\[
		q \re{a:y} q' \implies q \re{a:y'} q'
	\]
	\item for any states $p,p',q,q'$ and $a \in \SS \cup \{\ee\}$,
\[
	p \re {\ee:y_1} q \re{a:y_2} q' \re{\ee:y_3} p' \implies p \re {a:\min_{\cleq}(y_1,y_2,y_3)} p'.
\] 
\end{itemize}

It is easy to turn any automaton into a "priority-closed@@aut" one.

\begin{lemma}\label{lemma:epsilon-closure}
Let $\A$ be an automaton recognising $W$.
There is an automaton $\A'$ recognising $W$ which is "priority-closed@@aut".
Moreover, if $\A$ is "history-deterministic" and "$\ee$-complete", then so is $\A'$.
\end{lemma}

\begin{proof}
Let $\A'$ be obtained by adding to $\A$ all transitions of the form $q \re {a:y'} q'$, when $q \re{a:y} q'$ is a transition in $\A$ and $y' \cleq y$, and all transitions of the form $p \re {a:\min_{\cleq}(y_1,y_2,y_3)} p'$, whenever a path $p \re {\ee:y_1} q \re{a:y_2} q' \re{\ee:y_3} p'$ appears in $\A$.
Clearly, $\A'$ is "priority-closed@@aut", $\Lang{\A} \subseteq \Lang{\A'}$ and if $\A$ is  "$\ee$-complete", then so is $\A'$. 
The fact that this operation preserves "history-determinism" is also clear, once the equality of languages is obtained.
To prove that $\Lang{\A'} \subseteq \Lang{\A}$, take an accepting "run over" $w\in\SS^\oo$ in $\A'$. We build a run over $w$ in $\A$ by replacing any newly added transition $q \re {a:y'} q'$ by $q \re{a:y} q' $,  and $p \re {a:\min_{\cleq}(y_1,y_2,y_3)} p'$ by $p \re {\ee:y_1} q \re{a:y_2} q' \re{\ee:y_3} p'$, respectively. By Remark~\ref{rmk-suf:order-priorities}, the obtained run is "accepting@@run" in $\A$.
\end{proof}

In a "priority-closed@@aut" automaton, for each "priority" $y$, transitions $\re{\ee:y}$ define a transitive relation.
If the automaton is moreover "$\ee$-complete", then for each even priority $x$, transitions of the form $\re{\ee:x+1}$ define "total preorders". 

\AP We define $q \intro*\geqHM{x} q'$ if $q \re{\ee:x+1} q'$. 
Note that, since $\A$ is "priority-closed@@aut", these preorders are "nested@@preorder": $q \geqHM{x+2} q'$ implies $q \geqHM{x} q'$.
Moreover, since $\A$ is "$\ee$-complete", for any even $x$:
\[
	q \gHM{x} q' \implies q \re{\ee:x} q'.
\]
Finally, observe that, by "priority-closure", states $q,q'$ that are $\leqHM{d}$-equivalent have exactly the same incoming and outgoing transitions, and can thus be merged without altering the language (this transformation preserves "history-determinism").
Therefore, we may assume that $\leqHM{d}$ is antisymmetric and thus defines a total order on $Q$.

\AP We write $\intro*\classHM{x}{q}$ to denote the equivalence class of $q$ associated to the preorder $\leqHM{x}$. That is, $\classHM{x}{q}$ contains the states $q'$ such that $q\leqHM x q'$ and $q'\leqHM x q$.

\paragraph*{Definition of the graph}

For the remainder of the section, we fix a cardinal $\kappa$.
Let us first recall the construction of the "$(\kappa,\parity)$-universal" graph $\UPar$ for the "parity objective" over $\{0,\dots,d+1\}$ (see Example~\ref{ex-warm:graph-parity} for a proof of universality). Its vertices are of the form $(\lambda_1,\lambda_3,\dots,\lambda_{d+1}) \in \kappa^{d/2+1}$, ordered lexicographically, and its edges are given by
\[
(\lambda_1,\dots,\lambda_{d+1}) \re x (\lambda'_1,\dots,\lambda'_{d+1}) \iff \begin{cases}
	(\lambda'_1,\dots,\lambda'_{x-1})\leq (\lambda_1,\dots,\lambda_{x-1}),  &\tif x \text{ is even}, \\
	(\lambda'_1,\dots,\lambda'_{x})<(\lambda_1,\dots,\lambda_{x}), &\tow. 
\end{cases}
\]

Fix a "priority-closed@@aut" "$\ee$-complete" and "history-deterministic" automaton $\A$ with states $Q$  such that $\leqHM{d}$ defines a total order on $Q$.

\AP We define a "$\SS$-graph" $\intro*\UAut$ as follows.
Vertices of $\UAut$ are the tuples $v = (q,\lambda_{1},\lambda_3,\dots,\lambda_{d+1}) \in Q \times \kappa^{d/2+1}$.
\AP We associate to each such vertex the extended tuple 
\[\intro*\ext(v) = (\classHM{0}{q},\lambda_1,\classHM{2}{q},\lambda_3,\dots,\classHM{d-1}{q},\lambda_{d+1}).\]
 We use it to define the total order: $v\leq v'$ if $\ext(v)$ is smaller than $\ext(v')$ for the lexicographic order.
This is therefore a well-order.
Edges in $\UAut$ are given by:
\[(q,\lambda_{1},\dots,\lambda_{d+1}) \re {a} (q',\lambda'_{1},\dots,\lambda'_{d+1}) \; \iff \; \exists y \,
\left\{\begin{array}{l}
	 q \re {a:y} q' \tin \A,\;\; \tand \\[2mm]
	(\lambda_1,\dots,\lambda_{d+1}) \re {y} (\lambda'_1,\dots,\lambda'_{d+1}) \tin \UPar.
\end{array}\right.
\]

Paths in $\UAut$ are well-behaved with respect to $W$, as stated below.

\begin{lemma}
Let $(q,\lambda_1,\dots,\lambda_{d+1}) \lrp{w\phantom{..}}$ be an infinite path in $\UAut$. Then, $w \in q^{-1}W$.
\end{lemma}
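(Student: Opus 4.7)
The plan is to exploit the fact that a path in $\UAut$ encodes two parallel pieces of information: a run in $\A$ over the word $w$ starting from $q$, and a path in $\UPar$ whose edges carry the priorities produced along that run. The conclusion will then follow from the universality of $\UPar$ together with the fact that $\A$ recognises $W$.

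First I would unfold the given infinite path in $\UAut$ as
\[
(q,\lambda_1,\ldots,\lambda_{d+1}) = (q_0, \vec \lambda^{\,0}) \re{a_0} (q_1, \vec \lambda^{\,1}) \re{a_1} (q_2, \vec \lambda^{\,2}) \re{a_2} \cdots,
\]
where $w = a_0 a_1 a_2 \cdots$ and each $a_i \in \SS$. By the very definition of edges in $\UAut$, for each $i$ there exists a priority $y_i \in [0, d{+}1]$ such that $q_i \re{a_i : y_i} q_{i+1}$ is a transition of $\A$ and $\vec \lambda^{\,i} \re{y_i} \vec \lambda^{\,i+1}$ is an edge of $\UPar$. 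Concatenating these witnesses yields on one side a genuine run
\[
\rho \; = \; q_0 \re{a_0 : y_0} q_1 \re{a_1 : y_1} q_2 \re{a_2 : y_2} \cdots
\]
in $\A$ over the word $w$ starting from $q$ (using only $\SS$-transitions, hence no $\ee$-letters need to be removed), and on the other side an infinite path
\[
\vec \lambda^{\,0} \re{y_0} \vec \lambda^{\,1} \re{y_1} \vec \lambda^{\,2} \re{y_2} \cdots
\]
in the parity universal graph.

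The key step is then to apply Lemma~\ref{lemma-p4-prelim:infinite-paths-in-UPar} (whose proof works verbatim for the priority range $[0, d{+}1]$): any infinite path in $\UPar$ satisfies the parity objective, so the minimal priority that appears infinitely often among $y_0, y_1, y_2, \ldots$ is even. Consequently the run $\rho$ of $\A$ over $w$ is accepting, whence $w \in \Lang{\initialAut{\A}{q}}$. Since $\A$ is "history-deterministic" and hence "semantically deterministic" (by our global hypothesis), $\Lang{\initialAut{\A}{q}}$ coincides with $q^{-1} W$, giving $w \in q^{-1} W$ as required.

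I do not expect any serious obstacle here: the construction of $\UAut$ was precisely tailored so that its edges simultaneously simulate a transition of $\A$ and a legal move in $\UPar$, and the "priority-closure" performed in Lemma~\ref{lemma-p4:epsilon-closure} has already taken care of absorbing $\ee$-transitions. The only minor point to check is that Lemma~\ref{lemma-p4-prelim:infinite-paths-in-UPar} is applied to the extended range of priorities $[0, d{+}1]$, which is immediate since its proof never uses $d$ being the maximum even priority.
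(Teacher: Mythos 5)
Your proof is correct and follows essentially the same route as the paper's: unfold the path in $\UAut$ into a run of $\A$ over $w$ paired with a path in $\UPar$, invoke the fact that every infinite path in $\UPar$ satisfies the parity objective to conclude the run is accepting, and hence $w \in q^{-1}W$. The extra remarks (applicability of Lemma~\ref{lemma-p4-prelim:infinite-paths-in-UPar} to the range $[0,d+1]$, and identifying $\Lang{\initialAut{\A}{q}}$ with $q^{-1}W$ via semantic determinism) are harmless elaborations of steps the paper leaves implicit.
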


\begin{proof}
Consider a path
\[
	(q^0,\lambda_1^0,\dots,\lambda_{d+1}^0) \re {w_0} (q^1,\lambda_1^1,\dots,\lambda_{d+1}^1) \re {w_1} \dots \tin \UAut.
\]
By definition, there are priorities $y_0,y_1,\dots$ such that
\[
	q^0 \re {w_0:y_0} q^1 \re{w_1:y_1} \dots \tin \A, \; \tand \;\; (\lambda_1^0,\dots,\lambda_{d+1}^0) \re {y_0} (\lambda_1^1,\dots,\lambda_{d+1}^1) \re {y_1} \dots \tin \UPar.
\]
Since vertices in $\UPar$ satisfy the parity objective, $\liminf(y_0y_1\dots)$ is even, thus the above run in $\A$ is accepting, and so $w_0 w_1 \dots \in (q^0)^{-1}W$.
\end{proof}

\paragraph*{Monotonicity}

"Monotonicity@@@univ" of $\UAut$ follows from the structural assumptions over $\A$.

\begin{lemma}
The graph $\UAut$ is "monotone@@univ".
\end{lemma}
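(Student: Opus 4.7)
The plan is to verify $v \geq u \re{a} u' \geq v'$ implies $v \re{a} v'$ by combining the structure of $\A$ (via $\ee$-completeness and priority-closure) with the known monotonicity of $\UPar$. Write $u = (q,\vec\lambda)$, $u' = (q',\vec\lambda')$, $v = (p,\vec\mu)$, $v' = (p',\vec\mu')$. By definition of edges in $\UAut$, there is a priority $y$ with $q \re{a:y} q'$ in $\A$ and $\vec\lambda \re{y} \vec\lambda'$ in $\UPar$; the goal is to produce a priority $z$ with $p \re{a:z} p'$ in $\A$ and $\vec\mu \re{z} \vec\mu'$ in $\UPar$.

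The first step is the following extraction lemma, which is the core of the argument: from $v \geq u$, produce an \emph{odd} priority $z_1$ such that $p \re{\ee:z_1} q$ in $\A$, and such that the pairs $(\vec\mu,\vec\lambda)$ agree on all coordinates with index smaller than $z_1$, while $\mu_{z_1} \geq \lambda_{z_1}$. To obtain this, I read off the first position at which $\ext(v)$ and $\ext(u)$ differ in the lexicographic order: if this first strict inequality occurs at a class position $\classHM{2k}{\cdot}$, then $[p]_{2k} > [q]_{2k}$ so by definition of $\geqHM{2k}$ one has $p \re{\ee:2k+1} q$, and I take $z_1 = 2k+1$; if it occurs at a $\lambda$-position $\mu_{2k+1} > \lambda_{2k+1}$, then all preceding classes agree so $p \eqHM{2k} q$, giving $p \re{\ee:2k+1} q$ as well, and again $z_1 = 2k+1$. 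In both cases, the preceding coordinates of $\vec\mu$ and $\vec\lambda$ agree. The same argument applied to $u' \geq v'$ yields an odd priority $z_3$ with $q' \re{\ee:z_3} p'$ and analogous comparison between $\vec\lambda'$ and $\vec\mu'$.

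Having these shifts, I apply priority-closure of $\A$ to the composite path $p \re{\ee:z_1} q \re{a:y} q' \re{\ee:z_3} p'$, obtaining a direct transition $p \re{a:z} p'$ in $\A$ with $z = \min_{\cleq}(z_1, y, z_3)$. It remains to check that this same $z$ witnesses $\vec\mu \re{z} \vec\mu'$ in $\UPar$. This is done by a short case analysis on which of $z_1, y, z_3$ attains the $\cleq$-minimum: if $z = y$, then $z \cleq z_1$ and $z \cleq z_3$, which guarantees that the coordinate agreement provided by the extraction lemma covers the initial segment of indices relevant to the $\UPar$-edge condition at priority $z$, and I conclude by monotonicity of $\UPar$ applied to $\vec\mu \geq \vec\lambda \re{y} \vec\lambda' \geq \vec\mu'$ restricted to the appropriate prefix; if $z = z_1$ or $z = z_3$ (both odd), then the strict inequality at the coordinate producing $z_1$ (resp.\ $z_3$) directly yields the strict prefix inequality required by $\UPar$'s odd-priority edge.

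The main obstacle I anticipate is bookkeeping the two interleaved orderings (the class tower $\classHM{0},\classHM{2},\dots$ and the $\lambda$-coordinates) and verifying that the extracted $z$ really matches up across $\A$ and $\UPar$ in every case, especially when $\ext(v) = \ext(u)$ or when the priorities $z_1, y, z_3$ mix parities; I expect the case where the $\cleq$-minimum is an even $y$ strictly between two odd values to be the subtlest, but it should be handled cleanly by the fact that the initial prefix of $\vec\mu$ equals that of $\vec\lambda$ up to index $\min(z_1, z_3) - 2 \geq y - 1$.
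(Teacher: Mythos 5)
Your overall strategy---extract $\ee$-edges of $\A$ from the order comparisons $v \geq u$ and $u' \geq v'$, compose them with $q \re{a:y} q'$ via priority-closure, and check the resulting priority against $\UPar$---is the same as the paper's. The gap is in your extraction lemma. When the first position at which the extended tuples differ is a class coordinate $\classHM{2k}{\cdot}$, you take the odd edge at priority $2k+1$ and claim $\mu_{2k+1} \geq \lambda_{2k+1}$; but once $\classHM{2k}{p} > \classHM{2k}{q}$, the lexicographic order says nothing about any later coordinate, so $\mu_{2k+1}$ is arbitrary and the claimed inequality is false in general. This is not mere bookkeeping: since you always extract \emph{odd} $z_1,z_3$, and every odd priority is $\cleq$-below every even one, your composite $z=\min_{\cleq}(z_1,y,z_3)$ is always odd (in particular the branch ``$z=y$ with $y$ even'' that you single out as subtlest never occurs), and the $\UPar$ edge at an odd $z$ demands the \emph{strict} inequality $(\mu'_1,\dots,\mu'_z) < (\mu_1,\dots,\mu_z)$, which is exactly the information you lack. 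Concretely, take $d=0$, $v=u=(q,3)$, $u'=(q',5)$, $v'=(q'',5)$ with $\classHM{0}{q''} < \classHM{0}{q'}$ and $q\re{a:0}q'$ in $\A$: the edge $u \re a u'$ exists (the $\UPar$ condition at priority $0$ is vacuous) and $v' < u'$, yet your recipe gives $z_3=1$, hence $z=1$, and requires $(5)<(3)$ in $\UPar$, which fails.

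The fix is the case distinction the paper makes on \emph{where} the first difference occurs. If it is at a class coordinate $\classHM{2k}{\cdot}$, you must use the strict \emph{even} edge $\re{\ee:2k}$ (available because $\gHM{2k}$ implies $\re{\ee:2k}$ in an $\ee$-complete automaton), so the composite priority is the even $2k$ and the $\UPar$ condition only asks for the non-strict inequality on the prefix up to index $2k-1$, which is precisely what the agreement of the earlier coordinates provides. The odd edge $\re{\ee:2k+1}$ is the right choice only when the first difference is at an ordinal coordinate $\lambda_{2k+1}$, where the strict inequality is genuinely available. A secondary point: with the composite rule read literally as $\min_{\cleq}$, contracting $q \re{a:y} q' \re{\ee:2k} q''$ with $2k \leq y$ both even yields priority $y$, not $2k$; the argument (here and in the paper's own cases) really uses that the contracted edge can be given the integer minimum of the priorities along the path, so you should state and justify that version of the closure rather than organise the case analysis around $\min_{\cleq}$. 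Finally, your handling of $\ext(v)=\ext(u)$ (where no first difference exists) needs to be made explicit, e.g.\ by noting that $\leqHM{d}$ is antisymmetric so $p=q$ and a reflexive $\ee$-edge suffices.
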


\begin{proof}
Let
\[
	(q,\lambda_1,\dots,\lambda_{d+1}) \re {a} (q',\lambda'_1,\dots,\lambda'_{d+1}) > (q'',\lambda''_1,\dots,\lambda''_{d+1}) \tin \UAut.
\]
We aim to prove that $(q,\lambda_1,\dots,\lambda_{d+1}) \re {a} (q'',\lambda''_1,\dots,\lambda''_{d+1}) \tin \UAut$.
By definition of the transitions of $\UAut$, there is a priority $y$ such that $q \re{a:y} q' \tin \A$ and $(\lambda_0,\dots,\lambda_{d+1}) \re{y} (\lambda'_0, \dots, \lambda'_{d+1}) \tin \UPar$.
We remark that, by definition of the order in $\UAut$, we have that $(\classHM{0}{q''},\lambda''_1,\dots,\lambda''_{y-1},\classHM{y}{q''}) \leq (\classHM{0}{q'},\lambda'_1,\dots,\lambda'_{y-1},\classHM{y}{q'})$ (for $y$ even, similar if $y$ odd).
We distinguish four cases:
\begin{itemize}
\item If $y$ is even and $(\classHM{0}{q'},\lambda'_1,\dots,\lambda'_{y-1},\classHM{y}{q'}) = (\classHM{0}{q''},\lambda''_1,\dots,\lambda''_{y-1},\classHM{y}{q''})$.
Then in $\A$, $q \re {a:y} q' \re{\ee:y+1} q''$ thus $q \re{a:y} q''$, and in $\UAut$, $(\lambda_1,\dots,\lambda_{y-1}) \geq (\lambda'_1,\dots,\lambda'_{y-1}) = (\lambda''_1,\dots,\lambda''_{y-1})$, which concludes.

\item If $y$ is odd and $(\classHM{0}{q'},\lambda'_1,\dots,\classHM{y-1}{q'},\lambda'_y) =(\classHM{0}{q''},\lambda''_1,\dots,\classHM{y-1}{q''},\lambda''_y)$.
Then, in $\A$, $q \re{a:y} q' \re{\ee:y} q''$ thus $q \re{a:y} q''$, and in $\UAut$, $(\lambda_1,\dots,\lambda_{y})>(\lambda'_1,\dots,\lambda'_y) = (\lambda''_1,\dots,\lambda''_y)$ which concludes.

\item If for some even $x\leq y$ it holds that $(\classHM{0}{q'},\lambda'_1, \classHM{2}{q'}, \dots,\lambda'_{x-1})=(\classHM{0}{q''},\lambda''_1, \classHM{2}{q''}, \dots,\lambda''_{x-1})$ and $\classHM{x}{q''} < \classHM{x}{q'}$.
Then in $\A$, $q \re{a:y} q' \re{\ee:x} q''$ thus $q \re{a:x} q''$ and in $\UAut$, $(\lambda_1,\dots,\lambda_{x-1}) \geq (\lambda'_1,\dots,\lambda'_{x-1}) = (\lambda''_1,\dots,\lambda''_{x-1})$ thus $(\lambda_1,\dots,\lambda_{d+1}) \re x (\lambda''_1,\dots,\lambda''_{d+1})$ which concludes.

\item If for some even $x< y$ it holds that $(\classHM{0}{q'},\lambda'_1,\dots,\classHM{x}{q'})=(\classHM{0}{q''},\lambda''_1,\dots,\classHM{x}{q''})$ and $\lambda'_{x+1} > \lambda''_{x+1}$.
Then, in $\A$, $q \re{a:y} q' \re{\ee:x+1} q''$ thus $q \re{a:x+1} q''$ and in $\UAut$, $(\lambda_1,\dots,\lambda_{x+1}) \geq (\lambda'_1,\dots,\lambda'_{x+1}) > (\lambda''_1,\dots,\lambda''_{x+1})$ thus $(\lambda_1,\dots,\lambda_{d+1}) \re{x+1} (\lambda''_1,\dots,\lambda''_{d+1})$ which concludes.
\end{itemize}

The other implication $v > v' \re a v'' \implies v \re a v''$ in $\UAut$ follows exactly the same lines.
\end{proof}

\paragraph*{Universality of \texorpdfstring{$\UAut$}{U A}}

To prove Proposition~\ref{prop:existence-univ-graph}, there remains to establish "universality" of $\UAut$, which follows easily from "history-determinism" of $\A$ and "universality" of $\UPar$.

\begin{lemma}\label{lemma-suf:universality}
The graph $\UTop{\UAut}$ is "$(\kappa,W)$-universal".
\end{lemma}

\begin{proof}
We show "universality for trees" of $\UAut$ and conclude by Lemma~\ref{lemma-prelim:univ-for-trees}.
Let $T$ be a "$\SS$-tree" of size $<\kappa$ that "satisfies@@univTree" $W$.
Let $\resolv$ be a "sound resolver" for $\A$. We define in a top-down fashion a labelling $\resolv_T :T \to Q$ such that, if $w_1w_2\dots w_k$ is the labelling of the path from the "root@@univ" to a vertex $t$, then $\resolv_T(t)$ is the target state of the "run induced by@@aut" $\resolv$ in~$\A$.
In particular, $t \re{a} t' \tin T$ implies that $\resolv_T(t) \re {a:x} \resolv_T(t') \tin \A$ for some priority $x$, and, on each infinite branch $t_0 \re{a_0} t_1 \re{a_1} \dots$, the "run" $\resolv_T(t_0) \re{a_0:x_0} \resolv_T(t_1) \re{a_1:x_1} \dots$ is "accepting@@run" in~$\A$. 
Stated differently, the "$[0,d+1]$-tree" $\intro*\TPar$ obtained from $T$ by replacing each edge $t \re a t'$ with the corresponding edge $t \re x t'$ such that $\resolv_T(t) \re{a:x} \resolv_T(t')$, "satisfies@@univ" the "parity objective".

By "$(\kappa,\parity)$-universality" of $\UPar$,  there exists a "morphism@@univ" $\intro*\phiPar : \TPar \to \UPar$.
As $\TPar$ has the same set of vertices than $T$, $\phiPar$ defines a mapping from $T$ to $\UPar$.
We consider the product mapping $\phi=\resolv_T \times \phiPar\colon T \to \UAut$ that sends $t\mapsto (\resolv_T(t),\phiPar(t))$.
It defines a "morphism@@univ", as for any edge $t \re a t'$ in $T$ it holds that, for some $x$, $\resolv_T(t) \re {a:x} \resolv_T(t') \tin \A$ and $\phiPar(t) \re{x} \phiPar(t') \tin \UAut$.
\end{proof}

This completes the proofs of the  equivalence of the statements from Theorem~\ref{th-reslt:MainCharacterisation-allItems} (excluding~\ref{item-th:ee-complete-forall}), providing a characterisation of "positionality" for $\oo$-regular languages.

\section{Two decision procedures}\label{sec:decision}

We now establish decidability of "positionality" of "$\oo$-regular languages" in polynomial time, stated as Theorem~\ref{th-reslt:decid-poly} and prove the equivalence of Item~\ref{item-th:ee-complete-forall} with the other items in Theorem~\ref{th-reslt:MainCharacterisation-allItems}.
We propose two decision procedures.

The first one follows our proof of Theorem~\ref{th-reslt:MainCharacterisation-allItems} and attempts to build a "deterministic" "signature" automaton from a given deterministic parity automaton.
We believe that the techniques used in such a procedure may prove interesting also in other contexts (see conclusion in Section~\ref{subsec-concl:minimisation}).

The second procedure is simpler to describe: we use the fact that any (non-deterministic) automaton recognising a "positional" language is "$\ee$-completable" (implication~\ref{item-th:halfPosArbitrary} $\implies$ \ref{item-th:ee-complete-forall}). 
However, the proof itself relies on Theorem~\ref{th-reslt:union-PI} about the closure under union (which only relies on the equivalence \ref{item-th:halfPosFiniteEve} $\iff$ \ref{item-th:halfPosArbitrary} from Theorem~\ref{th-reslt:MainCharacterisation-allItems}).

\subsection{Procedure 1: Recursive decomposition}

The first decision procedure we present consists in, given a "deterministic" "parity automaton" $\A$, applying the construction from Section~\ref{subsec:from-HP-to-signature} to decide whether $W=\L(\A)$ is "positional".
The general idea is simply to go through that construction, either ending up with a failure indicating that $W$ is not positional, or with a "deterministic" "structured signature automaton".
If such an automaton is successfully obtained, it suffices to check "full progress consistency", which can be done in polynomial time, as explained below.

\paragraph*{Complexity of building a signature automaton}

Most proofs have been already given in Section~\ref{subsec:from-HP-to-signature}; we also require the following easy lemma.

\begin{lemma}\label{lemma-nec:safe-languages-polytime}
	Let $\A$ be a "parity automaton" and $x$ a priority. Assume that $\A$ is "deterministic over" ${\geq}x$-transitions. Given two states $q, p$ in $\A$, we can decide in polynomial time whether $\safeSig{x}{q}\subseteq \safeSig{x}{p}$.
\end{lemma}

We now detail the polynomial-time procedure.
Let $\A$ be a "deterministic" "parity automaton" "recognising" $W$.

\begin{itemize}
	\item First, we check for each pair of states $q,p$ whether $\Lang{\initialAut{\A}{q}}\subseteq \Lang{\initialAut{\A}{p}}$, or $\Lang{\initialAut{\A}{p}}\subseteq \Lang{\initialAut{\A}{q}}$. If for some pair of states these languages are incomparable, then "residuals" of $W$ are not totally ordered, and we can conclude that $W$ is not "positional". 
\end{itemize}
Suppose that we have defined total preorders up to $\leqSig{x-2}$ making $\A$ a "$(x-2)$-structured signature" automaton.
\begin{itemize}
	\item We "${<}x$-safe centralise" $\A$, which can be done in polynomial time by Lemma~\ref{lemma-nec:safe-centralisation}.
	The obtained automaton is "deterministic over" ${\geq}x$-transitions.
	
	\item We compute the "${<}x$-safe components" of $\A$, which can be done by doing a decomposition in "SCCs" of $\autGeq{\A}{x}$.
	We check whether, for each "${<}x$-safe component" $S$ and state $q$, the states in $S\cap \classSig{x-2}{q}$ are "totally preordered" by inclusion of "${<}x$-safe languages", which can be done in polynomial time by Lemma~\ref{lemma-nec:safe-languages-polytime}. 
	If this is not the case, we conclude that $W$ is not "positional".
	
	\item We remove the non-determinism from $\A$ -- in polynomial time and without increasing the number of states -- by applying Lemma~\ref{lemma-nec:re-determinisation}.
	
	\item We compute (in polynomial time) the automaton $\A'$ given by Lemma~\ref{lemma-nec:uniformity-x-transitions} (see last subsection of Appendix~\ref{appendix:proofs-necessity} for details). 
	We check whether $\Lang{\A'}=W$, which can be done in polynomial time (testing equivalence of "deterministic" "parity automata"~\cite{ClarkeDK93Unified}).
	If this is not the case, we conclude that $W$ is not "positional".
\end{itemize}

After these operations, if we have not yet found that $W$ is not "positional", we obtain an "equivalent@@aut" "deterministic" automaton $\A'$ that is either "$x$-structured signature", or strictly smaller than $\A$ (as given by Lemma~\ref{lemma-nec:uniformity-x-transitions}).\footnote{As mentioned before, we conjecture that the obtained automaton $\A'$ is always "$x$-structured signature".} In the former case, we continue defining preorders $\leqSig{x+1}$ and $\leqSig{x+2}$; in the latter, we restart from the beginning.
In total, we repeat at most $d\cdot|Q|$ times a sequence of operations that take polynomial time.

\paragraph*{Checking full progress consistency}

Assume that we have a "deterministic" "structured signature automaton" "recognising" $\A$. We cannot yet conclude that $W$ is "positional", as we do not know whether $\A$ is "fully progress consistent", however, by Lemma~\ref{lemma-nec:full-prog-cons}, if $W$ is "positional" this must be the case. By Theorem~\ref{th-reslt:MainCharacterisation-allItems}, this condition is also sufficient.
We show now that we can check "full progress consistency" of $\A$ in polynomial time, finishing the proof of Theorem~\ref{th-reslt:decid-poly}. For this, we generalise the method from~\cite[Lemma~25]{BCRV22HalfPosBuchi}.

\begin{lemma}\label{lemma-half:check-ProgConsPoly}
	Let $\A$ be a deterministic "structured signature automaton". We can decide in polynomial time whether $\A$ is "fully progress consistent".
\end{lemma}

The proof crucially relies on the following lemma.

\begin{lemma}\label{lemma-nec:charact-full-prog-cons}
	A "deterministic" "structured signature automaton" $\A$ is "fully progress consistent" if and only if, for each even priority $x$ and each pair of states $q, p$ in $\A$ such that $q\lSig{x}p$ we have:
	\[  q\lrp{w:\geq x} p \; \tand \; p\lrp{w:y} p \; \implies \; y \text{ is even}. \tag{1} \label{eq:1}\]
\end{lemma}

\begin{proof}
	It follows directly from the definition that a "deterministic" "fully progress consistent" automaton satisfies this property.
	To show the converse, assume that~\eqref{eq:1} holds; we aim to prove "full progress consistency".
	Consider an even priority $x$ and a word $w\in \SS^*$ such that $q\lSig{x}p$ and $q\lrp{w:\geq x}p$, we should prove $w^\oo \in \Lang{\initialAut{\A}{q}}$.
	We take $x$ to be minimal such that $q \lSig{x} p$, and thus we have $q \eqSig{x-2} p$.
	For $x=0$, the proof is identical to the one appearing in Lemma~\ref{lemma-warm:prog-cons-nec}, we assume that $x\geq 2$.
	Therefore, there is a "${<}x$-safe path" connecting $q$ and $p$, so we have $q\eqSig{x-1}p$ (Item~\ref{item-struct:odd-orders} from the definition of "structured signature automaton"), thus since $q \lSig{x} p$ we get $\safeSig{x}{q}\subseteq \safeSig{x}{p}$ (Item~\ref{item-struct:even-preorders}).
	Consider the "run over $w^\oo$" from $q$ in $\A$:
	\[ \rr = q\lrp{w:\geq x} p \lrp{w: y_1} p_2 \lrp{w: y_2} p_3 \lrp{w: y_3} \cdots .\]
	
	Since $w \in \safeSig{x}{p}$ and $q \lSig{x} p$, Lemma~\ref{lemma-sig:monotonicity-safe-lang} yields $\safeSig{x}{p} \subseteq \safeSig{x}{p_2}$, and hence $y_1 \geq x$.
	Since $\eqSig{x-2}$ is "$[0,x-2]$-faithful" and $q \lrp{w:\geq x-2} p \eqSig{x-2} q$, it follows that $p_2 \eqSig{x-2} p$.
	Then it follows from $p \lrp{w:\geq x} p_2$ that $p \eqSig{x-1} p_2$, and thus $p \leqSig{x} p_2$.
	Applying the same reasoning by induction yields $y_i \geq x$ and $p \leqSig{x} p_i$ for all $i$, and thus $q \lSig{x} p_i$
	

	Eventually, $\rr$ closes a cycle: there are $N$ and $k$ such that, for every $i\geq N$, $p_i = p_{i+k}$.
	We let $p' = p_{kN}$ and let $y$ denote the minimal priority produced by the cycle.
	Then it holds that:
	\[ q\lSig{x} p ', \;\; q\lrp{w^{kN}:\geq x} p',  \; \tand \; p' \lrp{w^{kN}:y} p'. \]
	Thus thanks to~\eqref{eq:1}, $y$ is even, and so $w^\oo = (w^{kN})^\oo \in \Lang{\initialAut{\A}{q}}$.
\end{proof}

We can now deduce Lemma~\ref{lemma-half:check-ProgConsPoly}.

\begin{proof}[Proof of Lemma~\ref{lemma-half:check-ProgConsPoly}]
	For each pair of states $q,p\in Q$ and each priority $x$, we define the languages of finite words
	\[ L_{q\re{}p}^x = \{w\in \SS^* \mid q\lrp{w:x} p \tin \A  \}, \quad \tand \quad    L_{q\re{}p}^{\geq x} = \{w\in \SS^* \mid q\lrp{w: \geq x} p \tin \A\}. \]
	
	By Lemma~\ref{lemma-nec:charact-full-prog-cons}, $\A$ is "fully progress consistent" if and only if, for each even priority $x\in [0,d]$ and each pair of states $q,p\in Q$ such that $q\lSig{x}p$:
	\[  L_{q\re{}p}^{\geq x} \; \bigcap \; \big(\bigcup_{y \text{ odd}} L_{p\re{}p}^y \big) = \emptyset.  \]
	
	We show that for all pair of states, languages $L_{q\re{}p}^{\geq x}$ and $L_{q\re{}p}^x$ are regular and we can obtain deterministic finite automata for them in polynomial time.
	This implies that we can check the emptiness of intersections above in polynomial time, concluding the proof.
	
	For $L_{q\re{}p}^{\geq x}$ the previous claim is clear: the finite automaton obtained by taking the "automaton structure" of $\autGeq{\A}{x}$ and taking $q$ and $p$ as initial and final states, respectively, is a finite automaton recognising $L_{q\re{}p}^{\geq x}$.
	
	For $L_{q\re{}p}^{x}$, we consider the automaton over finite words that has as states $(Q\times [0,d])\cup \{(q,\init)\}$, and $(q,\init)$ and $(p,x)$ as initial and final states, respectively.
	Transitions of the automaton are of the form $(q_1,x_1)\re{a}(q_2,x_2)$ if the transition $q_1\re{a:y}q_2$ in $\A$ is such that $x_2 = \min\{x_1, y\}$. 
	In words, this automaton keeps track of the run in $\A$ from $q$ and of the minimal priority produced in the way. It accepts a word if it arrives to $p$ while producing as minimal priority $x$, as we wanted.
\end{proof}

\subsection{Procedure 2: \texorpdfstring{$\ee$}{epsilon}-completion}

We now prove the following result.

\begin{theorem}\label{th-dec:ee-completion-local}
	Let $\A$ be a "non-deterministic" "parity automaton" recognising a "positional" language $W$.
	Then for each pair of states $q,q'\in Q$, and for each even priority $x$, one may add (at least) one of the transitions
	\[
		q \re{\ee:x} q' \qquad \tor \qquad q' \re{\ee:x+1} q
	\]
	without augmenting the language of $\A$.
\end{theorem}

Before proving Theorem~\ref{th-dec:ee-completion-local}, we argue that decidability of "positionality" in polynomial time (Theorem~\ref{th-reslt:decid-poly}) follows.
Let $\A_0$ be a "deterministic" "parity automaton" recognising a language $W$ and using $d$ as maximal "priority" (assumed even).
We build an "$\ee$-completion" of $\A_0$ as follows.
At each step, pick a pair of states $(q,q')$ such that neither $q \re{\ee:x} q'$ nor $q' \re{\ee:x+1} q$ belongs to the current automaton, for $x\leq d$ even.
Then try to add one of these transitions, and see if the language increases (checking whether $\Lang \A \subseteq W$ can be done in polynomial time since $W$ is "recognised" by the "deterministic" automaton $\A_0$~\cite{ClarkeDK93Unified}).
If the language does not increase for one of the two transitions, then proceed to the next step; otherwise conclude that $W$ is not "positional" thanks to Theorem~\ref{th-dec:ee-completion-local}.

After $|Q|^2d$ steps, we obtain an automaton such that for each pair of states $(q,q')$ and for each even $x$, either $q \re{\ee:x} q'$ or $q' \re{\ee:x+1} q$.
Now for each priority $y$, close the relations $\re{\ee:y}$ by transitivity, which does not augment the language.
Moreover, for priorities $y \cleq y'$ (recall that $1 \cleq 3 \cleq \dots \cleq d+1 \cleq d \cleq \dots \cleq 2 \cleq 0$) add transition $q \re{\ee:y} q'$ whenever $\A$ contains $q \re{\ee:y'} q'$; this also does not augment the language.
Then it holds that the relations $\re{\ee:1},\re{\ee:3},\dots,\re{\ee:d+1}$ are "total preorders" "refining" one another, and that for each even $x$, $\re{\ee:x}$ is the strict variant of $\re{\ee:x+1}$.
Stated differently, the obtained automaton $\A$ is an "$\ee$-completion" of $\A_0$, which implies that $W$ is "positional" thanks to \ref{item-th:ee-complete} $\implies$ \ref{item-th:halfPosArbitrary} in Theorem~\ref{th-reslt:MainCharacterisation-allItems}.

Note that on the way, we obtain the remaining implication~\ref{item-th:halfPosArbitrary} $\implies$ \ref{item-th:ee-complete-forall} from Theorem~\ref{th-reslt:MainCharacterisation-allItems}, stated as follows:

\begin{corollary}\label{cor-dec:ee-closure-nd}
Any "non-deterministic" "parity automaton" "recognising" a "positional" language is "$\ee$-completable".
\end{corollary}

We now prove Theorem~\ref{th-dec:ee-completion-local}.
The proof is inspired by that of~\cite[Theorem 4.8]{CFGO22Universal}, but it is more involved, because we now deal with "parity automata" rather than graphs (or safety automata).
This difficulty is overcome thanks to Theorem~\ref{th-reslt:union-PI}.

\begin{proof}[Proof of Theorem~\ref{th-dec:ee-completion-local}]
Fix a pair of states $q,q' \in Q$ and an even "priority" $x$.
Consider the "game" $\G$ defined as follows (see also Figure~\ref{fig:gadget-completion} below).

\begin{itemize}
\item The alphabet is $C=(\Sigma\cup\{\ee\}) \times \{0,1,\dots,d+1\} \times \{\ent,\sma,\neut\}$.
Therefore, each edge has a letter in $\Sigma\cup \{\ee\}$, a priority in $\{0,1,\dots,d+1\}$, and a type in $\{\ent,\sma,\neut\}$.
For a word $w \in C^\omega$, we write $w_\Sigma,w_\prio$ and $w_\type$ for the respective projections.
For conciseness, we generally omit the type and write edges in the game as $\re{a:y}$, just like in the automaton.
\item The set of vertices consists in two copies of $\A$ indexed by $q$ and $q'$, together with an additional vertex $q_?$.
Formally, $V=Q \times \{q,q'\} \cup \{q_?\}$.
All vertices belong to "Adam" except for $q_?$ which belongs to "Eve".
\item The edges in the copy indexed by $q$ (resp. $q'$) follow exactly the transitions in $\A$, except those leading to $q'$ (resp. $q$), which are instead redirected to $q_?$ (but keep the same letter and priority).
\item The vertex $q_?$ has exactly two outgoing edges: $q_? \re{\ee:x+1} (q,q)$ and $q_? \re{\ee:x} (q',q')$.
\item The edge $q_? \re{\ee:x+1} (q,q)$ has type $\ent$, and edges with priority $\leq x$ inside the copy indexed by $q$ have type $\sma$. All other edges are neutral.
\item The objective is
\[
	W_\G = W_\SS \cup \oddparity \cup \gtype,
\]
where $W_\SS$ is the set of words $w$ such that $w_\Sigma \in W$, $\oddparity$ is the set of words whose minimal priority appearing infinitely often is odd, and $\gtype$ is the set of words with infinitely many occurrences of type $\ent$ and finitely many occurrences of type $\sma$.
\end{itemize}

\begin{figure}
	\begin{center}
	\includegraphics[width=0.7\linewidth]{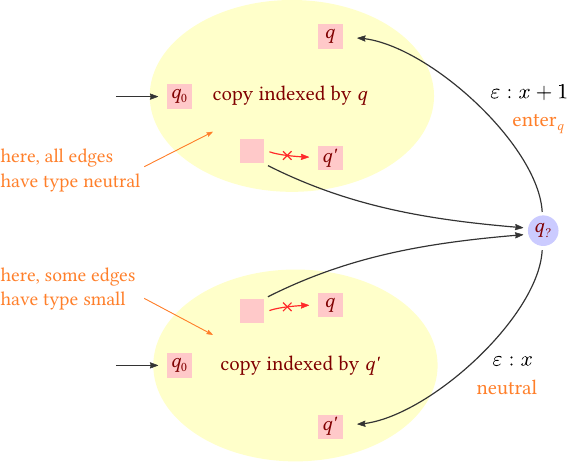}\caption{The "game" $\G$ in the proof of Theorem~\ref{th-dec:ee-completion-local}. Edges which are crossed out are those that are redirected (while keeping the same label). Note that $(q',q)$ and $(q,q')$ are not reachable; these vertices could be removed from the game.}\label{fig:gadget-completion}
	\end{center}
	\end{figure}

Clearly $W_\SS$ is "positional" and so is $\oddparity$; likewise, $\gtype$ rewrites as a "parity condition" (by mapping $\sma$ to 1, $\ent$ to 2 and $\neut$ to 3), and thus it is also "positional".
Moreover, $\oddparity$ and $\gtype$ are "prefix-independent". It follows from Theorem~\ref{th-reslt:union-PI} that $W_\G$ is positional.

\begin{claim}
"Eve" wins from $(q_\init,q)$ and from $(q_\init,q')$ in $\G$.
\end{claim}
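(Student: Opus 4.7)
The plan is to exhibit a (not necessarily positional) strategy $\sigma$ for Eve and verify that it is winning from both $(q_\init,q)$ and $(q_\init,q')$; positionality of $W_\G$, which follows from iterated application of Theorem~\ref{th-p4-reslt:union-PI} to the decomposition $W_\G=W_\SS\cup\oddparity\cup\gtype$ (with the two last components being prefix-independent positional parity conditions), will then yield the required positional winning strategy when the overall proof of Theorem~\ref{th-p4-dec:ee-completion-local} resumes.

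The strategy $\sigma$ is ``follow the natural continuation'': every time the play enters $q_?$, it does so through a redirected $\A$-transition whose target is either $q$ or $q'$; Eve picks the outgoing edge of $q_?$ leading to $(q,q)$ in the first case, and to $(q',q')$ in the second. The first key observation to record is that, with this policy, the projection of the play onto the $Q$-coordinate of the vertex (ignoring visits to $q_?$) is a genuine run $\rho$ of $\A$ from $q_\init$ over the word $w_\SS\in\SS^\omega$ obtained by erasing the $\ee$-letters from the play. This is a direct verification: starting states match $q_\init$, and at each redirection through $q_?$ Eve restores the natural target of the underlying $\A$-transition.

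Having established this, one splits the analysis according to whether $q_?$ is visited only finitely often or infinitely often. In the finite case the tail of the play lies in a single copy and its priorities coincide with those of $\rho$ from some point on; thus if $\rho$ is accepting then $w_\SS\in\L(\A)=W$ and $W_\SS$ is satisfied, while if $\rho$ is rejecting then $\oddparity$ is satisfied. In the infinite case, since $\sigma$ follows the natural continuation, visits to $(q,q)$ and to $(q',q')$ must alternate, so both priorities $x$ and $x+1$ occur infinitely often in the play. Letting $y$ denote the minimal priority of $\rho$ occurring infinitely often, one argues as follows: if $\rho$ is accepting then $y$ is even, $\liminf$ of the play's priorities is $\min(y,x)$ (even), so $\oddparity$ fails but $W_\SS$ holds; if $\rho$ is rejecting with $y\leq x-1$, then $y$ is still the $\liminf$ and is odd, so $\oddparity$ holds; if $\rho$ is rejecting with $y\geq x+1$, then the $\liminf$ is $x$ (even), $\oddparity$ fails, but there are infinitely many $\ent$-edges by the alternation argument, and only finitely many priorities $\leq x$ occur in $\rho$, hence finitely many $\sma$-edges, so $\gtype$ holds.

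The main delicate point is precisely this last sub-case: one has to make sure that whenever $\oddparity$ fails in the infinite-switches regime, the $\gtype$ component of $W_\G$ rescues Eve, which is exactly what the choice of having $\ent$ mark the odd-priority $\ee$-edge and $\sma$ mark the small priorities inside copy $q$ is engineered for. Everything else is a routine case split, and the proof ends by concluding that the strategy $\sigma$ wins from both $(q_\init,q)$ and $(q_\init,q')$.
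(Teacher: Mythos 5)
Your proposal is correct and follows essentially the same route as the paper: the same ``natural continuation'' strategy at $q_?$, the same observation that the $Q$-projection of the play is a run of $\A$ over the $\ee$-erased word, and the same use of the three disjuncts $W_\SS$, $\oddparity$, $\gtype$. The only difference is cosmetic — you organize the case split by the value of the liminf priority $y$ of $\rho$ relative to $x$, whereas the paper splits on whether $\gtype$ holds and then on the counts of $\ent$/$\sma$ edges — but the cases match up one-to-one.
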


\begin{subproof}
Consider the "strategy" which, whenever reaching $q_?$ from an edge of the form $(p,q') \re{} q_?$, reads the edge $q_? \re{\ee:x+1} (q,q)$ and whenever reaching $q_?$ from an edge of the form $(p,q) \re q_?$, reads the edge $q_? \re{\ee:x} (q',q')$.
Take an infinite path $\pi_\G$ in $\G$ from $(q_\init,q)$ which is "consistent with@@strat" the above strategy.
It is of the form
\[
	\pi_\G : (q_\init,q) \rp{w_0:y_0} (p_0,q) \re{a_0:z_0} q_? \re{\ee:x} (q',q') \rp{w_1:y_1} (p_1,q') \re{a_1:z_1} q_? \re{\ee:x+1} (q,q) \rp{w_2:y_2} \dots,
\]
where $w_0,p_0,a_0,w_1,p_1,a_1,\dots$ is either infinite (and the $w_i$'s are finite) or it ends with some $w_i$ which is infinite.
We aim to prove that $\pi_\G$ is "winning@@play", that is, its label belongs to $W_\G$.

Observe that
\[
	\pi_\A : q_\init \rp{w_0:y_0} p_0 \re{a_0:z_0} q' \rp{w_1:y_1} p_1 \re{a_1:z_1} q \rp{w_2:y_2} \dots
\]
defines a path in $\A$.
If $\pi_\A$ satisfies $W$, then $\pi_\G$ satisfies $W_\SS$ (since we only add some occurrences of $\ee$ to the projections on $\Sigma \cup \{\ee\}$).
Thus we assume that $\pi_\A$ does not satisfy $W$, which means that the "run" in the automaton is "rejecting@@run".
If $\pi_\G$ satisfies $\gtype$ then it is "winning@@play", so we assume otherwise: there are either finitely many occurrences of $\ent$ or infinitely many occurrences of $\sma$.
\begin{itemize}
\item If there are finitely many occurrences of $\ent$ in $\pi_\G$, then the "priorities" in $\pi_\G$ eventually coincide with those in $\pi_\A$, and since this is a "rejecting run" in $\A$, $\pi_\G$ satisfies $\oddparity$.
\item If there are infinitely many occurrences of $\sma$, then the minimal "priority" appearing infinitely often in $\pi_\A$ is $\leq x$, therefore adding (even infinitely many) priorities $x$ and $x+1$ does not change the fact that the run is rejecting.
Therefore $\pi_\G$ satisfies $\oddparity$. \qedhere
\end{itemize}
\end{subproof}

Since "Eve" wins and $W_\G$ is "positional", she has a "winning@@strat" "positional strategy" $\sigma$.

\begin{claim}
	If $\sigma(q_?) = q_? \re{\ee:x} q'$, then adding the transition $q \re{\ee:x} q'$ to $\A$ does not augment the language.
\end{claim}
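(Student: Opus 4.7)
The plan is to show $\L(\A') \subseteq W$, where $\A'$ denotes $\A$ augmented with the transition $q \re{\ee:x} q'$. Starting from an arbitrary $w \in \L(\A')$ with an accepting $\A'$-run $\rho$, the aim is to construct an Adam strategy $\tau$ in $\G$ so that the resulting play $\pi$ against $\sigma$ -- from $(q_\init, q)$, or $(q_\init, q')$ when $q_\init = q$ -- has $\SS$-projection $w$ and lies in $W_\SS$; this will immediately yield $w \in W$.

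Two observations will drive the analysis. First, since $\sigma$ always plays $q_? \re{\ee:x} (q',q')$, the play $\pi$ never contains the only $\ent$-edge $q_? \re{\ee:x+1} (q,q)$, so $\pi \notin \gtype$. Together with $\pi \in W_\G = W_\SS \cup \oddparity \cup \gtype$ from $\sigma$ winning, this reduces the goal to showing $\pi \notin \oddparity$. Second, each Eve move through $q_?$ contributes the even priority $x$, so additional such moves cannot drive the $\liminf$ of $\pi$'s priority sequence to an odd value. Consequently, if Adam's moves in $\pi$ reproduce (in order) the priorities carried by the $\A$-transitions of $\rho$, then $\pi$ and $\rho$ share the same $\liminf$, which is even by acceptance of $\rho$, giving $\pi \notin \oddparity$.

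The construction of $\tau$ proceeds by having Adam simulate $\rho$ step by step, maintaining the invariant that the pair (game vertex $(p, \alpha)$, $\rho$-state $p^{(\rho)}$) is in one of two configurations: \emph{synchronised}, with $p = p^{(\rho)}$, or \emph{displaced}, with $\alpha = p = q'$ and $p^{(\rho)} = q$. From a synchronised vertex, Adam plays the game edge corresponding to $\rho$'s next $\A$-transition; if its target in $\A$ differs from $\bar\alpha$, the game stays in copy $\alpha$ and synchronisation is preserved, and otherwise the game is forced through $q_?$ with Eve's $\sigma$-move sending it to $(q',q')$, producing either a new synchronisation (when $\bar\alpha = q'$) or a displacement (when $\bar\alpha = q$). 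A displacement is resolved precisely when $\rho$ next uses its added transition $q \re{\ee:x} q'$: both the game and $\rho$ then sit at $q'$, and the $\ee:x$ emitted by $\rho$ is accounted for by the $\ee:x$ already contributed to $\pi$ by Eve's earlier move through $q_?$.

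The main obstacle will be handling the displaced configuration when $\rho$ refuses to use the shortcut and instead takes original $\A$-transitions from $q$. To overcome it, I will first argue that $w$ admits an accepting $\A'$-run $\rho$ in which every visit to $q$ is immediately followed by the new transition: given any accepting $\A'$-run, a cut-and-paste argument on its segments between visits to $\{q, q'\}$ replaces any $\A$-loop through $q$ with a use of the shortcut, and the fact that the shortcut carries the even priority $x$ ensures that this substitution does not degrade acceptance. With $\rho$ in this canonical form, displacement is always resolved at the very next step of the simulation, Adam's contributed priorities mirror the $\A$-priorities of $\rho$ in order, and by the two observations above, the resulting play $\pi$ lies in $W_\SS$, concluding $w \in W$.
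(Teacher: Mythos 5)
Your high-level plan is the paper's: build a play consistent with $\sigma$ whose $\SS$-projection is $w$, observe that it avoids $\gtype$ (the $\ent$-edge is never taken under this $\sigma$) and $\oddparity$ (its priorities mirror those of the accepting run, up to harmless insertions of the even priority $x$), and conclude it lies in $W_\SS$, hence $w\in W$. However, two steps of your construction do not go through. The first is the starting vertex: you start at $(q_\init,q)$, in the copy indexed by $q$, where only edges leading to $q'$ are redirected to $q_?$. If the run reaches $q$ before ever reaching $q'$, the play sits \emph{synchronised} at the vertex $(q,q)$ when the run takes the added transition $q\re{\ee:x}q'$, and there is no game edge mirroring it ($\G$ has no $\ee$-edges outside $q_?$); your case analysis never covers a shortcut taken from a synchronised $(q,\cdot)$ vertex. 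The paper starts at $(q_\init,q')$ precisely so that every entry of the run into $q$ is routed through $q_?$, where Eve's move $q_?\re{\ee:x}(q',q')$ reproduces the added transition with the same letter and priority; with that choice only your ``displaced'' configuration ever involves $q$, and your two-state invariant collapses to the paper's single decomposition.

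The second, more serious gap is the normalisation. You reduce to runs in which every visit to $q$ is immediately followed by the added transition, via a ``cut-and-paste'' on segments between visits to $\{q,q'\}$. But a run is a run over a \emph{fixed} word $w$: rerouting a visit to $q$ through $q'$ forces the remaining letters of $w$ to be read from $q'$ instead of from $q$, which changes all subsequent states and priorities, and the evenness of $x$ says nothing about whether the rerouted continuation still accepts (think of $q$ carrying an accepting $a$-loop while $q'$ carries a rejecting one). So the normalised run you rely on need not exist, and the argument does not close. You have in fact put your finger on a point the paper itself treats lightly --- its decomposition around the occurrences of $q\re{\ee:x}q'$ silently assumes the intermediate segments $q'\lrp{w_i}p_i\re{a_i}q$ do not pass through $q$ --- so the difficulty you isolate is genuine; but the proposed repair is not sound, and the claim is not established by your argument.
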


\begin{subproof}
Let $\A'$ be the automaton obtained by adding to $\A$ the transition $q \re{\ee:x} q'$.
Consider an "accepting run" $\pi_{\A'}$ from $q_\init$ in $\A'$.
Decompose it around the occurrences of $q \re {\ee:x} q'$ as follows:
\[
	\pi_{\A'} : q_\init \rp{w_0:y_0} p_0 \re{a_0:z_0} q \re{\ee:x} q' \rp{w_1:y_1} p_1 \re{a_1:z_1} q \re{\ee:x} q' \rp{w_2:y_2} \dots,
\]
where the sequence $w_0,p_0,a_0,w_1,p_1,a_1,\dots$ is either infinite (and the $w_i$ are finite), or it is finite and ends with some $w_i$ which is infinite.

Then
\[
	\pi_{\G} : (q_\init,q') \rp{w_0:y_0} (p_0,q') \re{a_0:z_0} q_? \re{\ee:x} (q',q') \rp{w_1:y_1} p_1 \re{a_1:z_1} q_? \re{\ee:x} (q',q') \rp{w_2:y_2} \dots
\]
defines a path in $\G$ which is consistent with $\sigma$.
Therefore the label of $\pi_\G$ satisfies $W_\G$.
Note that "priorities" in $\pi_{\A'}$ and $\pi_\G$ are the same, and $\pi_{\A'}$ is accepting, therefore $\pi_\G$ does not satisfy $\oddparity$.
Moreover, $\pi_\G$ has no occurrence of $\ent$, thus it does not satisfy $\gtype$.
We conclude that $\pi_\G$ satisfies $W_\SS$, and thus $w_0a_0w_1a_1 \dots \in W$, so $\Lang {\A'} \subseteq W$.
\end{subproof}

The proof of the other case is similar to the previous one, with a slight difference in the analysis.

\begin{claim}
If $\sigma(q_?) = q_? \re{\ee:x+1} q$, then adding the transition $q' \re{\ee:x+1} q$ to $\A$ does not augment the language.
\end{claim}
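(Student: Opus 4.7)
The plan is to mirror the structure of the previous claim. I let $\A'$ denote the automaton obtained from $\A$ by adding the new transition $q' \re{\ee:x+1} q$, and will show that any accepting run $\pi_{\A'}$ from $q_\init$ in $\A'$ on a word $w$ satisfies $w \in W$. To that end, I will decompose $\pi_{\A'}$ around occurrences of the new transition and translate it into a $\G$-play $\pi_\G$ starting at $(q_\init, q)$, which stays inside copy $q$ (with transient visits to $q_?$ where $\sigma$ returns the play to $(q, q)$) and is therefore consistent with the hypothesis $\sigma(q_?) = q_? \re{\ee:x+1} (q,q)$. Since $\sigma$ is winning, the label of $\pi_\G$ lies in $W_\G = W_\SS \cup \oddparity \cup \gtype$, and it will suffice to rule out the last two disjuncts.

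Concretely, writing
\[
\pi_{\A'} : q_\init \rp{w_0:y_0} p_0 \re{a_0:z_0} q' \re{\ee:x+1} q \rp{w_1:y_1} p_1 \re{a_1:z_1} q' \re{\ee:x+1} q \rp{w_2:y_2} \dots,
\]
where the sequence is either infinite (with each $w_i$ finite) or terminates with some infinite $w_i$, I set
\[
\pi_\G : (q_\init,q) \rp{w_0:y_0} (p_0,q) \re{a_0:z_0} q_? \re{\ee:x+1} (q,q) \rp{w_1:y_1} (p_1,q) \re{a_1:z_1} q_? \re{\ee:x+1} (q,q) \rp{w_2:y_2} \dots.
\]
The priority sequences along $\pi_{\A'}$ and $\pi_\G$ coincide (both feature the $x+1$'s coming from the $\ee$-steps), so since $\pi_{\A'}$ is accepting, the minimum priority occurring infinitely often along $\pi_\G$ is even, ruling out $\oddparity$.

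The step that genuinely departs from the previous case is ruling out $\gtype$, and this is where I expect the main difficulty to lie: whereas $q_? \re{\ee:x} (q',q')$ was neutral, the edge $q_? \re{\ee:x+1} (q,q)$ chosen here by $\sigma$ carries type $\ent$, so if $\pi_{\A'}$ uses the new transition infinitely often, $\pi_\G$ produces infinitely many $\ent$'s and the argument cannot be closed trivially as before. In this case I argue that priorities $\leq x$ must also appear infinitely often along $\pi_{\A'}$: otherwise, past some position the only priorities occurring along $\pi_{\A'}$ would be either the $x+1$'s contributed by the $\ee$-steps, or values strictly larger; since the $x+1$'s still appear infinitely often, the minimum priority seen infinitely often would be the odd value $x+1$, contradicting acceptance. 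Because $\pi_\G$ stays in copy $q$, every edge of priority $\leq x$ along $\pi_\G$ carries type $\sma$, producing infinitely many $\sma$'s and ruling out $\gtype$. When the new transition is used only finitely often in $\pi_{\A'}$, $\pi_\G$ contains only finitely many $\ent$'s and $\gtype$ is ruled out immediately. Either way, $\pi_\G$ must satisfy $W_\SS$, giving $w_0a_0w_1a_1\dots \in W$ and hence $\Lang{\A'} \subseteq W$, which concludes the proof.
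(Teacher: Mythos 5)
Your proof is correct and follows essentially the same route as the paper: decompose the accepting run of $\A'$ around the new transition, map it to a $\sigma$-consistent play in the copy indexed by $q$, rule out $\oddparity$ by matching priorities, and rule out $\gtype$ by observing that infinitely many occurrences of the odd priority $x+1$ in an accepting run force infinitely many priorities $\leq x$, hence infinitely many $\sma$-edges. Your explicit split between finitely and infinitely many uses of the new transition is a minor (and welcome) tightening of the paper's argument, not a different approach.
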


\begin{subproof}
Let $\A'$ be the automaton obtained by adding to $\A$ the transition $q' \re{\ee:x+1} q$.
Consider an "accepting run" $\pi_{\A'}$ from $q_\init$ in $\A'$.
Decompose it around the occurrences of $q' \re {\ee:x+1} q$ as follows:
\[
	\pi_{\A'} : q_\init \rp{w_0:y_0} p_0 \re{a_0:z_0} q' \re{\ee:x+1} q \rp{w_1:y_1} p_1 \re{a_1:z_1} q' \re{\ee:x+1} q \rp{w_2:y_2} \dots,
\]
where the sequence $w_0,p_0,a_0,w_1,p_1,a_1,\dots$ is either infinite (and the $w_i$ are finite), or it is finite and ends with some $w_i$ which is infinite.

Then
\[
	\pi_{\G} : (q_\init,q) \rp{w_0:y_0} (p_0,q) \re{a_0:z_0} q_? \re{\ee:x+1} (q,q) \rp{w_1:y_1} p_1 \re{a_1:z_1} q_? \re{\ee:x+1} (q,q) \rp{w_2,y_2} \dots
\]
defines a path in $\G$ which is consistent with $\sigma$.
Therefore the label of $\pi_\G$ satisfies $W_\G$.
Note that "priorities" in $\pi_{\A'}$ and $\pi_\G$ are the same, and $\pi_{\A'}$ is "accepting@@run", therefore $\pi_\G$ does not satisfy $\oddparity$.

Now note that $\pi_{\A'}$ has infinitely many occurrences of priority $x+1$ and yet it is "accepting@@run", therefore it has infinitely many occurrences of priorities $\leq x$.
Thus $\pi_\G$ has infinitely many occurrences of $\sma$, so it does not satisfy $\gtype$.
We conclude that $\pi_\G$ satisfies $W_\SS$, and therefore the label $w_0a_0w_1a_1\dots$ of $\pi_{\A'}$ belongs to $W$.
\end{subproof}
This concludes the proof.
\end{proof}

\begin{remark}
It is interesting to remark that our proof relies on the fact that for any "$\omega$-regular" "positional" "objective"~$W$, the "objective" $W \cup \parity$ is "positional".
We do not know a direct proof of this fact (without using Theorem~\ref{th-reslt:union-PI} which relies on the machinery employed to prove Theorem~\ref{th-reslt:MainCharacterisation-allItems}).
Such a direct proof would give, together with Theorem~\ref{th-dec:ee-completion-local}, an easier path to the main characterisation and the polynomial time decidability (though it would fall short of establishing the 1-to-2-player lift and the finite-to-infinite lift).
\end{remark}

\section{Bipositionality of all objectives}\label{sec:bipositional}

In this section we provide a characterisation of all "bipositional" objectives, without "$\oo$-regularity" or "prefix-independence" assumptions.
This characterisation extends the result of Colcombet and Niwiński~\cite{CN06}, who showed that the only "prefix-independent" "bipositional" objective (over all game graphs) is the "parity objective".
Recently, Bouyer, Randour and Vandenhove~\cite{BRV22OmegaRegMemory} generalised that result in an orthogonal direction: they proved that the only objectives for which both players can "play optimally using" finite chromatic memory are "$\oo$-regular objectives".

\subsection{Characterisation of bipositionality and consequences}

\AP We say that an objective $W\subseteq \SS^\oo$ is ""bi-progress consistent"" if both $W$ and its complement are "progress consistent", that is, if it satisfies that for all "residual class" $\resClass{u}$ and finite word $w\in \SS^*$:
\begin{itemize}
	\item $\resClass{u} \lRes \resClass{uw} \implies uw^\oo \in W$, and
	\item $\resClass{uw} \lRes \resClass{u} \implies uw^\oo \notin W$.
\end{itemize}

\begin{theorem}[Characterisation of bipositionality]\label{th-bi:bipositional}
	An objective $W\subseteq \SS^\oo$ is "bipositional" (over all games) if and only if:
	\begin{enumerate}
		\item $W$ has a finite number of "residuals", totally ordered by inclusion, and
		\item $W$ is "bi-progress consistent", and
		\item $W$ can be "recognised" by a "parity automaton on top of" the "automaton of residuals".
	\end{enumerate}
\end{theorem}

This characterisation only holds for infinite games, as there are non "$\oo$-regular" "objectives" that are "bipositional" over finite games, as, for example, energy objectives~\cite{BFLMS08Energy} and their generalisation~\cite{Kozachinskiy24EnergyGroups}. However, we deduce from Theorem~\ref{th-half:lifts} that in the case of "$\oo$-regular" objectives these conditions do also characterise "bipositionality" over finite games.

\begin{corollary}[Bipositionality over finite games for $\oo$-regular objectives]
	An "$\oo$-regular" objective $W\subseteq \SS^\oo$ is "bipositional" over finite games if and only if it satisfies the three conditions from Theorem~\ref{th-bi:bipositional}.
\end{corollary}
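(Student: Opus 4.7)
The plan is to reduce this corollary to the combination of Theorem~\ref{th-p4-bi:bipositional} and the finite-to-infinite lift of Theorem~\ref{th-p4-half:lifts}.

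One direction is immediate: if $W$ satisfies the three conditions, then by Theorem~\ref{th-p4-bi:bipositional} it is "bipositional" over all games, hence in particular over finite games.

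For the converse, suppose $W$ is "$\oo$-regular" and "bipositional" over finite games. Then $W$ is "positional" over finite games, and moreover the complement $W' = \SS^\oo \setminus W$ is "positional" over finite games as well (the symmetric condition on "Adam" in "$W$-games" is exactly "positionality" of $W'$ after swapping the roles of the players). Since $\oo$-regular languages are closed under complement, $W'$ is also "$\oo$-regular". Restricting to finite "$\ee$-free" "Eve-games" (a subclass of finite games), both $W$ and $W'$ are "positional over" that class. Applying Theorem~\ref{th-p4-half:lifts} to each of them yields that both are "positional" over all games, i.e., $W$ is "bipositional" (over all games). Theorem~\ref{th-p4-bi:bipositional} then delivers the three structural conditions.

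The only nontrivial point to double-check in the writeup will be the swap-of-players argument, making sure that the "bipositionality" hypothesis over finite games really transfers to "positionality" of the complement over finite "$\ee$-free" "Eve-games"; this is a routine observation since the complement of a finite "$W$-game" with "Adam" vertices reinterpreted as "Eve" vertices and "objective" $W'$ is again a finite game of the same kind, and removing "$\ee$-edges" poses no issue as "Eve-games" without "$\ee$-edges" are a subclass of finite games (the hypothesis gives positionality on a larger class than we need). No additional obstacle is expected, as the heavy lifting is entirely done by Theorems~\ref{th-p4-bi:bipositional} and~\ref{th-p4-half:lifts}.
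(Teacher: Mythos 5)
Your proof is correct and takes exactly the route the paper intends: the corollary is stated there as a direct consequence of Theorem~\ref{th-p4-bi:bipositional} combined with the finite-to-infinite lift of Theorem~\ref{th-p4-half:lifts}, applied to both $W$ and its complement (both "$\oo$-regular", both "positional" over the subclass of finite "$\ee$-free" "Eve-games"). The player-swap observation you flag is indeed the only point requiring care, and your handling of it is fine.
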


\paragraph*{Consequences: Lifts and decidability}
For "$\oo$-regular objectives", we can directly lift the corollaries of Theorem~\ref{th-reslt:MainCharacterisation-allItems} obtained for "positionality" to "bipositionality". For non-$\oo$-regular ones, the finite-to-infinite lift does not hold, as commented above. On the other hand, a combination with a recent result from Bouyer, Randour and Vandenhove~\cite[Theorem~3.8]{BRV22OmegaRegMemory} implies that the 1-to-2-player lift holds for any objective.

\AP We say that an objective $W\subseteq \SS^\oo$ is ""bipositional over (finite) Eve and Adam-games"" if both $W$ and its complement $\SS^\oo\setminus W$ are "positional" over (finite) "Eve-games". 

\begin{corollary}[1-to-2 player lift of bipositionality]\label{cor:1-to-2-lift-bipositionality}
	An objective $W\subseteq \SS^\oo$ is "bipositional" (over all games) if and only if it is "bipositional over Eve and Adam-games".
\end{corollary}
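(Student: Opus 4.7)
The forward implication is immediate from the definitions: if $W$ is "bipositional" over all "games", then both $W$ and $\SS^\oo \setminus W$ are "positional" over the subclass of "Eve-games", so $W$ is "bipositional over Eve and Adam-games".

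For the converse, assume that $W$ is "bipositional over Eve and Adam-games", i.e. both $W$ and $\SS^\oo\setminus W$ are "positional" over "Eve-games". The plan is to reduce this to the $\oo$-regular case and then invoke the finite-to-infinite/1-to-2-player lift (Theorem~\ref{th-p4-half:lifts}). Note that "positional strategies" are a special case of strategies using (trivial, one-state) finite chromatic memory, so the hypothesis that both players can win with finite chromatic memory over their respective 1-player games is satisfied by both $W$ and its complement. We may then apply the theorem of Bouyer, Randour and Vandenhove cited in the introduction to this section~\cite[Theorem~3.8]{BRV22OmegaRegMemory}, which asserts that any "objective" for which both players can "play optimally" using finite chromatic memory is "$\oo$-regular". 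This yields $\oo$-regularity of $W$ (and hence also of $\SS^\oo \setminus W$, by closure of "$\oo$-regular languages" under complement).

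Once $\oo$-regularity is established, we finish by applying Theorem~\ref{th-p4-half:lifts} separately to $W$ and to $\SS^\oo\setminus W$. Indeed, the class of finite "$\ee$-free" "Eve-games" is a subclass of all "Eve-games", so "positionality" over "Eve-games" entails "positionality" over finite "$\ee$-free" "Eve-games". Thus Theorem~\ref{th-p4-half:lifts} promotes "positionality" of $W$ (resp.\ $\SS^\oo\setminus W$) over "Eve-games" to "positionality" over all "games", yielding "bipositionality" of $W$.

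The main delicate point is the invocation of the BRV22 result: strictly speaking it is stated in a 2-player setting, so one should verify that their statement is applicable already when only positional behaviour over 1-player "Eve-games" and "Adam-games" is assumed; this is the 1-to-2-player part of their result and is precisely what allows us to extract $\oo$-regularity from our 1-player hypotheses. Once this is in hand, the remainder of the argument is routine bookkeeping using the lift already proved for the positional case.
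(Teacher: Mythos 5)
Your proof is correct and follows essentially the same route as the paper: the forward direction is trivial, and for the converse the paper likewise invokes \cite[Theorem~3.6/3.8]{BRV22OmegaRegMemory} (restated in Section~\ref{sec-p4:bipositional} as: bipositionality over Eve and Adam-games implies $\oo$-regularity, and even recognisability on top of the automaton of residuals) and then concludes via the $\oo$-regular lift, applied to both $W$ and its complement. The ``delicate point'' you flag is already handled by the form in which the paper cites the BRV result, whose hypothesis is phrased directly in terms of one-player games.
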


We note that a 1-to-2 player lift was obtained for objectives that are "bipositional" over finite game graphs by Gimbert and Zielonka~\cite{GimbertZielonka2005Memory, Zielonka05Invitation} (even in the more general setting of qualitative objectives). 
However, their proof consisted in an induction over the size of the game graph, so it does not generalise to infinite games. 
Indeed, as remarked above, "bipositionality" over finite and infinite graphs behaves in a completely different manner.
In this respect, Corollary~\ref{cor:1-to-2-lift-bipositionality} and the result of Gimbert and Zielonka are incomparable.

\begin{corollary}[Finite-to-infinite lift of bipositionality for $\oo$-regular objectives]
	An "$\oo$-regular" objective $W\subseteq \SS^\oo$ is "bipositional" (over all games) if and only if it is "bipositional over finite Eve and Adam-games".
\end{corollary}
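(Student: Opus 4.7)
The forward direction is immediate: if $W$ is "bipositional" over all "games", then both $W$ and $\SS^\oo\setminus W$ are "positional" over all games, and in particular over the sub-class of finite "Eve-games", so $W$ is "bipositional over finite Eve and Adam-games".

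For the backward direction, the plan is to reduce to the finite-to-infinite and 1-to-2-player lift already established in Theorem~\ref{th-p4-half:lifts}, applied separately to $W$ and to its complement. Unfolding the definition, the hypothesis states that both $W$ and $\SS^\oo\setminus W$ are "positional" over finite "Eve-games". Since finite "$\ee$-free" "Eve-games" form a subclass of finite "Eve-games", a fortiori each of $W$ and $\SS^\oo\setminus W$ is "positional over" finite, "$\ee$-free" "Eve-games". Moreover, "$\oo$-regular languages" are closed under complementation, so both $W$ and $\SS^\oo\setminus W$ are "$\oo$-regular".

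The key step is then to invoke Theorem~\ref{th-p4-half:lifts} twice: applied to $W$, it yields that $W$ is "positional" over all "games"; applied to $\SS^\oo\setminus W$, it yields that $\SS^\oo\setminus W$ is "positional" over all "games". Combining these two conclusions, $W$ is "bipositional" over all games.

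I do not anticipate any obstacle: the only substantial content is Theorem~\ref{th-p4-half:lifts} itself, which is the heavy lifting already done in the paper. The corollary is a two-line consequence of applying that lift to $W$ and its complement independently, exploiting the symmetry of the "bipositionality" definition together with closure of "$\oo$-regularity" under complement.
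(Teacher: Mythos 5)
Your proof is correct and follows exactly the route the paper intends: the corollary is obtained by applying the finite-to-infinite/1-to-2-player lift (Theorem~\ref{th-p4-half:lifts}) separately to $W$ and to its complement, using closure of $\oo$-regularity under complementation and the fact that positionality over finite Eve-games restricts to finite $\ee$-free Eve-games. No gaps.
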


We obtain decidability for "bipositionality" in polynomial time from its counterpart in the case of "positionality" (Theorem~\ref{th-reslt:decid-poly}). We observe that Theorem~\ref{th-bi:bipositional} provides an alternative way to check "bipositionality". 

\begin{corollary}[Decidability of bipositionality]
	Given a "deterministic" "parity automaton" $\A$, we can decide in polynomial time whether $\Lang{\A}$ is "bipositional".
\end{corollary}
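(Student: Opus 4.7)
The plan is to reduce bipositionality to two positionality checks and invoke Theorem~\ref{th-p4-reslt:decid-poly}. By definition, $\Lang{\A}$ is "bipositional" if and only if both $\Lang{\A}$ and its complement $\SS^\oo \setminus \Lang{\A}$ are "positional". Since $\A$ is a "deterministic" "parity automaton" with "priorities" in some range $[d_{\min}, d_{\max}]$, we obtain a "deterministic" "parity automaton" $\overline{\A}$ "recognising" $\SS^\oo \setminus \Lang{\A}$ simply by shifting all "priorities" by one (that is, replacing $\colAut$ by $\colAut + 1$); the resulting automaton has the same "automaton structure" and the same size as $\A$, and the construction takes linear time.

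The decision procedure is then: construct $\overline{\A}$, apply Theorem~\ref{th-p4-reslt:decid-poly} twice, once to $\A$ and once to $\overline{\A}$, and return \emph{bipositional} if and only if both calls return \emph{positional}. Since each call runs in polynomial time and the construction of $\overline{\A}$ is trivial, the total running time is polynomial in $\size{\A}$.

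There is no substantial obstacle here; the only point worth mentioning is that Theorem~\ref{th-p4-reslt:decid-poly} is stated for "deterministic" "parity automata" on their "recognised" language, which is exactly what we feed it in both calls. Alternatively, one could invoke Theorem~\ref{th-p4-bi:bipositional} directly and check its three conditions in polynomial time: totality of the (finite) inclusion preorder on "residuals" of $\Lang{\A}$ (computable from the "quotient@@aut" of $\A$ by $\eqResState{\A}$, which can be obtained in polynomial time), "bi-progress consistency" (two reachability-and-cycle checks per pair of "residual classes", analogous to Lemma~\ref{lemma-p4-half:check-ProgConsPoly}), and the existence of a "parity automaton on top of" the "automaton of residuals" "equivalent to" $\A$ (testable by computing an equivalent labelling on $\autRes{\Lang{\A}}$ via Proposition~\ref{prop-prelim:normalF-polytime} and comparing languages~\cite{ClarkeDK93Unified}). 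Either route yields the claim.
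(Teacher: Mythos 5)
Your proposal is correct and follows the same route as the paper: the paper derives this corollary precisely by reducing to two applications of Theorem~\ref{th-p4-reslt:decid-poly} (on $\A$ and on the complement automaton, obtained for a "deterministic" "parity automaton" by shifting all "priorities" by one), and it likewise notes that Theorem~\ref{th-p4-bi:bipositional} gives an alternative check. Nothing is missing.
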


\paragraph*{An example}
\begin{example}[Parity over occurrences]
	We let $\SS = [0,d]$ and let $\intro*\WOccParity$ be the language of words such that the minimal "priority" appearing on them is even:
	\[ \WOccParity = \{w\in [0,d]^\oo \mid \min (w) \text{ is even}\}. \]

	An automaton "recognising" $W$ is depicted in Figure~\ref{fig-bi:aut-occParity}. It has one state per "residual", which are totally ordered, and it is immediate to check that it is "bi-progress consistent". Therefore, $\WOccParity$ is a "bipositional" objective.
\end{example}

\begin{figure}
	\centering
	\includegraphics[scale=1.5]{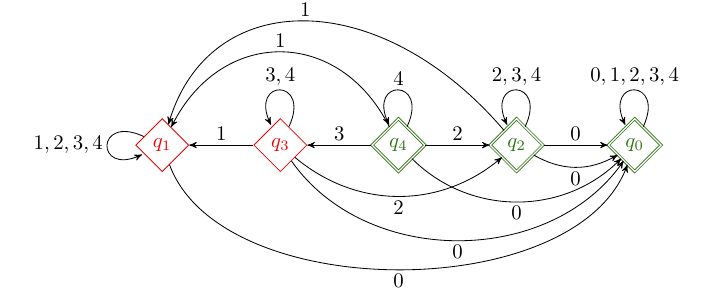}
	\caption{Automaton "recognising" $\WOccParity$, for $d=4$. The initial state is $q_4$.  This automaton is in fact what is sometimes called a weak automaton: runs that finally end in an even state are "accepting@@run", and those ending in an odd state are "rejecting@@run".}
	\label{fig-bi:aut-occParity}
\end{figure}

Some more complex examples can be generated by, for example, adding some output priorities to the automaton above without breaking the "bi-progress consistency" condition. However, the combination of being recognisable by the "automaton of residuals" with "bi-progress consistency" greatly restricts the possibilities of generating examples of "bipositional" languages. We wonder whether a more precise characterisation of languages satisfying these three properties can be obtained.

\subsection{Proof of the characterisation}

\paragraph*{Necessity of the conditions}

The necessity of the total order over the residuals of $W$ is given by Lemma~\ref{lemma-warm:total-order-residuals-nec}, and the necessity of "bi-progress consistency" is given by Lemma~\ref{lemma-warm:prog-cons-nec}.
The following lemma provides the necessity of the last condition of Theorem~\ref{th-bi:bipositional}.
It can be obtained by instantiating the first item of~\cite[Theorem~3.6]{BRV22OmegaRegMemory} for the case of "bipositional" objectives. 

\begin{lemma}[{\cite[Theorem~3.6]{BRV22OmegaRegMemory}}]
	If $W\subseteq \SS^\oo$ is "bipositional over Eve and Adam-games", then $W$ is "$\oo$-regular" and can be "recognised" by a "parity automaton" on top of the "automaton of residuals".
\end{lemma}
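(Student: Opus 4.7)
The plan is to derive the statement as a direct consequence of the first item of~\cite[Theorem~3.6]{BRV22OmegaRegMemory}, which characterises objectives admitting finite chromatic memory strategies for both players in Eve-games and Adam-games. That theorem asserts that any such objective has finitely many residuals and can be recognised by a "parity automaton" defined on top of $\autRes{W}$. Since every "positional strategy" is, in particular, a chromatic memory strategy of size~$1$, the hypothesis that $W$ is "bipositional over Eve and Adam-games" immediately implies the hypothesis of the cited theorem, and the conclusion follows verbatim. The only point to verify is that the uniform notion of positionality used here (a single strategy that wins from the full winning region) matches the notion of chromatic memory used in~\cite{BRV22OmegaRegMemory}, which it does by definition.

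For a more self-contained argument, one would proceed in two steps. First, show that $\Res W$ is finite: assuming an infinite strict chain of residuals (the order being total by Lemma~\ref{lemma-p4-warm:total-order-residuals-nec}), construct an "Eve-game" in which optimal play requires keeping track of the residual reached so far, contradicting Eve-positionality; Lemma~\ref{lemma-p4-prelim:monotonicity-residuals} allows propagating the total order through concatenation and ruling out finite-memory shortcuts. Second, extract a parity labelling on the transitions of $\autRes{W}$: working in a universal one-player game whose vertices are the residual classes and whose edges correspond to letters, use optimal positional strategies for both "Eve" and "Adam" to assign a "priority" to each transition $\resClass{u}\re{a}\resClass{ua}$, using "bi-progress consistency" and the total order of residuals to guarantee that the assignment is consistent and that the resulting "parity automaton" on top of $\autRes{W}$ recognises $W$.

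The main obstacle in a direct proof would be this second step, which requires a careful combinatorial analysis to ensure that the priorities extracted from Eve's and Adam's optimal strategies combine into a single well-defined "parity" labelling compatible with the acceptance condition. This is exactly the content of the cited theorem, which justifies the decision to cite rather than reprove.
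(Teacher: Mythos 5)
Your proposal matches the paper's treatment exactly: the paper also obtains this lemma purely by instantiating the first item of~\cite[Theorem~3.6]{BRV22OmegaRegMemory} for the case of "bipositional" objectives (memory of size one), without reproving it. The additional self-contained sketch you outline is not attempted in the paper either, so no further comparison is needed.
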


\paragraph*{Sufficiency of the conditions}

The sufficiency of conditions of Theorem~\ref{th-bi:bipositional} can be shown by providing "well-ordered@@univ" "monotone@@univ" "universal" graphs for $W$ and its complement. 
An even simpler option -- now that we have already done such construction for "positional" objectives -- is to provide "fully progress consistent" "signature automata" recognising these languages, and then use the characterisation of "positionality" given by Item~\ref{item-th:signature} from Theorem~\ref{th-reslt:MainCharacterisation-allItems}.

We show that the "parity automaton on top" of the "automaton of residuals" is a "fully progress consistent" "signature automaton". Since by hypothesis $\ResW$ is totally ordered, the order $\leqSig{0}$ given by inclusion of "residuals" satisfies the first requirement of the definition of "signature automaton". As the "residual classes" of this automaton are singletons, we can define all relations $\eqSig{x}$ to be the equality relation too, so this automaton satisfies all the requirements to be a "signature automaton".
Moreover, as $W$ is "progress consistent" and the $\eqSig{x}$-classes of the automaton are singletons, it is also "fully progress consistent". 
The argument is symmetric for $\SS^\oo\setminus W$.

\section{Positionality of closed and open objectives}\label{sec:open-closed}

\subsection{Closed objectives}

We recall that an objective $W$ is "closed" if 
\[ W =\Safe{L} = \{w \mid w \text{ does not contain any prefix in } L\}, \]
for some language of finite words $L$.
"Positional" "closed objectives" were first characterised by Colcombet, Fijalkow and Horn~\cite{ColcombetFH14PlayingSafe}, as those that have a totally ordered set of residuals. 
However, as they already remarked, this characterisation only holds for finite branching "game graphs".

We now give a characterisation of "positionality" over all game graphs for "closed objectives".
Namely, a "closed objective" $W$ is "positional" if and only if $\ResW$ is well-ordered by inclusion (Theorem~\ref{th-open:pos-closed}) (in fact, the well-foundedness of $\ResW$ is a necessary condition for finite memory determinacy of any objective.)
This (and its generalisation to memory) was already observed in~\cite{CO25LMCS}.
\subsubsection{Well-foundedness of residuals}

Next example, taken from~\cite{ColcombetFH14PlayingSafe},  shows that total order over residuals does not suffice to ensure "positionality" of arbitrary "closed objectives".

\begin{example}[Outbidding game~\cite{ColcombetFH14PlayingSafe}: Total order does not suffice]\label{ex:outbidding-game}
	Let $\SS = \{a,b,c\}$ and $L$ be the language of finite words:
	\[ L = \{ w\in \SS^* \mid \text{ for some } u\in \SS^* \text{ with } \countLetters{u}{a}\leq \countLetters{u}{b}, uc \text{ is a prefix of } w \},\]
	where $\intro*\countLetters{u}{x}$ is the number of occurrences of $a$ in $u$.
	We consider the "closed objective" $W = \Safe{L}$. 
	The "residuals" of $W$ are totally ordered by inclusion:
	\[ \emptyset=\lquotW{c} < \dots < \lquotW{(a^n)} <\cdots <  \lquotW{a} <  \lquotW{\ee} <  \lquotW{b} < \cdots  .\]
	However, $W$ is not "positional", as witnessed by the game in Figure~\ref{fig-open:outbidding}.
\end{example}
\begin{figure}
	\centering
	\includegraphics[scale=1.5]{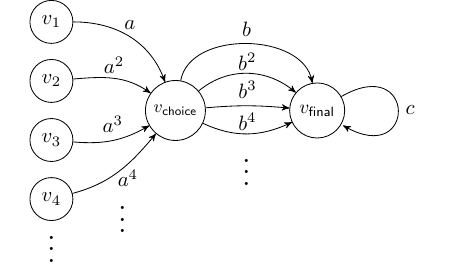}
	\caption{Outbidding game from Example~\ref{ex:outbidding-game}. First, a sequence $a^n$ is produced, for some $n\in \NN$. In order to win, "Eve" needs to answer with $b^m$, with $m>n$. Therefore, she can win from any vertex in the game, but no "positional strategy" guarantees the victory from all states.}
	\label{fig-open:outbidding}
\end{figure}

\begin{lemma}[Necessity of the well-order of residuals]
	Let $W\subseteq \SS^\oo$ be an "objective" that is "positional" over "$\ee$-free" "Eve-games". Then, $\ResW$ is well-ordered by inclusion.
\end{lemma}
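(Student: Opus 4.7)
My plan is to prove the contrapositive by constructing an infinite $\ee$-free Eve-game in which Eve wins from a countable set of starting vertices but no single positional strategy works uniformly. The idea adapts the $\omega$-generalisation of the construction used in the proof of Lemma~\ref{lemma-p4-warm:total-order-residuals-nec}.

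First, I would invoke Lemma~\ref{lemma-p4-warm:total-order-residuals-nec}: since $W$ is positional over Eve-games, $\Res{W}$ is already totally ordered by inclusion. Hence, if $\Res{W}$ fails to be well-ordered, there exist words $u_1,u_2,u_3,\dots \in \SS^*$ with strictly decreasing residuals
\[
\lquot{u_1}{W} \;\supsetneq\; \lquot{u_2}{W} \;\supsetneq\; \lquot{u_3}{W} \;\supsetneq\; \cdots.
\]
For each $i \geq 1$, pick an infinite word $w_i \in \lquot{u_i}{W} \setminus \lquot{u_{i+1}}{W}$; note that such $w_i$ lives in $\SS^\oo$ because residuals are defined as subsets of $\SS^\oo$. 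The chain of inclusions gives the key separation property
\[
u_j w_i \in W \;\iff\; j \leq i.
\]

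Next, I would build an Eve-game $\G$ with a central vertex $v_*$, starting vertices $v_1,v_2,\dots$, finite paths $v_i \lrp{u_i} v_*$ (using the wiggly-arrow convention, with fresh intermediate vertices), and, for each $j \geq 1$, an infinite branch $v_* \lrp{w_j}$ attached at $v_*$. All vertices belong to Eve, no vertex is a sink, and since every $u_i \in \SS^*$ and $w_j \in \SS^\oo$, the game contains no $\ee$-edges. Eve wins from every $v_i$ by proceeding deterministically along $u_i$ and then choosing the branch labelled $w_i$ at $v_*$, yielding the infinite play $u_i w_i \in W$.

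The positionality argument then closes: any positional strategy must commit, at $v_*$, to a single branch $w_{j_0}$; consequently, from any $v_i$ with $i < j_0$, the unique induced play is $u_i w_{j_0}$, which does not belong to $W$ by the separation property. Hence no positional strategy wins uniformly from $\{v_1,v_2,\dots\}$, contradicting positionality of $W$ over $\ee$-free Eve-games. There is no real obstacle here: the only point worth noting is that, because residuals in this paper are taken over $\SS^\oo$, the differentiating words $w_i$ are already infinite and can serve directly as branches out of $v_*$, so no auxiliary continuation or ultimate-periodicity assumption is needed (indeed, the argument makes no use of $\omega$-regularity).
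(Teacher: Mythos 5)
Your construction is essentially the paper's: reduce to well-foundedness via Lemma~\ref{lemma-p4-warm:total-order-residuals-nec}, extract a strictly decreasing chain $\lquotW{u_1}\supsetneq\lquotW{u_2}\supsetneq\cdots$, pick $w_i\in\lquotW{u_i}\setminus\lquotW{u_{i+1}}$, and build the infinite Eve-game with paths $v_i\lrp{u_i}v_*$ and infinite branches $w_j$ out of $v_*$ (this is exactly Lemma~\ref{lemma-p4-open:well-found-nec}, which the paper even states for finite-memory strategies). However, your concluding step contradicts your own separation property. You correctly establish $u_j w_i\in W\iff j\leq i$; applied to a positional strategy committing to the branch $w_{j_0}$ at $v_*$, this gives $u_i w_{j_0}\in W\iff i\leq j_0$. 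So the strategy is in fact \emph{winning} from every $v_i$ with $i<j_0$, and it fails precisely from the $v_i$ with $i>j_0$ (the residuals decrease, so $w_{j_0}\in\lquotW{u_{j_0}}$ but $w_{j_0}\notin\lquotW{u_i}$ for all $i>j_0$). The sentence ``from any $v_i$ with $i<j_0$, the unique induced play is $u_i w_{j_0}$, which does not belong to $W$'' is therefore false as written; the contradiction survives only because there are infinitely many indices $i>j_0$, so the repair is a one-line sign flip, but it must be made. A second, minor point: since the residuals in the chain are pairwise distinct, at most one $u_i$ can equal the empty word; you should discard it explicitly (as the paper does) so that every path $v_i\lrp{u_i}v_*$ is genuinely nonempty and the game really is $\ee$-free.
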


We have already seen (Lemma~\ref{lemma-warm:total-order-residuals-nec}), that if $W$ is "positional", then $\ResW$ is totally ordered. We need to prove that $\ResW$ is well-founded. 

\begin{lemma}[Well-foundedness of residuals necessary for finite memory]\label{lemma-open:well-found-nec}
	Let $W\subseteq \SS^\oo$ be an "objective" such that "Eve" can "play optimally using positional strategies" (or using finite memory strategies) over "$\ee$-free" "Eve-games". Then, $\ResW$ is well-founded (for the order given by inclusion of "residuals").
\end{lemma}
\begin{proof}
	Suppose by contradiction that there is an infinite strictly decreasing sequence of "residuals":
	\[ \lquotW{u_1} \supsetneq \lquotW{u_2} \supsetneq  \cdots , \; u_i\in \SS^*.\]
	(We suppose without loss of generality that $\ee\neq u_i$ for all $i$.)
	Let $w_i\in \SS^\oo$ such that $w_i\in \lquotW{u_i} \setminus \lquotW{u_{i+1}}$.
	We consider the game -- similar to the outbidding game from Figure~\ref{fig-open:outbidding} -- in which a word $u_i$ labels a path from a vertex $v_i$ to $v_{\mathsf{choice}}$, for each $i$. From this latter vertex, "Eve" can choose a between paths labelled by $\{w_i\mid i\in \NN\}$.
	"Eve" can win be answering $w_i$ to $u_i$. However, any "positional@@strat" (or finite memory) strategy will only consider a finite number of responses $w_{j_1},\dots,w_{j_n}$. Therefore, such a strategy is "losing@@strat" from $v_K$ for $K> \max \{j_t \mid 1\leq t\leq n\}$.	
\end{proof}

\subsubsection{Characterisation for closed objectives}

\begin{theorem}[Positional closed objectives]\label{th-open:pos-closed}
	Let $W\subseteq \SS^\oo$ be a "closed objective". Then, $W$ is "positional" (over all "game graphs") if and only if $\ResW$ is well-ordered by inclusion.
\end{theorem}
\begin{proof}
	We have already shown that this condition is necessary. To prove sufficiency, we give, for each cardinal $\kk$, a "$(\kk,W)$-universal" "well-ordered@@univ" "monotone@@univ" graph. We conclude by Proposition~\ref{prop-prelim:univ-graphs}.
	
	Let $U$ be the "$\SS$-graph" that has as vertices $\ResW\setminus \{\emptyset\}$, ordered by inclusion. By hypothesis, this is a well-order. For each $a\in \SS$, we let 
	\[ \lquotW{u} \re{a} \lquotW{u'} \;\; \text{ iff } \; \; \lquotW{u'}\leq \lquotW{(ua)}. \]
	We note that if $ua$ is already losing ($\lquotW{(ua)} =\emptyset$), then transition $\lquotW{u} \re{a} \lquotW{u'}$ does not appear in $U$.
	By Lemma~\ref{lemma-prelim:monotonicity-residuals}, graph $U$ is "monotone@@univ". 
	The hypothesis of "closeness@@obj" of $W$ is fundamentally used in next claim.
	
	\begin{claim}
		A vertex $\lquotW{u}$ of $U$ "satisfies@@univ" $W$ if and only if $\lquotW{u} \subseteq \lquotW{\ee}=W$.
	\end{claim}
	\begin{subproof}
		Let $L\subseteq \SS^*$ such that $W = \Safe{L}$. Let 
		$\lquotW{u_1} \re{a_1} \lquotW{u_2} \re{a_2} \dots$
		be a path in $U$ from $u_1=u$. By induction we obtain $\emptyset\neq \lquotW{u_i} \subseteq \lquotW{(u_1a_1\dots a_{i-1})}$.
		Therefore, for all $i$, $ua_1\dots a_i \notin L$, so, by definition of $W$, the infinite word $ua_1a_2\dots$ belongs to~$W$.
	\end{subproof}
	
	We show that $U$ is "$(\kk,W)$-universal for trees", for every cardinal $\kk$, and conclude by Lemma~\ref{lemma-prelim:univ-for-trees}.
	Let $T$ be a "$\SS$-tree" which "satisfies@@tree" $W$. For each node $t\in T$, let $\phi(t) = \lquotW{u_t}$ be the minimal "residual" such that $t$ "satisfies@@univ" $\lquotW{u_t}$. In particular, for the "root@@univ" $t_0$, $\phi(t_0)$ "satisfies@@univ" $W$ by the previous claim.
	We claim that $\phi$ is a "morphism@@univ". Indeed, if $t\re{a} t'$ in $T$ and $t$ "satisfies@@univ" $\lquotW u$, then $t'$ "satisfies@@univ" $\lquotW{(ua)}$. Therefore, $\lquotW{u_t'}\leq \lquotW{(u_ta)}$, so $ \phi(t) = \lquotW{u_t} \re a \lquotW{u_{t'}} = \phi(t')$ is an edge in $U$.
\end{proof}

\subsection{Open objectives}
We recall that an objective $W$ is "open" if 
\[ W =\Reach{L} = \{w \mid w \text{ contains some prefix in } L\}, \]
for some $L\subseteq \SS^*$.

\subsubsection{Reset-stability}
In Section~\ref{subsec-warm:open}, we showed that "positional" "$\oo$-regular" "open" objectives are exactly those with residuals totally ordered and that are "progress consistent". However, for "non-$\oo$-regular" objectives, these conditions do not suffice, even if residuals are well-ordered.

\begin{example}[Progress consistency does not suffice]
	Let $\SS = \NN$ and  let $W$ be the set of sequences that are not strictly-increasing:
	\[ W = \{ a_1a_2\dots \in \NN^\omega \mid a_{i+1} \leq a_i \text{ for some } i \ \}. \]
	
	This objective is "open", as $W=\Reach{\text{two consecutive non-increasing numbers}}$.	
	Its residuals are:
	\[ \lquotW{\ee} < \lquotW 0 < \lquotW 1 < \lquotW 2 < \cdots < \lquotW{(00)} = \SS^\oo.\]
	Therefore, $\ResW$ is well-ordered. Moreover, $W$ is "progress consistent": any repetition of factors induces a non-strict inequality $<$ between consecutive letters.
	
	However, we claim that $W$ is not "positional". Consider the game in Figure~\ref{fig-open:non-reset-stab}.
	"Eve" can win this game: no matter what is the vertex $v_i$ chosen by "Adam", she can first move one position to the right, producing $i$, and then go down producing letter $1$. This ensures two consecutive non-increasing numbers.
	However, she cannot "win positionally". Indeed, if such a strategy tells her to go always to the right, the sequence produced will be strictly increasing. If she choses to go down in vertex $v_i$, Adam can win by initialising the play in that vertex.
\end{example}
\begin{figure}
	\centering
	\includegraphics[scale=1.5]{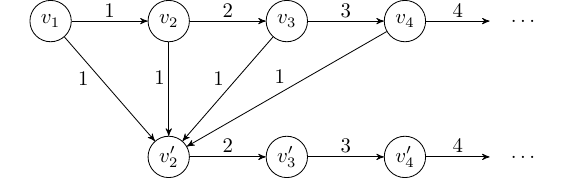}
	\caption{Game in which Eve wins if she does not produce a strictly-increasing sequence of numbers. She can win from every vertex, but not positionally.}
	\label{fig-open:non-reset-stab}
\end{figure}

\begin{definition}[Reset-stability]
	\AP We say that an "objective" $W\subseteq \SS^\oo$ is ""reset-stable"" if, for each sequence of finite words $u_1,u_2,u_3,\dots\in \SS^+$ and of residuals $\lquotW{s_0}, \lquotW{s_1}, \lquotW{s_2},\dots \in \ResW$:
	\[\lquotW{s_{i}} \subsetneq \lquotW{(s_{i-1}u_{i})} \;\text{ for all } i\geq 1 \;\; \implies \;\; u_1u_2u_3\dots \in \lquotW{s_0}.\]
\end{definition}

\AP An intuitive idea of "reset-stability" is the following. Consider the (potentially infinite) "automaton of residuals" of $W$, which inherits the order over residuals. Add to it all $\ee$-transitions going backwards: $\lquotW s \re{\ee} \lquotW{s'}$ for $\lquotW{s'}<\lquotW{s}$. When a run takes an $\ee$-transition, we say that it ""makes a reset"".
What "reset-stability" tells us is that any run making infinitely many resets must be accepting.  The words $u_1,u_2,\dots$ in the definition above correspond to fragments where no "reset@@make" takes place, and $\lquotW{s_i}$ is the "residual" where we land after the i\ts{th} "reset@@make".
(See also the notion of "$0$-jumps" in the proof of Lemma~\ref{lemma:paths_in_hm}).

\begin{remark}
	If $W$ is "reset-stable", it is "progress consistent". The converse holds if $\ResW$ is finite and totally ordered.
\end{remark}

We note that all "closed objectives" are "reset-stable".

\begin{lemma}[Necessity of reset-stability]\label{lemma-open:reset-stab-nec}
	Let $W\subseteq \SS^\oo$ be a "positional" "objective" over "$\ee$-free" "Eve-games". Then, $W$ is "reset-stable".
\end{lemma}
\begin{proof}
	Suppose by contradiction that $W$ is not "reset-stable". That is, there are $u_1,u_2,\dots \SS^+$   and $\lquotW{s_0},\lquotW{s_1} \dots$ such that $\lquotW{s_{i}} < \lquotW{(s_{i-1}u_{i})}$, but  $u_1u_2\dots \notin \lquotW{s_0}$.
	
	Let $w_{i}\in \SS^\omega$ such that $w_i\in \lquotW{(s_{i-1}u_{i})} \setminus \lquotW{s_{i}}$.
	We consider the game pictured in Figure~\ref{fig-open:proof-reset-stab} (to ensure it to be "$\ee$-free", we just remove vertex $v_i$ if $s_i=\ee$). 
	
	\begin{figure}
		\centering
		\includegraphics[scale=1.5]{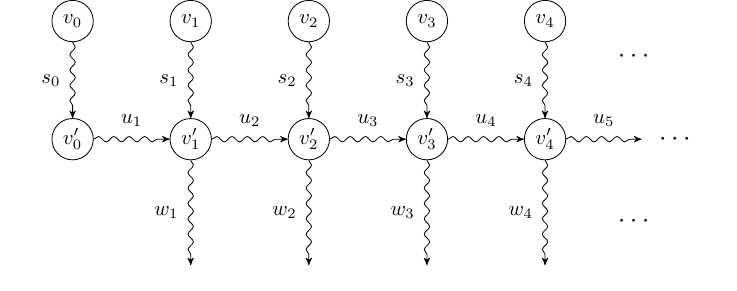}
		\caption{Game in the proof of Lemma~\ref{lemma-open:reset-stab-nec}. Eve can win from every $v_i$, but not positionally.}
		\label{fig-open:proof-reset-stab}
	\end{figure}

	"Eve" can win $\G$ from any vertex $v_i$ (and from $v_i'$ if $s_i=\ee$), as she can produce the word $s_iu_{i+1}w_{i+1}$, which belongs to $W$, as we have taken $w_{i+1}\in \lquotW{(s_{i}u_{i+1})}$. 
	However, we show that no "positional strategy" ensures to win from all these vertices. 
	We distinguish two cases. If this "strategy" takes the path $v_i'\lrp{u_{i+1}} v_{i+1}'$ for all $i$, then it is not "winning from" $v_0$, as by hypothesis $u_1u_2\dots \notin \lquotW{s_0}$.
	If on the contrary this strategy takes a path $v_i'\lrp{w_{i}}$, then it is not "winning from" $v_i$.
\end{proof}

\subsubsection{Characterisation for open objectives}

\begin{theorem}[Positional open objectives]\label{th-open:char-open}
	Let $W\subseteq \SS^\oo$ be an "open objective". Then, $W$ is "positional" (over all "game graphs") if and only if:
	\begin{itemize}
		\item $\ResW$ is well-ordered by inclusion, and
		\item $W$ is "reset-stable".
	\end{itemize}
\end{theorem}
\begin{proof}
	The necessity of the conditions has already been established in Lemmas~\ref{lemma-open:well-found-nec} and~\ref{lemma-open:reset-stab-nec}. To prove the sufficiency, we give, for each cardinal $\kk$ a "well-ordered@@univ" "monotone@@univ" graph that is "$(\kappa,W)$-universal for trees", and conclude by Proposition~\ref{prop-prelim:univ-graphs} and Lemma~\ref{lemma-prelim:univ-for-trees}.
	
	We let $U$ be the "$\SS$-graph" having as set of vertices $(\ResW \setminus \{\emptyset\}) \times \kappa$, ordered lexicographically. This graph is "well-ordered", as by hypothesis so is $\ResW$. The edges are given by:
	\[ (\lquotW u , \lambda) \, \re a \,(\lquotW{u'} , \lambda') \quad\tif \quad \left\{\begin{array}{lc}
		\lquotW{u'} = \lquotW{(ua)} \tand \lambda' < \lambda,& \tor \\[2mm]
		\lquotW{u'} \subsetneq \lquotW{(ua)},& \tor \\[2mm]
		 \lquotW{u} = \SS^\oo.&
	\end{array}\right. \]
	By Lemma~\ref{lemma-prelim:monotonicity-residuals}, this graph is "monotone@@univ". We show its "$(\kappa,W)$-universality for trees". The next claim, which relies in the "reset-stability" hypothesis, provides the key ingredient for this.
	We let $L$ be the language of finite words such that $W=\Reach{L}$.
	\begin{claim}
		For each ordinal $\lambda < \kappa$ and each "residual" $\lquotW u$, the vertex $(\lquotW u, \lambda)$ "satisfies@@univ" $\lquotW u$ in $U$ (i.e. for all paths from $(u^{-1}W, \lambda)$ with label $w$, it holds that  $uw\in W$).
	\end{claim}
	\begin{subproof}
		Let $\rr= (\lquotW{u_0},\lambda_0)\re{a_0} (\lquotW{u_1},\lambda_1)\re{a_1} \cdots $ be an infinite path from $(\lquotW u, \lambda)$ in $U$.
		If $\lquotW{u_{i+1}} \subsetneq \lquotW{(u_ia_i)}$, we say that the transition $\re{a_i}$ "makes a reset".
		By induction, we obtain that $\lquotW{u_i} \subseteq \lquotW{(ua_1\dots a_{i-1})}$. We distinguish two cases:
		(1) If $\rr$ "makes infinitely many resets", then we conclude by "reset-stability". (2) If $\rr$ "makes finitely many resets", then eventually $\lambda_{i+1}<\lambda_{i}$ for all $i$, unless $\lquotW{u_i}=\SS^\oo$. We conclude that eventually $\lquotW{u_i}=\SS^\oo \subseteq \lquotW{(ua_1\dots a_{i-1})}$. Therefore, $ua_1\dots a_{i-1}\in L$, so $ua_1a_2\dots\in W$.
	\end{subproof}
	Let $T$ be a "$\SS$-tree" whose "root@@univ" "satisfies@@univ" $W$.
	We give a "morphism@@univ" $\phi\colon T \to U$, which we decompose in $\phi_1\colon T \to \ResW \setminus \{\emptyset\}$ and $\phi_2\colon T \to \kappa$.
	For each $t\in T$, let $u_t$ be the word labelling the path from the "root" $t_0$ to $t$. We let $\phi_1(t) = \lquotW{u_t}$ for each $t$. 
	We define $\phi_2$ by transfinite induction. 
	By hypothesis, each branch eventually contains vertices $t$ such that $u_t\in L$ (that is, $\lquot{u_t} = \SS^\oo$). For all these vertices, we let $\phi_2(t) = 0$.
	The tree obtained by removing these vertices, named $T_1$, does not have any infinite branch.
	For an ordinal $\lambda<\kappa$, let $T_\lambda$ be the set of nodes for which we have not defined $\phi_2$ at step $\lambda$ of the induction.
	For each leaf $t$ of $T_\lambda$, we let $\phi_2(t) = \lambda$.
	
	This mapping has the two following properties:
	\begin{itemize}
		\item if $t\re a t'$ in $T$, then $\phi_1(t') = \lquotW{(u_ta)}$, and
		\item  if $t\re a t'$ in $T$, either $\phi_2(t')<\phi_2(t)$, or $u_{t'}\in L$.
	\end{itemize} 
	This ensures that $\phi = (\phi_1,\phi_2)$ is a "morphism@@univ", concluding the proof.
\end{proof}

\begin{example}[Positional open objective]\label{ex:open-infinite-positional}
	Let $\SS = \NN$ and  let $W$ be the set of sequences that start by $123\dots n$, and eventually decrease. Formally:
	\[ W = \{ a_1a_2\dots \in \NN^\omega \mid \text{there is } j \tst a_{i} = i \text{ for } i<j \tand a_j<j\}. \]
	
	This objective is "recognised" by the infinite reachability automaton depicted in Figure~\ref{fig-open:aut-reset-stable}.
	Its residuals are well-ordered and it is "reset-stable", as after any reset we necessarily produce an word in $W$. Therefore, $W$ is "positional".
	
	We remark that this objective is not "bipositional", as $\Res{\SS^\oo\setminus W}$ is not well-founded. This contrast with the case of "$\oo$-regular open" objectives, for which all "positional" "open objectives" are "bipositional" (Corollary~\ref{cor-warm:open-bipositional}).
\end{example}
\begin{figure}
	\centering
	\includegraphics[scale=1.5]{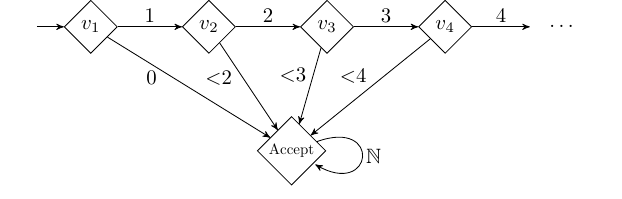}
	\caption{Automaton recognising the objective $W$ from Example~\ref{ex:open-infinite-positional}.}
	\label{fig-open:aut-reset-stable}
\end{figure}

\subsection{1-to-2-player lift and addition of neutral letters}

\begin{corollary}[1-to-2-player lift for open and closed objectives]\label{cor-open:lift}
	Let $W\subseteq \SS^\oo$ be an "open" or "closed" "objective". If $W$ is "positional" over "$\ee$-free" "Eve-games", then $W$ is "positional" over all "game graphs".
\end{corollary}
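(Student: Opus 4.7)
The plan is to combine the characterisations established earlier in this section (Theorem~\ref{th-p4-open:pos-closed} for closed and Theorem~\ref{th-p4-open:char-open} for open objectives) with the observation that every necessary condition appearing in those characterisations can already be derived from positionality over $\ee$-free Eve-games. In both cases, the proof is essentially a reassembly of results already in the section, so no new game-theoretic construction is needed.

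First I would handle the closed case. Suppose $W$ is closed and positional over $\ee$-free Eve-games. By Theorem~\ref{th-p4-open:pos-closed}, it suffices to show that $\ResW$ is well-ordered by inclusion. The total order follows from Lemma~\ref{lemma-p4-warm:total-order-residuals-nec}: the counterexample game exhibited in its proof is an Eve-game built from two wiggly arrows labelled by infinite words in $\SS^\oo$, hence it is $\ee$-free, so positionality over $\ee$-free Eve-games is enough to apply it. Well-foundedness of $\ResW$ is exactly the content of Lemma~\ref{lemma-p4-open:well-found-nec}, whose statement is already phrased for $\ee$-free Eve-games. Combining the two yields that $\ResW$ is well-ordered, and Theorem~\ref{th-p4-open:pos-closed} concludes.

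For the open case, suppose $W$ is open and positional over $\ee$-free Eve-games. I would apply exactly the same two lemmas to obtain that $\ResW$ is totally ordered and well-founded, hence well-ordered. Additionally, reset-stability of $W$ follows from Lemma~\ref{lemma-p4-open:reset-stab-nec}, whose hypothesis is again positionality over $\ee$-free Eve-games (the game constructed in its proof consists of wiggly arrows labelled by $s_i, u_i, w_i$ with the care, already noted in the proof, of deleting a vertex when the corresponding $s_i$ equals $\ee$, ensuring $\ee$-freeness). Theorem~\ref{th-p4-open:char-open} then delivers positionality over all game graphs.

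I do not expect any genuine obstacle: each ingredient is on the shelf and the only point to verify is that each of the necessity proofs of Lemmas~\ref{lemma-p4-warm:total-order-residuals-nec}, \ref{lemma-p4-open:well-found-nec}, and~\ref{lemma-p4-open:reset-stab-nec} uses an $\ee$-free Eve-game, which is the case by direct inspection. A slight care is needed in the well-foundedness argument when some $u_i$ or $s_i$ is empty, but this is already addressed in the cited proofs by simply collapsing the empty-labelled segment.
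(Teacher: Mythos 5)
Your proposal is correct and matches the paper's intended argument: the corollary is stated without an explicit proof precisely because the necessity lemmas (total order of residuals, well-foundedness, and reset-stability) are each established using only $\ee$-free Eve-games, while the sufficiency directions of Theorems~\ref{th-p4-open:pos-closed} and~\ref{th-p4-open:char-open} yield positionality over all game graphs via universal graphs. Your attention to the degenerate cases where some label is the empty word is exactly the care the paper itself takes in the corresponding proofs.
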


We also obtain from our proofs that the Neutral letter conjecture (Conjecture~\ref{conj-half:neutral-letter}) holds for "open" and "closed" objectives.
\begin{corollary}[Closure under addition of neutral letters]
	Let $W\subseteq \SS^\oo$ be an "open" or "closed" "objective". If $W$ is "positional", then $\addNeutral{W}$ is "positional".
\end{corollary}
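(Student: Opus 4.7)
Plan: The approach is to verify directly that $\addNeutral{W}$ satisfies the characterisations of "positional" "closed" and "open" objectives given by Theorems~\ref{th-p4-open:pos-closed} and~\ref{th-p4-open:char-open}. Let $\pi$ denote the projection from $(\SS \cup \{\ee\})^{*} \cup (\SS \cup \{\ee\})^{\oo}$ to $\SS^{*} \cup \SS^{\oo}$ that deletes all occurrences of the letter $\ee$. The guiding observation is that, by the two clauses in the definition of a "neutral letter", membership of a word in $\addNeutral{W}$ depends only on its $\pi$-projection, and consequently $\lquot{u}{\addNeutral{W}}$ depends only on the "residual" class of $\pi(u)$ in $W$. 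A routine verification yields an order-preserving bijection $\Res{W} \cong \Res{\addNeutral{W}}$ under inclusion, so well-orderedness transfers from $\Res{W}$ to $\Res{\addNeutral{W}}$.

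Next I verify that $\addNeutral{W}$ stays in the relevant topological class. If $W = \Safe{L}$ with $L \subseteq \SS^{*}$, then unfolding definitions shows $\addNeutral{W} = \Safe{\pi^{-1}(L)}$, using the second clause of neutrality to cover words with only finitely many non-$\ee$ letters; symmetrically $W = \Reach{L}$ gives $\addNeutral{W} = \Reach{\pi^{-1}(L)}$. Together with the preservation of well-orderedness of "residuals", this already concludes the "closed" case via Theorem~\ref{th-p4-open:pos-closed}.

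The main obstacle is establishing "reset-stability" of $\addNeutral{W}$, required for the "open" case. Suppose $u_1, u_2, \dots \in (\SS \cup \{\ee\})^{+}$ and residuals $s_0, s_1, \dots$ of $\addNeutral{W}$ satisfy $\lquot{s_i}{\addNeutral{W}} \lRes \lquot{(s_{i-1} u_i)}{\addNeutral{W}}$. Setting $v_i = \pi(u_i)$ and $r_i = \pi(s_i)$, the residual bijection yields $\lquot{r_i}{W} \lRes \lquot{(r_{i-1} v_i)}{W}$ in $\Res{W}$. The difficulty is that $v_i$ may be empty when $u_i \in \ee^{+}$; in that case the reset-condition degenerates to a strict descent $\lquot{r_i}{W} \lRes \lquot{r_{i-1}}{W}$ in $\Res{W}$. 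If only finitely many $v_i$ are non-empty, the chain eventually only takes degenerate steps and produces an infinite strict descent, contradicting well-foundedness of $\Res{W}$. Hence the set $I = \{i : v_i \neq \ee\} = \{i_1 < i_2 < \dots\}$ is infinite. Between consecutive $i_{k-1}$ and $i_k$, all intermediate indices are degenerate, yielding $\lquot{r_{i_{k-1}}}{W} \supseteq \lquot{r_{i_k - 1}}{W}$; monotonicity of residuals (Lemma~\ref{lemma-p4-prelim:monotonicity-residuals}) applied with right-concatenation by $v_{i_k}$ then gives $\lquot{r_{i_k}}{W} \lRes \lquot{(r_{i_k - 1} v_{i_k})}{W} \subseteq \lquot{(r_{i_{k-1}} v_{i_k})}{W}$, a bona fide reset-condition in $W$ along the subsequence $r_0, r_{i_1}, r_{i_2}, \dots$ with words $v_{i_1}, v_{i_2}, \dots \in \SS^{+}$. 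Applying "reset-stability" of $W$ yields $v_{i_1} v_{i_2} \dots = \pi(u_1 u_2 \dots) \in \lquot{r_0}{W}$, which via the residual bijection lifts to $u_1 u_2 \dots \in \lquot{s_0}{\addNeutral{W}}$. Combined with the preservation of topological class and well-orderedness, Theorem~\ref{th-p4-open:char-open} concludes the "positionality" of $\addNeutral{W}$.
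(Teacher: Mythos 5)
Your plan — verify the hypotheses of Theorems~\ref{th-p4-open:pos-closed} and~\ref{th-p4-open:char-open} directly for $\addNeutral{W}$ — founders on a point you treat as a routine unfolding of definitions: $\addNeutral{W}$ is in general \emph{not} an open objective when $W$ is open, so Theorem~\ref{th-p4-open:char-open} cannot be applied to it. Concretely, take $\SS=\{a,b\}$ and $W=\Reach{\{aa\}}=aa\SS^\oo$. This $W$ is open, its residuals $\emptyset\subseteq aa\SS^\oo\subseteq a\SS^\oo\subseteq \SS^\oo$ are well-ordered, and it is reset-stable, hence positional. By the second clause of neutrality, $a\ee^\oo\in\addNeutral{W}$ because $\lquotW{a}=a\SS^\oo\neq\emptyset$; but for every $n$ the word $a\ee^{n}b^\oo$ projects to $ab^\oo\notin W$ and so lies outside $\addNeutral{W}$. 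Thus no cylinder around $a\ee^\oo$ is contained in $\addNeutral{W}$, and $\addNeutral{W}$ is not open. The same example refutes your identity $\addNeutral{W}=\Reach{\pi^{-1}(L)}$. (The analogous identity $\addNeutral{W}=\Safe{\pi^{-1}(L)}$ in the closed case is also false — take $\SS=\{a\}$, $L=\{aa\}$, so $W=\emptyset$ but $a\ee^\oo\in\Safe{\pi^{-1}(L)}$ — although there the damage is repairable: $\addNeutral{W}=\Safe{L'}$ for $L'=\{u\mid \lquot{\pi(u)}{W}=\emptyset\}$, so $\addNeutral{W}$ is genuinely closed, and combined with your residual bijection this does complete the closed half.) Your reset-stability computation for $\addNeutral{W}$ is fine as far as it goes, but without openness of $\addNeutral{W}$ the characterisation theorem gives you nothing, and reset-stability plus well-ordered residuals is not claimed to imply positionality outside the open class.

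The intended argument never analyses $\addNeutral{W}$ at all. The sufficiency directions of Theorems~\ref{th-p4-open:pos-closed} and~\ref{th-p4-open:char-open} are proved by exhibiting, for every cardinal $\kappa$, a well-ordered monotone $(\kappa,W)$-universal graph for $W$ itself, and Proposition~\ref{prop-p4-reslt:ohlmann-neutral-letters} says precisely that the existence of such graphs for all $\kappa$ is equivalent to positionality of $\addNeutral{W}$. The corollary is therefore immediate from objects already constructed. To rescue your route for the open case you would have to build a universal graph (or some other positionality witness) tailored to the non-open objective $\addNeutral{W}$, which takes you back to the universal-graph argument anyway.
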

\begin{proof}
	In the proof of Theorems~\ref{th-open:pos-closed} and~\ref{th-open:char-open}, we obtained the positionality of objectives by providing "well-ordered@@univ" "monotone@@univ" "$(\kappa,W)$-universal" graphs. Proposition~\ref{prop-reslt:ohlmann-neutral-letters} allows us to conclude.
\end{proof}

We have therefore obtained the 1-to-2-player lift for $\oo$-regular objectives, as well as "open" and "closed" ones.
However, the 1-to-2-player lift does not hold for arbitrary objectives. 
A counter-example appears in~\cite[Section~7]{GK22Submixing}, moreover, the objective presented there is in $\intro*\SigmaTwo$, that is, a countable union of "closed" objectives.
Another counter-example, discussed in~\cite{Kop08Thesis} and~\cite[p.236]{Vandenhove23Thesis}\footnotemark{} is:
\[ \mathbf{MP}^\QQ = \{w\in \{0,1\}^\oo \mid \liminf_n \frac 1 n \sum_{i=0}^n w_i \;\; \text{ is rational}\}. \]
\footnotetext{In his PhD~\cite{Vandenhove23Thesis}, Vandenhove discusses the 1-to-2-player lift for finite "game graphs", but this counter-example also applies to infinite "game graphs".}

We provide here yet a different example, which is conceptually simpler but of higher topological complexity than the one in~\cite{GK22Submixing}.

\begin{proposition}[No general 1-to-2-player lift]\label{prop:no-general-lift}
	There is an objective $W\subseteq \SS^\oo$  that is "positional" over "Eve-games", but is not "positional" over all "game graphs".
\end{proposition}
\begin{proof}
	Let $\SS$ be any infinite alphabet, and let $W$ be the following objective:
	\[ \intro*\WInfMuller = \{ w\in \SS^\oo \mid \text{ the set of letters occurring infinitely often in } w \text{ is finite}\}. \]
	To show that it is not "positional", we consider the game in Figure~\ref{fig-open:game-infinite-Muller}, in which "Eve" controls a single vertex, from which she can send the token to any vertex $v_i$ controlled by "Adam". 	
	\begin{figure}
		\centering
		\includegraphics[scale=1.5]{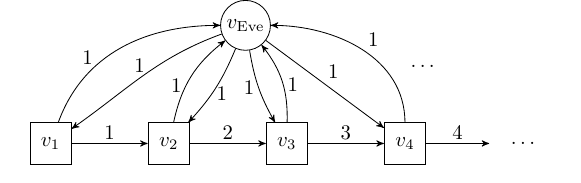}
		\caption{Game in which Eve wins if only finitely many letters are produced infinitely often. She can win by sending the token further and further away, but she cannot "win positionally".}
		\label{fig-open:game-infinite-Muller}
	\end{figure}
We claim that "Eve" can win this game from every vertex by using the following "strategy": she keeps track of the maximal index $i_{\max}$ such that the play has passed trough vertex $v_{i_{\max}}$. Whenever Adam sends back the token to $v_{\Eve}$, she will go to vertex $v_{i_{\max}}$. This "strategy" ensures that only letter $1$ will be produced infinitely often.

However, "Eve" cannot win using a "positional@@strat" (or even finite memory) strategy. Such a strategy will only consider finitely many edges $v_{\Eve} \re 1 v_i$. Let $v_k$ be the maximal such vertex. "Adam" can win against such strategy by producing longer and longer paths, and then sending back the token to the vertex controlled by "Eve": $v_i\lrp{i(i+1)\dots} v_{k+j}\re 1 v_{\Eve}$. In this way, all numbers greater that $k$ will be produced infinitely often.

We show that $\WInfMuller$ is positional over "Eve-games". 
\begin{claim}
	For every "Eve-game" $\G$ with "winning condition" $\WInfMuller$ and every fixed vertex $v_0$, if "Eve" wins from $v_0$, she can "win from $v_0$" using a "positional strategy".
\end{claim}
\begin{subproof}
	Assume that "Eve" "wins from $v_0$".
	A "strategy" from $v_0$ is just an infinite path from $v$. Consider such a path. If this path does not visit a same vertex twice, it is already a positional strategy. On the contrary, let $v_k$ be the first vertex that repeats. Consider the first two occurrences of $v_{\mathsf{rep}}$:
	\[ v_0 \re{a_0} v_1 \re{a_1} \cdots v_{k} \re{a_k} \cdots v_{k+j} \re{a_{k+j}} v_k. \]
	Then, the "positional strategy" indicating to take the edge $v_i\re{a_i}v_{i+1}$ for $i\leq k+j$ is "winning@@strat".
\end{subproof}

We use this claim to prove that $\WInfMuller$ is ("uniformly") "positional" over "Eve-games". 
Let $\G$ be an "Eve-game" with "winning condition" $\WInfMuller$. By "prefix-independence" of $\WInfMuller$, we can suppose without loss of generality that "Eve" "wins from" every vertex in $\G$. Let $v_1,v_2, \dots$ a (potentially transfinite) enumeration of vertices in $\G$. We will define a positional strategy $\strat\colon V \to E$ by transfinite induction. At step $\lambda$, let $\G_\lambda$ be the game obtained by removing vertices for which $\strat$ has already been defined. Let $i$ be the minimal index such that $v_i$ appears in $\G_\lambda$. By the previous claim, "Eve" has a positional strategy in $\G_\lambda$ that wins from $v_i$. Let $V_\lambda$ be the vertices reachable from $v$ by using this strategy, and for $v\in V_\lambda$ let $\strat(v)$ be the edge indicated by such strategy. We let $V_\lambda'$ be the vertices in $\G_\lambda\setminus V_\lambda$ from which "Eve" can reach $V_\lambda$, and fix a positional strategy doing so. We let $\strat(v')$ being given by this strategy for these vertices.
It is immediate that $\strat$ is a "positional strategy" in $\G$ that "wins from" all vertices.
\end{proof}

%

\section{Conclusions}\label{sec:conclusions}

The results presented in Section~\ref{sec:char-half-pos} effectively address most open questions regarding "positionality" in the context of "$\oo$-regular languages".
Yet, it would be reasonable to seek a ``better'' characterisation. 
One drawback of our approach is its conceptual complexity and its exclusive focus on automata; the characterisation is based on syntactic and combinatorial properties of "parity automata", rather than on intrinsic language-theoretical properties of the languages they "recognise".
In this respect, the insights gained about "positionality" are somewhat limited.
Therefore, we believe that there is still room for improvement and for a deeper understanding of the class of "positional" "$\oo$-regular" objectives.

To conclude, we discuss further research directions extending our results.
We start by discussing the follow-up work~\cite{CO25memory}, generalising the results of this paper to objectives requiring memory.

\subsection{Follow-up work: Positionality and memory of \texorpdfstring{$\BCSigma$ languages}{objectives recognised by infinite deterministic parity automata}}

There are two natural orthogonal directions to extend our results: to consider broader classes of objectives and to characterise their memory requirements rather than just "positionality". 

In Section~\ref{sec:open-closed} we have presented "positionality" results for some non-$\oo$-regular objectives, namely for "closed" and "open" objectives.
Going further in that direction, the goal is to develop characterisations for more complex objectives defined by topological properties, mainly, higher classes in the Borel hierarchy. 
\AP A natural first step is to look at objectives in $\intro*\BCSigma$: the class of boolean combinations of objectives in $\SigmaTwo$ (countable unions of closed objectives), or equivalently, objectives "recognised" by infinite "deterministic" parity automata~\cite[Sect.~5]{Thomas1991AutomataOI}.
In~\cite{CO25memory}, we provide a characterisation of objectives in $\BCSigma$ for which "Eve" can "play optimally" with $\leq k$ states of memory. 
In particular, for $k=1$, this characterises positionality, allowing us to prove "Kopczyński's conjecture" (Conjecture~\ref{conj:Kopcz-Union}) for the class of $\BCSigma$ objectives.
The simplest class in the Borel hierarchy for which we do not have a satisfactory characterisation of positionality is $\Delta_3^0 = \Sigma_3^0 \cap \Pi_3^0$, for which we do not know whether "Kopczyński's conjecture" holds.

Concerning 1-to-2-player lifts for "positionality" (for one player), the results from this paper are closed to the theoretical limit.
Indeed, we have shown that the 1-to-2-player lift holds for $\oo$-regular, open and closed objectives (Theorem~\ref{th-half:lifts} and Corollary~\ref{cor-open:lift}), 
and it is known that no such lift holds for $\SigmaTwo$-objectives~\cite[Section~7]{GK22Submixing}.

\subsection{Minimisation and canonisation of parity automata}\label{subsec-concl:minimisation}

For the proof of Theorem~\ref{th-reslt:MainCharacterisation-allItems} (Section~\ref{subsec:from-HP-to-signature}), we have introduced new notions concerning congruences for parity automata, as well as different transformations of automata.
This transformations exhibit a flavour similar to what one might expect from minimisation algorithms, as their main purpose is to remove redundant states from automata. 
We believe that this techniques will be valuable for the study of $\oo$-automata in other contexts, as they allow for a fine analysis of the structure of the automata.

Also, a key ingredient in our proof is the use of "history-deterministic" automata, in particular, a generalisation of the minimisation algorithm for "history-deterministic" "coB\"uchi automata" introduced by Abu Radi and Kupferman~\cite{AK22MinimizingGFG}. We see this fact as further evidence that "history-deterministic" automata provide canonicity properties.


In fact, we can derive actual concrete statements about the minimisation of automata from our results:

\begin{proposition}
	Given a "deterministic parity automata" "recognising" a "bipositional" language $L$, we can compute in polynomial time the size of a minimal deterministic (resp. "history-deterministic") parity automata for $L$.	
\end{proposition}
\begin{proof}
	A necessary condition for a language $L$ to be "bipositional" is that it must be recognised by a "parity automaton on top" of the "automaton of residuals" (Theorem~\ref{th-bi:bipositional}).
	Therefore, a minimal "deterministic" or "HD" automaton for $L$ will have as many states as "residuals" of the language.
	To compute the number of "residuals" of the language, it suffices to determine the number of equivalence classes of states recognising the same language in the input deterministic automaton. For this, it suffices to do an equivalence check for each pair of states, and each of them takes polynomial time~\cite{ClarkeDK93Unified}.
\end{proof}

\begin{proposition}
	Deterministic (resp. "history-deterministic") "B\"uchi@@aut" and "coB\"uchi automata" "recognising" "positional" languages can be minimised in polynomial time.
\end{proposition}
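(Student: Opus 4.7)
The plan is to combine the structural characterisations of "positionality" for "B\"uchi recognisable" and "coB\"uchi recognisable" objectives (Propositions~\ref{prop-p4-warm:char-Buchi-all} and~\ref{prop-p4-warm:char-coBuchi-all-aut}) with polynomial-time procedures that push an input automaton towards the canonical forms these characterisations dictate, and then to argue that the resulting canonical automata are minimum for both the "deterministic" and the "history-deterministic" models.

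For the Büchi case, Proposition~\ref{prop-p4-warm:char-Buchi-all} guarantees that a "positional" "B\"uchi recognisable" language $W$ is "recognised" by a Büchi automaton "on top of" the "automaton of residuals" $\autRes{W}$, which has exactly one state per "residual class". The number of residual classes being a universal lower bound on the number of states of any (even "non-deterministic") automaton "recognising"~$W$, this provides a common minimum for the "deterministic" and "history-deterministic" models. Given an input Büchi automaton $\A$ for $W$, the algorithm first computes the equivalence classes of states under $\eqResState{\A}$ in polynomial time (language-equivalence of initialised deterministic parity automata is decidable in polynomial time), takes the quotient, and assigns "priorities" to its transitions via Lemma~\ref{lemma-p4-warm:congruence-0-letters}: within each class, "$0$-transitions@@out" "are uniform", so the priority carried by each quotient transition is unambiguous.

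For the coBüchi case, the plan is to start from the polynomial-time minimisation of Abu Radi and Kupferman~\cite{AK22MinimizingGFG}, which produces a canonical "safe centralised@@coB" and "safe minimal" "history-deterministic" coBüchi automaton and therefore settles the HD case. For the deterministic case we then apply the re-determinisation procedure described in the warm-up (page~\pageref{par-p4-warm:determinisation}), formalised in Lemma~\ref{lemma-p4-nec:re-determinisation}: for a "positional" objective, "safe components@@coB" are totally ordered by inclusion of their "safe languages@@coB" (Lemma~\ref{lemma-p4-warm:total-order-coBuchi}), and one re-routes priority-$1$ transitions so that they cycle through the "safe components@@coB" in a round-robin fashion, always entering the maximal state of the next component. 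Since this step preserves the number of states, the resulting "deterministic" automaton is minimum among deterministic ones.

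The main obstacle will be checking that the two notions of minimum, over "deterministic" and over "history-deterministic" automata, genuinely coincide on these languages. For Büchi this is automatic via the residual lower bound. For coBüchi it rests on the combination of the size-preserving re-determinisation with the canonicity established in~\cite{AK22MinimizingGFG}, which supplies the matching lower bound. All intermediate ingredients (computing "residuals", decomposing into "safe components@@coB", checking inclusions of "safe languages@@coB" via Lemma~\ref{lemma-p4-nec:safe-languages-polytime}, the Abu Radi--Kupferman minimisation, and the re-determinisation of Lemma~\ref{lemma-p4-nec:re-determinisation}) are polynomial, so the composition is too.
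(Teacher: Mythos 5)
Your proposal follows essentially the same route as the paper's own (very short) proof: for B\"uchi, minimality comes from the residual automaton guaranteed by Proposition~\ref{prop-p4-warm:char-Buchi-all}, which is minimal already among history-deterministic automata; for coB\"uchi, from the Abu Radi--Kupferman HD-minimal automaton together with the size-preserving re-determinisation of Section~\ref{subsec-p4-warm:coBuchi}. Two points, however, need repair. First, in the B\"uchi case you propose to quotient the \emph{input} automaton $\A$ by $\eqResState{\A}$ and read the priorities off directly, justifying the step by Lemma~\ref{lemma-p4-warm:congruence-0-letters}. That lemma only asserts the \emph{existence} of some deterministic B\"uchi automaton for $W$ whose $0$-transitions are uniform over residual classes; an arbitrary input automaton recognising a positional language need not have this property, and then the priority of a quotient transition is genuinely ambiguous. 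You must first transform $\A$ into a polished automaton (Lemma~\ref{lemma-p4-warm:polished-automaton}, a polynomial-time fixpoint of removals and redirections), after which the uniformity you invoke does hold --- or else determine the B\"uchi transitions of $\autRes{W}$ by a separate polynomial test (e.g.\ keeping a transition accepting exactly when making it the sole B\"uchi transition yields a language contained in $W$). Second, the number of residual classes is \emph{not} a lower bound on the size of arbitrary non-deterministic automata, so the parenthetical ``even non-deterministic'' is false in general; the bound you actually need holds because every history-deterministic automaton contains an equivalent semantically deterministic subautomaton in which each residual class must be inhabited by at least one reachable state. Since the proposition only concerns deterministic and history-deterministic automata, this weaker (and correct) form of the lower bound suffices.
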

\begin{proof}
	In the case of "B\"uchi automata", a "positional" language can be "recognised" by the "automaton of residuals" (Propositions~\ref{prop-warm:char-Buchi-all}), which is necessarily minimal amongst "history-deterministic" automata.
	
	In the case of "coB\"uchi automata", we obtained that the minimal "history-deterministic" automaton of Abu Radi and Kupferman, computable in polynomial time, can be taken "deterministic" for "positional" languages (Section~\ref{subsec-warm:coBuchi}).
\end{proof}

We conjecture that our methods may lead to similar results in the more general case of "parity automata", and that minimal automata for "positional" languages can be obtained just by merging states of "signature automata".

\begin{conjecture}
	"Deterministic" and "history-deterministic" "parity automata" "recognising" "positional" languages can be minimised in polynomial time.
	Moreover, "history-deterministic" "parity automata" for this class of languages are not more succinct than "deterministic" ones.
\end{conjecture}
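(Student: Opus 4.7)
The plan is to leverage the structure of "fully progress consistent" "structured signature automata" obtained in the proof of Theorem~\ref{th-p4-reslt:MainCharacterisation-allItems}. Given a "deterministic" "parity automaton" $\A$ recognising a "positional" language $W$, apply the polynomial-time decision procedure from Section~\ref{sec:decision} to construct a "deterministic" "structured signature automaton" $\A_{\mathsf{sig}}$ equipped with "nested preorders" $\leqSig{0},\dots,\leqSig{d}$, with $\sizeAut{\A_{\mathsf{sig}}} \leq \sizeAut{\A}$. Since $\eqSig{d}$ is a "$[0,d]$-faithful congruence" (in fact a "strong congruence for" every "priority"), the quotient $\quotAut{\A_{\mathsf{sig}}}{d}$ is a "deterministic" "parity automaton" that still "recognises" $W$ and inherits the structure of a "signature automaton". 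I would propose this quotient as the minimal "deterministic" automaton for $W$, and aim to prove it can be computed in polynomial time directly.

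For the "history-deterministic" case, and simultaneously for the second part of the conjecture (no succinctness gain from non-determinism), the plan is to push the generalised Abu Radi–Kupferman technique all the way through the priority hierarchy. Given a "history-deterministic" "parity automaton" $\A'$ for $W$, first extract an "equivalent@@aut" "semantically deterministic" "history-deterministic" "subautomaton" in polynomial time. Then iteratively, for each even priority $x = 0,2,\dots,d$, apply the "safe centralisation" transformation (Lemma~\ref{lemma-p4-nec:safe-centralisation}) followed by the re-determinisation procedure (Lemma~\ref{lemma-p4-nec:re-determinisation}), both polynomial-time and non-increasing in size. Since these steps only remove "redundant@@safeCentr" states and exploit the total orders on "safe components" (Lemma~\ref{lemma-p4-nec:total-order-safe}), the output is a "deterministic" "structured signature automaton" no larger than $\A'$. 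Quotienting by $\eqSig{d}$ then yields the same canonical automaton as in the deterministic case, establishing both minimisability in polynomial time and the fact that "history-determinism" affords no succinctness advantage.

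The main obstacle will be proving the matching lower bound: that no "deterministic" (or "history-deterministic") "parity automaton" for $W$ can have fewer states than the $\eqSig{d}$-quotient of a "structured signature automaton". Unlike the "coB\"uchi" case, where the "safe languages" provide a single semantic invariant per state, here one needs a hierarchy of invariants indexed by "priorities". A natural candidate is to associate to each state $q$ the tuple consisting of its "residual" $\Lang{\initialAut{\A}{q}}$ together with the "${<}x$-safe languages" $\safeSig{x}{q}$ for each even $x$, and to show that two states with identical such invariants are necessarily $\eqSig{d}$-equivalent in any "signature automaton". Proving that these invariants are intrinsic to $W$ (i.e.\ do not depend on the particular automaton) and that any automaton for $W$ must distinguish states with different invariants would yield a Myhill–Nerode-style argument stratified by priority levels. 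I expect the delicate point to be the interaction between "safe languages" and the non-deterministic resolver in the "history-deterministic" setting, which may require formalising a notion of "reachable safe language" through some "sound resolver" and arguing that the total orders forced by "positionality" leave no slack for merging beyond $\eqSig{d}$.
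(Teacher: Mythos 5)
The statement you are trying to prove is stated in the paper only as a \emph{conjecture} (Section~\ref{subsec-p4-concl:minimisation}); the authors explicitly do not prove it, and merely remark that they ``conjecture that our methods may lead to similar results'' and that ``minimal automata for positional languages can be obtained just by merging states of signature automata.'' Your proposal follows exactly the route the authors gesture at, but it does not close the conjecture: you yourself flag the ``main obstacle'' (the lower bound showing that no automaton for $W$ can be smaller than the $\eqSig{d}$-quotient) and then only sketch a ``natural candidate'' invariant without proving it works. That lower bound is precisely the missing content, so what you have written is a research plan, not a proof.

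Concretely, the gaps are these. First, the invariants you propose --- the "residual" together with the "${<}x$-safe languages" $\safeSig{x}{q}$ for each even $x$ --- are not obviously intrinsic to $W$: they depend on the priority labelling, the "normal form", and the particular decomposition into "safe components" of the automaton at hand, and the paper's construction makes several non-canonical choices along the way (arbitrary orderings of "safe components", the choice of $\pickMax$, the choice of the "final SCC" in the "polishing operation"). You would need to show the resulting $\eqSig{d}$-quotient is independent of all of these, which is not addressed. Second, your claim that one can run "safe centralisation" and "re-determinisation" starting from an arbitrary "history-deterministic" automaton is optimistic: the paper's induction is set up for "deterministic" input in "normal form", sometimes restarts from scratch when the automaton shrinks, and the authors themselves admit (footnote in Section~\ref{subsec-p4:from-HP-to-signature}) that they cannot prove the "polishing operation" preserves "normality" --- so even the forward construction is more fragile than your outline suggests. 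The special cases the paper does prove (B\"uchi, coB\"uchi, "bipositional") each rely on a single semantic invariant ("residuals" or "safe languages@@coB") being canonical; extending this to the full priority hierarchy is exactly the open problem, and your proposal does not supply the argument.
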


%
%
%

\subsection{Algorithms for \texorpdfstring{$\omega$}{omega}-regular games}\label{subsec-concl:algorithms}
A major algorithmic problem is the resolution of games on graphs, that is, deciding whether "Eve" has a "winning strategy" in a given game.
A common approach to solving games with "$\oo$-regular" objectives is to reduce them to parity games: Build the product between the game graph and a "parity automaton" recognising the objective, and then apply a parity game solver to it. 
Using state-of-the-art quasipolynomial-time parity game solvers~\cite{CJKLS22}, this leads to a complexity of roughly $(|\G||\A|)^{\log d}$, where $d$ is the number of priorities used by $\A$.
As discussed in the previous section, if $\A$ recognises a "positional" language, we have provided a polynomial-time algorithm reducing its size (to possibly a minimal one).

Moreover, if the objective $W=\Lang{\A}$ is "positional", an alternative method is to apply a value iteration algorithm directly on the game. For this, we need a $(|G|,W)$-universal graph $\U$ (which exists by~\cite[Theorem~3.1]{Ohlmann23Univ}), which leads to an algorithm working in time $\O(|G|\|\U|)$~\cite{CFGO22Universal}.
Moreover, one of the appealing aspects of this approach is the possibility to establish lower bounds for the size of universal graphs, which has been succesfully used in the past for setting the limits of various families of algorithms~\cite{CFJLP19,CFGO22Universal}.

In this work, we have shown how to obtain a "universal graph" from a "signature automaton". A possible research direction would be to optimize this construction and determine the size of universal graphs for positional $\oo$-regular objectives.
In this way, we could expect to:
\begin{itemize}
	\item decrease the complexity of solving positional $\oo$-regular games to about $|G|^{(1+\log d)}|\A|$, 
	\item obtain matching lower bounds for $(n, \Lang{\A})$-universal graphs.
\end{itemize}


   \printbibliography

@article{AK22MinimizingGFG,
	author       = {{Abu Radi}, Bader and
	Kupferman, Orna},
	title        = {Minimization and Canonization of {GFG} Transition-Based Automata},
	journal      = {{Log. Methods Comput. Sci.}},
	volume       = {18},
	number       = {3},
	year         = {2022},
	url          = {https://doi.org/10.46298/lmcs-18(3:16)2022},
	doi          = {10.46298/lmcs-18(3:16)2022}
}

@article{Arnold85Congruence,
	author       = {Andr{\'{e}} Arnold},
	title        = {A Syntactic Congruence for Rational omega-Language},
	journal      = {{Theor. Comput. Sci.}},
	volume       = {39},
	pages        = {333--335},
	year         = {1985},
	url          = {https://doi.org/10.1016/0304-3975(85)90148-3},
	doi          = {10.1016/0304-3975(85)90148-3}
}

@article{BFMM11HalfPos,
	author    = {Alessandro Bianco and
	Marco Faella and
	Fabio Mogavero and
	Aniello Murano},
	title     = {Exploring the boundary of half-positionality},
	journal   = {Ann. Math. Artif. Intell.},
	volume    = {62},
	number    = {1-2},
	pages     = {55--77},
	year      = {2011},
	doi       = {10.1007/s10472-011-9250-1}
}

@Incollection{BloemCJ2018Handbook,
	author="Bloem, Roderick
	and Chatterjee, Krishnendu
	and Jobstmann, Barbara",
	editor="Clarke, Edmund M.
	and Henzinger, Thomas A.
	and Veith, Helmut
	and Bloem, Roderick",
	title="Graph Games and Reactive Synthesis",
	bookTitle="{Handbook of Model Checking}",
	year="2018",
	publisher="Springer International Publishing",
	pages="921--962",
	isbn="978-3-319-10575-8",
	doi="10.1007/978-3-319-10575-8_27",
	url="https://doi.org/10.1007/978-3-319-10575-8_27"
}

@article{BohnLoding23DetParityFromExamples,
  author       = {Le{\'{o}}n Bohn and
                  Christof L{\"{o}}ding},
  title        = {Constructing Deterministic Parity Automata from Positive and Negative
                  Examples},
  journal      = {TheoretiCS},
  volume       = {3},
  year         = {2024},
  url          = {https://doi.org/10.46298/theoretics.24.17},
  doi          = {10.46298/THEORETICS.24.17},
  timestamp    = {Mon, 09 Sep 2024 17:13:04 +0200},
  biburl       = {https://dblp.org/rec/journals/theoretics/BohnL24.bib},
  bibsource    = {dblp computer science bibliography, https://dblp.org}
}

@InProceedings{Boker2013NondetUnknownFuture,
	author="Boker, Udi
	and Kuperberg, Denis
	and Kupferman, Orna
	and Skrzypczak, Micha{\l}",
	title="Nondeterminism in the Presence of a Diverse or Unknown Future",
	booktitle={International Colloquium on Automata, Languages, and Programming ({ICALP} 2023)},
	year="2013",
	pages     = {89--100},
	doi       = {10.1007/978-3-642-39212-2_11}
}

@article{BL23SurveyHD,
	author    = {Udi Boker and
	Karoliina Lehtinen},
	title     = {When a Little Nondeterminism Goes a Long Way: An Introduction to History-Determinism},
	journal   = {{ACM} {SIGLOG} News},
	volume    = {10},
	number    = {1},
	pages     = {24--51},
	year      = {2023},
	doi       = {10.1145/3584676.3584682}
}

@article{BCRV22HalfPosBuchi,
  author       = {Patricia Bouyer and
                  Antonio Casares and
                  Mickael Randour and
                  Pierre Vandenhove},
  title        = {Half-Positional Objectives Recognized by Deterministic B{\"{u}}chi
                  Automata},
  journal      = {Log. Methods Comput. Sci.},
  volume       = {20},
  number       = {3},
  year         = {2024},
  url          = {https://doi.org/10.46298/lmcs-20(3:19)2024},
  doi          = {10.46298/LMCS-20(3:19)2024},
  timestamp    = {Mon, 14 Oct 2024 17:07:54 +0200},
  biburl       = {https://dblp.org/rec/journals/lmcs/BouyerCRV24.bib},
  bibsource    = {dblp computer science bibliography, https://dblp.org}
}

@inproceedings{BFLMS08Energy,
	author       = {Patricia Bouyer and
	Ulrich Fahrenberg and
	Kim Guldstrand Larsen and
	Nicolas Markey and
	Jir{\'{\i}} Srba},
	title        = {Infinite Runs in Weighted Timed Automata with Energy Constraints},
	booktitle = {International Conference on Formal Modeling and Analysis of Timed Systems ({FORMATS} 2008)},
	series       = {Lecture Notes in Computer Science},
	volume       = {5215},
	pages        = {33--47},
	year         = {2008},
	url          = {https://doi.org/10.1007/978-3-540-85778-5_4},
	doi          = {10.1007/978-3-540-85778-5_4}
}

@inproceedings{BFRV23Regular,
	author       = {Patricia Bouyer and
	Nathana{\"{e}}l Fijalkow and
	Mickael Randour and
	Pierre Vandenhove},
	title        = {How to Play Optimally for Regular Objectives?},
	booktitle = {International Colloquium on Automata, Languages, and Programming ({ICALP} 2023)},
	series       = {{LIPIcs}},
	volume       = {261},
	pages        = {118:1--118:18},
	year         = {2023},
	url          = {https://doi.org/10.4230/LIPIcs.ICALP.2023.118},
	doi          = {10.4230/LIPIcs.ICALP.2023.118}
}

@article{BRV22OmegaRegMemory,
	author       = {Patricia Bouyer and
	Mickael Randour and
	Pierre Vandenhove},
	title        = {Characterizing Omega-Regularity through Finite-Memory Determinacy
	of Games on Infinite Graphs},
	journal      = {TheoretiCS},
	volume       = {2},
	year         = {2023},
	url          = {https://doi.org/10.46298/theoretics.23.1},
	doi          = {10.46298/theoretics.23.1}
}

@inproceedings{BRORV20FiniteMemory,
	author    = {Patricia Bouyer and
	St{\'{e}}phane Le Roux and
	Youssouf Oualhadj and
	Mickael Randour and
	Pierre Vandenhove},
	title     = {Games Where You Can Play Optimally with Arena-Independent Finite Memory},
	booktitle = {International Conference on Concurrency Theory ({CONCUR} 2020)},
	volume    = {171},
	pages     = {24:1--24:22},
	year      = {2020},
	doi       = {10.4230/LIPIcs.CONCUR.2020.24}
}

@article{Buchi1962decision,
	title={On a Decision Method in Restricted Second Order Arithmetic},
	author={Büchi, J. Richard},
	journal={{Proc. Internat. Congr. on Logic, Methodology and Philosophy of Science}},
	year={1962},
	pages={1--11}
}

@article{BL69Strategies,
	ISSN = {00029947},
	URL = {http://www.jstor.org/stable/1994916},
	author = {J. Richard B{\"{u}}chi and Lawrence H. Landweber},
	journal = {{Transactions of the American Mathematical Society}},
	pages = {295--311},
	publisher = {American Mathematical Society},
	title = {Solving Sequential Conditions by Finite-State Strategies},
	volume = {138},
	year = {1969}
}

@article{CJKLS22,
	author = {Calude, Cristian S. and Jain, Sanjay and Khoussainov, Bakhadyr and Li, Wei and Stephan, Frank},
	title = {Deciding Parity Games in Quasi-polynomial Time},
	journal = {{SIAM} {J}ournal on {C}omputing},
	volume = {51},
	number = {2},
	pages = {STOC17-152-STOC17-188},
	year = {2022},
	doi = {10.1137/17M1145288},
	URL = {	https://doi.org/10.1137/17M1145288}
}

@article{CartonMaceiras99RabinIndex,
	author    = {Olivier Carton and
	Ram{\'{o}}n Maceiras},
	title     = {Computing the {R}abin Index of a Parity Automaton},
	journal   = {{RAIRO} - Theoretical Informatics and Applications - Informatique Th\'eorique et Applications},
	pages     = {495--506},
	year      = {1999},
	doi       = {10.1051/ita:1999129}
}

@inproceedings{Casares2021Chromatic,
	author    = {Antonio Casares},
	title     = {On the Minimisation of Transition-Based {R}abin Automata and the Chromatic
	Memory Requirements of {M}uller Conditions},
	booktitle = {Computer Science Logic ({CSL} 2022)},
	volume    = {216},
	pages     = {12:1--12:17},
	year      = {2022},
	doi       = {10.4230/LIPIcs.CSL.2022.12},
	keywords = {personal}
}

@phdthesis{Casares23Thesis,
	author       = {Antonio Casares},
	title        = {Structural properties of automata over infinite words and memory for games (Propri{\'{e}}t{\'{e}}s structurelles des automates sur les mots infinis et m{\'{e}}moire pour les jeux)},
	school       = {Universit{\'{e}} de Bordeaux, France},
	year         = {2023},
	URL = {https://theses.hal.science/tel-04314678},
	NUMBER = {2023BORD0328},
	TYPE = {PhD Thesis},
	PDF = {https://theses.hal.science/tel-04314678/file/CASARES_ANTONIO_2023.pdf},
	HAL_ID = {tel-04314678}
}

@article{CCFL24FromMtoP,
	TITLE = {{From Muller to Parity and Rabin Automata: Optimal Transformations
	Preserving (History) Determinism}},
	AUTHOR = {Antonio Casares and Thomas Colcombet and Nathanaël Fijalkow and Karoliina Lehtinen},
	DOI = {10.46298/theoretics.24.12},
	JOURNAL = {{TheoretiCS}},
	VOLUME = {{Volume 3}},
	YEAR = {2024},
	KEYWORDS = {Computer Science - Formal Languages and Automata Theory ; Computer Science - Logic in Computer Science ; 68Q45 ; F.4.3},
}

@article{CO25LMCS,
	author       = {Antonio Casares and
	Pierre Ohlmann},
	title        = {Characterising memory in infinite games},
	doi        = {10.46298/lmcs-21(1:28)2025},
    journal    = {Logical Methods in Computer Science},
    issn       = {1860-5974},
    volume     = {Volume 21, Issue 1},
    eid        = 28,
    year       = {2025},
    keywords   = {Computer Science - Formal Languages and Automata Theory, Computer Science - Logic in Computer Science},
	}

@inproceedings{CasaresO24,
author       = {Antonio Casares and
				Pierre Ohlmann},
editor       = {Pawel Sobocinski and
				Ugo Dal Lago and
				Javier Esparza},
title        = {Positional {\(\omega\)}-regular languages},
booktitle    = {Proceedings of the 39th Annual {ACM/IEEE} Symposium on Logic in Computer
				Science ({LICS} 2024)},
pages        = {21:1--21:14},
publisher    = {{ACM}},
year         = {2024},
doi          = {10.1145/3661814.3662087},
}

@inproceedings{CO25memory,
  author       = {Antonio Casares and
                  Pierre Ohlmann},
  title        = {The Memory of {\(\omega\)}-Regular and {BC}({$\Sigma_0^2$})
                  Objectives},
  booktitle = {International Colloquium on Automata, Languages, and Programming ({ICALP} 2025)},
  series       = {LIPIcs},
  volume       = {334},
  pages        = {149:1--149:18},
  year         = {2025},
  doi          = {10.4230/LIPICS.ICALP.2025.149},
}

@article{ClarkeDK93Unified,
	author    = {Edmund M. Clarke and
	I. A. Draghicescu and
	Robert P. Kurshan},
	title     = {A Unified Approch for Showing Language Inclusion and Equivalence Between
	Various Types of omega-Automata},
	journal   = {{Inf. Process. Lett.}},
	volume    = {46},
	number    = {6},
	pages     = {301--308},
	year      = {1993},
	doi       = {10.1016/0020-0190(93)90069-L}
}

@article{CFGO22Universal,
	author       = {Thomas Colcombet and
	Nathana{\"{e}}l Fijalkow and
	Pawel Gawrychowski and
	Pierre Ohlmann},
	title        = {The Theory of Universal Graphs for Infinite Duration Games},
	journal      = {{Log. Methods Comput. Sci.}},
	volume       = {18},
	number       = {3},
	year         = {2022},
	url          = {https://doi.org/10.46298/lmcs-18(3:29)2022},
	doi          = {10.46298/lmcs-18(3:29)2022}
}

@inproceedings{ColcombetFH14PlayingSafe,
	author    = {Thomas Colcombet and
	Nathana{\"{e}}l Fijalkow and
	Florian Horn},
	title     = {Playing Safe},
	booktitle = {Foundations of Software Technology and Theoretical Computer Science ({FSTTCS} 2014)},
	volume    = {29},
	pages     = {379--390},
	year      = {2014},
	doi       = {10.4230/LIPIcs.FSTTCS.2014.379}
}

@article{CFH22TenYears,
  author       = {Thomas Colcombet and
                  Nathana{\"{e}}l Fijalkow and
                  Florian Horn},
  title        = {Playing Safe, Ten Years Later},
  journal      = {Log. Methods Comput. Sci.},
  volume       = {20},
  number       = {1},
  year         = {2024},
  url          = {https://doi.org/10.46298/lmcs-20(1:10)2024},
  doi          = {10.46298/LMCS-20(1:10)2024},
  timestamp    = {Sun, 01 Feb 2026 13:41:45 +0100},
  biburl       = {https://dblp.org/rec/journals/lmcs/ColcombetFH24.bib},
  bibsource    = {dblp computer science bibliography, https://dblp.org}
}

@article{CN06,
	author    = {Thomas Colcombet and
	Damian Niwi{\'{n}}ski},
	title     = {On the positional determinacy of edge-labeled games},
	journal   = {{Theor. Comput. Sci.}},
	volume    = {352},
	number    = {1-3},
	pages     = {190--196},
	year      = {2006},
	url       = {https://doi.org/10.1016/j.tcs.2005.10.046},
	doi       = {10.1016/j.tcs.2005.10.046}
}

@inproceedings{CFJLP19,
  author       = {Wojciech Czerwinski and
                  Laure Daviaud and
                  Nathana{\"{e}}l Fijalkow and
                  Marcin Jurdzinski and
                  Ranko Lazic and
                  Pawel Parys},
  title        = {Universal trees grow inside separating automata: Quasi-polynomial
                  lower bounds for parity games},
  booktitle = {ACM-SIAM Symposium on Discrete Algorithms ({SODA} 2019)},
  pages        = {2333--2349},
  publisher    = {{SIAM}},
  year         = {2019},
  doi          = {10.1137/1.9781611975482.142}
}

@inproceedings{EhlersSchewe22NaturalColors,
	author    = {R{\"{u}}diger Ehlers and
	Sven Schewe},
	title     = {Natural Colors of Infinite Words},
	booktitle = {Foundations of Software Technology and Theoretical Computer Science ({FSTTCS} 2022)},
	volume    = {250},
	pages     = {36:1--36:17},
	year      = {2022},
	url       = {https://doi.org/10.4230/LIPIcs.FSTTCS.2022.36},
	doi       = {10.4230/LIPIcs.FSTTCS.2022.36}
}

@inproceedings{EmersonJutla91Determinacy,
	author    = {E. Allen Emerson and
	Charanjit S. Jutla},
	title     = {Tree Automata, Mu-Calculus and Determinacy (Extended Abstract)},
	booktitle = {IEEE Symposium on Foundations of Computer Science ({FOCS} 1991)},
	pages     = {368--377},
	year      = {1991},
	url       = {https://doi.org/10.1109/SFCS.1991.185392},
	doi       = {10.1109/SFCS.1991.185392}
}

@article{GK22Submixing,
  author       = {Hugo Gimbert and
                  Edon Kelmendi},
  title        = {Two-Player Perfect-Information Shift-Invariant Submixing Stochastic
                  Games Are Half-Positional},
  journal      = {CoRR},
  volume       = {abs/1401.6575},
  year         = {2014},
  url          = {http://arxiv.org/abs/1401.6575},
  note         = {version 2}
}

@inproceedings{GimbertZielonka2005Memory,
	author    = {Hugo Gimbert and
	Wieslaw Zielonka},
	title     = {Games Where You Can Play Optimally Without Any Memory},
	booktitle = {International Conference on Concurrency Theory ({CONCUR} 2005)},
	volume    = {3653},
	pages     = {428--442},
	year      = {2005},
	doi       = {10.1007/11539452_33}
}

@article{Klarlund94Determinacy,
	title = {Progress measures, immediate determinacy, and a subset construction for tree automata},
	journal = {Annals of Pure and Applied Logic},
	volume = {69},
	number = {2},
	pages = {243-268},
	year = {1994},
	issn = {0168-0072},
	doi = {10.1016/0168-0072(94)90086-8},
	author = {Nils Klarlund}
}

@phdthesis{Kop08Thesis,
	author       = {Eryk Kopczy{\'n}ski},
	title        = {Half-positional Determinacy of Infinite Games},
	school       = {University of Warsaw},
	year         = 2008
}

@inproceedings{Kopczynski2006Half,
	title={Half-positional determinacy of infinite games},
	author={Kopczy{\'n}ski, Eryk},
	booktitle = {International Colloquium on Automata, Languages, and Programming ({ICALP} 2006)},
	pages={336--347},
	doi = {10.1007/11787006_29},
	year={2006}
}

@inproceedings{Kop07OmegaReg,
	author       = {Eryk Kopczy{\'n}ski},
	title        = {Omega-Regular Half-Positional Winning Conditions},
	booktitle = {Computer Science Logic ({CSL} 2007)},
	series       = {Lecture Notes in Computer Science},
	volume       = {4646},
	pages        = {41--53},
	year         = {2007},
	url          = {https://doi.org/10.1007/978-3-540-74915-8_7},
	doi          = {10.1007/978-3-540-74915-8_7}
}

@inproceedings{Kozachinskiy24EnergyGroups,
	author       = {Alexander Kozachinskiy},
	title        = {Energy Games over Totally Ordered Groups},
	booktitle = {Computer Science Logic ({CSL} 2024)},
	volume       = {288},
	pages        = {34:1--34:12},
	year         = {2024},
	doi          = {10.4230/LIPICS.CSL.2024.34}
}

@inproceedings{Kozachinskiy24InfSeparation,
  author       = {Alexander Kozachinskiy},
  title        = {Infinite Separation Between General and Chromatic Memory},
  booktitle = {Latin American Symposium on Theoretical Informatics ({LATIN} 2024)},
  series       = {Lecture Notes in Computer Science},
  volume       = {14579},
  pages        = {114--128},
  publisher    = {Springer},
  year         = {2024},
  doi          = {10.1007/978-3-031-55601-2\_8}
}

@inproceedings{Kozachinskiy22Mildly,
	author       = {Alexander Kozachinskiy},
	title        = {One-To-Two-Player Lifting for Mildly Growing Memory},
	booktitle = {Symposium on Theoretical Aspects of Computer Science ({STACS} 2022)},
	series       = {LIPIcs},
	volume       = {219},
	pages        = {43:1--43:21},
	year         = {2022},
	url          = {https://doi.org/10.4230/LIPIcs.STACS.2022.43},
	doi          = {10.4230/LIPIcs.STACS.2022.43}
}

@article{Kozachinskiy22Chromatic,
	author    = {Alexander Kozachinskiy},
	title     = {State Complexity of Chromatic Memory in Infinite-Duration Games},
	journal   = {{CoRR}},
	volume    = {abs/2201.09297},
	year      = {2022},
	eprinttype = {arXiv},
	eprint    = {2201.09297}
}

@inproceedings{KS15DeterminisationGFG,
	author = {Kuperberg, Denis and Skrzypczak, Michał},
	year = {2015},
	title = {On Determinisation of Good-for-Games Automata},
	booktitle = {International Colloquium on Automata, Languages, and Programming ({ICALP} 2015)},
	pages = {299-310},
	doi = {10.1007/978-3-662-47666-6_24}
}

@article{Kupferman22UsingPast,
	author       = {Orna Kupferman},
	title        = {Using the past for resolving the future},
	journal      = {{Frontiers Comput. Sci.}},
	volume       = {4},
	year         = {2022},
	doi          = {10.3389/fcomp.2022.1114625}
}

@article{MS97Syntactic,
	title = {On syntactic congruences for {\(\omega\)}-languages},
	journal = {{Theoretical Computer Science}},
	volume = {183},
	number = {1},
	pages = {93-112},
	year = {1997},
	note = {Formal Language Theory},
	issn = {0304-3975},
	doi = {10.1016/S0304-3975(96)00312-X},
	author = {Oded Maler and Ludwig Staiger}
}

@article{Martin75BorelDet,
	ISSN = {0003486X},
	URL = {http://www.jstor.org/stable/1971035},
	author = {Donald A. Martin},
	journal = {{Annals of Mathematics}},
	number = {2},
	pages = {363--371},
	publisher = {Annals of Mathematics},
	title = {Borel Determinacy},
	volume = {102},
	year = {1975}
}

@article{McNaughton1966Testing,
	title={Testing and Generating Infinite Sequences by a Finite Automaton},
	author={Robert McNaughton},
	journal={{Information and control}},
	volume    = {9},
	number    = {5},
	pages     = {521--530},
	year={1966},
	doi       = {10.1016/S0019-9958(66)80013-X}
}

@inproceedings{Mostowski1984RegularEF,
	title={Regular expressions for infinite trees and a standard form of automata},
	author={Mostowski, Andrzej W.},
	booktitle = {Symposium on Computability Theory ({SCT} 1984)},
	pages="157--168",
	year={1984},
	doi = {10.1007/3-540-16066-3_15}
}

@article{Ohlmann23Univ,
	author       = {Pierre Ohlmann},
	title        = {Characterizing Positionality in Games of Infinite Duration over Infinite
	Graphs},
	journal      = {{TheoretiCS}},
	volume       = {2},
	year         = {2023},
	url          = {https://doi.org/10.46298/theoretics.23.3},
	doi          = {10.46298/theoretics.23.3}
}

@phdthesis{Ohlmann21PhD,
	author       = {Pierre Ohlmann},
	title        = {Monotonic graphs for parity and mean-payoff games. (Graphes monotones
	pour jeux de parit{\'{e}} et {\`{a}} paiement moyen)},
	school       = {Universit{\'{e}} Paris Cit{\'{e}}, France},
	year         = {2021},
	url          = {https://tel.archives-ouvertes.fr/tel-03771185}
}

@inproceedings{OS24Sigma2,
	author       = {Pierre Ohlmann and
	Michal Skrzypczak},
	title        = {Positionality in {$\Sigma_2^0$} and
	a Completeness Result},
	booktitle    = {Symposium on Theoretical Aspects of Computer Science ({STACS} 2024)},
	volume       = {289},
	pages        = {54:1--54:18},
	year         = {2024},
	doi          = {10.4230/LIPICS.STACS.2024.54}
}

@inproceedings{AK23SemDet,
  author       = {Bader Abu Radi and
                  Orna Kupferman},
  editor       = {Kousha Etessami and
                  Uriel Feige and
                  Gabriele Puppis},
  title        = {On Semantically-Deterministic Automata},
  booktitle    = {50th International Colloquium on Automata, Languages, and Programming
                  ({ICALP} 2023)},
  pages        = {109:1--109:20},
  year         = {2023},
  url          = {https://doi.org/10.4230/LIPIcs.ICALP.2023.109},
  doi          = {10.4230/LIPICS.ICALP.2023.109},
  timestamp    = {Fri, 21 Nov 2025 23:44:11 +0100},
  biburl       = {https://dblp.org/rec/conf/icalp/RadiK23.bib},
  bibsource    = {dblp computer science bibliography, https://dblp.org}
}

@article{Saec90Saturating,
	author       = {Bertrand Le Sa{\"{e}}c},
	title        = {Saturating right congruences},
	journal      = {{RAIRO} - Theoretical Informatics and Applications - Informatique Th\'eorique et Applications},
	volume       = {24},
	pages        = {545--559},
	year         = {1990},
	url          = {https://doi.org/10.1051/ita/1990240605451},
	doi          = {10.1051/ita/1990240605451}
}

@article{StreettEmerson1989,
	title={An Automata Theoretic Decision Procedure for the Propositional Mu-Calculus},
	author={Robert S. Streett and E. Allen Emerson},
	journal = {Information and Computation},
	year={1989},
	volume={81},
	pages={249-264},
doi = {https://doi.org/10.1016/0890-5401(89)90031-X},
}

@inproceedings{Thomas1991AutomataOI,
	title={Automata on Infinite Objects},
	author={Wolfgang Thomas},
	booktitle={{Handbook of Theoretical Computer Science, Volume B: Formal Models and Sematics}},
	year={1991},
	url={https://api.semanticscholar.org/CorpusID:33060294}
}

@InProceedings{Thomas95SynthesisGames,
	author="Thomas, Wolfgang",
	title="On the synthesis of strategies in infinite games",
	booktitle={Symposium on Theoretical Aspects of Computer Science ({STACS} 1995)},
	year="1995",
	pages="1--13",
	doi="10.1007/3-540-59042-0_57"
}

@phdthesis{Vandenhove23Thesis,
author       = {Pierre Vandenhove},
title        = {Strategy complexity of zero-sum games on graphs. ({C}omplexit{\'{e}}	des strat{\'{e}}gies des jeux sur graphes {\`{a}} somme nulle)},
school       = {University of Mons, Belgium},
year         = {2023},
url          = {https://tel.archives-ouvertes.fr/tel-04095220}
}

@inproceedings{Walukiewicz96,
	author       = {Igor Walukiewicz},
	title        = {Pushdown Processes: Games and Model Checking},
	booktitle = {International Conference on Computer Aided Verification ({CAV} 1996)},
	series       = {Lecture Notes in Computer Science},
	volume       = {1102},
	pages        = {62--74},
	year         = {1996},
	url          = {https://doi.org/10.1007/3-540-61474-5_58},
	doi          = {10.1007/3-540-61474-5_58}
}

@inproceedings{Zielonka05Invitation,
	author       = {Wieslaw Zielonka},
	title        = {An Invitation to Play},
	booktitle = {Mathematical Foundations of Computer Science ({MFCS} 2005)},
	volume       = {3618},
	pages        = {58--70},
	year         = {2005},
	doi          = {10.1007/11549345\_7}
}

@article{Zielonka1998infinite,
	title={Infinite games on finitely coloured graphs with applications to automata on infinite trees},
	author={Zielonka, Wies{\l}aw},
	journal={{Theoretical Computer Science}},
	volume={200},
	number={1-2},
	pages={135--183},
	year={1998},
	doi = {10.1016/S0304-3975(98)00009-7}
}
\appendix

\section{Full proofs for Section \texorpdfstring{\ref{subsec:from-HP-to-signature}}{5.2}}\label{appendix:proofs-necessity}

In this Appendix, we include full proofs for all the propositions and lemmas appearing in Section~\ref{subsec:from-HP-to-signature}.
To help formalise them, we first introduce some technical artillery that will come in handy to deal with the structure of total preorders of "signature automata".
We begin by introducing "nice transformations" of automata equipped with a "priority-faithful" relation in Section~\ref{subsec-app:nice-transforms}.
Then, in Section~\ref{subsec-app:full-proofs} we give the full details of the induction constructing a "structured signature automaton" for a "positional" language.

\subsection{Nice transformations of automata}\label{subsec-app:nice-transforms}

To recursively build a "structured signature automaton", we will apply a sequence of transformations to a given "$d$-signature automaton" $\A$, by removing states and adding or redirecting edges in such a way that relations $\eqRes{x}$ are preserved in a strong sense formalised in this section.

\begin{globalHyp*}
	We recall that automata are assumed to be "complete" and "semantically deterministic". 
\end{globalHyp*}

%
%

\paragraph*{Automata with a common subautomaton}

Let $\A$ be a "semantically deterministic" automaton with states $Q_\A$, and let $\eqRes{\A}$ be the "congruence" given by the equality of "residuals". We note that if $\B$ is an automaton with states $Q_\B\subseteq Q_\A$, relation $\eqRes{\A}$ induces an equivalence relation over $Q_\B$ (which, in general, is not a "congruence" nor coincides with the equality of "residuals" of $\B$).

\begin{lemma}[Automata preserving the structure of residuals]\label{lemma-tech:subautomaton-in-common}
	Let $\A$ be a "semantically deterministic" "parity automaton" with states $Q_\A$ and let $\B$ be a "parity automaton" with states $Q_\B\subseteq Q_\A$. Assume that $\eqRes{\A}$ is a "congruence" over $\B$ and that $\quotAut{\B}{\A} = \quotAut{\A}{\A}$. Let $\A'$ be a "subautomaton" of both $\A$ and $\B$.
	If a word $w\in \SS^\oo$ admits an "accepting run" in $\B$ that eventually remains in $\A'$, then $w$ is also "accepted by" $\A$.
\end{lemma}

\begin{proof}
	Let 
	\[q_0\re{w_0}q_1\re{w_1}q_2\re{w_2}\dots\re{w_{k-1}} q_k\re{w_k}q_{k+1}\re{w_{k+1}} \dots\]
	 be an "accepting run" over $w$ in $\B$ such that the suffix from $q_k$ is contained in $\A'$ (meaning that both the states and the transitions used are part of $\A'$). 
	 We consider the "projection@@quot" of the prefix of size $k$ of the run in the "quotient automaton" $\quotAut{\B}{\A} = \quotAut{\A}{\A}$.  By Lemma~\ref{lemma-prelim:induced-run-quotient}, there is a "run over" $w_0\dots w_{k-1}$ in $\A$, $p_0\re{w_0}p_1\re{w_1}\dots p_k$ whose "projection@@quot" over the "quotient automaton" coincides with the previous one. Therefore, $p_k\eqRes{\A} q_k$, that is, $\Lang{\initialAut{\A}{p_k}} = \Lang{\initialAut{\A}{q_k}}$. Since $w_{k+1}w_{k+2}\dots$ admits an "accepting run" from $q_k$ in $\A$, it also admits an "accepting run" from $p_k$, and $w$ is "accepted@@aut" by~$\A$.
\end{proof}

\paragraph*{Nice transformations of automata}

\AP For $x\in \NN$, we denote by $\intro*\autGeq{\A}{x}$ the "subautomaton" of $\A$ induced by the set of transitions using a priority $\geq x$.

\begin{definition}[{Nice transformation at level $x$}]\label{def:nice-transform}
	Let $\A$ be a "semantically deterministic" "parity automaton" over $\SS$ with states $Q$, let $x$ be a "priority", and let  $\sim$ be a "$[0,x-1]$-faithful congruence" over $\A$. 
	Let $\A'$ be a "parity automaton" over $\SS$ with states $Q'\subseteq Q$. We denote $\sim$ the induced relation over $Q'$.
	We say that $\A'$ is a ""$\sim$-nice transformation of $\A$ at level $x$"" if:
	\begin{itemize}
		\item $\sim$ is a "$[0,x-1]$-faithful congruence" over $\A'$ and $\autLeq{\A}{}{x-1} = \autLeq{\A'}{}{x-1}$ (see Definition~\ref{def:quotient-faithful}),	
		\item $\eqRes{\A}$ is a "congruence" over $\A'$ and $\quotAut{\A'}{\A} = \quotAut{\A}{\A}$, and
		\item $\autGeq{\A'}{x+1}$ coincides with the "subautomaton" of $\autGeq{\A}{x+1}$ induced by the states in $Q'$.
	\end{itemize}	
	 
\end{definition}

Intuitively, if $\A'$ is a "nice transformation" of $\A$ at level $x$, it means that the only relevant modifications applied to $\A$ concern "$x$-transitions@@out". The structure of the "quotient automaton@@leq" for priorities ${<}x$ is left unchanged, and so is the acceptance of runs that eventually only produce priorities ${>}x$. 

\begin{remark}
	We note that if $\sim$ is an equivalence relation that "refines" $\eqRes{\A}$, then the second item of Definition~\ref{def:nice-transform} is implied by the first one. This is in particular the case if $\sim$ is an equivalence relation $\eqSig{x}$ of a "$d$-signature automaton", for $0\leq x\leq d$.
\end{remark}

\begin{lemma}[Preservation of classes and priorities in nice transformations]\label{lemma-tech:paths-in-exact-transform}
	Let $\A$ be a "semantically deterministic" "parity automaton" equipped with  a "$[0,x-1]$-faithful congruence" $\sim$. Suppose that $\A$ is "deterministic over" ${>}x$-transitions.
	Let $\A'$ be a "$\sim$-nice transformation of $\A$ at level $x$". If $q\sim q'$ are two states of $\A'$ such that there is path $q\lrp{w:y}p$ in $\A$, then a path $q'\lrp{w:y'}p'$ in $\A'$ satisfies:
	\begin{itemize}
		\item $p\sim p'$, 
		\item if $y<x$, then $y'=y$, and
		\item if $y\geq x$, then $y'\geq x$.
	\end{itemize}
\end{lemma}


\begin{proof}
	The equivalence $p\sim p'$ follows from the fact that $\sim$ is a "congruence" in $\A'$ and $\quotAut{\A'}{\A} = \quotAut{\A}{\A}$.
	For $y\leq x-1$ or $y'\leq x-1$, the equality $y'=y$ follows from the equality of the "${\leq}(x-1)$-quotient automaton". 
	This directly implies the third item.
\end{proof}

\begin{lemma}\label{lemma-tech:faithful-congruences-determines-priority}
	Let $\A$ be a "$\ee$-completion" that admits  a "$[0,x]$-faithful congruence"~$\sim$. If a word $w\in \SS^\oo$ admits a run such that the minimal priority produced infinitely often is $y\leq x$, then the  minimal priority produced infinitely often by any "run over $w$" is $y$. In particular, if $y$ is odd, $w$ "is rejected with priority" $y$.
\end{lemma}

\begin{proof}
	Let $q_0\re{w_1:y_1}q_1\re{w_2:y_2}\dots $ and $q_0'=q_0\re{w_1:y_1'}q_1'\re{w_2:y_2'}\dots $ be two runs over the same word in $\A$ ($q_0=q_0'$ being the "initial state" of $\A$). Since $\sim$ is a congruence, we obtain by induction that $q_i\sim q_i'$ for every $i$. Moreover, as, for $y\leq x$,  $y$-transitions "act uniformly" over $\sim$-classes, each time that $y_i\leq x$, we have that $y_i'=x_i$. 
\end{proof}


\begin{lemma}[Nice transformations preserve acceptance of most words]\label{lemma-tech:nice-transformations}
	Let $\A$ be a "semantically deterministic" "parity automaton" equipped with  a "$[0,x-1]$-faithful congruence" $\sim$.
	Let $\A'$ be a "$\sim$-nice transformation of $\A$ at level $x$". We have:
	\begin{itemize}
		\item A word $w\in \SS^\oo$ can be "accepted with an even priority" $y<x$ in $\A'$ if and only if $w$ can be "accepted with priority" $y$ in $\A$.
		\item A word $w\in \SS^\oo$ is "rejected with an odd priority" $y<x$ in $\A'$ if and only if $w$ is "rejected with priority" $y$ in $\A$.
		\item If a word $w\in \SS^\oo$ can be "accepted with an even priority" $y>x$ in $\A'$, then it is "accepted by@@aut" $\A$. 
	\end{itemize}	
If moreover $\A$ is "homogeneous" and "deterministic over" transitions using priorities $> x$, we have:
\begin{itemize}
	\item If there is a "rejecting@@run" run over $w\in \SS^\oo$  in $\A'$ producing as minimal priority $y>x$, then $w$ is "rejected@@aut" by $\A$. 
\end{itemize}
\end{lemma}

\begin{proof}
	Let $w\in \SS^\oo$ be a word "accepted with a priority" $y<x$ in $\A$ (resp. $\A'$). Then, by Lemma~\ref{lemma-sig:language-quotient-aut}, $w$ is "accepted by" the "quotient automaton@@leq" $\autLeq{\A}{}{x-1} = \autLeq{\A'}{}{x-1}$. Again by  Lemma~\ref{lemma-sig:language-quotient-aut}, $w$ is "accepted by" $\A'$ (resp. $\A$). The second item is obtained using the same argument, combined with Lemma~\ref{lemma-tech:faithful-congruences-determines-priority}.
	
	The third item is directly implied by Lemma~\ref{lemma-tech:subautomaton-in-common}, as $\autGeq{\A'}{x+1}$ is a "subautomaton" of $\autGeq{\A}{x+1}$.
	
	For the last item, let $q_0\lrp{w_0}q'\lrp{w'}$ be a "rejecting@@run" "run over" $w$ in $\A'$ such that the suffix $q'\lrp{w'}$ does not produce any priority $\leq x$ (that is, it is contained in $\autGeq{\A'}{x+1}$). By determinism and "homogeneity", this is the only "run over" $w'$ from $q'$ in $\A$, and therefore $w'\notin\Lang{\initialAut{\A}{q'}}$. We conclude using the equality $\quotAut{\A'}{\A} = \quotAut{\A}{\A}$.
\end{proof}


\subsection{From positionality to structured signature automata: Full proofs for Section \texorpdfstring{\ref{subsec:from-HP-to-signature}}{5.2}}\label{subsec-app:full-proofs}

We provide all the technical details for the proofs of the propositions appearing in Section~\ref{subsec:from-HP-to-signature}. 
We first state some useful simple lemmas.

\begin{lemma}\label{lemma-app:positional-implies-pos-quotient}
	Let $W\subseteq \SS^\oo$ be "positional" over finite, "$\ee$-free" "Eve-games". Then, for every word $u\in\SS^*$,  objective $\lquotW{u}$ is "positional" over finite, "$\ee$-free" "Eve-games".
\end{lemma}
\begin{proof}
	Any game with vertices $V$ witnessing non-positionality of $\lquotW{u}$ can be turned into a game witnessing non-positionality of $W$ by adding, for every $v\in V$, a fresh vertex $v_u$ and a path $v_u\lrp{u}v$.
\end{proof}

\begin{lemma}\label{lemma-app:simple-normal-form}
	To check if a "parity automaton" is in "normal form", it suffices to verify that, if $q$ and $p$ are in a same "positive@@SCC" (resp. "negative@@SCC") "SCC" and there is a transition $q\re{x} p$ producing "priority" $x>0$ (resp. $x>1$), then there are two paths  $p\lrpE q$ producing as minimal "priority" $x$ and $x-1$, respectively.
\end{lemma}
\begin{proof}
	We do the proof for the case of a "positive SCC".
	Assume that there is a path $q=q_0 \re{x_1} q_1 \re{x_2} \dots \re{x_n} q_n = p$ with $x = \min x_i >0$. By hypothesis, for $1\leq i\leq n$, there are paths $q_i \lrp{x_i-1} q_{i-1}$. Concatenating them, we obtain a path $p\lrp{x-1}q$. Iterating this process, we can obtain loops $q\lrpE p \lrpE q$ producing as minimal "priority" any number in $[0,x-1]$. To obtain a path $p \lrp{x}q$, we just use the existence of paths $q_i\lrp{x_i} q_{i-1}$.
\end{proof}

\subsubsection{Base case: Total order of residual classes}

\begin{lemma}[Total order of residual classes]\label{lemma-app:total-order-residuals}
	Let $W\subseteq \SS^\oo$ be an "$\oo$-regular" objective that is "positional" over finite, "$\ee$-free" "Eve-games". Then, $\Res{W}$ is totally ordered by inclusion.
\end{lemma}
\begin{proof}
	The proof is almost identical to that of Lemma~\ref{lemma-warm:total-order-residuals-nec}.
	We show the contrapositive.
	Assume that $W$ has two incomparable "residuals", $\lquotW{u_1}$ and $\lquotW{u_2}$. 
	We consider first the case $u_1\neq \ee$ and $u_2\neq \ee$.
	Take $w_1\in \lquot{u_1}{W}\setminus\lquot{u_2}{W}$ and $w_2\in \lquot{u_2}{W}\setminus\lquot{u_1}{W}$. Thanks to $\oo$-regularity, we can take these words of the form $w_i = u_i'(w_i')^\oo$, with $u_i',w_i'\in \SS^+$, for $i=1,2$.
	We have
	
	\begin{tabular}{l l}
		\centering
		$u_1w_1\in W$, & $u_1w_2\notin W$,\\ 
		$u_2w_1\notin W$, & $u_2w_2\in W$. 
	\end{tabular} 
	
	\vspace{2mm}
	Consider the "$\ee$-free", finite, "Eve-game" $\G$ represented in Figure~\ref{fig-appendix:game-residuals}.
	"Eve" "wins" $\G$ from $v_1$ and $v_2$: if a "play" starts in $v_i$, for $i=1,2$, she just has to take the path labelled $u_i'(w_i')^\oo$ from $v_{\mathsf{choice}}$. However, she cannot win from both $v_1$ and $v_2$ using a "positional strategy". Indeed, such "positional strategy" would choose one path $v_{\mathsf{choice}} \lrp{u_i(w_i')}$, and the "play" induced when starting from $v_{1-i}$ would be "losing@@play".
	
	\begin{figure}
		\centering
		\includegraphics[scale=1.5]{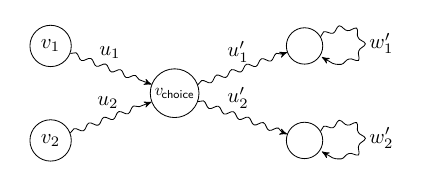}
		\caption{A game $\G$ in which "Eve" cannot "play optimally using positional strategies" if $\Res{W}$ is not totally ordered, as in the proof of Lemma~\ref{lemma-app:total-order-residuals}. }
		\label{fig-appendix:game-residuals}
	\end{figure}

	Finally, we take care of the case in which $\resClass{u_1} = \{\ee\}$ (symmetric for $u_2$). In that case, we cannot take $u_1\neq \ee$. We remark that, since $\resClass{u_1} \neq \resClass{u_2}$, we can take $u_2\neq \ee$.
	We consider the game from Figure~\ref{fig-appendix:game-residuals} in which we simply remove vertex $v_1$. This game is "$\ee$-free", and "Eve" can win from both $v_2$ and $v_{\mathsf{choice}}$, but not "positionally@@win".
\end{proof}

\begin{globalHyp*}
	In all the rest of the subsection, we assume that $x\geq 2$.
\end{globalHyp*}

\subsubsection{Safe centrality and relation \texorpdfstring{$\eqSig{x-1}$}{x-1}. Proof of Lemma \texorpdfstring{\ref{lemma-nec:safe-centralisation}}{5.16}}

In this paragraph we give a proof of Lemma~\ref{lemma-nec:safe-centralisation}. We assume that $x\geq 2$ is an even priority and $\A$ is a "deterministic" "$(x-2)$-structured signature automaton" with initial state $q_\init$.

\lemnecSafeCentralisation* \label{lemma-nec:safe-centralisation-pageApp}

\subparagraph{Hypothesis.} During the proof, we will lose the "determinism" of $\A$. However, in all the subsection we will maintain the three first required properties. In the statements of all lemmas, $\A$ will stand for an "$(x-2)$-structured signature" automaton that is:
\begin{itemize}
	\item "deterministic over" transitions with "priority" different from $x-1$,
	\item "homogeneous", and
	\item "history-deterministic".
\end{itemize}


\subparagraph{Saturation.}
\AP We say that an automaton $\A'$ is ""$(x-1)$-saturated"" if for every state $q$ and letter $a\in \SS$, if a transition $q\re{a:x-1} p$ exists in $\A'$, then $q\re{a:x-1} p'$ appears in $\A'$ for all $p'\eqSig{x-2} p$.\footnotemark{}
\AP The "$(x-1)$-saturation" of $\A$ is the automaton obtained by adding all those transitions.

\footnotetext{We note that this definition slightly differs from the definition of "$1$-saturation" used in the warm-up (Section~\ref{subsec-warm:coBuchi}). In particular, the definition of the warm-up does not preserve "homogeneity". We allow ourselves these small disagreements of definitions for the sake of clarity in the presentation in each respective subsection.}

\begin{remark}\label{rmk-tech:saturation-homogeneous}
	The "$(x-1)$-saturation" of $\A$ is "homogeneous" and "deterministic over" transitions with "priority" different from $x-1$.
\end{remark}

\begin{lemma}\label{lemma-tech:saturation-normal-form}
	The "$(x-1)$-saturation" of $\A$ is in "normal form".
\end{lemma}
\begin{proof}
	We use the characterisation of "normal form" given in Lemma~\ref{lemma-app:simple-normal-form}. Let $\A'$ be the "$(x-1)$-saturation" of $\A$. 
	The property of Lemma~\ref{lemma-app:simple-normal-form} is satisfied for transitions already appearing in $\A$, as $\A$ is assumed to be in "normal form".
	Let  $q\re{a:x-1}p$ be a transition added by the saturation process, and let $q\re{a:x-1} p'$ be a transition in $\A$ with $p\eqSig{x-2} p'$. By Item~\ref{item-struct:even-classes-connected} of the definition of "structured signature automaton", there is a path $p\lrp{u: >x-2} p'$ in $\A$. By "normality" of $\A$, there are also paths $p'\lrp{u_1:x-1} q$ and $p'\lrp{u_2:x-2} q$. We obtain the two desired paths in $\A'$:
	\[p \lrp{u:>x-2} p' \lrp{u_1:x-1}q \;,  \quad \tand \quad  p \lrp{u:>x-2} p' \lrp{u_2:x-2}q.\qedhere\]
\end{proof}

The following lemma states that "$(x-1)$-saturation" is a "$\eqSig{x-2}$-nice transformation at level $x-1$", so Lemmas~\ref{lemma-tech:nice-transformations} and~\ref{lemma-tech:paths-in-exact-transform} can be applied. We recall that $\eqSig{x-2}$ "refines" $\eqRes{\A}$, so congruence of $\eqRes{\A}$ is implied by that of $\eqSig{x-2}$.

\begin{lemma}\label{lemma-app:x-1-saturation-same-quotient}
	The  "$(x-1)$-saturation" of $\A$  is a "$\eqRes{x-2}$-nice transformation of $\A$ at level $x-1$".
\end{lemma}
\begin{proof}
	Let $\A'$ be the "$(x-1)$-saturation" of $\A$.
	It is immediate that $\autGeq{\A}{x} = \autGeq{\A'}{x}$. Moreover, the restriction of $\A$ and $\A'$ to transitions using priorities $\leq x-2$ coincides, so for each $0\leq y\leq x-2$, relation $\eqRes{x-2}$ is a "congruence for" "$y$-transitions" in $\A'$ (and therefore these transitions "act uniformly" by "determinism"). The "congruence for" transitions using priority ${>}(x-2)$ is preserved as we have only added $(x-1)$-transitions that go to the same $\eqSig{x-2}$-class.
	As $\eqSig{x-2}$ "refines" $\eqRes{\A}$, the latter relation is also a "congruence" in $\A'$ and $\quotAut{\A}{\A} = \quotAut{\A'}{\A}$.
\end{proof}

\begin{lemma}\label{lemma-app:x-1-saturation}
  The  "$(x-1)$-saturation" of $\A$ "recognises" $\Lang{\A}$. Moreover, it is "history-deterministic", "homogeneous" and "deterministic over transitions" using "priorities" different from $x-1$.
\end{lemma}
\begin{proof}
	Let $\A'$ be the  "$(x-1)$-saturation" of $\A$. We have already noted that it is "homogeneous" and "deterministic over transitions" using "priority" different from $x-1$ (Remark~\ref{rmk-tech:saturation-homogeneous}).
	If $w\in \SS^\oo$ is "accepted by" $\A'$ (resp. by $\A$), it is either "accepted with" an even priority $y<x-1$ or $y>x-1$. In the first case, since  $\A'$ is a "$\eqRes{x-2}$-nice transformation at level $x-1$", Lemma~\ref{lemma-tech:nice-transformations} allows us to conclude. In the second case, it suffices to apply Lemma~\ref{lemma-tech:subautomaton-in-common}.

	"History-determinism" of $\A'$ is clear: one can use a "resolver@@aut" for $\A$.
\end{proof}

\subparagraph{Redundant safe components.}
From now on, we suppose that $\A$ is "$(x-1)$-saturated".

\AP We say that a "$({<}x)$-safe component" $S$ of $\A$ is ""redundant@@sig"" if there is $q\in S$ and $q'\eqSig{x-2}q$, $q'\notin S$, such that $\safeSig{x}{q}\subseteq \safeSig{x}{q'}$. 
We note that, by "normality" of $\A$, there are no "$({\geq}x)$-transitions" entering in $S$; that is, there are no transitions $p\re{a:\geq x} q$ with $p\notin S$ and $q\in S$.

\begin{remark}
	Automaton $\A$ is "$({<}x)$-safe centralised" if and only if it does not contain any "redundant@@sig" "$({<}x)$-safe component".
\end{remark}

\begin{lemma}\label{lemma-tech:find-redundant-poly}
	If $\A$ contains some "redundant@@sig" "$({<}x)$-safe component", we can find one of them in polynomial time.
\end{lemma}
\begin{proof}
	The computation of the "$({<}x)$-safe components" of $\A$ can be done by simply a decomposition in "SCC" of $\autGeq{\A}{x}$.
	For each pair of states $q\eqSig{x-2}q'$ in different "$({<}x)$-safe components" we just need to check the inclusion $\safeSig{x}{q}\subseteq \safeSig{x}{q'}$, which can be done in polynomial time.
\end{proof}

\begin{lemma}
	Let $S$ be a "redundant@@sig" "$({<}x)$-safe component" of $\A$, and let $S'$ be a different "$({<}x)$-safe component" such that there are $q_0\in S$ and  $q_0'\in S'$, with $q_0\eqSig{x-2}q_0'$ and $\safeSig{x}{q_0}\subseteq \safeSig{x}{q_0'}$.
	Then, for each $q\in S$ there is $q'\in S'$ such that $q\eqSig{x-2}q'$ and $\safeSig{x}{q}\subseteq \safeSig{x}{q'}$.
\end{lemma}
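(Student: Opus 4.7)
The plan is to propagate the redundancy witness along a safe path inside $S$. Concretely, since $S$ is a "$({<}x)$-safe component", it is strongly connected in $\autGeq{\A}{x}$, so for any target state $q \in S$ I may fix a finite word $u \in \SS^*$ labelling a "${<}x$-safe" path $q_0 \lrp{u:\geq x} q$ in $\A$. The witness state $q'$ will then be produced by replaying $u$ from $q_0'$ inside $S'$.

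The replay is legitimate because $u \in \safeSig{x}{q_0}$ and $\safeSig{x}{q_0} \subseteq \safeSig{x}{q_0'}$ by hypothesis, so $u \in \safeSig{x}{q_0'}$ as well. Since the current automaton $\A$ is assumed "deterministic over" transitions with priority $\neq x-1$, it is in particular deterministic over "${\geq}x$-transitions", and I can apply Lemma~\ref{lemma-p4-sig:monotonicity-safe-lang} (monotonicity of safe languages) to conclude that there is a unique "${<}x$-safe" path $q_0' \lrp{u:\geq x} q'$, and moreover $\safeSig{x}{q} \subseteq \safeSig{x}{q'}$. Because $\A$ is in "normal form", a "${<}x$-safe" path cannot change "$({<}x)$-safe component", so $q' \in S'$, as required.

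It only remains to check $q \eqSig{x-2} q'$. Here I invoke that $\A$ is a "$(x-2)$-structured signature automaton", so $\eqSig{x-2}$ is a "$[0,x-2]$-faithful congruence"; in particular, transitions with priority $>x-2$ preserve $\eqSig{x-2}$-classes. Applying this to $q_0 \lrp{u:\geq x} q$ and $q_0' \lrp{u:\geq x} q'$ gives $q \eqSig{x-2} q_0$ and $q' \eqSig{x-2} q_0'$, which combined with the assumption $q_0 \eqSig{x-2} q_0'$ yields $q \eqSig{x-2} q'$ by transitivity.

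There is no real obstacle here: the lemma is essentially a transport-of-structure argument made possible by the three built-in properties of $\A$ (strong connectedness inside a safe component, determinism over ${\geq}x$-transitions, and $[0,x-2]$-faithfulness of $\eqSig{x-2}$). The only point to be slightly careful about is that the replayed path lands in $S'$ and not in some third component, which is exactly what "normal form" guarantees.
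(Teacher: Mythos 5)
Your choice of $u$, the replay of $u$ from $q_0'$ via $\safeSig{x}{q_0}\subseteq\safeSig{x}{q_0'}$ and Lemma~\ref{lemma-p4-sig:monotonicity-safe-lang}, and the use of "normal form" to place $q'$ in $S'$ all match the paper's proof. The problem is the last step. You claim that $[0,x-2]$-faithfulness gives $q\eqSig{x-2}q_0$ and $q'\eqSig{x-2}q_0'$, i.e.\ that a single $({\geq}x)$-transition keeps you inside your own $\eqSig{x-2}$-class. That is not what "transitions preserve the congruence" means: the congruence property for $({>}x-2)$-transitions relates the \emph{targets} of corresponding transitions taken from two $\eqSig{x-2}$-equivalent \emph{sources}; it says nothing about the relation between the source and the target of one transition. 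Indeed the claim is false in general: $\eqSig{x-2}$ refines $\leqSig{0}$ and hence the residual equivalence, and a $({<}x)$-safe path can perfectly well change the residual class (this is also why Item~\ref{item-struct:odd-orders} explicitly requires \emph{both} $q\eqSig{x-2}p$ \emph{and} a path $q\lrp{w:\geq x}p$ for $q\eqSig{x-1}p$ — the path alone does not imply the equivalence).

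The conclusion is nevertheless correct, and the very property you invoke yields it when applied as intended: start from the equivalent pair $q_0\eqSig{x-2}q_0'$ and walk the two runs $q_0\lrp{u:\geq x}q$ and $q_0'\lrp{u:\geq x}q'$ in lockstep; since every transition on both runs has priority $\geq x > x-2$ and $\eqSig{x-2}$ is a "congruence for $({>}x)$-transitions" (and for the intermediate priorities down to $x-1$), induction along $u$ gives directly $q\eqSig{x-2}q'$. This is exactly the paper's argument, so the fix is a one-line replacement of your transitivity step.
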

\begin{proof}
	For each $q\in S$, pick $u\in \SS^*$ such that $q_0\lrp{u:\geq x} q$. We let $q'$ be such that $q_0'\lrp{u} q'$. Since $u\in \safeSig{x}{q_0}\subseteq \safeSig{x}{q_0'}$, this latter path produces priority $\geq x$ and, by "normality", $q'$ is in $S'$. 
	As $\eqSig{x-2}$ is a "congruence for" "$({\geq}x-2)$-transitions", $q'\eqSig{x-2}q$. By monotonicity for safe languages (Lemma~\ref{lemma-sig:monotonicity-safe-lang}), we also have $\safeSig{x}{q}\subseteq \safeSig{x}{q'}$.
\end{proof}

\subparagraph{Removing redundant safe components.}
For now on, fix $S$ to be a "redundant@@sig" "$({<}x)$-safe component" of $\A$, and $S'$ a different "$({<}x)$-safe component" as in the previous lemma.
\AP For each $q\in S$, we let $\intro*\pick(q)\in S'$ such that $q\eqSig{x-2}\pick(q)$ and $\safeSig{x}{q}\subseteq \safeSig{x}{\pick(q)}$. 
We extend $\pick$ to all of $Q$ by setting it to be the identity over $Q\setminus S$.

We define the automaton $\A'$ as follows:
\begin{itemize}
	\item The set of states is $Q' = Q\setminus S$. 
	\item The "initial state" is $\pick(q_\init)$.
	\item \AP For each $p\in Q'$, if $p\re{a: y} q$ is a transition in $\A$, we let $p\re{a:y}\pick(q)$ in $\A'$.
\end{itemize}

We note that, if $q\notin S$, all transitions $p\re{a: y} q$ in $\A$ are left unchanged. In particular, the "$({\geq}x)$-transitions" of $\A'$ are the restriction of those appearing in $\A$, $\A'$ is "$(x-1)$-saturated", and "$({<}x)$-transitions" in $\A$ not entering in $S$ appear in $\A'$ too.
\AP We say that transitions $p\re{a: y} q$ of $\A$ such that $q\in S$ have been ""redirected in $\A'$@@sc"". 


\begin{lemma}\label{lemma-tech:safe-centr-is-nice-transform}
	Automaton $\A'$ is a "$\eqSig{x-2}$-nice transformation of $\A$ at level $x-1$".
\end{lemma}
\begin{proof}
	We have that $\autGeq{\A'}{x}$ is the "subautomaton" of $\autGeq{\A}{x}$ induced by states in $Q'$. We show that $\eqSig{x-2}$ is "$[0,x-2]$-faithful" in $\A'$. 
	Transitions that have not been "redirected@@sc" satisfy the congruence requirements, as they satisfy them in $\A$.
	Let $p\re{a:y}q$ and $p'\re{a:y'}q_0'$ be two transitions in $\A'$ such that $y\leq x-2$, $p\eqSig{x-2} p'$ and such that the second transition has been redirected from $p'\re{a:y'}q'$ (the first transition is possibly a redirected one too).  
	By the congruence property in $\A$, we have that $y'=y$ and $q\eqSig{x-2}q'$. Since $q'\eqSig{x-2}q_0'$, we conclude by transitivity.
	The equality $\autLeq{\A'}{x-2}{x-1}=\autLeq{\A}{x-2}{x-1}$ simply follows from the fact that "redirected@@sc" transitions have been defined preserving the $\eqRes{x-2}$-classes.
	
	As $\eqSig{x-2}$ "refines" $\eqRes{\A}$, the latter relation is also a "congruence" in $\A'$ and $\quotAut{\A}{\A} = \quotAut{\A'}{\A}$. 
\end{proof}

\begin{lemma}[Correctness of the removal of redundant components]\label{lemma-tech:equality-languages-safe-cent}
	For every state $q'\in Q'$, we have $\Lang{\initialAut{\A'}{q'}} = \Lang{\initialAut{\A}{q'}}$.
	In particular, these automata are "equivalent@@aut".
	Moreover, automaton $\A'$ is "deterministic over" transitions with "priority" different from $x-1$, "homogeneous" and "history-deterministic". 
\end{lemma}
\begin{proof}
	The fact that $\A'$ is "deterministic over" transitions with "priority" different from $x-1$ and "homogeneous"  is immediate from its definition. 
	We show the equality of languages for the "initial state". The proof is identical for a different state.
	
	The inclusion $\Lang{\A'}\subseteq \Lang{\A}$ directly follows from Lemma~\ref{lemma-tech:nice-transformations}.
	
	We describe a "sound resolver" witnessing $\Lang{\A}\subseteq \Lang{\A'}$ and "history-determinism". Take a "sound resolver" $\resolv$ in $\A$, let $w\in \SS^\oo$, and write
	\[		\rho=p_0 \re{w_0} p_1 \re{w_1} \dots\]
	for the run in $\A$ "induced by@@resolver" $\resolv$ over $w$.
	We will build a "resolver" $\resolv'$ in $\A'$ satisfying the property that the "run induced over $w$@@res", $\rho'=p'_0 \re{w_0} p'_1 \re{w_1} \dots$ is in one of the following (non-excluding) cases:
	\begin{enumerate}[label=\alph*),ref=\alph*]
		\item\label{it-lem-app:CaseA} produces priorities $<x-1$ infinitely often, 
		\item\label{it-lem-app:CaseB} eventually produces only priorities $\geq x$,
		\item \label{it-lem-app:CaseC} $p_i \eqSig{x-2} p'_i$ and $\safeSig{x}{p_i}\subseteq \safeSig{x}{p_i'}$ for every $i$ sufficiently large.
	\end{enumerate}

	\begin{claim}
		A "resolver" $\resolv'$ satisfying the property above accepts all words in $\Lang{\A}$.
	\end{claim}
	\begin{subproof}
		Suppose that $w\in \Lang{\A}$, that is, the run $\rr$ "induced by@@resolv" $\resolv$ is "accepting@@run". Let $\rr'$ be the "run induced over $w$@@res" by $\resolv'$ in $\A'$. We distinguish two cases, according to the priorities produced by the run $\rr$ in $\A$:
		
		If $\rr$ produces priorities $<x-1$ infinitely often. Then Lemma~\ref{lemma-tech:nice-transformations} allows us to conclude.

		If $\rr$ eventually only produces priorities $\geq x-1$.
		Then, the two first items of Lemma~\ref{lemma-tech:nice-transformations} tells us that $\rr'$ eventually only produces priorities $\geq x-1$ too (so we are not in Case~\ref{it-lem-app:CaseA}).
		We show that $\rr'$ eventually only produces priorities $> x-1$; the last item of Lemma~\ref{lemma-tech:nice-transformations} allows us to conclude (we recall that $\A'$ is a "$\eqSig{x-2}$-nice transformation at level $x-1$").
		If we are in Case~\ref{it-lem-app:CaseB}, this property is trivially satisfied. Suppose that we are in Case~\ref{it-lem-app:CaseC}, and let $k>0$ be such that the suffix of $\rr$ from $q_k$ only produces priorities $\geq x$ and such that $\safeSig{x}{p_i}\subseteq \safeSig{x}{p_i'}$ for $i\geq k$. Therefore, there is a "run over $w_kw_{k+1}\dots$" from $p_k'$ producing exclusively priorities $\geq x$. By "determinism over transitions" with priority $\geq x$, this run is the one "induced by" $\resolv'$.
	\end{subproof}
	
	We finally show how to construct a "resolver" with this property.
	We let $p'_0=f(p_0)$, and assume $\rho'$ constructed up to $p'_i$ satisfying that $p'_i\eqSig{x-2} p_i$.
			
	If there is a transition $p'_i \re{w_i:y} p'_{i+1}$ with $y\neq x-1$, then we take this one (there is no other option),  which satisfies $p_{i+1} \eqSig{x-2} p_{i+1}'$ by Lemma~\ref{lemma-tech:safe-centr-is-nice-transform}.
	Moreover, if $y\geq x$ and $\safeSig{x}{p_i}\subseteq \safeSig{x}{p_i'}$, then $\safeSig{x}{p_{i+1}}\subseteq \safeSig{x}{p_{i+1}'}$ by Lemma~\ref{lemma-sig:monotonicity-safe-lang}.
	If there is a transition $p'_i \re{w_i:x-1}$,  we take $p_{i+1}'=f(p_{i+1})$ (this transition exists in $\A'$ by "$x-1$-saturation", as $p'_i\eqSig{x-2} p_i$).
	
	We show that this "resolver" satisfies the desired property. Suppose that we are not in the two first cases, that is,  $\rr'$ eventually only produces priorities $\geq x-1$, and it produces priority $x-1$ infinitely often. Take a suffix $p'_k \re{w_k:y_k} p'_{k+1} \re{w_{k+1}:y_{k+1}} \dots$ of $\rr'$ such that no priority $<x-1$ is produced and such that $y_k=x-1$. Then, by definition of the transitions using $x-1$ chosen by the "resolver", $p_{k+1}'=f(p_{k+1})$, so $\safeSig{x}{p_{k+1}}\subseteq \safeSig{x}{p_{k+1}'}$. We conclude by induction, as transitions taken by the resolver using priorities $\geq x-1$ preserve the inclusion of "$({<}x)$-safe languages".
\end{proof}

\subparagraph{Transformation preserves being a structured signature automaton.}
To be able to finish the proof of Lemma~\ref{lemma-nec:safe-centralisation}, we just need to show that $\A'$ is a "$(x-2)$-structured signature automaton". We give some technical lemmas that will help us show this.

\begin{lemma}\label{lemma-tech:paths-x-1-transfer-in-safe-centralisation}
	Let $q$ and $p$ be two states of $\A'$. There is a path $q\lrp{w:x-1}p$ in $\A$ if and only if there is a path $q\lrp{w:x-1}p$ in $\A'$.
\end{lemma}
\begin{proof}
	If a path $q\lrp{w:x-1}p$ appears in $\A'$, the very same path also exists in $\A$.
	
	Suppose now that a path $\rr = q\lrp{w:x-1}p$ exists in $\A$.
	Let $S$ be the "${<}x$-safe component" that has been removed from $\A$. If the path $\rr$ does not cross $S$, then it also appears in $\A$. Suppose that it enters in $S$. We remark that, by "normality" and by the definition of "safe component@@sig", each time that $\rr$ enters or exists $S$, it produces priority $x-1$. We consider the last time that $\rr$ enters and exists $S$:
	\[ q\lrp{u_1:\geq x-1} q_1 \lrp{u_2:\geq x} q_2 \re{a:x-1} q_3 \lrp{u_3:\geq x-1}  p, \]
	with $w=u_1u_2au_3$, $q_3\notin S$, and the path $q_3 \lrp{u_3}  p$ does not enter $S$ (so it also appears in~$\A'$). Consider any "run over" $u_1u_2$ from $q$ in $\A'$: 
	\[ q\lrp{u_1:\geq x-1} q_1' \lrp{u_2:\geq x-1} q_2'.\]
	As $\eqSig{x-2}$ is a "$[0,x-2]$-faithful congruence", we have that $q_2\eqSig{x-2}q_2'$. As $\A'$ is "$x-1$-saturated", there is a transition $q_2'\re{a:x-1} q_3$. Therefore, we obtain in $\A'$ the path:
	\[ q\lrp{u_1:\geq x-1} q_1' \lrp{u_2:\geq x-1} q_2' \re{a:x-1} q_3 \lrp{u_3:\geq x-1}  p.\qedhere\]
\end{proof}

\begin{lemma}\label{lemma-tech:paths-transfer-in-safe-centralisation}
	Let $q$ and $p$ be two states of $\A'$ and let $y$ be any "priority". There is a path $q\lrp{w:y}p$ in $\A$ if and only if there is a path $q\lrp{w':y}p$ in $\A'$.
\end{lemma}
\begin{proof}
	If $y\geq x$, $q$ and $p$ are in the same "$({<}x-1)$-safe component". Since the "$({<}x-1)$-safe components" in $\A'$ are safe components in $\A$, the result is clear in this case.
	
	If $y=x-1$, the result is assured by the previous Lemma~\ref{lemma-tech:paths-x-1-transfer-in-safe-centralisation}.

	Assume $y<x-1$. 
	Suppose that there is a path $q\lrp{w:y}p$ in $\A$ (the proof is analogous if we take this path in $\A'$), and let  $q\lrp{w:y'}p'$ be the "run over" $w$ from $q$ in $\A'$.
	As $\A'$ is a "$\eqSig{x-2}$-nice transformation", by Lemma~\ref{lemma-tech:paths-in-exact-transform}, we have that $y'=y$ and $p\eqSig{x-2}p'$. As $\A$ satisfies Item~\ref{item-struct:even-classes-connected} from the definition of a "structured signature automaton", there is a path $p'\lrp{u:\geq x-1}p$ in $\A$. By Lemma~\ref{lemma-tech:paths-x-1-transfer-in-safe-centralisation}, such a path also exists in $\A'$, so we can take $w'=wu$, giving us a path $q\lrp{w:y}p'\lrp{u:>y}p$.
\end{proof}

Previous lemma tells us, in particular, that for every "$({<})y$-safe component" $\safeComp{y}_i$ of $\A$, the intersection of $\safeComp{y}_i$ with $Q'$ constitute the states of a "$({<})y$-safe component" in $\A'$. 
Therefore, automaton $\A'$ inherits the decomposition in "$({<})y$-safe component" $\safeComp{y}_1,\dots, \safeComp{y}_{k_y}$ from $\A$, for each $y$; we simply remove those components whose intersection with $Q'$ is empty.

\begin{lemma}\label{lemma-tech:safe-centr-preserves-structured}
	Automaton $\A'$ is a "$(x-2)$-structured signature automaton".
\end{lemma}
\begin{proof}
	We go through all the conditions of the definition of a "$(x-2)$-structured signature automaton".
	We recall that, by Lemma~\ref{lemma-tech:safe-centr-is-nice-transform}, the relation $\eqSig{x-2}$ is "$[0,x-2]$-faithful congruence"~in~$\A'$.
	\begin{description}
		\item[Normal form.] We check that $\A'$ satisfies the hypothesis of the characterisation from Lemma~\ref{lemma-app:simple-normal-form}. We let $q'\re{a:y} p'$ be a transition in $\A'$. If it is not a "redirected transition@@safec", it exists in $\A$, so we can conclude by "normalisation" of $\A$ and Lemma~\ref{lemma-tech:paths-transfer-in-safe-centralisation}. 
		Assume that $q'\re{a:y} p'$ is a transition that has been "redirected from@@safec" $q'\re{a:y}p$. In particular, $p\eqRes{x-2}p'$ and $y<x$. By Item~\ref{item-struct:even-classes-connected} of the definition of "structured signature automaton" applied to $\A$, there is a path $p'\lrp{w:>y}p$ in $\A$, and by "normality", there is a returning paths $p\lrp{u_1:y}q$ and $p\lrp{u_2:y-1}q$. Again, Lemma~\ref{lemma-tech:paths-transfer-in-safe-centralisation} allows us to find the desired returning paths in $\A'$.
		
		\item[Item~\ref{item-struct:preorder-0}.] By Lemma~\ref{lemma-tech:equality-languages-safe-cent}, the residuals in $\A'$ correspond to those in $\A$, so "preorder" $\leqSig{0}$ correspond to their inclusion in $\A'$ too.
		
		\item[Item~\ref{item-struct:odd-orders}.] By the previous remarks, for all $y$, the "$({<}y)$-safe components" of $\A'$ are obtaining by taking the intersection with those in $\A$. Therefore, odd preorders $\leqSig{y-1}$ correspond to the order of "$({<}y)$-safe components" on $\A'$.
		
		\item[Item~\ref{item-struct:even-preorders}.] As $\A'$ is a "$\eqSig{x-2}$-nice transformation at level $x-1$", for every $y<x-1$ and state $q'$ in $\A'$, $\safeSigAut{y}{q'}{\A'} = \safeSigAut{y}{q'}{\A}$. Therefore, the preorders at even levels $\leqSig{y}$ correspond to the inclusion of safe languages in $\A'$, as they do in $\A$.
		
		
		\item[Item~\ref{item-struct:strong-congruence}.] Let $q$ and $q'$ be two states in $\A'$ such that $q\eqSig{y} q'$, for $y\leq x-2$ even, and let $q\re{a:y'}p$ for $y'\leq y$ and $q'\re{a:z}p'$. 
		As $\A'$ is a "nice transformation", $y'=z$.
		If the first of these transitions is not a "redirected@@safec" one, then, by "strong congruence" of ${\leq}(x-2)$-priorities in $\A$, neither is the second one, and $p=p'$. 
		Assume that these transitions have been "redirected@@safec" from, $q\re{a:y'}p_1$ and $q'\re{a:y'}p'_1$. Then, as Item~\ref{item-struct:strong-congruence} is satisfied in $\A$, $p_1=p'_1$, so $p=\pick(p_1)=\pick(p_1')=p'$.
		
		\item[Item~\ref{item-struct:even-classes-connected}.] Directly follows from Lemma~\ref{lemma-tech:paths-transfer-in-safe-centralisation} and the fact that $\A$ satisfies this property.
		
		\item[Item~\ref{item-struct:safe-centralisation}.] Follows from the fact that if $q\neqSig{y-1}p$ in $\A'$, then $q\neqSig{y-1}p$ in $\A$; and the equality $\safeSigAut{y}{q}{\A'} = \safeSigAut{y}{q}{\A}$.\qedhere
 	\end{description}
\end{proof}

\subparagraph{Obtaining Lemma \texorpdfstring{\ref{lemma-nec:safe-centralisation}}{5.16}.}
We have all the necessary elements to deduce Lemma~\ref{lemma-nec:safe-centralisation}. Using Lemma~\ref{lemma-tech:find-redundant-poly}, we can decide whether $\A$ contains a "redundant@@safec" "$({<}x)$-safe component" in polynomial time.
If it contains none, $\A$ is already "$({<}x)$-safe centralised".
While we can find "redundant@@safec" safe components, we remove them applying the transformation described above. This transformation can clearly be done in polynomial time, and by Lemmas~\ref{lemma-tech:equality-languages-safe-cent} and~\ref{lemma-tech:safe-centr-preserves-structured}, the obtained automaton "recognises" the correct language and preserves all the hypothesis assumed in the induction.

\subsubsection{Existence of uniform words and synchronising separating runs}

We now provide proofs for Lemmas~\ref{lemma-nec:existence-uniform-words} and~\ref{lemma-nec:syncr-separating-runs}.
In Section~\ref{subsec:from-HP-to-signature}, we derived totality of the order $\leqSig{x}$ in each $\eqSig{x-1}$-component from these lemmas (c.f. Lemma~\ref{lemma-nec:total-order-safe}).

\subparagraph{Hypothesis.} In all the subsection we assume that $x$ is an even priority and $\A$ is a "$(x-2)$-structured signature automaton" with initial state $q_\init$ that is moreover:
\begin{itemize}
	\item "deterministic over" transitions with "priority" different from $x-1$,
	\item "homogeneous", 
	\item "history-deterministic", and
	\item "$({<}x)$-safe centralised".
\end{itemize}  

\subparagraph{Words producing priority \texorpdfstring{$x$}{x} uniformly.}

\lemNecExistenceUniformWords*\label{lemma-nec:existence-uniform-words-pageApp}

\begin{proof}
	The fact that for such paths we must have $p_1\eqSig{x}p_2$ follows from the monotonicity of "safe languages@@sig" (Lemma~\ref{lemma-sig:monotonicity-safe-lang}) and the fact that $({\geq}x)$-transitions preserve $\eqSig{x-1}$-classes.
	
	We let $\{q_1,q_2,\dots,q_k\}$ be an enumeration of the states of $\classSig{x}{q}$.
	\begin{claim}
		For each $q_i\in \classSig{x}{q}$ there is a word $u_i\in \SS^*$ such that $q_i\lrp{u_i:x}q_i$.
	\end{claim}
	\begin{subproof}
		Since $q_i$ and $q$ are in the same "$({<}x)$-safe component", there is a word $u_i'$ such that $q_i\lrp{u_i':\geq x} q$. By "normality", there is a word $u_i''$ such that $q\lrp{u_i'':x}q_i$. We just take $u_i = u_i'u_i''$.
	\end{subproof}
	We will  define $k$ finite words $w_1,w_2\dots, w_k\in \SS^*$ satisfying:
	\begin{itemize}
		\item For $q'\in \classSig{x}{q}$ and for every $i\leq k$, $q' \lrp{w_1w_2\dots w_{i}:\geq x} \classSig{x}{q}$.
		\item $q_j \lrp{w_1w_2\dots w_{i}: x} \classSig{x}{q}$ for every $j\leq i$.
	\end{itemize}
	In order to obtain these properties, we just define recursively $w_1=u_1$ and $w_i =u_j$, for $u_j$ as given by the previous claim,  if:
	\[  q_i \lrp{w_1w_2\dots w_{i-1}:\geq x} q_j.\]
	Finally, we let $w=w_1w_2\dots w_kw$, which first produces priority $x$ when read from any state of $\classSig{x}{q}$, and then goes to the $\eqSig{x}$-class of $p$ producing priorities ${\geq}x$.	
\end{proof}

\subparagraph{Resolvers implemented by finite memories.}
For the upcoming proofs, we need to introduced the notion of memories for "resolvers".
\AP Let $\A = (Q,\Sigma, q_\init, \DD, \colAut)$ be a "non-deterministic automaton". A ""memory structure@@aut"" for $\A$ is a tuple $(M,m_\init,\intro*\transMem,\nextmoveResolver)$, where $M$ is a set of memory states, $m_\init\in M$ is an initial state, $\transMem: M \times \DD \to M$ is an update function and $\intro*\nextmoveResolver\colon Q\times M \times \SS \to \DD$.
\AP It ""implements@@resolver"" a "resolver@@aut" $\resolv$ if  for all $a\in \SS$, $\resolv(\ee,a) = \nextmoveResolver(q_\init,m_\init,a)$ and for all $\rr\in \DD^+$ ending in $p$, $\resolv(\rr,a) = \nextmoveResolver(p, \transMem(m_\init, \rr), a)$.

\begin{lemma}[\cite{Boker2013NondetUnknownFuture}]\label{lemma-p0-prelim:finite-memory-resolvers}
	Every "history-deterministic@@aut" "$\ee$-completion" admits a "sound resolver@@aut" "implemented by@@resolver" a "finite memory structure@@resolv".
\end{lemma}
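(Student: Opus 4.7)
The plan is to reduce the construction of a finite-memory sound resolver to solving a suitable finite-state game, and then invoke known finite-memory determinacy results for $\omega$-regular games. Since $\A$ is $\omega$-regular, fix a "deterministic" "parity automaton" $\D$ "recognising" $\Lang{\A}$. I consider the following two-player game $\G$, played on the arena with vertex set $Q \times Q_\D \cup Q \times Q_\D \times (\SS \cup \{\ee\})$: from a vertex $(q,s)$, "Adam" picks a letter $a \in \SS \cup \{\ee\}$ and moves to $(q,s,a)$; from $(q,s,a)$, "Eve" picks a transition $q \re{a:y} q'$ of $\A$ and moves to $(q',s')$, where $s'$ is the unique $\D$-successor of $s$ on $a$ (and $s'=s$ if $a = \ee$). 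The priority of Eve's move records both the priority~$y$ produced by $\A$ and the priority produced by the corresponding transition in~$\D$. "Eve" wins a play if either the $\D$-component is "rejecting@@run" (the word is not in $\Lang{\A}$) or the $\A$-component is "accepting@@run".

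The winning condition is a Boolean combination of two "parity objectives", hence a Rabin condition on the priority pairs; by Klarlund's theorem (or the positional determinacy of Rabin games~\cite{Klarlund94Determinacy}), "Eve" has "positional" winning strategies on the finite arena $Q \times Q_\D$ from every vertex from which she wins. The key observation is that "history-determinism" of $\A$ guarantees that "Eve" wins $\G$ from the initial vertex $(q_\init, s_\init)$: any "sound resolver" $\resolv$ for $\A$ is a (non-positional) winning strategy for her, since whenever "Adam" produces a word $w \in \Lang{\A}$ (tracked by $\D$), the "run induced by@@res" $\resolv$ is "accepting@@run" by soundness, and whenever $w \notin \Lang{\A}$ the $\D$-side run is rejecting, so the winning condition is satisfied in both cases.

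Let $\sigma: Q \times Q_\D \times (\SS \cup \{\ee\}) \to \DD$ be the "positional" winning "strategy" obtained from determinacy. I define the "memory structure@@resolver" $(M, m_\init, \transMem, \nextmoveResolver)$ with $M = Q_\D$, $m_\init = s_\init$, $\transMem$ updating the $\D$-component along the transition chosen in $\A$ (tracking the input letter), and $\nextmoveResolver(q, s, a) = \sigma(q,s,a)$. The "resolver" "implemented by" this "finite memory structure@@resolv" coincides, on every input word $w$, with the play in $\G$ consistent with $\sigma$ when "Adam" plays the letters of $w$, hence produces an "accepting@@runAut" "run" whenever $w \in \Lang{\A}$, proving "soundness@resolverAut".

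The main obstacle to watch for is the precise formulation of the winning condition to ensure it is genuinely a Rabin condition on the finite-state product (so that positional strategies suffice); this is handled by viewing the pair of priorities as a single priority in a Rabin acceptance built from the two parity conditions. Once this is formalised, the rest is routine.
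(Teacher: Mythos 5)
The paper does not actually prove this lemma: it is imported wholesale from the cited reference \cite{Boker2013NondetUnknownFuture}, so there is no in-paper argument to compare against. Your proof is the standard argument behind that cited result — play the ``letter game'' on the product of $\A$ with a deterministic parity automaton $\D$ for $\Lang{\A}$, observe that the winning condition ($\D$ rejects) $\cup$ ($\A$ accepts) is a union of two parity conditions and hence a Rabin condition, and invoke positional determinacy of Rabin games \cite{Klarlund94Determinacy}; a sound resolver witnesses that Eve wins, and a positional winning strategy on the product read back as a resolver with memory $Q_\D$ gives soundness. This is correct, and it even yields the sharper memory bound $|Q_\D|$ rather than what generic finite-memory determinacy of $\omega$-regular games would give. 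Two points need care in a full write-up. First, the lemma as literally stated concerns automata with $\ee$-transitions, and you hand the choice of $\ee$ to Adam; that is the wrong semantics (the automaton, i.e.\ Eve, decides where to insert $\ee$'s, and if Adam plays $\ee^\oo$ the $\SS$-projection of the play is finite, so your winning condition is undefined on such plays). The correct letter game has Adam play only letters of $\SS$ and lets Eve answer with a block of $\ee$-transitions followed by an $a$-transition; since the paper only ever applies the lemma to history-deterministic parity automata without $\ee$-transitions, this does not affect its use, but it matters for the statement as written. Second, the paper allows infinite input alphabets, so the arena $Q\times Q_\D\times(\SS\cup\{\ee\})$ need not be finite; this is harmless because Rabin positional determinacy holds over arbitrary arenas and $\D$ can still be taken with finitely many states, but one should not lean on finiteness of the arena as such.
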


In the rest of the subsection we fix a "sound resolver" $\resolv$ for~$\A$ "implemented by@@resolvMem" a "memory structure@@aut" $\M=(M, m_\init, \transMem)$.
For simplicity, we assume that every pair of a state and a memory state $(q,m)$ is ""reachable using $\resolv$@@memory"", that is, there is some word $w\in \SS^*$ such that $ \rr = q_\init\lrpResolver{\resolv}{w} q$ and $\transMem(m_\init, \rr) = m$. It is easy to see that we can get rid of this assumption in the upcoming proof just by ignoring pairs $(q,m)$ that are not "reachable@@res".

\AP For $(q,m)\in Q\times M$, we let $(q,m)\intro*\lrpResolverMem{\resolv}{w} (q',m')$ be the (unique) "run induced by $\resolv$" from $q$ when the memory structure is in state $m$.
We extend notations of the form $(q,m)\lrpExistsResolver{\resolv}{w:x} q'$ in the natural way; the previous one means that there exists $u_0\in \SS^*$ such that the induced run of $\resolv$ is $\rr = q_\init\lrpResolver{\resolv}{u_0} q$, $\transMem(m_\init, \rr) = m$ and $q_\init\lrpResolver{\resolv}{u_0w} q'$, producing priority $x$ in the second part of this run.

As $\A$ is "deterministic over" transitions using priorities $\geq x$, we may omit the subscript $\resolv$ in paths producing no priority $<x$.

\subparagraph{Synchronising separating runs.}
\lemNecSynchronisation*\label{lemma-nec:syncr-separating-runs-pageApp}

\begin{proof}
	We first show that we can force to produce priority $x-1$ from $\classSig{x}{q}$, while remaining safe from $\classSig{x}{q}$.
	\begin{claim}\label{cl-app:producing-x-1}
		 There is a word $u\in \SS^+$ such that for all $s\in \classSig{x}{q}$:
		  \[s \lrpAllResolver{\resolv}{u:x-1} \,, \quad \tand\quad \classSig{x}{q'}\lrpAllResolver{\resolv}{u: x}\classSig{x}{p}.\]
	\end{claim}
	\begin{subproof}
		By definition of the preorder $\leqSig{x}$, there is a word $u_1\in \SS^+$ such that for all $s\in \classSig{x}{q}$ and $s'\in \classSig{x}{q'}$, $s \lrpAllResolver{\resolv}{u_1:x-1}$ and $s'\lrpResolver{\resolv}{u_1:\geq x}$.
		By "normality", we can extend this word so that $q'\lrpResolver{\resolv}{u_1':\geq x} q'$; by monotonicity of safe languages all runs from $\classSig{x}{q'}$ reading $u_1'$ go back to $\classSig{x}{q'}$.
		Applying  Lemma~\ref{lemma-nec:existence-uniform-words}, we obtain $u_2$ that "produces priority $x$ uniformly" in $\classSig{x}{q'}$ and goes to $\classSig{x}{p}$.
		We take $u=u_1'u_2$, which satisfies $\classSig{x}{q'} \lrpResolver{\resolv}{u: x}\classSig{x}{p}$, and it produces at least one occurrence of priority $x-1$ from every state $s\in\classSig{x}{q}$.
		By "$[0,x-2]$-faithfulness", a run $s \lrp{u}$ only produces priorities $\geq x-1$, which concludes.
	\end{subproof}
	
	
	\begin{claim}\label{cl-app:synchronising-one-state}
		Let $p' \eqSig{x-2} q$ and let $m\in M$ be a memory state. Then there is a word $w_{q,m}$ such that:
		\[ (q,m)\lrpResolverMem{\resolv}{w_{q,m}:\geq x-1} \classSig{x}{p'}, \, \tand \;\;  \classSig{x}{p'}\lrpAllResolver{\resolv}{w_{q,m}: x}\classSig{x}{p'}.\]
	\end{claim}
	\begin{subproof}
		We distinguish two cases. First, assume that $\safeSig{x}{q}\subseteq \safeSig{x}{p'}$. In this case, by "${<}x$-safe centrality" of $\A$, $q$ and $p'$ are in the same "${<}x$-safe component" so $q\eqSig{x-1}p'$ (and $q\leqSig{x}{p'}$).
		Let $p'_{\max}$ be a state in $\classSig{x-1}{p'}$ such that $p'\leqSig{x}{p'_{\max}}$, and maximal with this property. Let $w_1\in \SS^*$ be a word such that $q\lrp{w_1:\geq x} p'_{\max}$ (which exists because these two states are in the same "${<}x$-safe component").
		By monotonicity of "safe languages", $p'\lrp{w_1:\geq x} p''$ with $p'_{\max}\leqSig{x}p''$. By maximality of $p'_{\max}$, we must have $p''\eqSig{x}{p'_{\max}}$.
		Finally, let $w_2\in \SS^*$ such that $p''\lrp{w_2:x} p'$ "producing priority $x$ uniformly in" the class $\classSig{x}{p''}$ (which exists by Lemma~\ref{lemma-nec:existence-uniform-words}). We obtain $q\lrp{w_1:\geq x} p'_{\max} \lrp{w_2:x} \classSig{x}{p'}$ and $\classSig{x}{p'}\lrp{w_1:\geq x} \classSig{x}{p'_{\max}} \lrp{w_2:x} \classSig{x}{p'}$ as required.
		
		Assume now that $\safeSig{x}{q}\nsubseteq \safeSig{x}{p'}$. In that case, we can find a word $w_1\in \safeSig{x}{p'}\setminus \safeSig{x}{q}$. By Lemma~\ref{lemma-nec:existence-uniform-words}, we may assume that it "produces priority $x$ uniformly" from $\classSig{x}{p'}$ and comes back to this class.
		Moreover, by "faithfulness", it cannot produce priorities $\leq x-2$ and respects the $\eqSig{x-2}$-classes. Thus:
		\[ (q,m)\lrpResolver{\resolv}{w_1: x-1} (q_1,m_1), \text{ with } q_1\eqSig{x-2}p' \,, \tand \quad \classSig{x}{p'}\lrp{w_1:x} \classSig{x}{p'}. \] 
		If $\safeSig{x}{q_1}\subseteq \safeSig{x}{p'}$, we can conclude by using the first case.
		While we do not have this inclusion, we build, using the argument above, a sequence of words $w_1,w_2,\dots $ such that: 
		\[ (q_i,m_i)\lrpResolver{\resolv}{w_i: x-1} (q_{i+1},m_{i+1}) \,, \tand \classSig{x}{p'}\lrp{w_i:x} \classSig{x}{p'}. \] 
		This sequence cannot be infinite.
		If it were the case, resolver $\resolv$ would "induce@@resolv" a "rejecting run" over $w_1w_2\dots$ from $(q,m)$, and an "accepting@@run" from $p'$.
		This is a contradiction, as the equivalence $q\eqSig{x-2}p$ implies $\Lang{\initialAut{\A}{q}}=\Lang{\initialAut{\A}{p'}}$ (since $\eqSig{x-2}$ refines $\eqRes{\A}$).
		Therefore, for some $k$ we must have $\safeSig{x}{q_k}\subseteq \safeSig{x}{p'}$ and we can extend the path as wanted using the first case.
	\end{subproof}

	We may finally deduce the result of Lemma~\ref{lemma-nec:syncr-separating-runs} from the two previous claims.
	First, we read the word $u$ from Claim~\ref{cl-app:producing-x-1}, which forces to produce priority $x-1$ from any state in $\classSig{x}{q}$. We now show how to use Claim~\ref{cl-app:synchronising-one-state} to redirect each state, one by one, to the class $\classSig{x}{p}$.
	
	We let $(q_1,m_1),\dots, (q_k,m_k)$ be an enumeration of all states such that there exists $s\in \classSig{x}{q}$ with $s \lrpExistsResolver{\resolv}{u:x-1} (q_i,m_i)$.
	We note that, by "$[0,x-2]$-faithfulness", $q_i\eqSig{x-2} q\eqSig{x-2} p$.
	We recursively build a sequence of $k$ words, $w_1,\dots,w_k\in \SS^*$ by setting:
	\[ (q_i,m_i)\lrpResolverMem{\resolv}{w_1\dots w_{i-1}:\geq x-1} (q_i',m_i') \lrpResolverMem{\resolv}{w_i:\geq x-1} \classSig{x}{p} \;   \tand \;\;  \classSig{x}{p}\lrpAllResolver{\resolv}{w_i: x}\classSig{x}{p}. \]
	We can indeed do this by letting $w_i = w_{q_i',m_i'}$ as given by Claim~\ref{cl-app:synchronising-one-state}, as by "$[0,x-2]$-faithfulness" $q_i'\eqSig{x-2}p$.
	
	The word $w_1\dots w_i$ satisfies that, for $j\leq i$,  $(q_j,m_j)\lrpResolverMem{\resolv}{w_1\dots w_{i}:\geq x-1} \classSig{x}{p}$.
	We conclude the proof of the lemma by putting $w'=uw_1\dots w_k$.
\end{proof}

\subsubsection{Re-determinisation}

We give the proof of Lemma~\ref{lemma-nec:re-determinisation}, that is, we show that we can obtain an "equivalent@@aut" "deterministic" automaton from $\A$ while preserving all the obtained structure of "total preorders" satisfying the conditions of a "structured signature automaton".

\subparagraph{Hypothesis.} We assume that $\A$ is a "parity automaton" "recognising" $W$ with "nested total preorders" defined up to $\leqSig{x}$ such that:
\begin{itemize}
	\item it is a "$(x-2)$-structured signature automaton",
	\item preorder $\leqSig{x-1}$ satisfies properties from Items~\ref{item-struct:odd-orders} and~\ref{item-struct:safe-centralisation} from the definition of a "structured signature automaton", 
	\item preorder $\leqSig{x}$ satisfies the property from Item~\ref{item-struct:even-preorders} from the definition of a "structured signature automaton",
	\item it is "deterministic over" transitions with priorities different from $x-1$,
	\item it is "homogeneous", and
	\item it is "history-deterministic".
\end{itemize}  

\subparagraph{Obtaining a deterministic automaton.}

\lemNecDeterminisation*\label{lemma-nec:re-determinisation-pageApp}

An intuitive idea for the construction of $\A'$ was given in Section~\ref{subsec:from-HP-to-signature}. We formalise it and prove its correctness now.

Automaton $\A'$ is obtained by keeping all the structure of $\A$, except for $(x-1)$-transitions; for each state $q$ and letter $a$ such that some transition $q\re{a:x-1}p$ appears in $\A$, we will redefine this transition as $q\re{a:x-1}p'$ for some $p'\eqSig{x-2}p$ as determined next.

For each $\eqSig{x-1}$-class $\classSig{x-1}{q}$ of $\A$, we pick a state in the class that is maximal for $\leqSig{x}$. We let $\intro*\pickMax(q)$ be that state. That is, for two states $q_1\eqSig{x-1}q_2$:
\begin{itemize}
	\item $\pickMax(q_1)=\pickMax(q_2)$, 
	\item $\pickMax(q_1)\in \classSig{x-1}{q_1}$, and
	\item $q_1\leqSig{x} \pickMax(q_1)$.
\end{itemize}

Recall the total order over the "${<}x$-safe components" of $\A$ given by $\safeComp{x}_1,\safeComp{x}_2,\dots, \safeComp{x}_{k_x}$.
Let $q\in \safeComp{x}_i$ and $a\in \SS$ such that $q\re{a:x-1}p$ appears in $\A$.
If it exists, we let $\intro*\inext$ be the maximal $0\leq i_\mnext<i$ such that there is some $p'\in \classSig{x-2}{p}$, $p'\in \safeComp{x}_{i_\mnext}$. If index $\inext$ is not defined, we let it be the maximal index $i\leq \inext\leq k_x$ with the previous property.
We fix a state $p_{q,a}\in \safeComp{x}_{\inext}$ with $p_{q,a}\in \classSig{x-2}{p}$.
We let the "$a$-transition" from $q$ in $\A'$ be $q\re{a:x-1} \pickMax(p_{q,a})$.
This completes the description of $\A'$. It is indeed "deterministic", as $\A$ is "homogeneous" and "deterministic over" transitions with priorities different from $x-1$.

We can find a maximal state $\pickMax(q)$ for $\leqSig{x}$ in the class $\classSig{x-1}{q}$ in polynomial time, as the comparison of "${<}x$-safe languages" can be done in polynomial time~(Lemma~\ref{lemma-nec:safe-languages-polytime}).
Therefore, automaton $\A'$ can be built in polynomial time. 

\begin{lemma}\label{lemma-app:redeterm-nice-transform}
	Automaton $\A'$ is a "$\eqSig{x-2}$-nice transformation of $\A$ at level $x-1$".
\end{lemma}
\begin{proof}
		This is clear, as the restriction of $\A'$ to transitions using a "priority" different from $(x-1)$ coincides with that of $\A$, and every transition $q\re{a:x-1}p'$ in $\A'$ comes from a transition $q\re{a:x-1}p$ in $\A$ with $p\eqSig{x-2}p'$.
\end{proof}

\begin{lemma}\label{lemma-app:redeterm-correctness}
	For every state $q\in Q$, we have $\Lang{\initialAut{\A'}{q}} = \Lang{\initialAut{\A}{q}}$.
	In particular, automaton $\A'$ "recognises" the language $\Lang{\A}$.
\end{lemma}
\begin{proof}
	For simplicity, we give the proof just for the initial state; the proof being identical for any other state.
	
	Let $w\in \SS^\oo$. If the minimal "priority" produced infinitely often by the "run over $w$" in $\A'$ is $y<x-1$ or $y>x-1$, then $w$ is "accepted by" $\A'$ if and only if $w$ is "accepted by" $\A$, by Lemma~\ref{lemma-tech:nice-transformations} and the fact that $\A'$ is a "$\eqSig{x-2}$-nice transformation of $\A$ at level $x-1$".
	
	Assume that the minimal "priority" produced infinitely often by the "run over $w$" in $\A'$ is $x-1$ (so it is "rejecting@@run"), and suppose by contradiction that $w$ is "accepted by" $\A$. By Lemma~\ref{lemma-tech:nice-transformations}, an "accepting run" over $w$ in $\A$ cannot produce a priority $y<x$ infinitely often. Therefore, it eventually remains in a "${<}x$-safe component" $\safeComp{x}_{i_\A}$. Let $\rr$ be such an "accepting run", and let $\rr'$ be the "run over $w$" in $\A'$. We represent them as:
	\[ \rr = q_0 \lrp{u\phantom{..}} q_N\re{w_N:\geq x} q_{N+1} \re{w_{N+1}:\geq x}\dots \;, \quad \rr' = q_0' \lrp{u\phantom{..}} q_N' \lrp{w_N: x_N'} q_{N+1}' \re{w_{N+1}: x_{N+1}'}\dots, \]
	where $u$ is the prefix of size $N$ of $w$, $q_0=q_0'=q_\init$, and we suppose that $q_k\in \safeComp{x}_{i_\A}$ for all $k\geq N$.
	As $\A'$ is a "$\eqSig{x-2}$-nice transformation at level $x-1$", we have that $q_k\eqSig{x-2} q_k'$ for all $k$.
	Let $k_1<k_2<k_3 \dots$ be the positions greater than $N$ where $x_{k_j}'=x-1$, and let $i_1,i_2,\dots$ be the indices of the "${<}x$-safe components" such that $q_{k_j+1}\in \safeComp{x}_{i_j}$, that is, when taking the transition $q_{k_j}\re{w_{k_j}:x-1}$ we land in $\safeComp{x}_{i_j}$.
	
	\begin{claim}
		Eventually, $i_j = i_\A$.
	\end{claim}
	\begin{subproof}		
		Consider a transition $q_{k_j}' \re{w_{k_j}:x-1} q_{k_j+1}'$, and suppose first that $i_\A< i_{j-1}$. 
		We claim that $i_\A\leq i_j < i_{j-1}$. This would end the proof, as we obtain a strictly decreasing sequence of indices bounded by $i_\A$.
		In order to determine $i_j$, we need to look at the definition of $\inext$.	
		As $q_k\eqSig{x-2}q_k'$ for all $k$, there are always states in $\classSig{x-2}{q_{k_j+1}'}$ in some "${<}x$-safe components" with an index  $i_\A\leq i < i_{j-1}$.
		Thus, we obtain the desired result by definition of $\inext$.

		If $i_{j-1}\leq i_\A$, by definition of $\inext$, there is be a sequence of decreasing indices $i_{j-1}>i_j>i_{j+1}>\dots$ until no $\eqSig{x-2}$-equivalent state appears in a strictly smaller safe component. 
		By the same argument as before, there is always a $\eqSig{x-2}$-equivalent state in $\safeComp{x}_{i_\A}$, so eventually $i_\A\leq i_{j'}$, and either this is an equality, or we reduce to the previous case.
	\end{subproof}
	
	Let $j$ be the first position such that $i_j = i_\A$, and consider transitions $q_{k_j} \re{w_{k_j}:\geq x} q_{k_j+1}$ and $q_{k_j}' \re{w_{k_j}:x-1} q_{k_j+1}'$ in $\rr$ and $\rr'$, respectively.
	By definition of "$x-1$-transitions" of $\A'$, the state we go to in $\rr'$ is $q_{k_j+1}'= \pickMax(q_{k_j+1})$.
	As we have chosen $\pickMax(q_{k_j+1})$ maximal in its $\eqSig{x-1}$-class, we have $q_{k_j+1}\leqSig{x} q_{k_j+1}'$, so we have the inclusion of "${<}x$-safe languages" between these states.
	Therefore, if there is a "${<}x$-safe run" over $w'$ from $q_{k_j+1}$ in $\A$, there is also such a "safe run@@sig" over $w'$ from $q_{k_j+1}'$ in $\A'$. 
	This contradicts the fact that the run $\rr'$ produces priority $x-1$ infinitely often, while the run $\rr$ is "${<}x$-safe@@run" from $q_{k_j+1}$, concluding the proof.	
\end{proof}

To finish the proof of Lemma~\ref{lemma-nec:re-determinisation}, we just need to show that $\A'$ preserves all the properties	of the preorders induced by $\A$. 
As in the previous section, to obtain "normality" of the automaton and Item~\ref{item-struct:even-classes-connected}, we rely on a technical lemma that tells us that we can connect states in the same $\eqSig{x-2}$-component as desired.
	
\begin{lemma}
	Let $q\eqRes{x-2}p$ be two different states in $Q$. There is a word $w\in \SS^*$ and a path $q\lrp{w:> x-2} p$ in $\A'$
\end{lemma}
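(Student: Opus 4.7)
The plan is to exploit the cyclic structure that the re-determinisation procedure imposes on $(x-1)$-transitions. Set $C = \classSig{x-2}{q}=\classSig{x-2}{p}$, and let $\safeComp{x}_{i_1} < \safeComp{x}_{i_2} < \dots < \safeComp{x}_{i_l}$ enumerate the $({<}x)$-safe components of $\A$ (equivalently, of $\A'$) whose intersection with $C$ is non-empty. Writing $C_k = C \cap \safeComp{x}_{i_k}$, these are exactly the $\eqSig{x-1}$-classes contained in $C$, by Item~\ref{item-struct:odd-orders}. If $l = 1$ the result is immediate since $q$ and $p$ then lie in the same safe component and are connected by $({\geq}x)$-transitions, which $\A'$ inherits unchanged from~$\A$; from now on assume $l > 1$. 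I would first observe that each $C_k$ is strongly connected inside $\A'$ via paths of priority ${\geq}x$: indeed $\safeComp{x}_{i_k}$ is an SCC of $\autGeq{\A}{x} = \autGeq{\A'}{x}$, and $[0,x{-}2]$-faithfulness of $\eqSig{x-2}$ keeps such paths inside the class~$C$.

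The next step, and the crucial one, is to show that every $C_k$ contains at least one state whose $(x-1)$-transitions already exist in~$\A$ (so in $\A'$ as well). For this, pick any $r \in C_k$ and any $p' \in C$ lying outside $\safeComp{x}_{i_k}$ (which exists because $l > 1$), and apply Item~\ref{item-struct:even-classes-connected} inside $\A$ to obtain a path $r \lrp{u:{>}x-2} p'$. This path must cross at least one $(x-1)$-transition in order to leave $\safeComp{x}_{i_k}$; up to the first such transition only $({\geq}x)$-transitions are used, and these preserve both the safe component and the $\eqSig{x-2}$-class, so the source of that first $(x-1)$-transition lies in $C_k$.

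Then I would identify the target: from the definition of $\inext$ in the construction of $\A'$, whenever $s \in C_k$ has a transition $s \re{a:x-1} s^\ast$ in $\A'$, the state $s^\ast = \pickMax(p_{s,a})$ lies in $\classSig{x-1}{p_{s,a}}$ with $p_{s,a} \eqSig{x-2} s$ and $p_{s,a}$ in $\safeComp{x}_{i_{k-1}}$ when $k > 1$, or in $\safeComp{x}_{i_l}$ when $k = 1$. Hence $s^\ast \in C_{\mathrm{next}(k)}$ where $\mathrm{next}(k) = k-1$ for $k > 1$ and $\mathrm{next}(1) = l$; note that $\mathrm{next}$ is a cyclic permutation of $\{1,\dots,l\}$.

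With these ingredients in hand I would conclude by a closure argument. Let $R(q) \subseteq C$ be the set of states reachable from $q$ in $\A'$ via paths of priority ${\geq}x-1$. By the first step $R(q)$ is a union of some $C_{k}$'s; by the second and third steps, as soon as $C_k \subseteq R(q)$, we can use $({\geq}x)$-transitions inside $C_k$ to reach a state with an $(x-1)$-transition, take it, and land inside $C_{\mathrm{next}(k)}$, so $C_{\mathrm{next}(k)} \subseteq R(q)$. Thus $\{k : C_k \subseteq R(q)\}$ is a non-empty $\mathrm{next}$-stable subset of $\{1,\dots,l\}$, hence the whole set, giving $R(q) = C$ and $p \in R(q)$ as required. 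The main technical difficulty is the second step, where one has to carefully trace how the re-determinisation modifies only the \emph{targets} of $(x-1)$-transitions, leaving their \emph{source--letter} pairs untouched, so that the existence of outgoing $(x-1)$-transitions in $\A'$ can be read off from~$\A$.
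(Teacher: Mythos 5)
Your proposal does not go through: it rests on the assumption that paths producing only priorities $>x-2$ (the $({\geq}x)$-prefix in your step 2, and the individual $(x-1)$-transitions in your step 3) keep the run inside the $\eqSig{x-2}$-class it started from. This is not what the congruence properties give you. That $\eqSig{x-2}$ is a congruence for $({>}x-2)$-transitions means that two \emph{equivalent} states reading the \emph{same} letter reach equivalent states; it does not mean that a state stays $\eqSig{x-2}$-equivalent to itself along such a path. Already for $x=2$ the relation $\eqSig{0}$ is residual equality, and a priority-$1$ transition (or a $({\geq}2)$-path between distinct safe components' residual classes) routinely changes the residual. Concretely, this breaks your argument in two places. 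In step 2, the source of the first $(x-1)$-transition on the path $r\lrp{u:>x-2}p'$ given by Item~\ref{item-struct:even-classes-connected} lies in $\safeComp{x}_{i_k}$ but need not lie in $C_k=C\cap\safeComp{x}_{i_k}$. In step 3, for a transition $s\re{a:x-1}p$ of $\A$ the construction picks $p_{s,a}\in\classSig{x-2}{p}$, where $p$ is the \emph{target}; since $p$ need not satisfy $p\eqSig{x-2}s$, the redirected target $\pickMax(p_{s,a})$ can lie in a different $\eqSig{x-2}$-class than $C$, and $\inext$ is computed relative to the safe components meeting $\classSig{x-2}{p}$, not those meeting $C$. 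Hence your map $\mathrm{next}$ is not a cyclic permutation of $\{1,\dots,l\}$, and the closure argument ("a non-empty $\mathrm{next}$-stable set is everything") collapses.

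The paper avoids exactly this trap by \emph{synchronising two runs on a common word} rather than following a single run. Using $({<}x)$-safe centrality (Item~\ref{item-struct:safe-centralisation}), when $q$ and $p$ are in different safe components one picks $w_1\in\safeSig{x}{p}\setminus\safeSig{x}{q}$ and reads $w_1$ from both states: $[0,x-2]$-faithfulness forces the two runs to remain $\eqSig{x-2}$-equivalent at every step (they read the same letters from equivalent states), the run from $p$ stays inside $\safeComp{x}_{i_p}$ with priorities $\geq x$, and the run from $q$ therefore produces exactly priority $x-1$. The state reached by the $p$-run is the witness guaranteeing that $\safeComp{x}_{i_p}$ always contains a representative of the relevant target class, so the $\inext$ rule strictly decreases the index of the $q$-run's safe component (staying $\geq i_p$) until the two runs meet in $\safeComp{x}_{i_p}$, after which a $({\geq}x)$-path finishes the job. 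If you want to salvage your round-robin picture, you would need to carry such a companion run along; without it there is no control over which $\eqSig{x-2}$-class the redirected $(x-1)$-transitions land in.
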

\begin{proof}
	Let $i_p$ be the index such that $p\in \safeComp{x}_{i_p}$. If $q$ belongs to this same "safe component@@sig", we can connect both states by a path producing priorities $\geq x$.
	If not, by "${<}x$-safe centrality" of $\A$, there is a word $w_1\in \safeSig{x}{p}\setminus \safeSig{x}{q}$. We let $q\lrp{w_1:x-1}q_1$ and $p\lrp{w_1:\geq x} p_1$. We have that $q_1\eqSig{x-2} p_1$ and $p_1\in \safeComp{x}_{i_p}$. While $q_j\notin \safeComp{x}_{i_p}$, we extend this run in a similar way.
	Using the same argument as in the proof of the previous lemma, by definition of $\inext$ each time that the run from $q$ sees a priority $x-1$ it decreases the index of its "safe component@@sig", so eventually it must land in $\safeComp{x}_{i_p}$.
\end{proof}

\begin{lemma}\label{lemma-app:connecting-paths-in-redeterminisation}
	Let $q\lrp{w:y}p$ be a path in $\A$. There is a path $q\lrp{w':y} p$ in $\A'$ connecting the same states and producing the same minimal priority.
\end{lemma}
\begin{proof}
	If $y\geq x$ we can just take $w=w'$.
	Suppose $y<x$ and consider the run $q\lrp{w:y'}p'$ in $\A'$. By Lemma~\ref{lemma-tech:paths-in-exact-transform}, $p\eqSig{x-2}p'$. Also, $y=y'$ (if $y<x-1$, this is given by Lemma~\ref{lemma-tech:paths-in-exact-transform}, if $y=x-1$, by the definition of $\A'$).
	By the previous lemma, we can extend this run to $p'\lrp{w_2:\geq x-2}p$, and take $w' =ww_2$.	
\end{proof}

\begin{lemma}
	Automaton $\A'$, with the preorders $\leqSig{0},\dots, \leqSig{x}$ inherited from $\A$ satisfies:
	\begin{itemize}
		\item it is a "$(x-2)$-structured signature automaton",
		\item preorder $\leqSig{x-1}$ satisfies properties from Items~\ref{item-struct:odd-orders} and~\ref{item-struct:safe-centralisation} from the definition of a "structured signature automaton", and 
		\item preorder $\leqSig{x}$ satisfies the property from Item~\ref{item-struct:even-preorders} and, for priorities $y<x$, also that from Item~\ref{item-struct:strong-congruence}.
	\end{itemize}  
\end{lemma}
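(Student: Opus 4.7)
My plan is to verify each clause of the conclusion by leveraging two general facts: that $\A'$ is a $\eqSig{x-2}$-nice transformation of $\A$ at level $x-1$ (Lemma~\ref{lemma-p4-app:redeterm-nice-transform}), and that every path $q \lrp{w:y} p$ in $\A$ is mirrored by a path $q \lrp{w':y} p$ in $\A'$ with identical endpoints and minimal priority (Lemma~\ref{lemma-p4-app:connecting-paths-in-redeterminisation}). Since the re-determinisation only redirects $(x-1)$-transitions, and only within their $\eqSig{x-2}$-class, the restrictions $\autGeq{\A'}{x}$ and $\autGeq{\A}{x}$ coincide verbatim; conversely any $({<}x)$-safe excursion in $\A'$ can be completed in $\A$ by a $({\geq} x-1)$-return via Item~\ref{item-struct:even-classes-connected}. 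From these observations I extract the invariants I will reuse throughout: the $({<}y)$-safe components and the $({<}y)$-safe languages for every $y \leq x$ are preserved, residuals are preserved (Lemma~\ref{lemma-p4-app:redeterm-correctness}), and the $({\leq} x-2)$-quotient is untouched.

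From these invariants most of the items will follow almost mechanically. Normal form will follow from Lemma~\ref{lemma-p4-app:simple-normal-form} by lifting the returning paths provided by the normality of $\A$ through Lemma~\ref{lemma-p4-app:connecting-paths-in-redeterminisation}. The structured signature conditions at levels $\leq x-2$ (Items~\ref{item-struct:preorder-0},~\ref{item-struct:odd-orders},~\ref{item-struct:even-preorders},~\ref{item-struct:even-classes-connected}, strong congruence of $({\leq} x-2)$-priorities, and~\ref{item-struct:safe-centralisation}) are inherited since the relevant transitions and the quotient are left alone. Items~\ref{item-struct:odd-orders} and~\ref{item-struct:safe-centralisation} at level $x-1$, and Item~\ref{item-struct:even-preorders} at level $x$, depend only on the $({<}x)$-safe components, the $\eqSig{x-1}$-classes, and the $({<}x)$-safe languages, all of which are preserved.

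The main obstacle, and the only delicate part, is the last clause: proving that $\eqSig{x}$ is a congruence in $\A'$ and that Item~\ref{item-struct:strong-congruence} at level $x$ holds for priorities $y < x$. For $y \leq x-2$ the argument is routine since these transitions are unchanged from $\A$ and $\eqSig{x}$ refines $\eqSig{x-2}$, so the strong congruence of $\A$ at level $x-2$ transfers. For $y = x-1$, the argument rests on the design of the construction: if $q \eqSig{x} q'$ then $q \eqSig{x-1} q'$, so $q$ and $q'$ lie in the same $({<}x)$-safe component, and by $[0,x-2]$-faithfulness of $\eqSig{x-2}$ their original successors in $\A$ share an $\eqSig{x-2}$-class; consequently $\inext$ takes the same value for $q$ and for $q'$. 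To close the loop I will assume $p_{q,a}$ is picked canonically as a function of the $\eqSig{x-1}$-class of $q$ and the letter $a$, and likewise that $\pickMax$ depends only on the $\eqSig{x-1}$-class; this guarantees $\pickMax(p_{q,a}) = \pickMax(p_{q',a})$, giving equality of targets. The same reasoning delivers congruence of $\eqSig{x}$ for $(x-1)$-transitions, and the remark following the definition of $\leqSig{x}$ in Section~\ref{subsec-p4:from-HP-to-signature} already yields congruence for the other priorities, completing the verification.
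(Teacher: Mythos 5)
Your proof is correct and follows essentially the same route as the paper: preservation of the $({<}x)$-safe components and $({<}x)$-safe languages under re-determinisation gives Items~\ref{item-struct:odd-orders}, \ref{item-struct:safe-centralisation} and~\ref{item-struct:even-preorders} immediately, strong congruence for $y\leq x-2$ follows because $\eqSig{x}$ refines $\eqSig{x-2}$, the case $y=x-1$ follows from the definition of the redirected transitions via $\pickMax$, and normality together with Item~\ref{item-struct:even-classes-connected} are lifted through Lemma~\ref{lemma-p4-app:connecting-paths-in-redeterminisation}, exactly as the paper does by analogy with Lemma~\ref{lemma-p4-tech:safe-centr-preserves-structured}. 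One minor remark: your added assumption that $p_{q,a}$ is chosen canonically as a function of the $\eqSig{x-1}$-class of $q$ is unnecessary, since $p_{q,a}$ and $p_{q',a}$ both lie in $\safeComp{x}_{\inext}\cap\classSig{x-2}{p}$ and are therefore $\eqSig{x-1}$-equivalent by Item~\ref{item-struct:odd-orders}, so the constancy of $\pickMax$ on $\eqSig{x-1}$-classes already built into the construction yields $\pickMax(p_{q,a})=\pickMax(p_{q',a})$ without any canonical choice.
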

\begin{proof}
	We start verifying the properties for the preorders $\leqSig{x-1}$ and $\leqSig{x}$.
	We note that the "${<}x$-safe components" of $\A'$ exactly correspond to those in $\A$, and that for every $q\in Q$, $\safeSigAut{x}{q}{\A}=\safeSigAut{x}{q}{\A'}$. The fact that $\leqSig{x-1}$ satisfies Items~\ref{item-struct:odd-orders} and~\ref{item-struct:safe-centralisation}, and that  $\leqSig{x}$ satisfies Item~\ref{item-struct:even-preorders} follows immediately. 
	We check that relation $\eqSig{x}$ satisfies Item~\ref{item-struct:strong-congruence} for priorities $y<x$. For $y\leq x-2$, this follows from the fact that $\eqSig{x}$ refines $\eqSig{x-2}$, and the latter relation satisfies Item~\ref{item-struct:strong-congruence}. For $y=x-1$, if $q\re{a:x-1} p$ and $q'\re{a:x-1} p'$, with $q\eqSig{x}q'$, by definition of the "$x-1$-transitions" in $\A'$, $p = \pickMax(p) = \pickMax(p') = p'$.
	
	Checking that $\A'$ is a "$(x-2)$-structured signature automaton" poses no difficulty. It can be done in an analogous way as it was done in the proof of Lemma~\ref{lemma-tech:safe-centr-preserves-structured}; by applying Lemma~\ref{lemma-app:connecting-paths-in-redeterminisation} to obtain "normality" of $\A'$ and property from Item~\ref{item-struct:even-classes-connected}.	
\end{proof}

\subsubsection{Uniformity of \texorpdfstring{$x$}{x}-transitions over \texorpdfstring{$\eqSig{x}$}{x-1}-classes}

We finally show how to transform $\A$ into an "equivalent@@aut" automaton that is either "$x$-structured signature", or strictly smaller. The techniques presented here generalise those applying to B\"uchi automata appearing  in Section~\ref{subsec-warm:Buchi} of the warm-up.

\subparagraph{Hypothesis.} In all this subsection we suppose that $\A$ is "deterministic" "parity automaton" "recognising" $W$ with "nested total preorders" defined up to $\leqSig{x}$ such that:
\begin{itemize}
	\item it is a "$(x-2)$-structured signature automaton",
	\item preorder $\leqSig{x-1}$ satisfies properties from Items~\ref{item-struct:odd-orders} and~\ref{item-struct:safe-centralisation} from the definition of a "structured signature automaton", and 
	\item preorder $\leqSig{x}$ is a "congruence" and satisfies the property from Item~\ref{item-struct:even-preorders} and, for priorities $y<x$, also that from Item~\ref{item-struct:strong-congruence} .
\end{itemize}

Our objective is to prove, under this list of hypothesis, that we can either obtain an "equivalent@@aut" "deterministic" "$x$-structured signature automaton", or reduce the number of states~of~$\A$.

\lemNecUniformity*\label{lemma-nec:uniformity-x-transitions-pageApp}

We remark that $\eqSig{x}$ already satisfies most desired properties of "monotonicity@@cong"; only the "uniformity for" $x$-transitions is missing.

\begin{lemma}\label{lemma-app:monotonicite-preorder-x}
	The relation $\eqSig{x}$ is a "$[0,x-1]$-faithful congruence". Moreover, over each $\eqSig{x-1}$-class, transitions using priorities ${\geq}x$ are "monotone for" $\leqSig{x}$. 
\end{lemma}
\begin{proof}
	The "$[0,x-2]$-faithfulness" follows from the fact that $\eqSig{x}$ refines $\eqSig{x-2}$ and $\A$ is "$(x-2)$-structured signature".
	The "uniformity@@cong" of $(x-1)$-transitions over $\eqSig{x}$-classes is given by the fact that $\eqSig{x}$-equivalent states have the same "${<}x$-safe language", combined with the "uniformity@@cong" of ${<}(x-1)$-transitions.
	
	The fact that $\eqSig{x}$ is a "congruence for" "$(x-1)$-transitions" follows from Item~\ref{item-struct:strong-congruence} of the definition of a "structured signature automaton" (we recall that $\eqSig{x}$ satisfies this property by hypothesis).
	The "congruence for" ${\geq}x$-transitions and the monotonicity of ${\geq}x$-transitions for $\leqSig{x}$ at each $\eqSig{x-1}$-class follow from the monotonicity of "${<}x$-safe languages" (Lemma~\ref{lemma-sig:monotonicity-safe-lang}).
\end{proof}

\subparagraph{Polished automata.} We generalise the notion of "polished automata@@buchi" from Section~\ref{subsec-warm:Buchi} to our current setting.
Recall that $\classSig{x}{q}$ is the class of $q$ for the equivalence relation $\eqSig{x}$.

\begin{definition}[Polished classes and automata]
	\AP We say that the class $\classSig{x}{q}$ is ""$x$-polished@@class""~if:
	\begin{itemize}
		\item Words producing priority $x$ "act uniformly" in $\classSig{x}{q}$. That is, if $q_1,q_2\in \classSig{x}{q}$ and $q_1\lrp{w:x}$, then $q_2\lrp{w:x}$.
		\item For every $q_1,q_2\in \classSig{x}{q}$, $q_1\neq q_2$, there is a path $q_1\lrp{w:>x}q_2$ producing exclusively priorities $>x$ joining $q_1$ and $q_2$.
	\end{itemize}
	\AP We say that the automaton $\A$ is ""$x$-polished@@aut"" if all its $\eqSig{x}$-classes are "$x$-polished@@class".
\end{definition}

\begin{remark}
	We remark that, as $x\geq 2$ and we assume that the automaton $\A$ is in "normal form", all non-trivial $\eqSig{x}$-classes are "recurrent@@class": if a $\eqSig{x}$-class is not trivial, there is a cycle visiting all the states of the class.
	Therefore, we do not need to take care of "transient classes" (as it was the case in Lemma~\ref{lemma-warm:polished-automaton} from the warm-up).
\end{remark}

\begin{lemma}
	We can decide whether $\A$ is "$x$-polished@@aut" in polynomial time.
\end{lemma}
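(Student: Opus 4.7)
The plan is to reduce each of the two defining conditions of $x$-polishedness to checks that can be carried out locally, in polynomial time. The first condition (uniformity of words producing priority $x$ within each $\eqSig{x}$-class) will be reduced to a letter-by-letter comparison of the outgoing transitions of $\eqSig{x}$-equivalent states. The second condition (existence, for any two distinct states of a class, of a path using exclusively priorities $>x$) will be handled by a standard SCC decomposition of the appropriate subautomaton of $\A$.

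For the first condition, the key observation is that by Lemma~\ref{lemma-p4-app:monotonicite-preorder-x}, $\eqSig{x}$ is already a $[0,x-1]$-faithful congruence, and $\A$ is deterministic by hypothesis. Hence for any pair $q_1 \eqSig{x} q_2$ and any letter $a$, the unique $a$-transitions $q_1 \re{a:y_1} p_1$ and $q_2 \re{a:y_2} p_2$ automatically satisfy $p_1 \eqSig{x} p_2$, and $y_1 = y_2$ whenever either priority is at most $x-1$; moreover, when either priority is $\geq x$, so is the other. The only ingredient not already guaranteed is the equivalence $y_1 = x \iff y_2 = x$. A straightforward induction on the length of $w$, using determinism together with this single-letter condition, shows that if the equivalence holds at the letter level throughout the class, then $q_1 \lrp{w:x}$ implies $q_2 \lrp{w:x}$ for every word. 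Thus the first condition reduces to iterating over all $\eqSig{x}$-equivalent pairs of states and all letters, and comparing the priorities of the corresponding transitions against $x$, which is clearly polynomial.

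For the second condition, the existence of a path $q_1 \lrp{w:>x} q_2$ between two distinct states is equivalent to $q_1$ and $q_2$ belonging to the same strongly connected component of $\autGeq{\A}{x+1}$. Computing this SCC decomposition (e.g.\ via Tarjan's algorithm) is polynomial, and it then suffices to verify, for each $\eqSig{x}$-class, that all of its elements lie in a single SCC.

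Combining both polynomial checks yields the decision procedure. The only mildly delicate step is the letter-to-word reduction in the first condition, but its correctness is immediate from $[0,x-1]$-faithfulness together with determinism; no substantial obstacle is expected.
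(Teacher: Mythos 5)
Your proof is correct and follows essentially the same route as the paper's: the uniformity condition is reduced to a single-letter check via the $[0,x-1]$-faithfulness of $\eqSig{x}$ (Lemma~\ref{lemma-p4-app:monotonicite-preorder-x}) together with determinism, and the connectivity condition is reduced to a strongly-connected-components computation. The only cosmetic difference is in the second check: the paper tests strong connectivity of the subautomaton induced by each class $\classSig{x}{q}$ and its $({>}x)$-transitions, whereas you test that each class lies in a single SCC of $\autGeq{\A}{x+1}$; your version matches the definition as literally stated (the connecting paths need not stay inside the class), and both are polynomial.
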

\begin{proof}
	As $\eqSig{x}$ is a "$[0,x-1]$-faithful congruence" (Lemma~\ref{lemma-app:monotonicite-preorder-x}), we just need to check the first property for letters, which can be done in linear time in $|\Sigma||\A|$.
	
	For the second property, we just need to check whether, for each $q\in Q$, the "subautomaton" induced by $\classSig{x}{q}$ and transitions with priority ${>}x$ is strongly connected.
\end{proof}

\paragraph*{Case 1: \texorpdfstring{$\A$}{A} is already \texorpdfstring{$x$}{x}-polished} Assume that $\A$ is "$x$-polished@@aut". In this case, it is almost an "$x$-structured signature automaton". We just need to ensure that if $q\eqSig{x} q'$, two transitions $q\re{a:x}p$ and $q'\re{a:x}p'$ go to a same state $p=p'$.

We remark that $\eqSig{x}$ already satisfies most desired properties of "monotonicity@@cong"; only the "uniformity for" $x$-transitions is missing.


In order to obtain the "strong congruence" of $x$-transitions (Item~\ref{item-struct:strong-congruence}), we redirect some $x$-transitions of $\A$.
For each $\eqSig{x}$-class $\classSig{x}{q}$, pick an arbitrary state $\pickP(q)\in \classSig{x}{q}$. (Formally, $\intro*\pickP\colon Q \to Q$ such that $\pickP(q)=\pickP(q')$ if $q\eqSig{x}q'$). We let $\A'$ be the automaton obtained as follows:
\begin{itemize}
	\item The states of $\A'$ are the same than those in $\A$.
	\item Transitions using priorities different from $x$ are those in $\A$.
	\item If $q\re{a:x}p$, we let $q\re{a:x}\pickP(p)$ in~$\A'$.
\end{itemize}

It is immediate to check that $\A'$ is a "$\eqSig{x}$-nice transformation of $\A$ at level $x$". Moreover, $\autGeq{\A}{x+1} = \autGeq{\A'}{x+1}$. These remarks directly give:

\begin{lemma}
	$\A'$ is "$x$-polished@@aut".
\end{lemma}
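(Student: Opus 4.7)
The plan is to verify the two conditions defining \emph{$x$-polishedness} for $\A'$ using the structural information carried over from $\A$. Two observations are free of charge and already noted in the excerpt: first, $\A'$ is a $\eqSig{x}$-nice transformation of $\A$ at level $x$, so $\eqSig{x}$-classes coincide in $\A$ and $\A'$ and $\eqSig{x}$ remains a $[0,x-1]$-faithful congruence in $\A'$; second, $\autGeq{\A}{x+1}=\autGeq{\A'}{x+1}$, since only $x$-transitions are modified.

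Condition~(2), existence of paths producing exclusively priorities ${>}x$ between distinct members of an $\eqSig{x}$-class, is immediate: such paths exist in $\A$ by hypothesis, and they live entirely inside $\autGeq{\A}{x+1}=\autGeq{\A'}{x+1}$, so they exist in $\A'$ as well.

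The only genuine task is Condition~(1). I will establish the following key claim: for every state $q$ and finite word $w$, one has $q\lrp{w:x}$ in $\A$ if and only if $q\lrp{w:x}$ in $\A'$. Granting this, uniformity of $x$-producing words in the $\eqSig{x}$-classes of $\A'$ follows directly from the analogous statement for $\A$, which holds by assumption.

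To prove the key claim I would track the two runs $q=s_0\re{w_1:y_1}s_1\re{w_2:y_2}\cdots$ in $\A$ and $q=s'_0\re{w_1:y'_1}s'_1\re{w_2:y'_2}\cdots$ in $\A'$ simultaneously and show by induction that $s_i\eqSig{x}s'_i$ for all $i$, and that at each index $y_i<x\iff y'_i<x$ (in which case $y_i=y'_i$), $y_i=x\iff y'_i=x$, and $y_i>x\iff y'_i>x$. For priorities strictly below $x$, the assertion is just $[0,x-1]$-faithfulness of $\eqSig{x}$ together with the fact that such transitions are untouched by the transformation. For the case $y_i=x$, the crucial ingredient is that $x$-transitions are uniform across $\eqSig{x}$-classes in $\A$: this is the single-letter instance of the first clause of $\A$ being $x$-polished, so if $s_{i-1}\re{w_i:x}$ is a transition in $\A$ then every $\eqSig{x}$-equivalent state admits an $x$-transition on $w_i$, and since the redirection in $\A'$ only changes targets, the same uniformity holds in $\A'$; the symmetric implication follows the same pattern. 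The case $y_i>x$ is then forced by determinism and completeness. Taking the minimum over $i$ gives the claim, and hence Condition~(1). The main conceptual obstacle here is purely a bookkeeping one, namely keeping track of the two divergent runs and extracting from $x$-polishedness of $\A$ the transition-level uniformity needed to transfer priority-$x$ behavior to $\A'$; once this is observed, the rest of the argument is routine.
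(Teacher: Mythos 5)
Your proof is correct and follows essentially the route the paper intends: the paper states the lemma as an immediate consequence of $\A'$ being a $\eqSig{x}$-nice transformation at level $x$ together with $\autGeq{\A}{x+1}=\autGeq{\A'}{x+1}$, and your argument is exactly the fleshed-out version of that, with the right key observation that the single-letter instance of $x$-polishedness of $\A$ supplies the uniformity of $x$-transitions needed to synchronise the priorities of the two runs.
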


\begin{lemma}\label{lemma-app:same-paths-if-polished}
	There is a path $q\lrp{w:y}p$ in $\A$ if and only if there is a path $q\lrp{w':y}p$ in~$\A'$.
\end{lemma}
\begin{proof}
	We suppose that there is $q\lrp{w:y}p$ in $\A$ (the converse proof is symmetric).
	If $y>x$, we have that $q\lrp{w:y}p$, as $\autGeq{\A}{x+1} = \autGeq{\A'}{x+1}$. 
	If $y\leq x$, as $\A'$ is a "nice transformation at level $x$", we have that $q\lrp{w:y}p'$ in $\A'$, with $p\eqSig{x}p'$. As $\A'$ is "$x$-polished@@aut", there is a path $p'\lrp{w_2:>x} p$. We conclude by taking $w'=ww_2$.
\end{proof}

\begin{lemma}
	Automaton $\A'$ is "equivalent to@@aut" $\A$, and it is an "$x$-structured signature automaton".
\end{lemma}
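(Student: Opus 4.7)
The plan is to verify the two claims separately, relying on the fact that $\A'$ is a $\eqSig{x}$-nice transformation of $\A$ at level $x$ together with Lemma~\ref{lemma-p4-app:same-paths-if-polished}. For the equivalence $\Lang{\A'} = \Lang{\A}$, Lemma~\ref{lemma-p4-tech:nice-transformations} immediately handles words that are accepted (resp.\ rejected) with a priority different from $x$. The only remaining case is a run whose minimum priority produced infinitely often is exactly $x$. For such a run in either automaton, I would decompose it around its visits to a fixed $\eqSig{x}$-class, use Lemma~\ref{lemma-p4-app:same-paths-if-polished} segment by segment to transfer each finite piece between $\A$ and $\A'$ (so that the same endpoints and the same minimal priority $x$ are realised), and then glue the segments back together using the $x$-polish of the destination class to link up consecutive endpoints via ${>}x$-paths. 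Since $\A$ is deterministic and $\A'$ is obtained by redirecting $x$-transitions towards the canonical representative $\pickP(p)$ of the same $\eqSig{x}$-class, this transfer is essentially forced by the $[0,x-1]$-faithfulness of $\eqSig{x}$.

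To show $\A'$ is an $x$-structured signature automaton, I would go through the six defining items, reusing the preorders $\leqSig{0},\dots,\leqSig{x}$ inherited from $\A$. Normality (via the simplified criterion of Lemma~\ref{lemma-p4-app:simple-normal-form}) and Item~\ref{item-struct:even-classes-connected} follow directly from Lemma~\ref{lemma-p4-app:same-paths-if-polished}, since $\A'$ admits paths between the same pairs of states with the same minimal priority as $\A$. Item~\ref{item-struct:preorder-0} follows from the equality of languages (residual classes match), and Items~\ref{item-struct:odd-orders} and \ref{item-struct:even-preorders} follow because the partition into ${<}y$-safe components and the ${<}y$-safe languages of each state are preserved (immediate for $y > x$ since $\autGeq{\A}{x+1} = \autGeq{\A'}{x+1}$, and for $y \leq x$ from the nice-transformation property). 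Item~\ref{item-struct:safe-centralisation} is then inherited from $\A$.

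The only item requiring genuine work is Item~\ref{item-struct:strong-congruence} at level $x$: for any $q \eqSig{x} q'$ and any letter $a$ with transitions $q \re{a:y} p$ and $q' \re{a:z} p'$ in $\A'$, we must show $y = z$ and $p = p'$ whenever $y \le x$. For $y < x$, this is the $[0,x-1]$-faithfulness of $\eqSig{x}$ recalled in Lemma~\ref{lemma-p4-app:monotonicite-preorder-x}, which is preserved since ${<}x$-priority transitions are unchanged. For $y = x$, the redirection is precisely designed to handle this case: the two transitions come from original transitions $q \re{a:x} p_0$ and $q' \re{a:x} p'_0$ in $\A$ with $p_0 \eqSig{x} p'_0$ (by $[0,x-1]$-faithfulness of $\eqSig{x}$ in $\A$, together with uniformity of $x$-transitions over $\eqSig{x}$-classes which holds because $\A$ is $x$-polished), and both are redirected to $\pickP(p_0) = \pickP(p'_0)$, giving $p = p'$.

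The main obstacle I expect is the acceptance argument for runs with minimal priority exactly $x$: one has to be careful that redirecting $x$-transitions does not create spurious runs producing priority $x$ infinitely often in $\A'$ that do not correspond to accepting runs of $\A$. This is where the $x$-polish hypothesis on $\A$ is crucial, as it guarantees that $\pickP(p)$ is reachable from $p$ by a path producing only priorities $>x$, so the minimal priority of any cycle is not altered by the redirection. Once this is pinned down, everything else reduces to bookkeeping via the nice-transformation framework.
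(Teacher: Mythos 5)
Your overall route is the paper's: invoke Lemma~\ref{lemma-p4-tech:nice-transformations} for runs whose minimal priority differs from $x$, and then go through the items of the definition of a structured signature automaton (the paper dismisses this part as a ``routine check''). Your treatment of Item~\ref{item-struct:strong-congruence} at level $x$ is exactly the intended one: every $x$-transition of $\A'$ is redirected to $\pickP$ of its original target, the original targets of $a$-transitions from $\eqSig{x}$-equivalent states lie in a common $\eqSig{x}$-class because $\eqSig{x}$ is a congruence, the priority is uniformly $x$ by polishedness, and hence both transitions land on the same representative.

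The one step that would fail as written is the language-equality argument for runs whose minimal priority produced infinitely often is exactly $x$. You propose to transfer such a run segment by segment via Lemma~\ref{lemma-p4-app:same-paths-if-polished} and to ``glue the segments back together \dots via ${>}x$-paths''. But that lemma only asserts the existence of a path $q\lrp{w':y}p$ for \emph{some} word $w'$ (its proof appends a connecting word $w_2$ to reach the exact endpoint), and gluing in extra ${>}x$-paths likewise changes the word being read; so this argument shows that some \emph{other} word is accepted by $\A$, not the word $w$ you started from, and language equality does not follow. The correct argument is more direct and needs no such lemma: since both automata are deterministic and $\A'$ is a $\eqSig{x}$-nice transformation of $\A$ at level $x$, the runs over $w$ in $\A$ and in $\A'$ visit $\eqSig{x}$-equivalent states at every position; at each step the two transitions produce the same priority whenever that priority is $\leq x$ (by $[0,x-1]$-faithfulness for priorities ${<}x$, and by uniformity of $x$-transitions over $\eqSig{x}$-classes --- the first clause of $x$-polishedness applied to single letters --- for priority exactly $x$), and both produce a priority ${>}x$ otherwise. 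Hence the minimal priority seen infinitely often equals $x$ in one run if and only if it does in the other, which settles the remaining case. You invoke precisely this uniformity elsewhere (for Item~\ref{item-struct:strong-congruence}), so the missing ingredient is already in your hands; it is only the appeal to Lemma~\ref{lemma-p4-app:same-paths-if-polished} for a \emph{fixed} word that has to go. (A similar, milder caveat applies to your claim that the ${<}y$-safe languages for $y\leq x$ are preserved ``from the nice-transformation property'': the nice transformation alone only controls priorities up to $x-1$, and you again need that redirected targets are $\eqSig{x}$-equivalent and that the distinction between priority $x$ and priority ${>}x$ is unchanged.)
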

\begin{proof}
	The fact that $\Lang{\A} = \Lang{\A'}$ follows easily using that $\A'$ is a "$\eqSig{x}$-nice transformation of $\A$ at level $x$" and applying Lemma~\ref{lemma-tech:nice-transformations}.
	Lemma~\ref{lemma-app:same-paths-if-polished} implies that $\A'$ is in "normal form".
	Verifying that $\A'$ is an "$x$-structured signature automaton" is just a routine check, using that $\A'$ is "$x$-polished@@aut" and Lemma~\ref{lemma-app:same-paths-if-polished}.
\end{proof}

\paragraph*{Case 2: Polishing a class}

Assume now that there is a class $\classSig{x}{q}$ that is not "$x$-polished@@class" in $\A$. We show that we can remove some states from this class, obtaining an strictly smaller "equivalent@@aut" automaton.

\subparagraph{Local languages and local automata.}
\AP We define the $x$-local alphabet at $\classSig{x}{q}$ by
\[
\intro*\locAlph{x}{q} = \{w\in \SS^+ \mid \classSig{x}{q} \lrp{w:\geq x} \classSig{x}{q} \text{ and for any proper prefix } w' \text{ of } w,  \classSig{x}{q} \lrp{w:\geq x} \classSig{x}{p}\neq \classSig{x}{q} \}.
\]

We remark that, as $\eqSig{x}$ is a "congruence for" "${\geq}x$-transitions" (Lemma~\ref{lemma-app:monotonicite-preorder-x}),  $\locAlph{x}{q}$ is well-defined and the notation $\classSig{x}{q} \lrp{w:\geq x}$ can be used. A word $w\in \SS^*$ belongs to $\locAlph{x}{q}^*$ if and only if it connects states in the class $\classSig{x}{q}$. Elements in $\locAlph{x}{q}$ are those that do not pass twice through this class.
Note that $\locAlph{x}{q}$ is a "prefix code", and therefore it is "uniquely decodable" (even if, in general, it is infinite).

\AP Seeing words in $\locAlph{x}{q}^\omega$ as words in $\SS^\omega$, define the localisation of $W$ to~$\classSig{x}{q}$ to be the objective
\[
\intro*\locLang x q = \{ w \in \locAlph{x}{q}^\omega \mid w \in \Lang{\initialAut{\A}{q}}\}.
\]
Observe that, as $\eqSig{x}$ "refines" $\eqRes{\A}$, this last definition does not depend on the choice of $q$ and $\locLang x q$ is "prefix-independent". 
Moreover, $\locLang x q$ is "positional" over finite, "$\ee$-free" "Eve-games": any "$\locLang x q$-game" in which "Eve" could not "play optimally using positional strategies" would provide a counterexample for the "positionality" of $\Lang{\initialAut{\A}{q}}$, which is "positional" if $W$ is (Lemma~\ref{lemma-app:positional-implies-pos-quotient}).

\AP The ""local automaton of the class $\classSig{x}{q}$"" is the automaton $\intro*\localAutSig{x}{q}$ defined as:
\begin{itemize}
	\item The set of states is $\classSig{x}{q}$.
	\item The initial state is arbitrary.
	\item For $w\in \locAlph{x}{q}$, $q_1\re{w:y}q_2$ if $q_1\lrp{w:y}q_2$ in $\A$ (we must have $y\geq x$).
\end{itemize}

\subparagraph{Super words and super letters for local languages.}
We recall some terminology introduced in the warm-up. Assume that $L$ is a "prefix-independent" language.
\AP We say that $u\in \SS^+$ is a ""super word"" for $L$ if, for every $w\in \SS^\oo$, if $w$ contains $u$ infinitely often as a factor, then $w\in L$. 
\AP If $s$ is a letter, we say that it is a ""super letter"".

\AP For $q$ a state and $x$ an even priority, we let $\intro*\BLetters{x}{q}\subseteq \locAlph{x}{q}$ be the set of "super letters for" $\locLang{x}{q}$, and we write $\intro*\NLetters{x}{q} = \locAlph{x}{q}\setminus \BLetters{x}{q}$. We refer to $\NLetters{x}{q}$ as the set of ""neutral letters@@sig"" of $\locAlph{x}{q}$ (for $\locLang{x}{q}$).

\begin{lemma}[Super words and uniformity]\label{lemma-app:super-words-in-local-languages}
	A word $w\in \locAlph{x}{q}^\oo$ is a "super word" for $\locLang x q$ if and only if $w$ "produces priority $x$ uniformly" in $\classSig{x}{q}$, that is, for all $q'\in \classSig{x}{q}$,  $q'\lrp{w:x}\classSig{x}{q}$.
\end{lemma}
\begin{proof}
	By "normality" of $\A$, if there is $q_1\in[q]$ such that $q_1\lrp{w:>x}q_2$, there is a word $w'\in \SS^*$ labelling a returning path $q_2\lrp{w':x+1}q_1$. Therefore, $(ww')^\oo\notin \Lang{\initialAut{\A}{q}}$, so $w$ is not a "super word" for $\locLang x q$.
	The converse implication is clear, since each time word $w$ is read, the minimal priority produced by the automaton is $x$.
\end{proof}

In particular, using previous lemma, we can detect the set of "super letters" $\BLetters{x}{q}$ in polynomial time.

\subparagraph{Super words of positional languages.}
The use of the  hypothesis of "positionality" of $W$ for proving Lemma~\ref{lemma-nec:uniformity-x-transitions} resides in the next fundamental result.

\begin{lemma}[Neutral letters do not form super words]\label{lemma-app:super-words-contain-super-letter}
	Let $w\in \locAlph{x}{q}^+$ be a "super word" for $\locLang{x}{q}$. Then, $w$ contains some "super letter".
\end{lemma}
\begin{proof}
	If $w$ is already a letter in $\locAlph{x}{q}$, we are done. If not, let $w=w_1w_2$ be any non-trivial decomposition into smaller words $w_1,w_2\in \locAlph{x}{q}^+$. We show that either $w_1$ or $w_2$ are "super words" for $\locLang{x}{q}$. This allows us to finish the proof, as we can recursively chop $w$ into strictly smaller "super words" until obtaining a "super letter".

	Suppose by contradiction that neither $w_1$ or $w_2$ are "super words". Then, by Lemma~\ref{lemma-app:super-words-in-local-languages}, there are states $q_1$ and $q_2$ such that $q_1\lrp{w_1:>x}q_1'$ and $q_2\lrp{w_2:>x}q_2'$.
	By "normality", we obtain returning paths $q_1'\lrp{u_1:x+1}q_1$ and $q_2'\lrp{u_2:x+1}q_2$.
	Therefore, $(w_1u_1)^\oo\notin W$ and $(w_2u_2)^\oo\notin W$.
	We consider the "game" $\G$ with "winning condition" $\Lang{\initialAut{\A}{q}}$ consisting in a vertex $v$ with self loops $u_1w_1$ and $u_2w_2$ (see Figure~\ref{fig-warm:game-chop-super-words} from the warm-up).
	"Eve" can  "win" game $\G$, as alternating the two self loops she produces the word $(u_1w_1w_2u_2)^\oo$, which belongs to $\Lang{\initialAut{\A}{q}}$ since $w_1w_2$ is a "super word".
	However, "positional strategies" in this game produce either $(w_1u_1)^\oo$ or $(w_2u_2)^\oo$, both losing.
	This contradicts the "positionality" of $\Lang{\initialAut{\A}{q}}$, and therefore, that of $W$ (Lemma~\ref{lemma-app:positional-implies-pos-quotient}).
\end{proof}

\subparagraph{Polishing a \texorpdfstring{$\eqSig{x}$}{x}-class of \texorpdfstring{$\A$}{A}.}
We show how to "polish" the class $\classSig{x}{q}$ of $\A$. This process has the property that, either $\classSig{x}{q}$ is already "polished@@classSig", or the obtained automaton $\A'$ has strictly less states than $\A$, as desired. 

Assume that the class $\classSig{x}{q}$ is not "$x$-polished".
Consider the restriction of $\localAutSig{x}{q}$ to transitions labelled with $\NLetters{x}{q}$, which we denote $\localAutSig{x}{q}'$. \AP Take $S_{\classSig{x}{q}}$ to be a ""final SCC"" of $\localAutSig{x}{q}'$ (that is, one without edges leading to states not on itself).

Fix a state $q_0\in \classSig{x}{q}$.
Consider the automaton $\A'$ obtained from $\A$ by removing states in $\classSig{x}{q} \setminus S_{\classSig{x}{q}}$, and redirecting transition that go to $\classSig{x}{q} \setminus S_{\classSig{x}{q}}$ in $\A$ to transitions towards $q_0$. 
For these redirected transitions, we keep the same priority if it is $\leq x$, and set it to $x$ otherwise. Formally:

\begin{itemize}
	\item The set of states of $\A'$ is $Q' = Q \setminus \left( \classSig{x}{q} \setminus S_{\classSig{x}{q}} \right)$.
	\item The initial state is $q_\init$, or $q_0$ if $q_\init\in \classSig{x}{q} \setminus S_{\classSig{x}{q}}$.
\end{itemize}
For $q'\in Q'$:
\begin{itemize}
	\item If $q'\re{a:y}p$ in $\A$ and $p\notin \classSig{x}{q}$, then $q\re{a:y}p$ in $\A'$.
	\item If $q'\re{a:y}p$ in $\A$,  $p\in \classSig{x}{q}\setminus S_{\classSig{x}{q}}$, and $y\leq x$, then $q'\re{a:y}q_0$ in $\A'$.
	\item If $q'\re{a:y}p$ in $\A$,  $p\in \classSig{x}{q} \setminus S_{\classSig{x}{q}}$, and $y> x$, then $q'\re{a:x}q_0$ in $\A'$.
\end{itemize}

\AP For transitions in the two latter cases, we say that $q'\re{a:y}q_0$ has been ""redirected@@polish"" from $q'\re{a:y}p$.

\begin{remark}\label{rmk-app:nothing-changes-if-polished}
	If $\classSig{x}{q} = S_{\classSig{x}{q}}$, then $\A' = \A$.
\end{remark}

The following lemma will be use to show that we can compute $\A'$ in polynomial time, to prove the correctness of $\A'$, and to obtain that $\classSig{x}{q}$ is "$x$-polished@@class" in $\A'$.

\begin{lemma}\label{lemma-app:neutral-letters-in-polish}
	Let $q_1,q_2\in S_{\classSig{x}{q}}$, and let $w\in \NLetters{x}{q}^*$ labelling a path $q_1\lrp{w:y}q_2$ in $\A$.
	Then, $y>x$.
\end{lemma}
\begin{proof}
	The fact that $y\geq x$ simply follows from the fact that $\NLetters{x}{q}\subseteq \locAlph{x}{q}$, which, by definition, contains words connecting the states in $\classSig{x}{q}$ producing no priority ${<}x$.
	
	Suppose by contradiction that $y=x$. Then, by the same argument as in the proof of Lemma~\ref{lemma-nec:existence-uniform-words} (see also Claim~\ref{cl:existence-super-words}),
	there is $w'\in \NLetters{x}{q}^*$ "producing priority $x$ uniformly" in $\classSig{x}{q}$ and coming back to this class; that is, for every $q'\in \classSig{x}{q}$, $q'\lrp{w':x}\classSig{x}{q}$.
	Therefore, by Lemma~\ref{lemma-app:super-words-in-local-languages}, $w'$ is a "super word". 
	By Lemma~\ref{lemma-app:super-words-contain-super-letter}, $w'$ must contain a "super letter", a contradiction, as $w'\in \NLetters{x}{q}^*$ and $\locAlph{x}{q}$ is "uniquely decodable".
\end{proof}

\begin{lemma}
	Automaton $\A'$ can be computed in polynomial time from $\A$.
\end{lemma}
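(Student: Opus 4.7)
The main obstacle is that the local alphabet $\locAlph{x}{q}$, and hence $\NLetters{x}{q}$, may be infinite, so we cannot enumerate the letters of $\localAutSig{x}{q}'$ nor construct this automaton explicitly. The plan is to compute, in polynomial time, just the reachability graph on the finite vertex set $\classSig{x}{q}$ induced by $\localAutSig{x}{q}'$, and then extract a final SCC by a standard decomposition.

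The key step will be to decide, for each pair $q_1, q_2 \in \classSig{x}{q}$, whether there is an edge $q_1 \to q_2$ in $\localAutSig{x}{q}'$. Unfolding the definitions together with Lemma~\ref{lemma-p4-app:super-words-in-local-languages}, such an edge exists if and only if there are a word $w \in \SS^+$ and a state $q_0 \in \classSig{x}{q}$ such that in $\A$ the run from $q_1$ on $w$ stays in transitions with priority $\geq x$, never visits $\classSig{x}{q}$ at any intermediate position, and ends in $q_2$ (ensuring $w \in \locAlph{x}{q}$ and $q_1 \lrp{w:\geq x} q_2$), while the run from $q_0$ on $w$ stays in transitions with priority $> x$ (witnessing that $w$ is not a super letter, hence $w \in \NLetters{x}{q}$). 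Since $\A$ is deterministic over $({\geq}x)$-transitions, both runs are uniquely determined by $w$. I will therefore, for each fixed $q_0$, build a product automaton on $Q \times Q$ where a transition $(p_1,p_2) \re{a} (p'_1,p'_2)$ exists whenever $\A$ admits $p_1 \re{a:y_1} p'_1$ with $y_1 \geq x$ and $p_2 \re{a:y_2} p'_2$ with $y_2 > x$, and forbid first-coordinate visits to $\classSig{x}{q}$ except at the endpoints. The existence of the edge $q_1 \to q_2$ then reduces to reachability from $(q_1,q_0)$ to $\{q_2\}\times Q$ in this polynomial-size product, which is decidable in polynomial time.

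Iterating this test over the $O(|Q|^3)$ triples $(q_1,q_2,q_0)$ produces the full edge relation of $\localAutSig{x}{q}'$ restricted to $\classSig{x}{q}$. A final SCC $S_{\classSig{x}{q}}$ is then extracted in linear time via Tarjan's algorithm. Constructing $\A'$ is immediate: delete the states of $\classSig{x}{q} \setminus S_{\classSig{x}{q}}$ and redirect each incoming transition, following the priority rules from the definition of $\A'$, all of which is clearly polynomial in $\size{\A}$.
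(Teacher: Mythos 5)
Your proposal is correct, but it takes a genuinely different route in the one step that matters. The paper does not compute the edge relation of $\localAutSig{x}{q}'$ exactly: it builds an under-approximation $G$, recording an edge $q_1 \re{} q_2$ only when some path witnessing the local letter produces minimal priority ${>}x$ \emph{from $q_1$ itself} (a plain reachability computation in the restriction to $({>}x)$-transitions, avoiding intermediate visits to $\classSig{x}{q}$), and then invokes Lemma~\ref{lemma-p4-app:neutral-letters-in-polish} --- which ultimately rests on positionality of $W$ via Lemma~\ref{lemma-p4-app:super-words-contain-super-letter} --- to argue that the final SCC $S_{\classSig{x}{q}}$ survives as a strongly connected, final subgraph of $G$, so that an SCC decomposition of $G$ recovers it. You instead compute the exact edge relation of $\localAutSig{x}{q}'$, using Lemma~\ref{lemma-p4-app:super-words-in-local-languages} to rephrase ``$w$ is a neutral letter'' as the existence of a witness state $q_0$ in the class from which $w$ produces only priorities ${>}x$, and you test this by reachability in a product on $Q\times Q$; the quantification over $(q_1,q_2,q_0)$ and the squared state space make your procedure somewhat costlier, though still comfortably polynomial. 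What your version buys is twofold: the set you extract is a genuine final SCC of $\localAutSig{x}{q}'$ \emph{independently of whether $W$ is positional} (the paper's footnote concedes its procedure may go wrong otherwise), and you never need to argue that an arbitrary final SCC of the under-approximation $G$ is also a final SCC of $\localAutSig{x}{q}'$ --- a point the paper's proof leaves implicit, since $G$ may miss edges of $\localAutSig{x}{q}'$ that produce priority exactly $x$ from their source while producing ${>}x$ from some other state of the class. One detail to make explicit when writing this up: the reachability query must ask for a nonempty path (local letters lie in $\SS^+$), so that the case $q_1=q_2$ does not spuriously yield a self-loop.
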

\begin{proof}
	To obtain $S_{\classSig{x}{q}}$, we first build a finite representation of the restriction of $\localAutSig{x}{q}$ to "neutral letters@@polishSig" (we recall that, in general, $\localAutSig{x}{q}$ might have an infinite number of transitions).
	One way of doing that is to build the following graph $G$: for each pair of states $q_1,q_2\in \classSig{x}{q}$ and each $y> x$, we put an edge $q_1\re{y} q_2$ if there is a path from $q_1$ to $q_2$ producing $y$ as minimal priority and not passing trough another state in $\classSig{x}{q}$.	
	By Lemma~\ref{lemma-app:neutral-letters-in-polish}, $S_{\classSig{x}{q}}$ is a subgraph of $G$. To obtain the states in $S_{\classSig{x}{q}}$ we just need to perform a decomposition in "SCCs" of $G$ and take a "final SCC" of it.\footnotemark
\end{proof}
\footnotetext{If $W$ is not "positional", the procedure described here does provide a set of states $S_{\classSig{x}{q}}$, but it might lead to an incorrect automaton $\A'$. If our objective is to decide the "positionality" of $W$, at the end of the procedure we need to check the equality $\Lang{\A} = \Lang{\A'}$; if it does not hold, we can conclude that $W$ is not "positional".}

We consider $\A'$ equipped with the preorders $\leqSig{0},\dots,\leqSig{x}$ inherited from $\A$.

\begin{lemma}\label{lemma-app:polish-is-nice-transform}
	Automaton $\A'$ is a "$\eqSig{x}$-nice transformation of $\A$ at level $x$".
\end{lemma}
\begin{proof}
	We first note that, by Lemma~\ref{lemma-app:monotonicite-preorder-x}, $\eqSig{x}$ is "$[0,x-1]$-faithful" in $\A$, so it makes sense to speak of a "$\eqSig{x}$-nice transformation at level $x$". 
	
	Automaton $\autGeq{\A'}{x+1}$ coincides with the "subautomaton" of $\autGeq{\A}{x+1}$ induced by states in $Q'$.
	Indeed, let $q_1',q_2'\in Q'$ and $q_1'\re{a:>x}q_2'$ in $\A$. As these states are in $Q'$, $q_2'\notin \classSig{x}{q} \setminus S_{\classSig{x}{q}}$, so the transition has not been "redirected@@polish", and it appears in $\A'$. Conversely, all transitions producing a priority $>x$ in $\A'$ appear in $\A$.

	We show that $\eqSig{x}$ is "$[0,x-1]$-faithful" in $\A'$ and $\autLeq{\A}{x-1}{x-1}=\autLeq{\A}{x-1}{x-1}$. 
	Let $p_1,p_2\in Q'$ such that $p_1\eqSig{x}p_2$, and let  $p_1\re{a:y_1'}q_1'$ and $p_2\re{a:y_2'}q_2'$ be two transitions in $\A'$. 
	Transitions that have not been "redirected@@polish" satisfy the congruence requirements, as they satisfy them in $\A$. 
	Assume that the first of these transitions have been redirected from  $p_1\re{a:y_1}q_1$ in $\A$. We have that $q_1'=q_0\eqSig{x} q_1$, so, $q_1'\eqSig{x}q_2'$ by the congruence property in $\A$.
	If $y_1'<x$, then $y_1=y_1'$ and the "$y_1$-uniformity" of transitions in $\A$ yields $y_1'=y_2'$.
	Therefore, we also have that $y_1'\geq x$ if and only if $y_2'\geq x$
	This gives both  the "$[0,x-1]$-faithfulness" in $\A'$ and the equality of the "quotient@@leq" automata.
	
	As $\eqSig{x}$ "refines" $\eqRes{\A}$, the latter relation is also a "congruence" in $\A'$ and $\quotAut{\A}{\A} = \quotAut{\A'}{\A}$. 
\end{proof}

\begin{lemma}[Correctness of the polishing operation]\label{lemma-tech:equality-languages-polish}
	Automaton $\A'$ "recognises" $\Lang{\A}$.
\end{lemma}
\begin{proof}
	Let $w\in \SS^\oo$. If $w$ is "accepted@@byPriority" or "rejected with a priority" $y<x$ or $y>x$ in $\A'$, by Lemma~\ref{lemma-tech:nice-transformations}, $w\in \Lang{\A}$ if and only if $w\in \Lang{\A'}$.
	Suppose then that $w$ is  "accepted with priority" $x$ in $\A'$. Let $\rr'$ be the "run over" $w$ in $\A'$. If $\rr'$ eventually does not take any "redirected transition@@polish", then it is eventually a run in $\A$, and we can conclude by Lemma~\ref{lemma-tech:subautomaton-in-common}. Suppose that $\rr'$ takes "redirected transitions@@polish" infinitely often; moreover, eventually all such transitions produce priority $x$. We decompose $\rr'$ as follows:
	\[ \rr' = q_\init \lrp{w_0} p_0'\re{a_0:x} q_0 \lrp{w_1:\geq x} p_1'\re{a_1:x} q_0 \lrp{w_1:\geq x} p_2'\re{a_2:x} q_0 \lrpE{} \dots ,\]
	where no priority $<x$ appears after $p_0'$, each  transition $p_i'\re{a_1:x} q_0$ is a "redirected@@polish" one, and no "redirected@@polish" transition appears in  paths $q_0 \lrp{w_i:\geq x}$, in particular, these paths appear in $\A$.

	\begin{claim}
		For each $i\geq 1$, the word $w_ia_i$ belongs to $\locAlph{x}{q}^+$ and is a "super word for"~$\locLang{x}{q}$.
	\end{claim}
	\begin{subproof}
		The word $w_ia_i$ connects two states in $\classSig{x}{q}$ in $\A'$ producing no priority $<x$. Since $\A'$ is a "$\eqSig{x}$-nice transformation at level $x$",
		word  $w_ia_i$ also connects states in $\classSig{x}{q}$ in $\A$, without producing priorities $<x$. Therefore, it belongs to $\locAlph{x}{q}^+$.
		
		Consider the path $q_0\lrp{w_i}p_i'\re{a_i}q'$ in $\A$. 	
		As we suppose that transition $p_i'\re{a_i}q_0$ has been "redirected@@polish" in $\A'$, $q'\notin S_{\classSig{x}{q}}$. Then, since $S_{\classSig{x}{q}}$ is a "final SCC" of the restriction of $\localAutSig{x}{q}$ to "$\NLetters{x}{q}$-transitions@@in", $w_ia_i$ contains some factor that is a letter in $\locAlph{x}{q}\setminus\NLetters{x}{q}$. Such a factor is a "super letter", so $w_ia_i$ is a "super word".
	\end{subproof}

	Consider the "run over" $w$ in $\A$, that we divide following the decomposition of $\rr'$:
	\[ \rr = q_\init \lrp{w_0} p_0\re{a_0:\geq x} q_1 \lrp{w_1:\geq x} p_1\re{a_1:\geq x} q_2 \lrp{w_2:\geq x} p_2\re{a_2:\geq x} q_3 \lrpE{} \dots .\]
	As $\A'$ is a "$\eqSig{x}$-nice transformation at level $x$", $q_i\eqSig{x}q_0$ for all $i$, and $\rr$ does not produce any priority $<x$ from $p_0$.  
	By Lemma~\ref{lemma-app:super-words-in-local-languages}, as $w_ia_i$ is a "super word", the path $q_i\lrp{w_ia_i} q_{i+1}$ produces priority $x$.
	Therefore, $w$ is "accepted by@@aut" $\A$.
\end{proof}

\begin{lemma}[Polishing polishes]\label{lemma-tech:[q]-is-polished}
	The class $\classSig{x}{q}$ is "$x$-polished@@class" in $\A'$.
\end{lemma}
\begin{proof}
	Let $q_1,q_2\in Q'$ be two states in the class $\classSig{x}{q}$. Assume that, for a word $w\in \SS^*$, the path $q_1\lrp{w:x}p_1$ produces priority $x$. As $\eqSig{x}$ is "$[0,x-1]$-faithful", $q_2\lrp{w:\geq x}p_2$. Suppose by contradiction that this latter path produces exclusively priorities $>x$. Then, this path also exists in $\A$, and by "normality" (of $\A$), there is a returning path $p_2\lrp{w':x+1}q_2$.
	We obtain therefore a path $q_1\lrp{ww':x}\classSig{x}{q}$. However, in $\A'$, $\classSig{x}{q} = S_{\classSig{x}{q}}$, so, by Lemma~\ref{lemma-app:neutral-letters-in-polish}, $ww'$ contains a "super letter", so $(ww')^\oo\in \Lang{\initialAut{\A}{q}}$, contradicting the fact that there is a cycle $q_2\lrp{ww':x+1} q_2$.
	
	The second property of the definition of an "$x$-polished class" is satisfied in $\A'$, as we have "redirected@@polish" all "$x$-transitions" entering in $\classSig{x}{q}$ to the state $q_0$.
	
	We show the third item. Let $q_1,q_2\in \classSig{x}{q}$. Since  $\classSig{x}{q} = S_{\classSig{x}{q}}$ in $\A'$, there is a path $q_1\lrp{w\phantom{.}} q_2$  for some $w\in \NLetters{x}{q}^*$. By Lemma~\ref{lemma-app:neutral-letters-in-polish}, this path produces exclusively priorities~${>}x$.
\end{proof}

This lemma allows us to conclude. We have obtained a deterministic automaton $\A'$ that is "equivalent@@aut" to $\A$. We claim that $\sizeAut{\A'}<\sizeAut{\A}$. 
Indeed, if this was not the case, we would have that $\classSig{x}{q} = S_{\classSig{x}{q}}$, so, by Remark~\ref{rmk-app:nothing-changes-if-polished}, $\A= \A'$. By the previous Lemma~\ref{lemma-tech:[q]-is-polished} this implies that $\classSig{x}{q}$ was already "$x$-polished@@class" in $\A$, a contradiction.

\subparagraph{Discussion: Why not just continue polishing?} We have just showed a method to "$x$-polish@@class" a given class of $\A$. The natural continuation would be to "polish@@sig" the rest of classes, until obtaining an "$x$-polished automaton", and then apply the first case.
The main difficulty is that the polishing operation we have presented might break the "normality" of $\A$. "Normality" of automata is key in all the process
(see for example Lemma~\ref{lemma-app:super-words-in-local-languages}), so we cannot guarantee to be able to continue polishing the classes of $\A'$.
We would need to be able to either show that $\A'$ is in "normal form" (for example, by having an analogous to Lemma~\ref{lemma-app:same-paths-if-polished}), or to show that we can "normalise" $\A'$ while maintaining the properties of being an "$(x-2)$-structured signature automaton".
We have not succeeded in ensuring these properties, although we believe that it should be possible to~do~so.

\end{document}